\documentclass[11pt]{article}%
\usepackage{amsfonts}
\usepackage{amsmath,amssymb,amsthm,enumerate,epsfig,graphicx,natbib}
\usepackage{ifthen,latexsym,syntonly}
\usepackage{amsmath,amssymb,booktabs,multirow,makecell,graphicx}
\usepackage{natbib}
\usepackage{rotating}
\usepackage{multirow}
\usepackage{booktabs}
\usepackage{tabularx}
\usepackage{lscape}
\usepackage{ragged2e}
\usepackage{threeparttable}
\usepackage{amsmath}
\usepackage{caption}
\usepackage{array}
\usepackage{float}
\usepackage{amssymb}
\usepackage[normalem]{ulem}
\usepackage{graphicx}
\usepackage{graphicx}
\usepackage{subcaption}
\usepackage{color}
\usepackage{tikz}
\usepackage{placeins}
\usepackage{xr}
\usepackage[bottom]{footmisc}%
\providecommand{\U}[1]{\protect\rule{.1in}{.1in}}

\usetikzlibrary{arrows.meta,positioning,fit,backgrounds}
\useunder{\uline}{\ul}{}
\newtheorem{theorem}{Theorem}
\newtheorem{corollary}{Corollary}

\newtheorem{lemma}{Lemma}
\newtheorem{proposition}{Proposition}
\newtheorem{assumption}{Assumption}
\newtheorem{algorithm}{Algorithm}
\theoremstyle{definition}

\newtheorem{example}{Example}

\allowdisplaybreaks
\begin{document}
	
	\title{Estimation and Inference for Latent Markov Models \\ by Fourier Recursions}
    \author{
\normalsize Yanqi Huang$^*$ \qquad
Chenxu Li$^*$ \qquad Qiwei Yao$^\dag$\\
$^*$Guanghua School of Management, Peking University\\
$^\dag$Department of Statistics, London School of Economics
}
	\date{\today }
	\maketitle
	
	\begin{abstract}
	
This paper proposes a new theoretically exact Fourier recursion framework for a broad class of latent Markov models (LMMs), encompassing models widely used across a broad range of fields in economics.
It can be viewed as a counterpart of the celebrated Kalman filter for non-Gaussian and nonlinear LMMs.
Closed-form recursive updates of Fourier coefficients jointly deliver filtering, likelihood evaluation, and simultaneously accumulate the score and Hessian online. We introduce a unified truncated implementation that ensures uniform error control and numerical stability, preventing approximation errors from accumulating through the recursion. We establish asymptotic properties of the feasible maximum likelihood estimator for LMMs, provide a recursion-based consistent estimator for the Fisher information and discuss the possible model misspecification. These results provide the first general asymptotic theory for feasible approximate maximum likelihood estimation in LMMs. Simulations demonstrate its accuracy
		and stability, while
		an application to U.S. bankruptcy data recovers a persistent latent
		bankruptcy-pressure process.

		%
		
		\noindent
		\bigskip\textit{JEL classification:}\ C13; C32; C63\medskip.
		
	\end{abstract}
	
	\noindent\textbf{Keywords}:  
     Fourier recursions;
    latent Markov models;
     non-Gaussian state space models;
    nonlinear filtering;
    likelihood-based inference.
	
	\section{Introduction}
	
	Latent Markov models (LMMs), defined by a latent variable and an observed
	variable that jointly form a Markov process, provide a general framework for
	dynamic systems in which persistent unobserved states generate observable
	signals. For example, state-space representations of dynamic stochastic
	general equilibrium (DSGE) and heterogeneous-agent models use latent states to
	summarize structural shocks and equilibrium dynamics; stochastic
	volatility models use latent processes to describe the evolution of financial
	market volatility. LMMs are also frequently considered when the underlying structural
	mechanism is too complex or insufficiently observed to be modeled directly, where the latent state usually serves as a reduced-form summary of persistent
	economic conditions. Examples include regime-switching
	models, dynamic factor models, and recurrent neural
	networks (RNNs) in machine learning models, which use hidden states to retain
	information from past observations and capture temporal dependence. 
	
	A canonical solution to estimation and inference for
     linear-Gaussin LMMs is provided by the Kalman filter, which exploits the first two conditional moments to deliver an exact and simple recursion.
    A long-standing objective has therefore been to extend this recursive paradigm beyond Gaussian  
    setting. This task is fundamentally difficult.\footnote{A direct extension would be to track higher-order moments. Yet, moments provide only a limited characterization of a distribution: even the full moment sequence need not uniquely determine the underlying distribution, let alone any finite collection of moments (see, e.g., \cite{schmudgen2017moment}).} As LMMs become increasingly complex, existing extensions tend either to rely on restrictive model structures or approximations, limiting their generality and accuracy, or to sacrifice the numerical robustness and analytical transparency that make the Kalman filter attractive in the first place. Can one nevertheless retain the recursive logic and \textit{exactness} of the Kalman filter for a broad class of LMMs? Can this exact representation be converted into a \textit{feasible} procedure without sacrificing rigorous likelihood-based estimation and inference? We provide answers to those questions.
	
	\subsection{Fourier recursion for estimation and inference in LMMs}
	
	The objective of this paper is to develop a unified, deterministic framework for
	online filtering, likelihood evaluation, as well as parameter estimation and
	inference for LMMs that is both accurate and computationally stable. To this
	end, we propose a theoretically exact closed-form Fourier coefficients recursion for a broad class of LMMs, where both
	filtering and likelihood evaluation can be directly recovered from the
	associated Fourier coefficients without additional treatment. Conceptually,
	this approach shares a close connection with sieve estimation in nonparametric
	econometrics (see, e.g., \cite{chen2012estimation, chen2014sieve}), by extending it to a parametric setting.
	
	Our methodology applies broadly to LMMs with a well-defined one-step latent-observation transition: for continuous latent states, the transition admits a sufficiently smooth density, while for discrete latent states, the latent state space is finite. This class includes state-space models (SSMs), hidden
	Markov models (HMMs), as well as models whose transition dynamics are
	specified through transition characteristic functions, e.g., the affine models
	(see, e.g., \cite{duffie2000transform,bates2006maximum}). The key innovation lies in representing the evolving filtering density via
	its associated Fourier series, allowing the corresponding coefficients to be
	updated recursively as new observations arrive. This framework fundamentally
	transforms the mechanics of filtering by replacing the intractable multiple
	integrals of classical recursions with a sequence of algebraic
	multiplications. Moreover, the closed-form Fourier coefficients recursion can itself be
	differentiated recursively w.r.t. the model parameters. This yields
	deterministic online updates for the gradient and the Hessian matrix of the
	sample log-likelihood, thereby supporting gradient-based maximum marginal likelihood
	estimation (MMLE) and its subsequent statistical inference. The implementation is
	designed so that even after truncation, both the estimated filtered density
	and the evaluated likelihood remain strictly positive, which prevents invalid
	evaluations and stabilizes the recursive updates. By
	exploiting the analytical properties of the Fourier basis, our approach
	ensures a rapid decay of truncation errors, thereby guaranteeing favorable asymptotic properties of the resulting approximate estimators. 
	
	\subsection{Why estimation and inference for LMMs are essential and
		challenging\label{sec:related work}}
	
	Statistical analysis of these models centers on two closely related
	objectives: latent-state inference, typically through filtering to recover latent variables, parameter estimation
	and inference, which characterize the underlying dynamic mechanisms, quantify
	uncertainty, enable the testing of economically relevant hypotheses and
	model restrictions. The growing scope and complexity of LMMs make these tasks
	both increasingly essential and increasingly challenging. The associated literature has developed along two related directions: approximation methods and asymptotic inference.
	
	These approximation approaches can be grouped into three main categories. The
	first category builds on the Kalman filter. A variety of extensions have been
	developed for nonlinear systems. The extended Kalman filter (EKF) and the
	unscented Kalman filter (UKF) are two extensions (see, e.g., Chapter 7 of
	\cite{TsayChen2019NonlinearTimeSeries}). Relatedly, \cite{bates2006maximum}  uses the first two conditional moments to parameterize a Gamma approximation to the filtering distribution. Their applicability, however,
	depends strongly on the specific structure of the model. In nonlinear or
	non-Gaussian settings, local linearization or Gaussian-type approximations may
	lead to non-negligible filtering and likelihood approximation errors, which may
	lead to biased parameters estimations. The second strand is based
	on sequential Monte Carlo methods, commonly referred to as particle filters.
	Particle filters represent the filtering distribution by a weighted collection
	of simulated particles that are propagated and updated sequentially as new
	observations arrive. The bootstrap particle filter (BPF) and the auxiliary
	particle filter (APF) are two commonly used variants (see, e.g., Chapter 8 of
	\cite{TsayChen2019NonlinearTimeSeries}). However, applying particle
	filters comes with Monte Carlo variability, particle degeneracy, and
	additional model-specific choices concerning the number of particles, proposal
	distributions, and resampling schemes. These issues become particularly
	important for parameter estimation. The resulting simulated likelihood may be
	noisy or insufficiently smooth, making numerical optimization and the
	computation of likelihood derivatives or Fisher
	information\footnote{Kalman-filter-type approximations do not generally
		deliver these quantities with comparable accuracy, whereas the Monte Carlo
		variability of particle-based likelihoods may produce a noisy or
		insufficiently smooth objective function and make optimization or asymptotic
		covariance matrix estimation substantially more difficult (see, e.g., Chapter
		8 of \cite{TsayChen2019NonlinearTimeSeries}).} more
	difficult (see e.g.,
	\cite{MalikPitt2011ParticleFiltersContinuous,PoyiadjisDoucetSingh2011}). The
	third strand combines latent-state simulation with Markov chain Monte Carlo
	(MCMC) methods for Bayesian parameter inference. Two representative approaches
	are pseudo-marginal MCMC and particle MCMC (see, e.g.,
	\cite{AndrieuDoucetHolenstein2010}). Their implementation,
	however, typically involves additional model-specific choices, including
	proposal design, chain length, and convergence diagnostics. Moreover, because
	these methods are primarily simulation-based and offline, they are less
	naturally suited to recursive likelihood updating and to the direct online
	computation of the Fisher information.
	
	The theoretical literature for SSMs and HMMs, which typically exploits their
	factorization structures, studies the propagation of filtering errors and the
	asymptotic properties of likelihood-based estimators. Existing results provide
	conditions for filtering stability and approximation-error control, as well as
	the consistency and asymptotic normality of MLEs (see, e.g.,
	\cite{gland2004,cappe2005inference}). Related likelihood theory has also been
	developed for general Markov regime-switching models (see, e.g.,
	\cite{Pouzo2022}). However, these results rely on the special factorization
	structures of SSMs and HMMs and therefore do not readily extend to general
	LMMs. Moreover, the exact MMLE is typically unavailable in practice, and
	existing methods instead deliver an approximated estimator. In contrast to the
	extensive theory for exact estimators, the asymptotic properties of such
	feasible parameter estimators remain relatively underdeveloped.
	
	\subsection{Contributions and positioning}
	
	The paper makes three contributions. First, it provides a theoretically exact closed-form Fourier
	recursion for a wide class of LMMs and shows how the filter, likelihood
	function together with its score and Hessian matrix, can be computed
	recursively, facilitating the associated estimation and inference for LMMs.
	The implementation is designed for numerical resilience and reliability: it
	provides a global approximation to the filtering density and preserves the
	positivity of both the approximated filtering density and the evaluated likelihood.
	
	Second, the paper develops theoretical guarantees beyond the SSM and HMM
	settings, with particular emphasis on feasible approximate estimators. The
	analysis does not rely on the special structure of SSMs or HMMs, and
	establishes uniform error control for the filters obtained from
	the truncated Fourier recursion in a general latent Markov environment. It
	then derives the asymptotic properties of the approximated MMLEs, including
	consistency and asymptotic normality. These results clarify when the
	approximation error induced by truncating the Fourier expansion is
	asymptotically negligible and when the feasible MMLE inherits the large-sample
	behavior of the infeasible exact MMLE. The proposed Fourier
	recursion provides a consistent estimator for the MMLE's asymptotic covariance matrix. Moreover, the asymptotic theory is extended to the misspecified case, where the estimator converges to the pseudo-true parameter.
	
	Third, this paper evaluates the proposed method through extensive simulations
	and an empirical application. In nonlinear, non-Gaussian environments,
	including the filtering of RNNs, it remains highly accurate while eliminating
	degeneracy problems of particle filters. Also, our framework successfully
	achieves MMLEs for the Barndorff-Nielsen and Shephard stochastic volatility
	model (see \cite{barndorff2001non}) with discrete-time observations, which is
	demanding due to the L\'{e}vy-driven setting. Moving to empirical data, we
	apply the method to a latent Poisson Markov model to investigate bankruptcy
	and financial systemic stress. The application demonstrates that our proposed
	method can be used to estimate an economically meaningful time series of
	latent financial fragility. This also provides a feasible reduced-form latent
	modeling scheme for studying financial systemic risk.\begin{figure}[tbh]
		\centering
		\par
		\begin{minipage}[b]{0.5\textwidth}
			\centering
			\includegraphics[width=\textwidth]{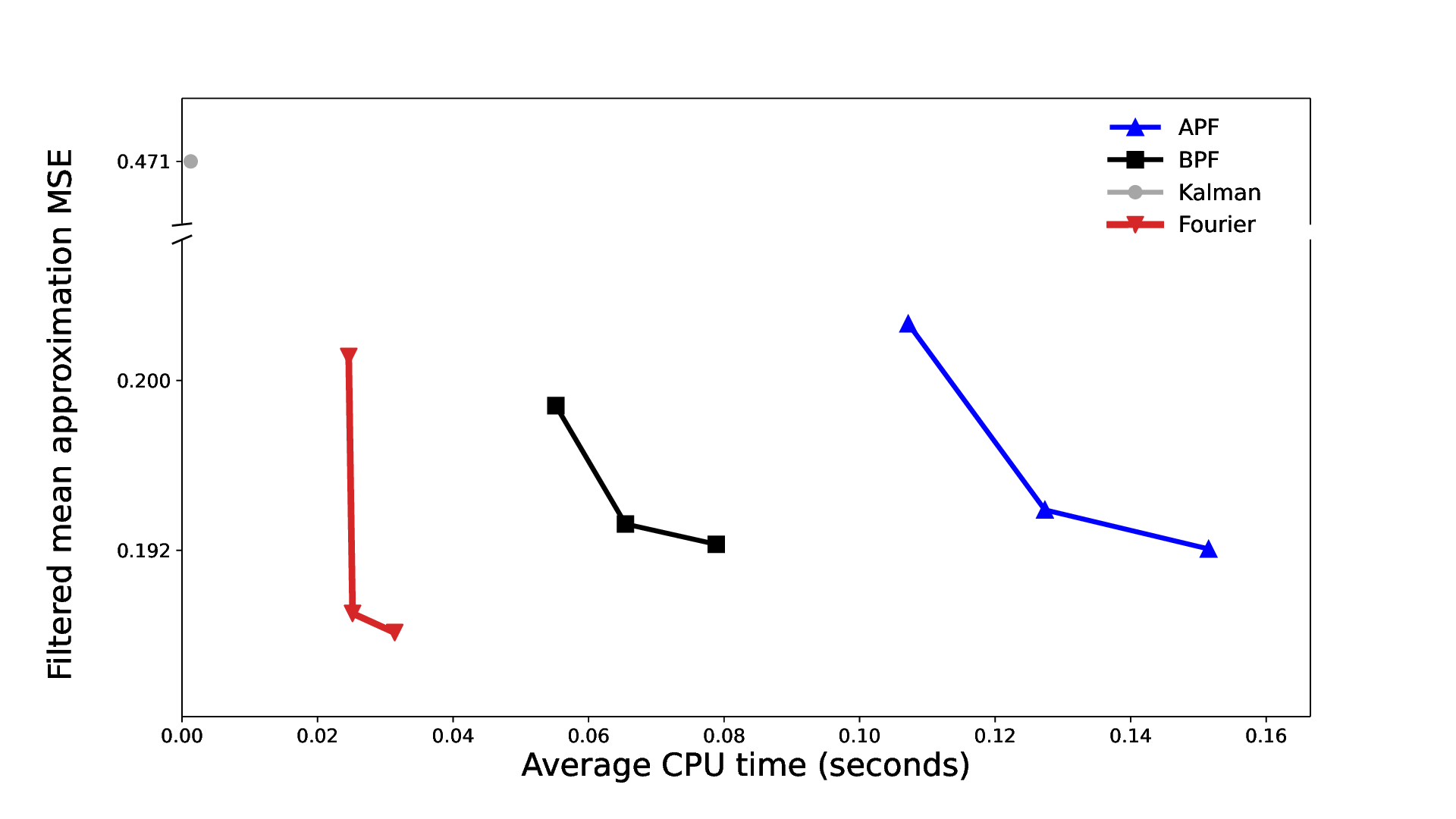}
			\subcaption*{(a) Discrete-time stochastic volatility LMM}
		\end{minipage}
		\hspace{-0.015\textwidth} \begin{minipage}[b]{0.5\textwidth}
			\centering
			\includegraphics[width=\textwidth]{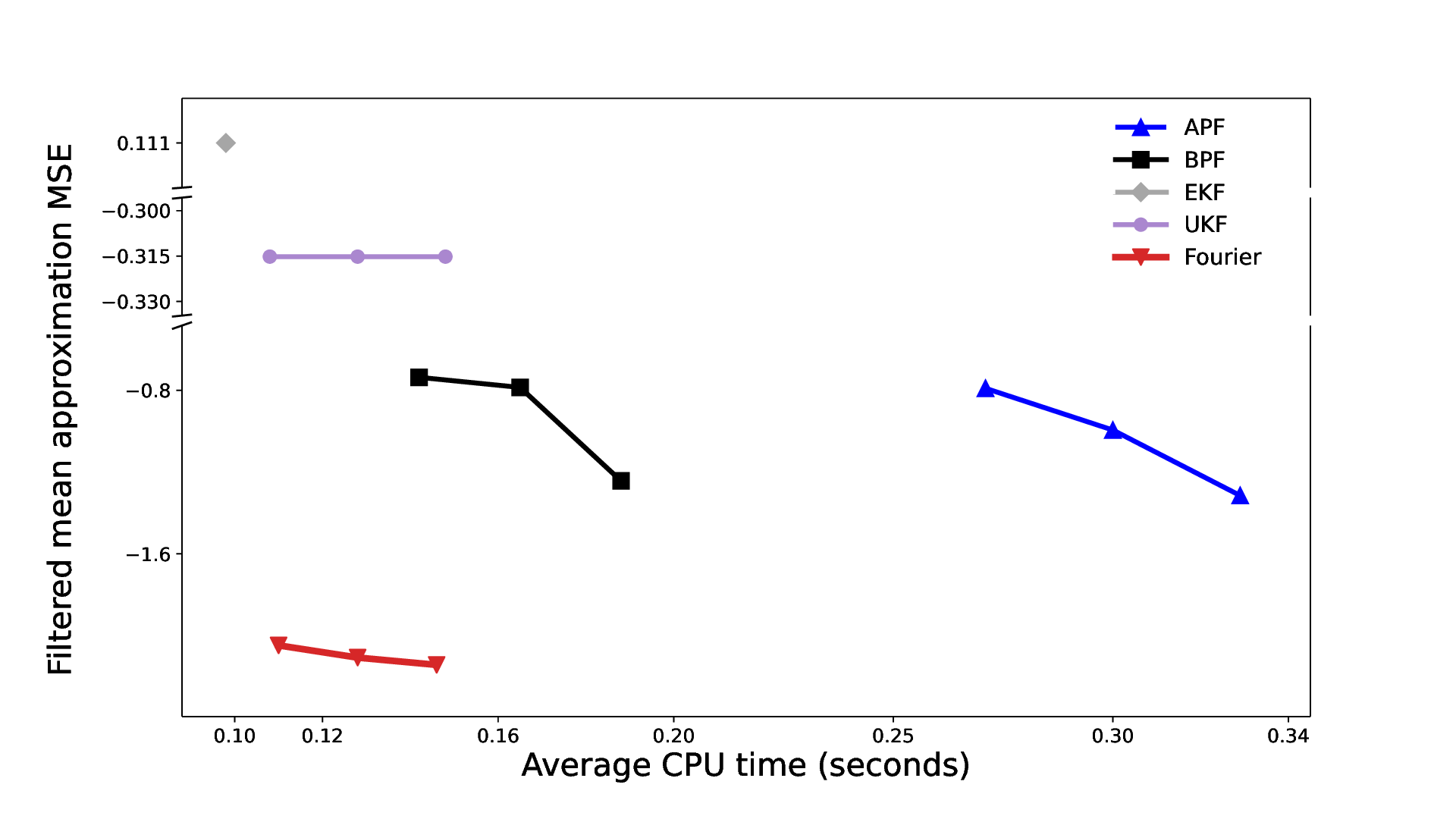}
			\subcaption*{(b) Quadratic term structure LMM}
		\end{minipage}
		\caption{{\protect\small Panels (a) and (b) compare the filtering methods for
				the discrete-time stochastic-volatility LMM and the quadratic term-structure
				LMM, respectively. Let $X_{k}$, $Y_{k}$ denote the latent state and the
				observation at time $k$, respectively. $\mu_{k}=\mathbb{E}[X_{k}%
				|\mathbf{Y}_{k};\theta]$ denotes the exact filtered mean and
				$\protect\widehat{\mu}_{k}$ denotes its approximation. The filtered-mean
				approximation error is measured by $\operatorname{MSE}_{\mu}=n^{-1}\sum
				_{k=1}^{n}\mathbb{E}[|\protect\widehat{\mu}_{k}-\mu_{k}|^{2}]$. By the
				orthogonality property of conditional expectations, $\mathbb{E}[|X_{k}%
				-\protect\widehat{\mu}_{k}|^{2}]=\mathbb{E}[|X_{k}-\mu_{k}|^{2}]+\mathbb{E}%
				[|\protect\widehat{\mu}_{k}-\mu_{k}|^{2}]$. The $\operatorname{MSE}_{\mu}$ of
				different methods can be equivalently ranked using directly computable
				$\operatorname{MSE}_{X}=n^{-1}\sum_{k=1}^{n}\mathbb{E}[|X_{k}%
				-\protect\widehat{\mu}_{k}|^{2}]$. The vertical axis reports $\log
				(\operatorname{MSE}_{X})$, and the horizontal axis reports average CPU time.
				Results are based on 200 independent sample paths of length $n=500$.}}%
		\label{fig:intro}%
	\end{figure}
	
	Relative to conventional approaches, the proposed method can be positioned
	along three dimensions. First, it broadens the scope of filtering and
	likelihood-based inference beyond SSMs and HMMs. This extension is not only
	computational but also theoretical: our analysis applies to a more general
	latent Markov environment. Second, the Fourier recursion provides a more
	accurate and stable framework for filtering and likelihood evaluation. Figure
	\ref{fig:intro} compares the accuracy of the Fourier recursion with that of
	the other methods.\footnote{In the discrete-time stochastic volatility model,
		log volatility follows a persistent latent process and log-squared returns
		provide noisy observations of the latent state (see, e.g.,
		\cite{KimShephardChib1998}). In the quadratic term structure model, bond
		yields depend quadratically on latent factors that follow persistent dynamics
		(see, e.g., \cite{ahn2002qtsm,DaiSingleton2003}).} Third, the method provides
	a unified framework for parameter estimation and associated statistical
	inference. The score and Hessian are generated recursively within the same
	Fourier system, supporting gradient-based MMLE and online, consistent
	estimation of MMLE's asymptotic covariance matrix. By contrast, alternative
	methods face additional challenges in delivering these two quantities.
	
	The remainder of the paper is organized as follows. Section 2 introduces the
	LMMs, defines the filtering and likelihood inference problems. Section 3
	develops the Fourier recursion, studies its special cases, and presents the
	main implementation details, including the positivity-preserving truncated
	density. Section 4 establishes the approximation error bounds and the
	associated asymptotic properties of the truncated Fourier recursion. Section 5
	reports simulation evidence. Section 6 applies the method to a latent Poisson
	Markov model of bankruptcy and financial stress.
	
	\section{The latent Markov models, estimation and inference problems}
	
	\subsection{The latent Markov models\label{sec:lmms}}
	
	We start with the definition of latent Markov models. Consider a latent Markov
	model defined on a probability space $(\Omega,\mathcal{F},\mathbb{P})$ with
	state vector $(X_{k},Y_{k},U_{k})$, in which $Y_{k}$ is observed with its
	state space denoted by $\mathcal{Y}$, while $X_{k}$ is usually considered to
	be a latent process with its state space denoted by $\mathcal{X}$, and $U_{k}$
	denotes a vector of observed explanatory variables. This formulation provides
	a flexible representation encompassing a broad class of economic and
	econometric models. A model originally formulated without latent variables,
	involving only $(Y_{k},U_{k})$, can be naturally enriched by introducing a
	latent state $X_{k}$, thereby enhancing the model's ability to capture
	unobserved heterogeneity. A model that already contains latent variables can
	be further enriched by augmenting it with observed explanatory variables
	$U_{k}$. Within this broad framework, several commonly used models can be
	written in the following form
	\begin{equation}
		Y_{k}=g(X_{k},Y_{k-1},U_{k};\theta)+\varepsilon_{k},\quad X_{k}=f(X_{k-1}%
		,Y_{k-1},U_{k};\theta)+\eta_{k}, \label{eq:general smm 2}%
	\end{equation}
	where $f$ and $g$ are measurable functions, $\eta_{k}$, $\varepsilon_{k}$
	are state innovations and observation shocks, and are assumed to be independent of $\{(X_{k-j},Y_{k-j},U_{k-j})\}^{k-1}_{j=0}$, with $\theta$ collecting the
	structural or statistical parameters.
    In addition to LMMs formulated directly
	in terms of $(X_{k},Y_{k},U_{k})$, latent models with finite-order lagged
	dependence can often be represented in this Markov form by augmenting the
	state vector with the relevant lagged variables.
	
	The covariates $U_{k}$ may include deterministic inputs, exogenous regressors
	that affect either the latent-state transition or the observation mechanism.
	Since the role of $U_{k}$ is conceptually distinct from that of the latent and
	observed state variables, its inclusion does not alter the structure of the
	filtering and likelihood updating equations. We explain this point formally in
	\ref{sec:covariates}. For expositional clarity, we therefore suppress $U_{k}$
	in the notation momentarily and proceed under the baseline assumption that
	$(X_{k},Y_{k})$ forms an LMM. To illustrate the breadth of the
	framework in (\ref{eq:general smm 2}), we next consider several important
	classes of LMMs, beginning with the case where $f$ and $g$ are linear functions.
	
	\begin{example}
		\label{example:linear smm}The linear state-space models:
		\begin{equation}
			Y_{k} =\Lambda X_{k}+\varepsilon_{k},\quad X_{k}=AX_{k-1}+\eta_{k},
			\label{eq:linear smm 2}%
		\end{equation}
		
	\end{example}
	
	Linear state-space models have been widely used in economics. For example,
	after solving for the equilibrium, linearized dynamic stochastic general
	equilibrium (DSGE) models, including the New Keynesian models, are often
	represented in state-space form for filtering, and likelihood-based
	estimation.\footnote{Typically, the rational-expectations solution takes a
		recursive form: $X_{k}=P\left(  \theta\right)  X_{k-1}+Q\left(  \theta\right)
		\eta_{k},$ where $X_{k}$ collects the relevant equilibrium states, and the
		transition matrices $P\left(  \theta\right)  $ and $Q\left(  \theta\right)  $
		are induced by the equilibrium solution and are therefore functions of the
		structural parameters $\theta.$ The observable macroeconomic variables $Y_{k}$
		are then linked to these latent equilibrium states through a measurement
		equation similar to (\ref{eq:linear smm 2}).}. State-space formulations also provide a natural framework for robust estimation, control and partial information (see, e.g., \cite{HansenSargent2007,HansenMayerSargent2010}). They are also
	commonly used to formulate predictive regressions in finance. In this case,
	$Y_{k}$ denotes observable financial variables, such as asset returns and
	predictors, while $X_{k}$ represents latent predictive states, such as
	conditional expected returns, expected dividend growth, or time-varying
	regression coefficients (see, e.g.,
	\cite{DanglHalling2012}). Moreover, linear state-space models are often used as dynamic factor models.
	In this setting, $X_{k}$ is interpreted as a low-dimensional vector of latent
	common factors, and $Y_{k}$ is a high-dimensional vector of observed economic
	variables. The factor structure provides a parsimonious way to reduce the
	dimensionality of large economic and financial datasets while preserving their
	main dynamic information (see, e.g., \cite{chen2022factor}).
	
	However, the linearity in Example \ref{example:linear smm} does not imply that
	$\varepsilon_{k}$ or $\eta_{k}$ are simple Gaussian distributions. For
	instance, discrete-time stochastic volatility models (see, e.g.,
	\cite{KimShephardChib1998}) also fit naturally
	into (\ref{eq:linear smm 2}) after a simple log-square transformation. Namely,
	suppose the return $R_{k}$ satisfies%
	\begin{equation}
		R_{k}=\sigma_{k}U_{k},\text{ }\log\left(  \sigma_{k}^{2}\right)  =\alpha
		+\beta\log\left(  \sigma_{k-1}^{2}\right)  +\sigma V_{k},
		\label{eq:discrete sv}%
	\end{equation}
	where $U_{k}$ and $V_{k}$ are two independent standard normal shocks. Defining
	$Y_{k}=\log\left(  R_{k}^{2}\right)  ,$ $X_{k}=\log\left(  \sigma_{k}%
	^{2}\right)  ,$ $\left(  X_{k},Y_{k}\right)  $ immediately turns into Example
	\ref{example:linear smm} where $\varepsilon_{k}=\log U_{k}^{2}$ follows a
	log-$\chi_{1}^{2}$ distribution. Another example is the factor ARCH models
	(see, e.g., \cite{bollerslev1994arch}):
	\begin{equation}
		\label{eq:factor arch}Y_{k}=\Lambda X_{k}+\varepsilon_{k}, \quad X_{k}%
		=\eta_{k}, \quad\eta_{k} \sim\mathcal{N}(0,\operatorname{diag}(\omega
		+AX^{2}_{k-1})) ,
	\end{equation}
	where $\varepsilon_{k}\sim\mathcal{N}(0,\sigma_{\varepsilon}I)$. Indeed, the
	relaxation of Gaussian assumptions is remarkably ubiquitous across broader
	latent variable frameworks; for instance, \cite{HallYao2003} study inference
	with heavy-tailed errors, while \cite{FernandezVillaverdeRubioRamirez2007}
	emphasize likelihood-based estimation for non-normal state-space
	representations of macroeconomic models. Crucially, this widespread
	non-Gaussianity severely complicates inference; even within linear frameworks
	like Example \ref{example:linear smm}, approaches developed for Gaussian
	situations may become inadequate and demand non-trivial computational effort.
	
	Besides the linear specifications of $f$, $g$ in (\ref{eq:general smm 2}),
	nonlinear forms of these functions are also common in economics and naturally
	lead to nonlinear latent Markov models. In macroeconometrics, DSGE models
	solved without linearization generally give rise to nonlinear policy functions
	and nonlinear transition laws, and hence to nonlinear state-space
	representations (see, e.g.,
	\cite{FernandezVillaverdeRubioRamirez2007,GustHerbstLopezSalidoSmith2017}). A
	representative example is given as follows.
	
	\begin{example}
		\label{example:rbc}The nonlinear state-space representation of the real
		business cycle (RBC) model:
		\begin{equation}
			Y_{k}=g(K_{k},Z_{k};\theta)+\xi_{k},\quad(K_{k+1},\log Z_{k+1})=(h_{K}%
			(K_{k},Z_{k};\theta),\rho_{z}\log Z_{k}+\sigma_{z}\eta_{k+1}).
			\label{eq:rbc_state_capital}%
		\end{equation}
		Here $X_{k}=(K_{k},\log Z_{k})^{\top}$ is the latent state, where $K_{k}$
		denotes capital and $Z_{k}$ denotes productivity. The observation vector
		$Y_{k}$ may contain output, consumption, investment, and labor. The
		measurement function $g$ is determined by production technology, while $h_{K}$
		describes capital accumulation implied by optimal decisions.
	\end{example}
	
	The same reasoning by which a DSGE model can be cast as a nonlinear LMM also
	applies to heterogeneous-agent environments, where the latent state must also
	summarize the evolution of cross-sectional distributions. This perspective is
	important in recent heterogeneous-agent New Keynesian models (HANK, see, e.g.,
	\cite{KaplanMollViolante2018}).\footnote{In
		such models, the latent state can be written as $X_{k}=(Z_{k},\mu_{k})$, where
		$Z_{k}$ collects aggregate state variables and $\mu_{k}$ denotes the
		cross-sectional distribution of households. A compact representation is
		$Y_{k}=h(Z_{k},\mu_{k})+\varepsilon_{k}$ and $X_{k}=F(X_{k-1},\eta_{k})$,
		where the transition of $\mu_{k}$ is induced by households' policy rules and
		idiosyncratic shocks. In practice, $\mu_{k}$ is usually
		approximated by a finite-dimensional object (see, e.g.,
		\cite{Reiter2009}), yielding a finite-dimensional nonlinear state-space model within our LMM framework.} In microeconometrics and
	industrial organization, nonlinear latent Markov structures are also very
	common. \cite{OlleyPakes1996} provide a classic example in which latent
	productivity follows a Markov process; \cite{DoraszelskiJaumandreu2013} extend
	this framework by allowing firms' investment to influence the evolution of
	latent productivity. In financial time-series modelling and asset pricing,
	nonlinear latent Markov structures are even more common. As illustrated in
	Figure \ref{fig:intro}, the quadratic term structure models (QTSMs; see, e.g.,
	\cite{ahn2002qtsm,DaiSingleton2003})
	\begin{equation}
		X_{k+1}=\mu+\Phi X_{k}+\Sigma\eta_{k+1},\quad Y_{k}=A+BX_{k}+X_{k}^{\intercal
		}CX_{k}+\xi_{k} \label{eq:qtsm}%
	\end{equation}
	are often used to model the yield curve and to infer latent economic factors
	driving bond yields. Yield forecasts in turn support fixed-income pricing and
	monetary-policy analysis.
	
	When the functions $f$ and $g$ in (\ref{eq:general smm 2}) are specified by
	deep neural networks, the resulting model corresponds to a recurrent neural
	network (RNN) or a stochastic recurrent neural network (SRNN).
	Methodologically, rather than imposing stringent structural restrictions, RNNs
	and SRNNs employ deep neural networks to flexibly characterize the complex
	dynamic relationships between $X_{k}$ and $Y_{k}.$ Specifically, these neural
	networks are represented through hierarchical layers of nonlinear
	transformations parameterized by a high-dimensional vector $\theta,$ for
	example%
	\[
	f_{\theta}\left(  z\right)  =W_{L}\sigma\left(  W_{L-1}\sigma\left(
	\cdots\sigma\left(  W_{1}z+b_{1}\right)  \cdots\right)  +b_{L-1}\right)
	+b_{L},
	\]
	where $\theta=\left\{  \left(  W_{l},b_{l}\right)  \right\}  _{l=1}^{L}$
	denotes the collection of network weights and biases, and $\sigma\left(
	\cdot\right)  $ is a nonlinear activation function. This perspective naturally
	leads to the following LMM.\footnote{Although neural networks are often viewed
		as nonparametric or highly flexible function approximators, a network with a
		fixed architecture is a finite-dimensional parametric map indexed by its
		weights and biases. Thus, once the network architecture is fixed, the model
		admits a finite-dimensional parameterization and can be treated within the
		parametric framework considered here.}
	
	\begin{example}
		\label{example:SRNN}The stochastic recurrent neural network:
		\begin{equation}
			Y_{k}=g_{\theta}(X_{k})+\varepsilon_{k},\quad X_{k}=f_{\theta}\left(
			X_{k-1}\right)  +\eta_{k}, \label{eq:SRNN}%
		\end{equation}
		where $f_{\theta},g_{\theta}$ are two known deep neural networks. When the
		noise terms $\left(  \varepsilon_{k},\eta_{k}\right)  $ are removed, an SRNN
		reduces to a standard RNN. In this sense, a conventional RNN can be regarded
		as a degenerate LMM without stochasticity. SRNNs, RNNs, and their variants
		have been widely used in modern deep learning.\footnote{Prominent examples
			include long short-term memory networks (LSTMs), variational recurrent neural networks
			(VRNNs), and
			autoregressive transformers with linear attention (see, e.g.,
			\cite{KatharopoulosVyasPappasFleuret2020} and references therein).}
	\end{example}
	
	In addition to the latent Markov models summarized by (\ref{eq:general smm 2}%
	), several other latent Markov structures are also widely used in economics
	and finance. A prominent example is the class of regime-switching models,
	where the latent state is a discrete regime capturing unobserved economic or
	financial conditions. A commonly used specification of such models is given by:
	
	\begin{example}
		\label{example:regime}The Markov regime-switching models:%
		\begin{equation}
			Y_{k}=\mu\left(  X_{k}\right)  +\phi Y_{k-1}+\Sigma\left(  X_{k}\right)
			^{1/2}V_{k},\quad\mathbb{P}\left(  X_{k}=j|X_{k-1}=i,Y_{k-1}=y\right)
			=\psi_{j}\left(  i,y\right)  ,
		\end{equation}
		where $V_{k}$ is a standard normal noise vector, $X_{k}\in\left\{
		1,2,...,M\right\}  $ denotes the discrete hidden regime, and $\psi_{j}$
		represents a particular link function. After being first proposed by
		\cite{hamilton1989new} to analyze nonstationary time series and business cycle
		fluctuations, the Markov regime-switching framework has evolved to incorporate
		more complex dynamics.
		More recently, \cite{Pouzo2022} established rigorous asymptotic properties and
		maximum likelihood estimation theories for this generalized class of models.
	\end{example}
	
	Another important class of latent Markov structures is provided by state-space
	dynamic generalized linear models (GLMs), or more broadly by LMMs with
	conditionally exponential-family observations. This formulation is
	particularly useful when the observed data are discrete, bounded (see, e.g., Chapter 5-6 of
	\cite{TsayChen2019NonlinearTimeSeries}). For example, when the observation is binary, one
	often uses the following dynamic system.
	
	\begin{example}
		The dynamic binary-response model:%
		\begin{equation}
			Y_{k}|X_{k}\sim\mathrm{Bernoulli}(p_{k}),\quad g(p_{k})=\alpha+X_{k},\quad
			X_{k}=AX_{k-1}+\eta_{k}, \label{eq:logit}%
		\end{equation}
		where $g$ is a link function. When $g(p)=\mathrm{logit}(p)=\log(p/(1-p))$,
		\eqref{eq:logit} becomes a dynamic logit model, whereas when
		$g(p)=\mathrm{probit}(p),$ it becomes a dynamic probit model, where
		$\mathrm{probit}(p)$ denotes the inverse of the standard normal cumulative
		distribution function. Moreover, if the observation $Y_{k}$ is count-valued, a
		Poisson observation equation is commonly adopted. The same perspective also
		provides a natural route to dynamic network data. Relative to the binary
		system in \eqref{eq:logit}, a dynamic network model replaces the single latent
		state by a collection of node-level latent states and replaces the single
		Bernoulli observation by a set of dyadic Bernoulli observations. A prominent
		example is the dynamic latent space model (see, e.g., \cite{SewellChen2015}).
	\end{example}
	
	A further important source of latent Markov structures is provided by
	continuous-time models observed at discrete times.
	
	\begin{example}
		\label{example:heston}The Duffie-Pan-Singleton (DPS, see
		\cite{duffie2000transform}) model:
		\begin{subequations}
			\begin{align*}
				d\left(  \log S_{t}\right)   &  =(\mu-V_{t}/2)dt+\sqrt{V_{t}}dW_{1t}%
				+J_{1t}dN_{1t},\\
				dV_{t}  &  =\kappa(\alpha-V_{t})dt+\xi\sqrt{V_{t}}\left(  \rho dW_{1t}%
				+\sqrt{1-\rho^{2}}dW_{2t}\right)  +J_{2t}dN_{2t},
			\end{align*}
			where $W_{1t}$ and $W_{2t}$ are two independent standard Brownian motions,
			$N_{1t}$ and $N_{2t}$ are Poisson processes governing jumps in returns and
			volatility, respectively, and $J_{1t}$ and $J_{2t}$ denote the corresponding
			jump sizes.
		\end{subequations}
	\end{example}
	
	Besides the classic Brownian-driven models, the L\'{e}vy-driven models provide
	a versatile continuous-time framework for accommodating stochastic jumps,
	heavy-tailed distributions, and discontinuous asset dynamics. A canonical
	model in this class is specified as follows.
	
	\begin{example}
		\label{example:bns}The Barndorff-Nielsen and Shephard (BNS, see
		\cite{barndorff2001non}) model:
		\begin{subequations}
			\begin{align}
				d\left(  \log S_{t}\right)   &  =(\mu+\beta V_{t})dt+\sqrt{V_{t}}dW_{t}+\rho
				dZ_{\lambda t},\\
				dV_{t}  &  =-\lambda V_{t}dt+dZ_{\lambda t},
			\end{align}
			where $Z_{t}$ is a subordinator.
		\end{subequations}
	\end{example}
	
	For these two continuous-time models, let the data be sampled at discrete
	times $t_{k}=k\Delta$, the observed variable $Y_{k}$ corresponds to the
	discrete log-return $\log S_{k\Delta}$ or $\log\left(  S_{k\Delta
	}/S_{(k-1)\Delta}\right)  $ while the latent state $X_{k}$ is specified as the
	stochastic volatility state $V_{k\Delta}.$ Consequently, these continuous-time
	models with discrete observations can be similarly interpreted as
	discrete-time LMMs.
	
	Some point-process models also constitute an important class of LMMs. After
	augmenting the state vector with the relevant intensity or memory variables,
	many point-process models can be represented within a latent Markov framework.
	An example is the exponential-kernel Hawkes process, together with its
	multivariate and marked extensions, which is frequently used for modelling
	jumps (see, e.g., \cite{AitSahaliaCachoDiazLaeven2015}).
	
	\begin{example}
		\label{example:hawkes} Hawkes process with exponential decay (see, e.g.,
		\cite{Hawkes_1971}): let $N_{t}$ be a counting process adapted to the
		filtration $\{\mathcal{F}_{t}\}_{t\geq0}$. Its conditional intensity
		$\lambda_{t}$ satisfies
		\begin{equation}
			d\lambda_{t}=-\beta(\lambda_{t}-\mu)dt+\alpha dN_{t}, \label{eq:hawkes_sde}%
		\end{equation}
		where $\mu,\alpha,\beta$ are three positive parameters. For the Hawkes process
		above, the augmented process $(N_{t},\lambda_{t})$ is Markov and therefore
		falls within the latent Markov framework considered in this paper. Other
		standard examples include Markov-modulated Poisson processes, Cox processes,
		and Poisson-type models with latent or time-varying intensities.
	\end{example}
	
	More generally, many models that are not Markovian in their original
	formulation can nevertheless be embedded into the latent Markov framework
	through a suitable Markov augmentation and finite-dimensional approximations.
	This augmentation technique is fundamental in traditional time series
	analysis.\footnote{For instance, an ARMA(p,q) process $Y_{k}:=\sum_{i=1}%
		^{p}\phi_{i}Y_{k-i}+\sum_{j=0}^{q}\psi_{j}\eta_{k-j}$\ can be represented in a
		linear state-space model by collecting lagged observations and innovations
		into an augmented state $X_{k}$ (see, e.g., \cite{DurbinKoopman2012}).}
	\cite{Sargent2026} show that reduced-rank vector autoregression estimated by
	dynamic mode decomposition can be linked to linear state-space models. Another
	example is the rough Volterra volatility models (see, e.g.,
	\cite{LiPhillipsShiYu2025WeakIdentification}). These models can be represented
	as
	\begin{subequations}
		\begin{align}
			d\log S_{t}  &  =\mu_{1}\left(  V_{t}\right)  dt+\sigma_{1}\left(
			V_{t}\right)  dW_{1t},\label{eq:general_rough_volatility 1}\\
			V_{t}  &  =V_{0}+\int_{0}^{t}K(t-s)\mu_{2}\left(  V_{s}\right)  ds+\int%
			_{0}^{t}K(t-s)\sigma_{2}\left(  V_{s}\right)  dW_{2s},
			\label{eq:general_rough_volatility 2}%
		\end{align}
		where $\mu_{1},\mu_{2},\sigma_{1},\sigma_{2}$ are the drift and volatility
		coefficient functions, and the kernel $K$ determines how past shocks and past
		volatility affect current volatility. $(\log S_{t},V_{t})$ is generally not a
		finite-dimensional Markov process. Nevertheless, this class of models often
		admits a Markovian lift.\footnote{If the kernel admits the representation
			$K(t)=\int_{0}^{\infty}e^{-\beta t}\zeta(d\beta)$ for some nonnegative measure
			$\zeta$, the dependence on the past can be encoded by a continuum of auxiliary
			memory factors $X_{t}(\beta)=\int_{0}^{t}e^{-\beta(t-s)}\mu_{2}(V_{s}%
			)ds+\int_{0}^{t}e^{-\beta(t-s)}\sigma_{2}(V_{s})dW_{2s}$, so that
			$V_{t}=V_{0}+\int_{0}^{\infty}X_{t}(\beta)\zeta(d\beta)$. Thus, although
			$V_{t}$ itself is not finite-dimensional Markov, the lifted process $(\log
			S_{t},\{X_{t}(\beta):\beta>0\})$ is Markov.} Moreover, this Markov lift can be
		approximated by a finite-dimensional Markovian system, which can be treated as
		a standard LMM (see \cite{LiPhillipsShiYu2025WeakIdentification} and references therein).
		
		\subsection{The likelihood update, filtering, and prediction\label{sec:bayes updating}}
		 To handle different types of variables within a unified probabilistic
		 framework, we assume that $\mathcal{X}$ and $\mathcal{Y}$ are endowed with
		 dominating measures $\nu_{X}$ and $\nu_{Y}$, respectively. Specifically, this
		 setup accommodates both continuous and discrete variables: in the continuous
		 case, the dominating measure is the Lebesgue measure, whereas in the discrete
		 case, it is the counting measure. Here, we consider the LMM parametrized by a
		 vector of parameters $\theta$ that belongs to a compact set
		 $\Theta\subseteq\mathbb{R}^{d}$. Without loss of generality, we assume that
		 $(X_{k},Y_{k})$ is a time-homogeneous Markov process. For an LMM, the most
		important characteristic is its transition dynamics. Usually, the transition
		dynamics can be characterized by the transition probability function
		\[
		p_{(X,Y)}(x,y|x_{0},y_{0};\theta):=\mathbb{E}\left[  \delta\left(
		X_{1}-x\right)  \delta\left(  Y_{1}-y\right)  |X_{0}=x_{0},Y_{0}=y_{0}%
		;\theta\right]  ,
		\]
		where $\delta\left(  \cdot\right)  $ represents the Dirac delta
		function.\footnote{Here $\delta(\cdot)$ is understood in a generalized measure-theoretic sense, providing a unified representation across different state spaces and facilitating convenient computation. Its validity is rigorously grounded in the theory of Dirac measures on measurable spaces. When the latent or observed variables are continuous,
			$\delta(\cdot)$ is understood as the Dirac measure concentrated at a given
			point; when the variables take discrete values, it plays the role of an
			indicator function (see, e.g.,
			Chapter 1 of \cite{kallenberg2021foundations}).}
			
		In the analysis of latent Markov models, the main objectives are to infer the
		unknown parameter vector $\theta\in\Theta$ from the observed sequence
		$\mathbf{y}_{k}=(y_{k},y_{k-1},...,y_{0})$ and to predict the latent states.
		Parameter inference is typically carried out by maximum likelihood estimation,
		which requires the evaluation of the marginal likelihood of the data. The
		marginal likelihood function can be constructed sequentially as
	\end{subequations}
	\begin{equation}
		\mathcal{L}(\theta)=\prod_{k=1}^{n}\mathcal{L}_{k}(\theta),\quad
		\label{eq: likelihood iteratively expression}%
	\end{equation}
	where $\mathcal{L}_{k}(\theta)=\mathbb{E}\left[  \delta\left(  Y_{k}%
	-y_{k}\right)  |\mathbf{Y}_{k-1}=\mathbf{y}_{k-1};\theta\right]  .$ The
	associated marginal maximum likelihood estimator (MMLE, hereafter) of $\theta
	$, obtained by maximizing the marginal likelihood function $\mathcal{L}%
	(\theta)$, or equivalently the marginal $\log$-likelihood function
	$\ell(\theta)=\log\mathcal{L}(\theta)$, is defined as follows:
	\[
	\hat{\theta}^{(n)}=\underset{\theta\in\Theta
	}{\operatorname{argmax}}\mathcal{L}(\theta)=\underset{\theta\in\Theta
	}{\operatorname{argmax}}\ell(\theta).
	\]

	Besides likelihood evaluation, it is also crucial to track the dynamics of the
	unobservable latent states under an estimated or pre-specified value of
	$\theta$. This leads to the fundamental tasks of filtering and prediction.
	Conditional on a given $\theta$, the recursively computed filtered density
	\begin{equation}
		p(x_{k}|\mathbf{y}_{k};\theta):=\mathbb{E}\left[  \delta\left(  X_{k}%
		-x_{k}\right)  |\mathbf{Y}_{k}=\mathbf{y}_{k};\theta\right]  ,\quad
		k=1,2,\cdots,n. \label{eq: filtered density}%
	\end{equation}
	characterizes the latent state given the available observation history up to
	time $k.$ Closely related to filtering is prediction, which aims to forecast
	the future latent state of the LMM. The one-step predictive density for the
	latent state characterizes the distribution of the next unobservable state
	given the current observation history and is formulated as follows:%
	\begin{equation}
		p(x_{k+1}|\mathbf{y}_{k};\theta):=\mathbb{E}\left[  \delta\left(
		X_{k+1}-x_{k+1}\right)  |\mathbf{Y}_{k}=\mathbf{y}_{k};\theta\right]  ,\quad
		k=1,2,\cdots,n.
	\end{equation}
	Throughout this paper, we use $p(\cdot|\cdot;\theta)$ to denote the
	conditional density with respect to the dominating measure $\nu_{X}$ on
	$\mathcal{X}$. This unified convention allows us to use the term
	\textquotedblleft density\textquotedblright\ for both continuous-state and
	discrete-state cases. In particular, when $X_{k}$ takes values in a finite or
	countable state space, $\nu_{X}$ is taken as the counting measure, in which
	case $p(\cdot|\cdot;\theta)$ corresponds to a probability mass function. When
	$X_{k}$ is continuous-valued, $\nu_{X}$ is taken as the Lebesgue measure, and
	$p(\cdot|\cdot;\theta)$ is the usual probability density.
	
	A commonly used technique for obtaining the marginal likelihood and updating
	the filtered density is the Bayes updating system. Specifically, when a new observation $y_{k}$
	becomes available, the filtered density can be used to update the likelihood
	according to%
	\begin{align}
		\mathcal{L}_{k}(\theta)  &  =\int_{\mathcal{X}}p_{Y}(y_{k}|x_{k-1}%
		,\mathbf{y}_{k-1};\theta)p(x_{k-1}|\mathbf{y}_{k-1};\theta)\nu_{X}%
		(dx_{k-1})\nonumber\\
		&  =\int_{\mathcal{X}}p_{Y}(y_{k}|x_{k-1},y_{k-1};\theta)p(x_{k-1}%
		|\mathbf{y}_{k-1};\theta)\nu_{X}(dx_{k-1}), \label{eq:likelihood update}%
	\end{align}
	where the second equality follows from the Markov property of $(X_{k},Y_{k})$.
	The marginal transition density $p_{Y}(y|x_{0},y_{0};\theta)$ is defined by
	integrating out the forward latent variable $x$ in the transition density.
	Next, by the definition of the conditional density, we have%
	\begin{equation}
		p(x_{k-1}|\mathbf{y}_{k-1};\theta)=\frac{p(x_{k-1},y_{k-1}|\mathbf{y}%
			_{k-2};\theta)}{\mathcal{L}_{k-1}(\theta)}. \label{eq:conditional density}%
	\end{equation}
	To establish the connection between two successive filtered densities, namely
	$p(x_{k-1}|\mathbf{y}_{k-1};\theta)$ and $p(x_{k}|\mathbf{y}_{k};\theta)$, one
	further integrates out $x_{k-1}$ in (\ref{eq:conditional density}). This
	yields%
	\begin{align}
		p(x_{k}|\mathbf{y}_{k};\theta)  &  =\frac{1}{\mathcal{L}_{k}(\theta)}%
		\int_{\mathcal{X}}p(x_{k},y_{k}|x_{k-1},\mathbf{y}_{k-1};\theta)p(x_{k-1}%
		|\mathbf{y}_{k-1};\theta)\nu_{X}(dx_{k-1})\nonumber\\
		&  =\frac{1}{\mathcal{L}_{k}(\theta)}\int_{\mathcal{X}}p_{(X,Y)}%
		(x_{k}\mathbf{,}y_{k}|x_{k-1},y_{k-1};\theta)p(x_{k-1}|\mathbf{y}_{k-1}%
		;\theta)\nu_{X}\left(  dx_{k-1}\right)  , \label{eq:filter update}%
	\end{align}
	where the second equality follows directly from the Markov property of
	$(X_{k},Y_{k}).$ Using the filtered density $p(x_{k}|\mathbf{y}_{k};\theta),$
	the one-step predictive density $p(x_{k+1}|\mathbf{y}_{k};\theta)$ can be
	written as%
	\begin{equation}
		p(x_{k+1}|\mathbf{y}_{k};\theta)=\int_{\mathcal{X}}p_{X}(x_{k+1}|x_{k}%
		,y_{k};\theta)p(x_{k}|\mathbf{y}_{k};\theta)\nu_{X}\left(  dx_{k}\right).
		\label{eq:prediction update}%
	\end{equation}

	\section{A novel approach: Fourier expansion recursion}
	
	The limitations of existing filtering methods discussed in Section
	\ref{sec:related work} motivate the need for a filtering and likelihood
	evaluation framework that is both accurate and robust for general LMMs, while
	also admitting transparent error analysis and asymptotic guarantees. We develop
	such a framework in this section. For expositional clarity and technical
	convenience, we consider the case in which the latent state is one-dimensional
	and compactly supported on $[b_{0},b_{1}]$. This compact-state formulation need
	not constitute a substantive restriction on the underlying latent process.\footnote{In
		particular, for a noncompact state space, one may introduce a smooth bijective
		transformation $\tilde{X}_{k}=D\left(  X_{k}\right)  ,$ where $D\left(
		\cdot\right)  :\mathcal{X\rightarrow}[b_{0},b_{1}].$ For example, we can use
		$D\left(  x\right)  =b_{0}+(b_{1}-b_{0})(\arctan x/\pi+1/2).$ Since this is a
		one-to-one reparameterization, it preserves the likelihood of the observed
		process exactly. Estimation and inference can therefore be carried out
		equivalently based on the transformed LMM $(\tilde{X}_{k},Y_{k}).$ We
		therefore work with the compact-state representation throughout this paper.}
	Moreover, the results below can be extended to the multivariate case in a
	natural way by replacing the associated term with its multivariate counterpart.
	
	\subsection{The exact Fourier recursion and model-specific specializations\label{sec:exact fourier}}
	
	We begin by representing the filtered density by the Fourier expansion
	\begin{equation}
		p(x_{k}|\mathbf{y}_{k};\theta)=\sum_{l\in\mathbb{Z}}c_{k,l}(\theta)\exp\left(
		\frac{2\pi ilx_{k}}{b_{1}-b_{0}}\right)
		,\label{eq:fourier filtered density representation}%
	\end{equation}
	for $x_{k}\in\lbrack b_{0},b_{1}]$. When $X_{k}$ is continuously distributed
	on $[b_{0},b_{1}]$, a simple sufficient condition for the Fourier representation in (\ref{eq:fourier filtered density representation}) is that $p(x_{k}|\mathbf{y}_{k};\theta)$ is continuously differentiable w.r.t. $x_k$ (see, e.g., Chapter 1 of \cite{duoandikoetxea2024fourier}). The Fourier coefficients are given by
	\begin{equation}
		c_{k,l}(\theta)=\frac{1}{b_{1}-b_{0}}\int_{\mathcal{X}}\exp\left(  -\frac{2\pi
			ilx_{k}}{b_{1}-b_{0}}\right)  p(x_{k}|\mathbf{y}_{k};\theta)dx_{k},\quad
		l\in\mathbb{Z}.\label{eq:fourier coefficient continuous}%
	\end{equation}
	When $X_{k}$ is discrete-valued in the finite set $\{0,1,\ldots,M-1\}$,
	(\ref{eq:fourier filtered density representation}) is interpreted as the
	inverse discrete Fourier transform (IDFT, hereafter) with $b_{0}=0$ and
	$b_{1}=M$. In this case,
	\begin{equation}
		c_{k,l}(\theta)=M^{-1}\sum_{m=0}^{M-1}p_{k,m}(\theta)\exp\left(  \frac{-2\pi
			ilm}{M}\right)  ,\text{ }\label{eq:fourier coefficient discrete}%
	\end{equation}
	for $0\leq l\leq M-1$ and $c_{k,l}(\theta)=0$ otherwise, with $p_{k,m}%
	(\theta)$ as the probability of $X_{k}=m$ given $\mathbf{Y}_{k}=\mathbf{y}%
	_{k},$ i.e., $p_{k,m}(\theta)=\mathbb{P}(X_{k}=m|\mathbf{Y}_{k}=\mathbf{y}%
	_{k};\theta).$ 
	
	The Fourier expansion is widely used as a powerful tool in economic analysis
	and is closely related to sieve estimation in nonparametric and semiparametric
	econometrics (see, e.g., \cite{chen2012estimation,chen2014sieve}). The proposed Fourier expansion in
	(\ref{eq:fourier filtered density representation}) is closely related to sieve
	methods in that both use a set of basis functions to represent an
	infinite-dimensional object. While conventional sieve estimation typically
	approximates a fixed unknown function with a sieve space whose dimension grows
	with the sample size, our method applies the same principle to the filtered
	density, which evolves recursively with incoming observations.
	
	To state the main result, we first introduce several transform objects that
	summarize the transition dynamics of the LMMs. For $u,s\in\mathbb{R}$, we
	define
	\begin{align}
		\Psi(u,y_{1}|x_{0},y_{0};\theta)  &  =\mathbb{E}\left[  \delta(Y_{1}%
		-y_{1})\exp(iuX_{1})|X_{0}=x_{0},Y_{0}=y_{0};\theta\right]
		,\label{eq:def psi}\\
		\Psi_{X}(u|x_{0},y_{0};\theta)  &  =\mathbb{E}\left[  \exp(iuX_{1}%
		)|X_{0}=x_{0},Y_{0}=y_{0};\theta\right]  . \label{eq:def psi X}%
	\end{align}
	The function $\Psi(u,y_{1}|x_{0},y_{0};\theta)$ is the conditional Fourier
	transform of the joint transition law of $(X_{1},Y_{1})$ w.r.t. $X_{1}$,
	evaluated at $Y_{1}=y_{1}$. The function $\Psi_{X}(u|x_{0},y_{0};\theta)$ is
	the corresponding conditional Fourier transform of the latent transition
	alone. We shall call $\Psi$ the conditional Fourier transition kernel. We
	further define $H(u,y_{1},s,y_{0};\theta)$ (resp. $H_{X}(u,s,y_{0};\theta)$)
	as the Fourier transform of $\Psi(u,y_{1}|x_{0},y_{0};\theta)$ (resp.
	$\Psi_{X}(u|x_{0},y_{0};\theta)$) w.r.t. the previous latent state $x_{0},$
	i.e.,
	\begin{align}
		H(u,y_{1},s,y_{0};\theta)  &  =\int_{\mathcal{X}}\exp(isx_{0})\Psi
		(u,y_{1}|x_{0},y_{0};\theta)\nu_{X}(dx_{0}),\label{eq:def H}\\
		H_{X}(u,s,y_{0};\theta)  &  =\int_{\mathcal{X}}\exp(isx_{0})\Psi_{X}%
		(u|x_{0},y_{0};\theta)\nu_{X}(dx_{0}). \label{eq:def H X}%
	\end{align}

	The Fourier transformed functions $\Psi$, $\Psi_{X}$, $H$ and $H_{X}$ are well
	defined under broad conditions. In particular, $\Psi$ and $\Psi_{X}$ exist
	whenever the transition density $p_{(X,Y)}$ of the LMM is well defined, for instance, when it is continuously differentiable (see, e.g., Chapter 1 of \cite{duoandikoetxea2024fourier}). $H$
	and $H_{X}$ require the integrability w.r.t. the previous latent state $x_{0}%
	$. This condition is automatically satisfied when the latent state is
	compactly supported. With these notations, the likelihood update, filtering, and one-step prediction admit an exact representation entirely in terms of the Fourier coefficients of the filtered density. The following theorem establishes the resulting exact Fourier recursion.
	
	\begin{theorem}
		\label{thm:fourier series update 1} Suppose that the latent state $X_{k}$ is
		compactly supported on $\mathcal{X}=[b_{0},b_{1}]$. Then the likelihood update
		function $\mathcal{L}_{k}(\theta)$ is given by
		\begin{equation}
			\mathcal{L}_{k}(\theta)=\sum_{s\in\mathbb{Z}}c_{k-1,s}(\theta)H\left(
			0,y_{k},\frac{2\pi s}{b_{1}-b_{0}},y_{k-1};\theta\right)  ,
			\label{eq:fourier likelihood update}%
		\end{equation}
		where $\left\{  c_{k,l}(\theta)\right\}  _{l\in\mathbb{Z}}$ denotes the Fourier
		coefficients of the filtering density $p(x_{k}|\mathbf{y}_{k};\theta)$ defined
		in (\ref{eq:fourier filtered density representation}). These coefficients are
		updated according to%
		\begin{equation}
			c_{k,l}(\theta)=\sum_{s\in\mathbb{Z}}\frac{c_{k-1,s}(\theta)}{\mathcal{L}%
				_{k}(\theta)(b_{1}-b_{0})}H\left(  \frac{-2\pi l}{b_{1}-b_{0}},y_{k}%
			,\frac{2\pi s}{b_{1}-b_{0}},y_{k-1};\theta\right)  ,\quad l\in\mathbb{Z}.
			\label{eq:fourier filter update}%
		\end{equation}
		Moreover, the one-step predictive density admits the representation
		\begin{equation}
			p(x_{k}|\mathbf{y}_{k-1};\theta)=\sum_{l\in\mathbb{Z}}c_{k,l}^{-}(\theta
			)\exp\left(  \frac{2\pi ilx_{k}}{b_{1}-b_{0}}\right)  ,
			\label{eq:fourier predictive density}%
		\end{equation}
		where $\left\{  c_{k,l}^{-}(\theta)\right\}  _{l\in\mathbb{Z}}$ can be
		obtained from $\left\{  c_{k,l}(\theta)\right\}  _{l\in\mathbb{Z}}$ as
		\begin{equation}
			c_{k,l}^{-}(\theta)=\sum_{s\in\mathbb{Z}}\frac{c_{k-1,s}(\theta)}{b_{1}-b_{0}%
			}H_{X}\left(  \frac{-2\pi l}{b_{1}-b_{0}},\frac{2\pi s}{b_{1}-b_{0}}%
			,y_{k-1};\theta\right)  ,\quad l\in\mathbb{Z}.
			\label{eq:fourier prediction update}%
		\end{equation}
		The recursion is initialized by specifying an initial filtered density
		$p(x_{0}|y_{0};\theta)$ and the corresponding initial Fourier coefficients
		\[
		c_{0,l}(\theta)=\frac{1}{b_{1}-b_{0}}\int_{\mathcal{X}}\exp\left(  -\frac{2\pi
			ilx_{0}}{b_{1}-b_{0}}\right)  p(x_{0}|y_{0};\theta)\nu_{X}(dx_{0}),\text{
		}l\in\mathbb{Z}.
		\]
		
	\end{theorem}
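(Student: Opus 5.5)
The plan is to substitute the Fourier expansion (\ref{eq:fourier filtered density representation}) of $p(x_{k-1}|\mathbf{y}_{k-1};\theta)$ into the Bayes updating identities (\ref{eq:likelihood update}), (\ref{eq:filter update}) and (\ref{eq:prediction update}). After exchanging sum and integral, each integral against $x_{k-1}$ becomes, by definition (\ref{eq:def H}) or (\ref{eq:def H X}), a value of $H$ or $H_X$. Throughout write $L=b_1-b_0$ and $\omega_s=2\pi s/L$.

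\textbf{Likelihood.} First note that $p_Y(y_k|x_{k-1},y_{k-1};\theta)=\Psi(0,y_k|x_{k-1},y_{k-1};\theta)$, because setting $u=0$ in (\ref{eq:def psi}) integrates out $X_1$. Inserting $p(x_{k-1}|\mathbf{y}_{k-1};\theta)=\sum_s c_{k-1,s}(\theta)e^{i\omega_s x_{k-1}}$ into (\ref{eq:likelihood update}) and integrating term by term gives $\sum_s c_{k-1,s}(\theta)\int e^{i\omega_s x_{k-1}}\Psi(0,y_k|x_{k-1},y_{k-1};\theta)\nu_X(dx_{k-1})$. Each integral is exactly $H(0,y_k,\omega_s,y_{k-1};\theta)$, which yields (\ref{eq:fourier likelihood update}).

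\textbf{Filter.} Compute $c_{k,l}$ from (\ref{eq:fourier coefficient continuous}) applied to (\ref{eq:filter update}). By Fubini, $L^{-1}\int e^{-i\omega_l x_k}p_{(X,Y)}(x_k,y_k|x_{k-1},y_{k-1};\theta)\nu_X(dx_k)=L^{-1}\Psi(-\omega_l,y_k|x_{k-1},y_{k-1};\theta)$. Substituting the expansion of the previous filter and integrating in $x_{k-1}$ then gives (\ref{eq:fourier filter update}), with the factor $1/\mathcal{L}_k(\theta)$ carried along. The prediction formulas (\ref{eq:fourier predictive density})--(\ref{eq:fourier prediction update}) follow the same way from (\ref{eq:prediction update}), using $p_X$ and $\Psi_X$, $H_X$ in place of $p_{(X,Y)}$ and $\Psi$, $H$. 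There is one notational point here: the coefficients come from the filter at time $k-1$ and produce the prediction for time $k$.

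In the discrete case the same computation holds with the IDFT (\ref{eq:fourier coefficient discrete}) and counting measure. The identity $p_{k-1,m}=\sum_{s=0}^{M-1}c_{k-1,s}e^{2\pi i sm/M}$ is exact and finite, so no interchange issue arises. The initialization is simply the definition of the coefficients.

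\textbf{Main obstacle.} The one nontrivial step is justifying the interchange of $\sum_s$ with $\int\cdot\,\nu_X(dx_{k-1})$. I would assume, as the paper does, that the filtered density is $C^1$ on the compact interval. Then its Fourier series converges, and for a $C^1$ function the coefficients are square summable with $\sum_s|c_{k-1,s}|<\infty$, by Cauchy--Schwarz applied to $|s|\,|c_{k-1,s}|$. Since $|\Psi|\le p_Y$, which is integrable in $x_{k-1}$ over the compact set, the dominated convergence theorem applies termwise, giving $|c_{k-1,s}e^{i\omega_s x}\Psi|\le|c_{k-1,s}|\,p_Y$.

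There is a subtlety if the filter is not periodic, so that $p(b_0)\ne p(b_1)$. Then the series converges only in $L^2$ or pointwise in the interior. If $\Psi(\cdot|x_{k-1},\cdot)$ is square integrable in $x_{k-1}$, the interchange can instead be justified by $L^2$ continuity of the inner product (Parseval), which suffices for the integrals.
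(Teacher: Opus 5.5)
Your proposal is correct and follows the paper's proof: you substitute the Fourier series of $p(x_{k-1}|\mathbf{y}_{k-1};\theta)$ into the Bayes identities, rewrite the inner $x_k$-integrals as $\Psi(0,\cdot)$, $\Psi(-\omega_l,\cdot)$ and $\Psi_X(-\omega_l,\cdot)$, and integrate termwise to obtain $H$ and $H_X$. The paper does not justify the sum--integral interchange at all, so that part of your write-up goes beyond it. Your $\ell^1$ argument for the interchange needs $p(b_0)=p(b_1)$, whereas your Parseval fallback covers the general case; it even gives absolute convergence of $\sum_s c_{k-1,s}H(\cdot,\omega_s,\cdot)$ by Cauchy--Schwarz on the two coefficient sequences.
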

	
	Theorem \ref{thm:fourier series update 1} establishes that likelihood
	evaluation, filtering, and prediction operate entirely within the Fourier
	coefficient space. Specifically, the likelihood is computed as a linear
	functional of $\{c_{k,l}(\theta)\}_{l\in\mathbb{Z}}$, while deterministic
	operators map $\{c_{k-1,l}(\theta)\}_{l\in\mathbb{Z}}$ to the filtering
	$\{c_{k,l}(\theta)\}_{l\in\mathbb{Z}}$ and prediction $\{c_{k,l}^{-}%
	(\theta)\}_{l\in\mathbb{Z}}$ coefficients, respectively. Thus, the infinite
	sequence $\{c_{k,l}(\theta)\}_{l\in\mathbb{Z}}$ serves as
	information-preserving elements for general LMMs, rigorously generalizing the
	role of the first two moments in the Kalman filter. Importantly, Theorem
	\ref{thm:fourier series update 1} is an exact reformulation of the Bayesian
	filtering system rather than an ad hoc numerical discretization. By preserving
	the continuous filtering distribution through its full sequence of Fourier
	coefficients, intractable Bayes integrals are converted into structured
	recursions, with model-specific nonlinear and non-Gaussian features absorbed
	into $H$ and $H_{X}$.
	
	As an immediate consequence of Theorem \ref{thm:fourier series update 1}, the
	task of predicting latent state can also be recovered from the corresponding
	Fourier coefficients. Usually, prediction of the latent state is based on
	either the filtered mean or the one-step predictive mean, depending on the
	information set available to the econometrician. Both quantities can be
	obtained directly from the Fourier representation of the corresponding
	conditional distribution. In general, by
	(\ref{eq:fourier filtered density representation}), the filtered mean can be
	obtained as
	\begin{equation}
		\mathbb{E}\left[  X_{k}|\mathbf{Y}_{k}=\mathbf{y}_{k};\theta\right]
		=\sum_{l\in\mathbb{Z}}c_{k,l}(\theta)\int_{\mathcal{X}}x_{k}\exp\left(
		\frac{2\pi ilx_{k}}{b_{1}-b_{0}}\right)  \nu_{X}(dx_{k}).
		\label{eq:filtered mean}%
	\end{equation}
	When $X_{k}$ is continuous-valued on $[b_{0},b_{1}]$ and $\nu_{X}$ now is the
	Lebesgue measure, (\ref{eq:filtered mean}) admits closed-form expression as%
	\[
	\mathbb{E}\left[  X_{k}|\mathbf{Y}_{k}=\mathbf{y}_{k};\theta\right]
	=c_{k,0}(\theta)\frac{b_{1}^{2}-b_{0}^{2}}{2}+\sum_{l\in\mathbb{Z},\,l\neq
		0}c_{k,l}(\theta)\frac{(b_{1}-b_{0})^{2}}{2\pi il}\exp\left(  \frac{2\pi
		ilb_{0}}{b_{1}-b_{0}}\right)  .
	\]
	A similar closed-form expression for predicted mean can be derived similarly.\footnote{In
		addition to the filtered mean, one may also use the maximum a posteriori
		estimator (MAP) to estimate the latent state, defined by $\hat{\mu}%
		_{k}^{\text{MAP}}=\arg\max_{x\in\mathcal{X}}p(x|\mathbf{Y}_{k}=\mathbf{y}%
		_{k};\theta)$. This estimator is particularly useful when the filtering
		distribution is discrete-valued, as in regime-switching
		models in Example \ref{example:regime}, or when the filtering density is
		multimodal, as in the QTSM model defined in (\ref{eq:qtsm}). $\hat{\mu
		}_{k}^{\text{MAP}}$ can also be recovered from the Fourier coefficients.} Here, we also provide a straightforward interpretation of the Fourier
	coefficients $\{c_{k,l}(\theta)\}_{l\in\mathbb{Z}}.$ For expositional clarity
	and recognizing that the state space of the latent variable can be
	appropriately scaled without loss of generality, we assume that $2\pi
	/(b_{1}-b_{0})$ is much smaller than one. Under this premise, the
	low-frequency coefficients, namely, when $\left\vert l\right\vert $ is small,
	mainly encode the large-scale features of the filtered density, such as its
	mean value and variance. Let $\omega_{l}=2\pi l/(b_{1}-b_{0})$, and for
	sufficiently low frequencies where $\left\vert \omega_{l}\right\vert <1,$ the Taylor expansion gives us%
	\[
	c_{k,l}(\theta)=\frac{1}{b_{1}-b_{0}}\mathbb{E}\left[  \left.  \exp\left(
	-i\omega_{l}X_{k}\right)  \right\vert \mathbf{Y}_{k}=\mathbf{y}_{k}%
	;\theta\right]  =\sum_{j=0}^{\infty}\frac{\left(  -i\right)  ^{j}\omega
		_{l}^{j}}{(b_{1}-b_{0})j!}\mathbb{E}\left[  X_{k}^{j}|\mathbf{Y}%
	_{k}=\mathbf{y}_{k};\theta\right]  .
	\]
	Intuitively, when $\left\vert l\right\vert $ is small, say $l=1,2,$ by
	dropping the higher order terms in the above Taylor expansion as $\left\vert
	\omega_{l}\right\vert <1$, we have%
	\[
	c_{k,l}(\theta)\approx\frac{1}{b_{1}-b_{0}}\left(  1-i\omega_{l}%
	\mathbb{E}\left[  X_{k}|\mathbf{Y}_{k}=\mathbf{y}_{k};\theta\right]
	-\frac{\omega_{l}^{2}}{2}\mathbb{E}\left[  X_{k}^{2}|\mathbf{Y}_{k}%
	=\mathbf{y}_{k};\theta\right]  \right)  .
	\]
	This implies that the low-frequency coefficients $c_{k,1}(\theta)$ and
	$c_{k,2}(\theta)$ primarily encode the filtered mean and variance.
	High-frequency coefficients, by contrast, describe the local features of the
	filtered density. They become important when the filtered distribution is
	sharply concentrated.\footnote{For instance, suppose that, for some small
		$h>0$, $p(x_{k}|\mathbf{y}_{k};\theta)\approx h^{-1}\mathbf{1}_{|x_{k}%
			-x_{k}^{\ast}|\leq h/2}$, with $[x_{k}^{\ast}-h/2,x_{k}^{\ast}+h/2]\subset
		\lbrack b_{0},b_{1}]$. Then, by the definition of Fourier coefficients,
		$c_{k,l}(\theta)\approx(b_{1}-b_{0})^{-1}e^{-i\omega_{l}x_{k}^{\ast}%
		}\operatorname{sinc}(\omega_{l}h/2)$, where $\operatorname{sinc}(z)=\sin
		(z)/z$. Hence $|c_{k,l}(\theta)|\approx(b_{1}-b_{0})^{-1}|\operatorname{sinc}%
		(\pi lh/(b_{1}-b_{0}))|$, which does not decay rapidly as $|l|$ increases.}
	Accurately representing such a localized filtered density requires retaining
	sufficiently high-frequency Fourier coefficients. Therefore, by incorporating
	high-frequency Fourier coefficients, the proposed Fourier recursion is
	suitable for sharply peaked filtered densities, which arise in a wide range of
	modern econometric settings and are often difficult to handle with existing
	filtering methods.
	
	Obtaining point estimation of MMLEs alone, however, is not sufficient for
	statistical inference in applications. Researchers must
	also quantify estimation uncertainty, construct confidence intervals, and test
	economically meaningful restrictions. As shown in Section \ref{sec:theory},
	the asymptotic covariance matrix of the MMLE can be consistently estimated by
	the score vector and the Hessian matrix of the sample log-likelihood
	$S_{n}^{\ell}(\theta)=\sum_{k=1}^{n}\log\mathcal{L}_{k}(\theta)$. This
	theoretical connection makes the computation of these first- and second-order
	derivatives an essential component of likelihood-based inference.
	
	Accordingly, we examine whether the score and Hessian of the sample
	log-likelihood can be evaluated efficiently within the filtering framework in
	Theorem \ref{thm:fourier series update 1}. In particular, the
	Fourier-coefficient recursion used for likelihood updating and filtering can
	be augmented with recursive equations for its first- and second-order
	derivatives w.r.t. the model parameters, namely, $\nabla_{\theta}%
	c_{k,l}\left(  \theta\right)  $ and $\nabla_{\theta}^{2}c_{k,l}\left(
	\theta\right)  $. This extension yields a deterministic online recursion for
	the gradient and Hessian matrix of the sample log-likelihood, thereby
	providing the derivative quantities required for the inferential analysis
	developed later. Let $H_{k}^{(l,s)}(\theta):=H\left(  \frac{2\pi l}%
	{b_{1}-b_{0}},y_{k},\frac{2\pi s}{b_{1}-b_{0}},y_{k-1};\theta\right)  $,
	$\nabla_{\theta}S_{n}^{\ell}(\theta)$ denotes the column gradient, and
	$\nabla_{\theta}^{2}S_{n}^{\ell}(\theta)$ denotes the Hessian matrix. We have
	the following recursion for $\nabla_{\theta}S_{n}^{\ell}(\theta)$ and
	$\nabla_{\theta}^{2}S_{n}^{\ell}(\theta).$

\begin{proposition}
	\label{prop:hessian updating}
	Assume that $H_{k}^{(l,s)}(\theta)$ is twice differentiable with respect to
	$\theta\in\Theta$. Then, $\nabla_{\theta}S_{k}^{\ell}(\theta)$ and
	$\nabla_{\theta}^{2}S_{k}^{\ell}(\theta)$ can be recursively updated as
	\[
		\nabla_{\theta}S_{k}^{\ell}(\theta)=\nabla_{\theta}S_{k-1}^{\ell}%
		(\theta)+\frac{\nabla_{\theta}\mathcal{L}_{k}(\theta)}{\mathcal{L}_{k}%
			(\theta)},\quad\nabla_{\theta}^{2}S_{k}^{\ell}(\theta)={}\nabla_{\theta}%
		^{2}S_{k-1}^{\ell}(\theta)+\frac{\nabla_{\theta}^{2}\mathcal{L}_{k}(\theta
			)}{\mathcal{L}_{k}(\theta)}-\frac{\nabla_{\theta}\mathcal{L}_{k}(\theta
			)\nabla_{\theta}^{\top}\mathcal{L}_{k}(\theta)}{\mathcal{L}_{k}^{2}(\theta)},
	\]
	where $\nabla_{\theta}\mathcal{L}_{k}(\theta)$ and
	$\nabla_{\theta}^{2}\mathcal{L}_{k}(\theta)$ can also be recursively updated. Namely, we have
	\[
		\nabla_{\theta}\mathcal{L}_{k}(\theta)
		=\mathcal{M}_{k,0}^{(1)}[\mathbf{c}_{k-1}(\theta)],\quad
		\nabla_{\theta}^{2}\mathcal{L}_{k}(\theta)
		=\mathcal{M}_{k,0}^{(2)}[\mathbf{c}_{k-1}(\theta)],
	\]
     where $\mathbf{c}_{k-1}(\theta)=\{c_{k-1,s}(\theta)\}_{s\in\mathbb Z}$, and for $\theta$-dependent sequence $\mathbf{v}=\{v_s\}_{s\in \mathbb{Z}},$ the operators $\mathcal{M}_{k,l}^{(1)}$ and $\mathcal{M}_{k,l}^{(2)}$
are defined by 
     \begin{align}
     &\mathcal{M}_{k,l}^{(1)}[\mathbf{v}]:=\sum_{s\in\mathbb Z}
		[
		H_{k}^{(-l,s)}(\theta)\nabla_{\theta}v_s
		+v_s\nabla_{\theta}H_{k}^{(-l,s)}(\theta)
		],\\ \nonumber
         &\mathcal{M}_{k,l}^{(2)}[\mathbf{v}]
		:=\sum_{s\in\mathbb Z}
		[
		H_{k}^{(-l,s)}(\theta)\nabla_{\theta}^{2}v_s
		+2Sym(
		\nabla_{\theta}v_s,
		\nabla_{\theta}H_{k}^{(-l,s)}(\theta)
		)
		+v_s\nabla_{\theta}^{2}H_{k}^{(-l,s)}(\theta)
		],\nonumber
     \end{align}
	 with $\text{Sym}(a,b)=(ab^{\intercal}+ba^{\intercal})/2$. Moreover, the first and second derivatives of the filtering Fourier coefficient can be recursively updated as
	\begin{align}
	    \nabla_{\theta}c_{k,l}(\theta)
		&=
		\frac{\mathcal{M}_{k,l}^{(1)}[\mathbf{c}_{k-1}(\theta)]}
		{\mathcal{L}_{k}(\theta)(b_{1}-b_{0})}
		-\frac{c_{k,l}(\theta)\nabla_{\theta}\mathcal{L}_{k}(\theta)}
		{\mathcal{L}_{k}(\theta)},\label{eq:gradient fourier coeff} \\
        \nabla_{\theta}^{2}c_{k,l}(\theta)
		&=
		\frac{\mathcal{M}_{k,l}^{(2)}[\mathbf{c}_{k-1}(\theta)]}
		{\mathcal{L}_{k}(\theta)(b_{1}-b_{0})}
		-\frac{
		c_{k,l}(\theta)\nabla_{\theta}^{2}\mathcal{L}_{k}(\theta)
		+2\text{Sym}\!\left(
		\nabla_{\theta}c_{k,l}(\theta),
		\nabla_{\theta}\mathcal{L}_{k}(\theta)
		\right)}
		{\mathcal{L}_{k}(\theta)}.
		\label{eq:hessian fourier coeff}
	\end{align}
\end{proposition}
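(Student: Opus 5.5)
The plan is to differentiate the exact recursion of Theorem \ref{thm:fourier series update 1} term by term with respect to $\theta$, using the product and quotient rules. The only non-algebraic issue is justifying the interchange of $\nabla_\theta$ with the infinite sums over $s\in\mathbb Z$.

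\textbf{Step 1: the log-likelihood.} Since $S_k^{\ell}(\theta)=S_{k-1}^{\ell}(\theta)+\log\mathcal{L}_k(\theta)$, differentiating once and twice gives
\[
\nabla_\theta\log\mathcal{L}_k=\frac{\nabla_\theta\mathcal{L}_k}{\mathcal{L}_k},
\qquad
\nabla_\theta^2\log\mathcal{L}_k=\frac{\nabla^2_\theta\mathcal{L}_k}{\mathcal{L}_k}-\frac{\nabla_\theta\mathcal{L}_k\nabla_\theta^\top\mathcal{L}_k}{\mathcal{L}_k^2}.
\]
This uses $\mathcal{L}_k(\theta)>0$, which holds along observed data. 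These two identities are exactly the first two displayed recursions.

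\textbf{Step 2: the likelihood update.} By (\ref{eq:fourier likelihood update}), $\mathcal{L}_k(\theta)=\sum_s c_{k-1,s}(\theta)H_k^{(0,s)}(\theta)$. Differentiating each summand with the product rule gives
\[
\nabla_\theta\big(c_{k-1,s}H_k^{(0,s)}\big)=H_k^{(0,s)}\nabla_\theta c_{k-1,s}+c_{k-1,s}\nabla_\theta H_k^{(0,s)}.
\]
Summing over $s$ yields $\mathcal{M}^{(1)}_{k,0}[\mathbf c_{k-1}]$. The second-order product rule produces the cross term $\nabla_\theta c\,\nabla_\theta^\top H+\nabla_\theta H\,\nabla_\theta^\top c=2\mathrm{Sym}(\nabla_\theta c,\nabla_\theta H)$, which gives $\mathcal{M}^{(2)}_{k,0}$.

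\textbf{Step 3: the filter coefficients.} Write (\ref{eq:fourier filter update}) as
\[
c_{k,l}\,\mathcal{L}_k\,(b_1-b_0)=\sum_s c_{k-1,s}H_k^{(-l,s)}.
\]
I would differentiate this identity directly rather than the quotient. Once,
\[
(b_1-b_0)\big(\mathcal{L}_k\nabla_\theta c_{k,l}+c_{k,l}\nabla_\theta\mathcal{L}_k\big)=\mathcal{M}^{(1)}_{k,l}[\mathbf c_{k-1}],
\]
and solving for $\nabla_\theta c_{k,l}$ gives (\ref{eq:gradient fourier coeff}). Twice,
\[
(b_1-b_0)\big(\mathcal{L}_k\nabla^2_\theta c_{k,l}+2\mathrm{Sym}(\nabla_\theta c_{k,l},\nabla_\theta\mathcal{L}_k)+c_{k,l}\nabla^2_\theta\mathcal{L}_k\big)=\mathcal{M}^{(2)}_{k,l}[\mathbf c_{k-1}],
\]
and solving gives (\ref{eq:hessian fourier coeff}). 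The normalization $(b_1-b_0)$ multiplies only the $\mathcal M$-terms, consistent with the statement.

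\textbf{Step 4: recursiveness.} By induction on $k$, $\nabla_\theta\mathbf c_{k}$ and $\nabla^2_\theta\mathbf c_{k}$ are determined by $\mathbf c_{k-1}$, $\nabla_\theta\mathbf c_{k-1}$, $\nabla_\theta^2\mathbf c_{k-1}$ and the derivatives of $H$. The base case is the derivative of the initial coefficients $c_{0,l}$, assuming the initial density is twice differentiable in $\theta$.

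\textbf{Main obstacle.} The delicate point is exchanging $\nabla_\theta$ with the series over $s$. I would first note that $H_k^{(-l,s)}$ is the Fourier transform in $x_0$ of $\Psi$, so $\sum_s c_{k-1,s}H_k^{(-l,s)}$ equals $(b_1-b_0)\int\Psi(\cdot|x_0,\cdot)\,p(x_0|\mathbf y_{k-1};\theta)\,\nu_X(dx_0)$ by Parseval. One can then differentiate under the integral on the compact set $[b_0,b_1]$, using twice differentiability together with a dominating bound, and map back to coefficient form.

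Alternatively, I would argue that the differentiated series converge locally uniformly in $\theta$. This would follow from square-summability of $\{c_{k-1,s}\}$ and $\{\nabla c_{k-1,s}\}$ together with square-summability of $\{H^{(-l,s)}_k\}$ and its $\theta$-derivatives in $s$, via Cauchy--Schwarz; the needed summability would be taken as implicit in the twice-differentiability hypothesis.
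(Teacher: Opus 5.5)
Your proposal is correct and follows the route the paper has in mind. The paper gives no separate proof; the proposition is the termwise product-rule differentiation of $\mathcal{L}_k=\sum_s c_{k-1,s}H_k^{(0,s)}$ and of $c_{k,l}\mathcal{L}_k(b_1-b_0)=\sum_s c_{k-1,s}H_k^{(-l,s)}$ from Theorem \ref{thm:fourier series update 1}. Your remarks on interchanging $\nabla_\theta$ with the series and on needing $\theta$-differentiability of $c_{0,l}$ address points the paper passes over. One small slip: substituting the Fourier series of the filter gives $\sum_s c_{k-1,s}H_k^{(-l,s)}=\int_{\mathcal{X}}\Psi(-2\pi l/(b_1-b_0),y_k|x_0,y_{k-1};\theta)\,p(x_0|\mathbf{y}_{k-1};\theta)\,dx_0$ with no extra factor $(b_1-b_0)$, as in the proof of Theorem \ref{thm:fourier series update 1}.
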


	Proposition \ref{prop:hessian updating} illustrates how to obtain the Hessian matrix
	for the marginal likelihood under LMMs through $\{(c_{k,l}(\theta
	),\nabla_{\theta}c_{k,l}(\theta),\nabla_{\theta}^{2}c_{k,l}(\theta
	))\}_{l\in\mathbb{Z}}$. This gives a deterministic online method.\footnote{In contrast,
	existing approaches for Hessian computation for LMMs often rely on
	backward sampling, MCMC, or particle smoothing, and are therefore typically
	computationally demanding (see, e.g., \cite{PoyiadjisDoucetSingh2011}).} The
	proposed method only requires tractable $H$, $\nabla_{\theta}H,$ and
	$\nabla_{\theta}^{2}H.$ With the availability of the Hessian matrix, the
	asymptotic properties of the MMLEs can be established; detailed theoretical
	analysis is deferred to Section \ref{sec:theory}.
	
	 For practical implementation, can the exact Fourier recursion in Theorem \ref{thm:fourier series update 1} be made operational in practice? The first question is whether $H$ and $H_X$ are tractable. To effectively tackle the tractability of $H$ and $H_{X}$, it is often fruitful
	to exploit the inherent properties present in certain LMM architectures. Among
	the broad class of LMMs, state-space models (SSMs, hereafter) and affine
	models constitute two structurally pivotal subclasses as illustrated by the
	diverse specifications in Examples \ref{example:linear smm}-\ref{example:bns}.
	Next, we will see how the Fourier recursion operates under these two different
	classes of LMMs.
	
	SSMs exhibit a special structure that significantly simplifies inference.
	Namely, two extra conditional independence assumptions are satisfied, i.e.,
	\begin{subequations}
		\begin{align}
			\mathbb{P}\left(  X_{k}\in dx|\mathbf{X}_{k-1}=\mathbf{x}_{k-1},\mathbf{Y}%
			_{k-1}=\mathbf{y}_{k-1};\theta\right)   &  =\mathbb{P}\left(  X_{k}\in
			dx|X_{k-1}=x_{k-1};\theta\right)  ,\label{eq:SSM 1}\\
			\mathbb{P}\left(  Y_{k}\in dy|\mathbf{X}_{k}=\mathbf{x}_{k},\mathbf{Y}%
			_{k-1}=\mathbf{y}_{k-1};\theta\right)   &  =\mathbb{P}\left(  Y_{k}\in
			dy|X_{k}=x_{k};\theta\right)  . \label{eq:SSM 2}%
		\end{align}
		Let $q_{X}\left(  x_{k}|x_{k-1};\theta\right)  =\mathbb{E}\left[
		\delta\left(  X_{k}-x_{k}\right)  |X_{k-1}=x_{k-1};\theta\right]  ,$
		$q_{Y}\left(  y_{k}|x_{k};\theta\right)  =\mathbb{E}\left[  \delta\left(
		Y_{k}-y\right)  |X_{k}=x_{k};\theta\right]  ,$ and $\varphi_{Y}\left(
		y,u;\theta\right)  =\int_{\mathcal{X}}q_{Y}\left(  y|x;\theta\right)
		e^{iux}\nu_{X}(dx)$ be the Fourier transform of $q_{Y}\left(  y|x;\theta
		\right)  $ w.r.t. $x.$ The Fourier recursion for SSMs can be simplified as follows.
	\end{subequations}
	\begin{corollary}
		\label{thm:fourier series update 3}When the LMM reduces to an SSM, i.e.,
		(\ref{eq:SSM 1}) and (\ref{eq:SSM 2}) are satisfied, we assume that $X_{k}$ is
		compactly supported on $[b_{0},b_{1}]$. Then, the likelihood update function
		$\mathcal{L}_{k}(\theta)$ is given by%
		\[
		\mathcal{L}_{k}\left(  \theta\right)  =\sum_{l\in\mathbb{Z}}c_{k,l}^{-}\left(
		\theta\right)  \varphi_{Y}\left(  y_{k},\frac{2\pi l}{b_{1}-b_{0}}%
		;\theta\right)  ,
		\]
		where $\{c_{k,l}^{-}\left(  \theta\right)  \}_{l\in\mathbb{Z}}$ denote the
		Fourier coefficients for one-step predictive density defined in
		(\ref{eq:fourier predictive density}). The Fourier coefficients for filtered
		density, i.e., $\left\{  c_{k,l}\left(  \theta\right)  \right\}
		_{l\in\mathbb{Z}}$ can be updated from $\{c_{k,l}^{-}\left(  \theta\right)
		\}_{l\in\mathbb{Z}}$ as
		\[
		c_{k,l}\left(  \theta\right)  =\sum_{r\in\mathbb{Z}}\frac{c_{k,r}^{-}\left(
			\theta\right)  }{\mathcal{L}_{k}\left(  \theta\right)  \left(  b_{1}%
			-b_{0}\right)  }\varphi_{Y}\left(  y_{k},\frac{2\pi\left(  r-l\right)  }%
		{b_{1}-b_{0}};\theta\right)  ,
		\]
		and the Fourier coefficients $\{c_{k+1,l}^{-}\left(  \theta\right)
		\}_{l\in\mathbb{Z}}$ can be recovered from $\left\{  c_{k,l}\left(
		\theta\right)  \right\}  _{l\in\mathbb{Z}}$ exactly the same as
		(\ref{eq:fourier prediction update}), i.e.,
		\[
		c_{k+1,l}^{-}\left(  \theta\right)  =\sum_{r\in\mathbb{Z}}\frac{c_{k,r}\left(
			\theta\right)  }{b_{1}-b_{0}}H_{X}^{\text{SSM}}\left(  \frac{-2\pi l}%
		{b_{1}-b_{0}},\frac{2\pi r}{b_{1}-b_{0}};\theta\right)  ,
		\]
		with $H_{X}^{\text{SSM}}(u,s;\theta):=H_{X}(u,s,y_{0};\theta)$, where we omit
		$y_{0}$ due to (\ref{eq:SSM 1}) for simplicity.
	\end{corollary}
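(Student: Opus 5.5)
The plan is to specialize the Bayes updating system of Section \ref{sec:bayes updating} to the SSM factorization. Then all three recursions follow by substituting Fourier series and exchanging sums with integrals, exactly as in Theorem \ref{thm:fourier series update 1}. Under (\ref{eq:SSM 1})--(\ref{eq:SSM 2}) the joint transition factorizes as
\[
p_{(X,Y)}(x_k,y_k|x_{k-1},y_{k-1};\theta)=q_Y(y_k|x_k;\theta)\,q_X(x_k|x_{k-1};\theta).
\]
Consequently $\Psi_X$ and $H_X$ do not depend on $y_0$, which justifies the notation $H_X^{\text{SSM}}$.

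\textbf{Step 1: prediction recursion.} The prediction step is (\ref{eq:prediction update}) with $p_X=q_X$. It is literally the special case of (\ref{eq:fourier prediction update}) in Theorem \ref{thm:fourier series update 1}, with indices shifted from $k$ to $k+1$. I will simply invoke that result and drop the $y_0$ argument.

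\textbf{Step 2: likelihood.} I will rewrite (\ref{eq:likelihood update}) using the factorization and integrate over $x_{k-1}$ first, which gives
\[
\mathcal{L}_k(\theta)=\int_{\mathcal X} q_Y(y_k|x_k;\theta)\,p(x_k|\mathbf y_{k-1};\theta)\,\nu_X(dx_k).
\]
I then insert the expansion (\ref{eq:fourier predictive density}), $p(x_k|\mathbf y_{k-1};\theta)=\sum_l c^-_{k,l}e^{2\pi i l x_k/(b_1-b_0)}$. Exchanging sum and integral turns each term into $\int q_Y(y_k|x;\theta)e^{iux}\nu_X(dx)$ with $u=2\pi l/(b_1-b_0)$, which is exactly $\varphi_Y(y_k,2\pi l/(b_1-b_0);\theta)$.

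\textbf{Step 3: filter update.} From (\ref{eq:filter update}) with the factorization,
\[
p(x_k|\mathbf y_k;\theta)=\frac{q_Y(y_k|x_k;\theta)\,p(x_k|\mathbf y_{k-1};\theta)}{\mathcal L_k(\theta)}.
\]
I compute $c_{k,l}$ via (\ref{eq:fourier coefficient continuous}), inserting the predictive expansion with summation index $r$. The $r$-th term then involves
\[
\int q_Y(y_k|x;\theta)\,e^{2\pi i (r-l)x/(b_1-b_0)}\,\nu_X(dx)=\varphi_Y\bigl(y_k,2\pi(r-l)/(b_1-b_0);\theta\bigr),
\]
with the prefactor $1/(\mathcal L_k(b_1-b_0))$. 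In the discrete case the same computation applies, with sums against counting measure and the IDFT (\ref{eq:fourier coefficient discrete}).

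\textbf{Main obstacle.} The only delicate point is justifying the interchange of the infinite Fourier sum with the integral against $q_Y$. I will proceed under the same regularity used in Theorem \ref{thm:fourier series update 1}: continuous differentiability of the predictive density gives absolutely summable coefficients, and $\int q_Y(y_k|x;\theta)\,\nu_X(dx)<\infty$ on the compact $\mathcal X$, so dominated convergence (Fubini) applies. In the finite discrete case the sums are finite and the issue does not arise.
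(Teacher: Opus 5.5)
Your proof is correct. Steps 1 and 2 match the paper's argument: the paper also writes $\mathcal{L}_k(\theta)=\int_{\mathcal{X}}q_Y(y_k|x_k;\theta)p(x_k|\mathbf{y}_{k-1};\theta)\nu_X(dx_k)$ and inserts the predictive Fourier series. It also reads the prediction step off (\ref{eq:fourier prediction update}) with the $y$-argument dropped.

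For the filter coefficients you take a more direct route than the paper. The paper does not derive them from the correction step $p(x_k|\mathbf{y}_k;\theta)=q_Y(y_k|x_k;\theta)p(x_k|\mathbf{y}_{k-1};\theta)/\mathcal{L}_k(\theta)$. Instead it expands $q_Y(y|\cdot;\theta)$ in its own Fourier series to factor the kernel as a discrete convolution,
\[
H^{\text{SSM}}\Bigl(\tfrac{2\pi u}{b_1-b_0},y_1,\tfrac{2\pi s}{b_1-b_0};\theta\Bigr)=\sum_{l\in\mathbb{Z}}\frac{1}{b_1-b_0}\,\varphi_Y\Bigl(y_1,\tfrac{2\pi(u-l)}{b_1-b_0};\theta\Bigr)\,H_X^{\text{SSM}}\Bigl(\tfrac{2\pi l}{b_1-b_0},\tfrac{2\pi s}{b_1-b_0};\theta\Bigr).
\]
It then plugs this into the general filter update (\ref{eq:fourier filter update}) and rearranges the double series so that the inner sum over $s$ becomes $c_{k,r}^{-}(\theta)$ by (\ref{eq:fourier prediction update}).

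Your version needs only a single series--integral interchange. It requires neither convergence of the Fourier series of $q_Y(y_k|\cdot;\theta)$ nor a double-series rearrangement. It also makes plain that the correction step is multiplication by $q_Y$, i.e.\ a convolution with $\varphi_Y$ in coefficient space. The paper's version buys the explicit factorization of $H^{\text{SSM}}$ through $\varphi_Y$ and $H_X^{\text{SSM}}$. That factorization exhibits the SSM recursion as the one-step operator of Theorem \ref{thm:fourier series update 1} split into prediction and correction. It is also how $H^{\text{SSM}}$ can be evaluated in the backward smoothing recursion of Proposition \ref{prop:smoothing}.

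One fix to your justification of the interchange. Continuous differentiability on $[b_0,b_1]$ does not give absolutely summable coefficients unless the density takes equal values at $b_0$ and $b_1$. Otherwise its periodic extension jumps and the coefficients decay only like $1/|l|$; for example, the density $2x$ on $[0,1]$ has $|c_l|=1/(\pi|l|)$ for $l\neq0$.

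The interchange still holds. Symmetric partial sums of a $C^1$ (hence bounded-variation) function are uniformly bounded and converge at every interior point. So dominated convergence works using only $q_Y(y_k|\cdot;\theta)\in L^1(\mathcal{X})$, which is already needed for $\varphi_Y$ to exist. Alternatively, if $q_Y(y_k|\cdot;\theta)\in L^2(\mathcal{X})$, Parseval's identity gives the formulas with absolutely convergent series. The paper leaves this step implicit, so this tightens your argument rather than breaking it.
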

	
	We provide the detailed derivation and discussion of Corollary
	\ref{thm:fourier series update 3} in \ref{appendix:proofs}.\footnote{In
		Bayesian analysis, a well-estimated filtered density in SSMs also provides a
		natural method for latent path recovery, namely sampling latent trajectories
		from the posterior smoothing law. For SSMs, we have the backward factorization
		$p(x_{k}|\mathbf{y}_{n};\theta)=p(x_{n}|\mathbf{y}_{n};\theta)\prod^{n}_{k=0}
		p(x_{k}|x_{k+1},\mathbf{y}_{k};\theta)$, and $p(x_{k}|x_{k+1},\mathbf{y}%
		_{k};\theta)\propto p(x_{k}|\mathbf{y}_{k};\theta)q_{X}(x_{k+1}|x_{k};\theta
		)$. Once the filtered density $p(x_{k}|\mathbf{y}_{k};\theta)$ is obtained, a
		latent path can be recovered by first drawing $x_{n}^{*}\sim p(x_{n}%
		|\mathbf{y}_{n};\theta)$, and then recursively drawing backward, for
		$k=n-1,...,0$, $x_{k}^{*} \sim p(x_{k}|x_{k+1}^{*},\mathbf{y}_{k};\theta)$.}
	In fact, for SSMs, the required transformed functions $\varphi_{Y}$ and
	$H_{X}^{\text{SSM}}$ are tractable in many cases. Panel A of Table
	\ref{tab:representative-fourier-recursion} in Online Appendix reports the
	model-specific expressions of these functions for the examples introduced in
	Section \ref{sec:lmms}. Besides likelihood update, prediction and filtering,
	smoothing in SSMs, i.e., estimating the latent state $X_{k}$ given all
	observations $\mathbf{y}_{n},$ is also a core issue. Smoothing requires us to
	evaluate $p\left(  x_{k}|\mathbf{y}_{n};\theta\right)  $, in order to obtain a
	more accurate estimation of the latent state.\footnote{Conditioning on the full sequence $\mathbf{y}_{n}$ allows
		smoothing to revise filtering estimates and separate persistent latent-state
		changes from transitory signals, such as temporary recessionary episodes,
		volatility spikes, or apparent structural breaks (see, e.g.,
		\cite{Bates2019JF}).} Smoothing has been computationally demanding, as commonly used approaches
	often rely on simulation-based procedures, and may require substantial
	model-specific choices in their implementation. We
	also provide a unified smoothing method based on Corollary
	\ref{thm:fourier series update 3}. See the detailed derivation and
	discussion in Online Appendix \ref{sec:smoothing}.
	
	\begin{proposition}
		\label{prop:smoothing}Under the same conditions of Corollary
		\ref{thm:fourier series update 3}, let $r_{k}(x_{k}%
		;\theta):=p\left(  x_{k}|\mathbf{y}_{n};\theta\right)  /p\left(
		x_{k}|\mathbf{y}_{k};\theta\right)$ denote the smoothing ratio. Assume that $r_{k}(x_{k};\theta)$
		admits the Fourier representation $r_{k}(x_{k};\theta)=\sum_{l\in\mathbb{Z}%
		}\bar{b}_{k,l}\left(  \theta\right)  \exp\left(  \frac{2\pi ilx_{k}}%
		{b_{1}-b_{0}}\right)  ,$ where the Fourier coefficients $\{\bar{b}%
		_{k,l}\left(  \theta\right)  \}_{l\in\mathbb{Z}}$ can be recursively computed
		backward by%
		\begin{equation}
			\bar{b}_{k,l}\left(  \theta\right)  =\sum_{s\in\mathbb{Z}}\frac{\bar
				{b}_{k+1,s}\left(  \theta\right)  }{(b_{1}-b_{0})\mathcal{L}_{k+1}(\theta
				)}H^{\text{SSM}}\left(  \frac{2\pi s}{b_{1}-b_{0}},y_{k+1},\frac{-2\pi
				l}{b_{1}-b_{0}};\theta\right)  . \label{eq:smoothing 1}%
		\end{equation}
		Then the smoothed density $p\left(  x_{k}|\mathbf{y}_{n};\theta\right)  $ has
		Fourier representation $p\left(  x_{k}|\mathbf{y}_{n};\theta\right)
		=\sum_{l\in\mathbb{Z}}b_{k,l}\left(  \theta\right)  \exp\left(  \frac{2\pi
			ilx_{k}}{b_{1}-b_{0}}\right)  .$ Its Fourier coefficients $\{b_{k,l}\left(
		\theta\right)  \}_{l\in\mathbb{Z}}$ can be obtained by the following
		convolution:%
		\begin{equation}
			b_{k,l}\left(  \theta\right)  =\sum_{j+s=l}\bar{b}_{k,s}(\theta)c_{k,j}%
			(\theta), \label{eq:smoothing 2}%
		\end{equation}
		where $\{c_{k,j}(\theta)\}_{j\in\mathbb{Z}}$ denotes the Fourier coefficients
		of filtered density $p\left(  x_{k}|\mathbf{y}_{k};\theta\right)  .$
	\end{proposition}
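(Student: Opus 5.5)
The plan is to derive a backward recursion for the smoothing ratio in \emph{physical} space first, and only then project it onto the Fourier basis. This mirrors how Theorem \ref{thm:fourier series update 1} was obtained from (\ref{eq:likelihood update})--(\ref{eq:filter update}).

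\textbf{Step 1 (ratio recursion).} Under (\ref{eq:SSM 1})--(\ref{eq:SSM 2}), $X_k$ is conditionally independent of $\mathbf{Y}_{k+1:n}$ given $(X_{k+1},\mathbf{Y}_k)$. Hence
\[
p(x_k|\mathbf{y}_n;\theta)=\int p(x_k|x_{k+1},\mathbf{y}_k;\theta)\,p(x_{k+1}|\mathbf{y}_n;\theta)\,\nu_X(dx_{k+1}),
\qquad
p(x_k|x_{k+1},\mathbf{y}_k;\theta)=\frac{p(x_k|\mathbf{y}_k;\theta)\,q_X(x_{k+1}|x_k;\theta)}{p(x_{k+1}|\mathbf{y}_k;\theta)}.
\]
Bayes' rule for SSMs gives $p(x_{k+1}|\mathbf{y}_{k+1};\theta)=q_Y(y_{k+1}|x_{k+1};\theta)\,p(x_{k+1}|\mathbf{y}_k;\theta)/\mathcal{L}_{k+1}(\theta)$, which is the SSM form of (\ref{eq:filter update}) used in Corollary \ref{thm:fourier series update 3}. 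Dividing by $p(x_k|\mathbf{y}_k;\theta)$ then yields
\[
r_k(x_k;\theta)=\frac{1}{\mathcal{L}_{k+1}(\theta)}\int q_X(x_{k+1}|x_k;\theta)\,q_Y(y_{k+1}|x_{k+1};\theta)\,r_{k+1}(x_{k+1};\theta)\,\nu_X(dx_{k+1}),
\]
with terminal condition $r_n\equiv 1$, i.e.\ $\bar b_{n,l}=\mathbf 1_{\{l=0\}}$.

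\textbf{Step 2 (Fourier projection).} Substitute $r_{k+1}=\sum_s \bar b_{k+1,s}e^{i\omega_s x_{k+1}}$, where $\omega_s=2\pi s/(b_1-b_0)$, and exchange sum and integral. The inner integral is exactly the SSM conditional Fourier transition kernel,
\[
\int q_X(x_{k+1}|x_k;\theta)\,q_Y(y_{k+1}|x_{k+1};\theta)\,e^{i\omega_s x_{k+1}}\,\nu_X(dx_{k+1})=\Psi(\omega_s,y_{k+1}|x_k;\theta),
\]
so $r_k(x_k;\theta)=\mathcal{L}_{k+1}(\theta)^{-1}\sum_s\bar b_{k+1,s}\Psi(\omega_s,y_{k+1}|x_k;\theta)$. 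Taking the $l$-th coefficient, $(b_1-b_0)^{-1}\int e^{-i\omega_l x_k}(\cdot)\,dx_k$, and recognising the definition (\ref{eq:def H}) with $y_0$ suppressed gives $H^{\text{SSM}}(\omega_s,y_{k+1},-\omega_l;\theta)/(b_1-b_0)$. This is (\ref{eq:smoothing 1}).

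\textbf{Step 3 (convolution).} Since $p(x_k|\mathbf{y}_n;\theta)=r_k(x_k;\theta)\,p(x_k|\mathbf{y}_k;\theta)$, the Fourier coefficient of this product is obtained by inserting both series and using orthogonality $(b_1-b_0)^{-1}\int e^{i(\omega_j+\omega_s-\omega_l)x}dx=\mathbf 1_{\{j+s=l\}}$. This gives (\ref{eq:smoothing 2}).

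\textbf{Main obstacle.} The only delicate points are the interchanges of summation and integration in Steps 2--3.
\begin{itemize}
\item In Step 2, I would use $L^2$ convergence of the Fourier series of $r_{k+1}$ on the compact interval. The map $f\mapsto\int q_X q_Y f\,dx_{k+1}$ is then continuous on $L^2$ as long as $q_X q_Y$ is square integrable in $x_{k+1}$; this holds on $[b_0,b_1]$ for bounded transition densities.
\item In Step 3, $r_k$ and the filtered density are both in $L^2$, so their product lies in $L^1$. The Fourier coefficient of the product equals the convolution, which converges absolutely by Cauchy--Schwarz and Parseval.
\end{itemize}
If $L^2$ fails, for instance when $r_k$ blows up where the filtered density is near zero, I would instead assume absolute summability of $\{\bar b_{k,l}\}$, which makes all interchanges immediate by Fubini.
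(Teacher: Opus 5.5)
Your proposal is correct and follows essentially the same route as the paper. Both arguments use the SSM conditional independences to obtain the backward ratio recursion $r_k(x_k;\theta)=\mathcal{L}_{k+1}(\theta)^{-1}\int q_X(x_{k+1}|x_k;\theta)\,q_Y(y_{k+1}|x_{k+1};\theta)\,r_{k+1}(x_{k+1};\theta)\,\nu_X(dx_{k+1})$ with $r_n\equiv 1$. Both then project it onto the Fourier basis to get (\ref{eq:smoothing 1}), and read off (\ref{eq:smoothing 2}) as the coefficients of the product $r_k\,p(x_k|\mathbf{y}_k;\theta)$. The paper leaves several details implicit that you make explicit: identifying the inner integral with $\Psi$, the terminal coefficients $\bar b_{n,l}=\mathbf{1}_{\{l=0\}}$, and the $L^2$/Parseval justification of the sum--integral interchanges.
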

	
	Besides the SSMs, another prominent class of LMMs merits special attention,
	particularly those arising from continuous-time formulations (see, e.g.,
	\cite{duffie2000transform}, \cite{bates2006maximum}), including the DPS model
	in Example \ref{example:heston} and the BNS model in Example \ref{example:bns}%
	.\footnote{Time-changed L\'{e}vy processes provide another affine class, in
		which the stochastic clock is driven by a latent Markov activity process (see, e.g.,
		\cite{carr2004time}). Moreover, the L\'{e}vy-driven LMMs,
		such as additive-noise L\'{e}vy-driven SDEs, arise by replacing the
		Brownian-driven noise in latent diffusion models with a general L\'{e}vy
		process. This constitutes a further affine class} For this class of LMMs, the analytically
	tractable object is often not the transition density $p_{(X,Y)}$ itself, but
	rather its conditional characteristic function
	\[
	\phi_{(X,Y)}(u,v|x_{0},y_{0};\theta)=\mathbb{E}\left[  e^{iuX_{1}+ivY_{1}%
	}|X_{0}=x_{0},Y_{0}=y_{0};\theta\right]  =e^{C_{0}(u,v,y_{0};\theta
		)+C_{1}(u,v,y_{0};\theta)x_{0}},
	\]
	where $C_{0}$ and $C_{1}$ are known functions, possibly depending on the model
	parameter $\theta$. For affine models, the Fourier recursion can also be simplified. Let
	\begin{align*}
		\Phi\left(  u,v,y_{0},s;\theta\right)   &  =\int_{b_{0}}^{b_{1}}e^{ix_{0}}%
		\phi_{(X,Y)}(u,v|x_{0},y_{0};\theta)dx_{0}\\
		&  =\frac{e^{\left(  is+C_{1}(u,v,y_{0};\theta)\right)  b_{1}}-e^{\left(
				is+C_{1}(u,v,y_{0};\theta)\right)  b_{0}}}{is+C_{1}(u,v,y_{0};\theta)}%
		e^{C_{0}(u,v,y_{0};\theta)}
	\end{align*}
	if $is+C_{1}(u,v,y_{0})\neq0,$ and $\Phi\left(  u,v,y_{0},s;\theta\right)
	=\left(  b_{1}-b_{0}\right)  e^{C_{0}(u,v,y_{0};\theta)}$ if $is+C_{1}%
	(u,v,y_{0})=0.$ The following result summarizes the recursion implied by
	Theorem \ref{thm:fourier series update 1}. We refer to online Appendix \ref{sec:affine} for the detailed derivation.
	
	\begin{corollary}
		\label{thm:fourier series update 2} Suppose the observed process $Y_{k}$ takes
		values in $[a_{0},a_{1}]$, and $X_{k}$ is compactly supported on $[b_{0}%
		,b_{1}]$. Then the likelihood update function $\mathcal{L}_{k}(\theta)$ is
		given by
		\begin{equation}
			\mathcal{L}_{k}(\theta)=\sum_{s,l\in\mathbb{Z}}\frac{c_{k-1,s}(\theta)}%
			{a_{1}-a_{0}}e^{\frac{-2\pi ily_{k}}{a_{1}-a_{0}}}\Phi\left(  0,\frac{2\pi
				l}{a_{1}-a_{0}},y_{k-1},\frac{2\pi s}{b_{1}-b_{0}};\theta\right)  ,
			\label{eq:fourier chf 1}%
		\end{equation}
		where the Fourier coefficients $\left\{  c_{k,l}(\theta)\right\}
		_{l\in\mathbb{Z}}$ satisfy
		\begin{equation}
			c_{k,l}(\theta)=\sum_{s,r\in\mathbb{Z}}\frac{c_{k-1,s}(\theta)}{m_{a,b}%
				\mathcal{L}_{k}(\theta)}e^{-\frac{2\pi iry_{k}}{a_{1}-a_{0}}}\Phi\left(
			\frac{-2\pi l}{b_{1}-b_{0}},\frac{2\pi r}{a_{1}-a_{0}},y_{k-1},\frac{2\pi
				s}{b_{1}-b_{0}};\theta\right)  , \label{eq:fourier chf 2}%
		\end{equation}
		with $m_{a,b}=(b_{1}-b_{0})\left(  a_{1}-a_{0}\right).$
	\end{corollary}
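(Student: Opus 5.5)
The plan is to derive Corollary \ref{thm:fourier series update 2} from Theorem \ref{thm:fourier series update 1}. The only new ingredient is an expression for the kernel $H(u,y_1,s,y_0;\theta)$ in terms of the affine characteristic function. Once $H$ is identified with a Fourier series in $y_1$ whose coefficients are values of $\Phi$, formulas (\ref{eq:fourier chf 1}) and (\ref{eq:fourier chf 2}) follow by substitution into (\ref{eq:fourier likelihood update}) and (\ref{eq:fourier filter update}).

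\textbf{Step 1: expand $\Psi$ in $y_1$.} Since $Y_1$ takes values in $[a_0,a_1]$, I view $y_1\mapsto\Psi(u,y_1|x_0,y_0;\theta)$ as a function on $[a_0,a_1]$ and expand it in the Fourier basis $\exp(2\pi i r y_1/(a_1-a_0))$. This assumes the same smoothness in $y_1$ that the paper assumes elsewhere for the representation (\ref{eq:fourier filtered density representation}). The $r$-th coefficient is
\[
\frac{1}{a_1-a_0}\int_{a_0}^{a_1}e^{-2\pi i r y_1/(a_1-a_0)}\,\mathbb{E}\left[\delta(Y_1-y_1)e^{iuX_1}\,|\,X_0=x_0,Y_0=y_0;\theta\right]dy_1 .
\]
Applying Fubini, this equals $(a_1-a_0)^{-1}\phi_{(X,Y)}(u,-2\pi r/(a_1-a_0)\,|\,x_0,y_0;\theta)$. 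Reindexing $r\to -r$ gives
\[
\Psi(u,y_1|x_0,y_0;\theta)=\sum_{r\in\mathbb{Z}}\frac{1}{a_1-a_0}\,e^{-2\pi i r y_1/(a_1-a_0)}\,\phi_{(X,Y)}\!\left(u,\tfrac{2\pi r}{a_1-a_0}\,\middle|\,x_0,y_0;\theta\right).
\]

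\textbf{Step 2: integrate against $e^{isx_0}$.} I integrate the series over $x_0\in[b_0,b_1]$ with respect to Lebesgue measure, as in (\ref{eq:def H}). By the affine form, each term yields $\int_{b_0}^{b_1}e^{isx_0}e^{C_0+C_1x_0}dx_0$. This is exactly $\Phi(u,v,y_0,s;\theta)$, reading the definition of $\Phi$ with $e^{isx_0}$ as intended (the closed form confirms this), together with the degenerate case $is+C_1=0$. Hence
\[
H(u,y_1,s,y_0;\theta)=\sum_r\frac{1}{a_1-a_0}\,e^{-2\pi i r y_1/(a_1-a_0)}\,\Phi\!\left(u,\tfrac{2\pi r}{a_1-a_0},y_0,s;\theta\right).
\]

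\textbf{Step 3: substitute into Theorem \ref{thm:fourier series update 1}.} Setting $u=0$ and $s=2\pi s/(b_1-b_0)$ in (\ref{eq:fourier likelihood update}) gives (\ref{eq:fourier chf 1}). Setting $u=-2\pi l/(b_1-b_0)$ in (\ref{eq:fourier filter update}) and combining the factors $(b_1-b_0)(a_1-a_0)=m_{a,b}$ gives (\ref{eq:fourier chf 2}).

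\textbf{Main obstacle: justifying the interchanges.} Three interchanges need justification: (i) swapping the expectation with the $y_1$-integral, and the convergence of the $y_1$-Fourier series of $\Psi$ at the observed point $y_k$; (ii) swapping the $r$-sum with the $x_0$-integral; and (iii) exchanging the double sum over $(s,r)$. I would first try to show that $\sum_r|\phi_{(X,Y)}(u,2\pi r/(a_1-a_0)|x_0,y_0)|$ is bounded uniformly in $x_0\in[b_0,b_1]$. This follows if the transition density is continuously differentiable in $y_1$ and periodically compatible at the endpoints, via integration by parts. With that bound, dominated convergence handles (ii) on the compact $x_0$-domain. For (iii), I would combine it with absolute summability of $\{c_{k-1,s}\}$ and boundedness of $\Phi$, using $|\Phi|\le(b_1-b_0)\sup_{x_0}|\phi|$.

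For Step 1, the relevant boundary effect is any mismatch between the values of $\Psi$ at $y_1=a_0$ and $y_1=a_1$. If it arises, I would note that the series still converges to $\Psi$ at interior points $y_k\in(a_0,a_1)$, which suffices for the recursion.
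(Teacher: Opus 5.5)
Your proof is correct and follows the paper's route. The paper likewise writes $\Psi(u,y_1|x_0,y_0;\theta)=\sum_{l}(a_1-a_0)^{-1}e^{-2\pi i l y_1/(a_1-a_0)}\phi_{(X,Y)}(u,2\pi l/(a_1-a_0)\,|\,x_0,y_0;\theta)$, integrates in $x_0$ to express $H$ as a series in $\Phi$, and substitutes into Theorem~\ref{thm:fourier series update 1}. The only difference is how the $y_1$-expansion is obtained: the paper expands $\delta(Y_1-y_1)$ as a Fourier series ``in the distributional sense,'' whereas you expand $y_1\mapsto\Psi$ as an ordinary function with coefficients computed by Fubini, which makes explicit the regularity and interchanges the paper leaves implicit.
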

	
	\subsection{Truncated Fourier recursion implementation\label{sec:truncation}}

	The Fourier recursion developed in Theorem \ref{thm:fourier series update 1} and Corollary \ref{thm:fourier series update 3}-\ref{thm:fourier series update 2} are exact when the infinite sequence of Fourier coefficients is retained. However, infinite coefficients are not directly implementable. The central computational step is to finite-dimensionalize this recursion by truncating the Fourier expansion to a finite set of coefficients. In this section we introduce a truncated Fourier recursion for implementation.
	The system is designed so that, regardless of the truncation level, it
	reliably performs filtering and likelihood updates, highlighting a key advantage of the Fourier recursion. For clarity,
	we present the truncation strategy for Theorem
	\ref{thm:fourier series update 1}, and those for Corollary
	\ref{thm:fourier series update 3} and \ref{thm:fourier series update 2} follow
	similarly. Importantly, 
	the implementation does not come at the expense of theoretical guarantees: the truncation error remains uniformly controlled as established later in Section \ref{sec:theory}, while the resulting approximation retains high accuracy and numerical stability as demonstrated in Section \ref{sec:simulation}.
	
	A naive approach is to select a truncation level $L$, and
	approximate the filtered density $p(x_{1}|y_{1};\theta)$ using the truncated
	Fourier series $\sum_{l\in D_{L}}c_{1,l}(\theta)\exp\left({\frac{ 2\pi
			ilx_{1}}{b_{1}-b_{0}}}\right),$ where $D_{L}=\left\{  l:\left\vert l\right\vert \leq L\right\}.$ However, this approximation does not
	guarantee positivity. The solution lies in applying the Fej\'{e}r kernel, a
	powerful tool in Fourier and harmonic analysis, to construct a modified series
	$\hat{p}^{(L)}(x_{1}|y_{1};\theta)=\sum_{l\in D_{L}}\hat{c}_{1,l}^{(L)}%
	(\theta)\exp\!\left(  \frac{2\pi ilx_{1}}{b_{1}-b_{0}}\right)  , $ where
	$\hat{c}_{1,l}^{(L)}(\theta)=c_{1,l}(\theta)(1-\left\vert l\right\vert \left(
	L+1\right)  ^{-1}).$ For any $L,$ we can show that $\hat{p}^{(L)}(x_{1}%
	|y_{1};\theta)$ is nonnegative and satisfies $\int_{\mathcal{X}}\hat
	{p}^{(L)}(x_{1}|y_{1};\theta)dx_{1}=1.$ 
	This implies the truncated filtered density $\hat{p}^{(L)}(x_{1}|y_{1}%
	;\theta)$ is a true density function. The formal justification of this
	property is presented in online Appendix \ref{sec:uni convergence fourier}.
	
	Given this fact, the likelihood update at $k=2$ in Theorem
	\ref{thm:fourier series update 1} can be written in integral form as
	\[
	\mathcal{\hat{L}}_{2}^{(L)}(\theta)=\sum_{s\in D_{L}}\hat{c}_{1,s}%
	^{(L)}(\theta)H\left(  0,y_{2},\frac{2\pi s}{b_{1}-b_{0}},y_{1};\theta\right)
	=\int_{\mathcal{X}}p_{Y}\left(  y_{2}|x_{1},y_{1};\theta\right)  \hat{p}%
	^{(L)}(x_{1}|y_{1};\theta)dx_{1}.
	\]
	Since $\hat{p}^{(L)}(x_{1}|y_{1};\theta)$ remains positive, $\mathcal{\hat{L}%
	}_{2}^{(L)}(\theta)$ is strictly positive, which is essential for
	implementation as a negative likelihood update will prevent the Fourier
	updating system from moving forward. Meanwhile, by Theorem
	\ref{thm:fourier series update 1}, the Fej\'{e}r-adjusted Fourier coefficients
	at $k=2$ can be updated as
	\[
	\hat{c}_{2,l}^{(L)}\left(  \theta\right)  =\sum_{s\in D_{L}}\frac
	{(1-\left\vert l\right\vert \left(  L+1\right)  ^{-1})\hat{c}_{1,s}%
		^{(L)}(\theta)}{\mathcal{\hat{L}}_{2}^{(L)}(\theta)(b_{1}-b_{0})}H\left(
	\frac{-2\pi l}{b_{1}-b_{0}},y_{2},\frac{2\pi s}{b_{1}-b_{0}},y_{1}%
	;\theta\right)  
	\]
	for $l\in D_{L}.$ The resulting estimates of the filtered density
	$p(x_{2}|\mathbf{y}_{2};\theta)$ and likelihood update $\mathcal{L}_{3}%
	(\theta)$, i.e.,%
	\[
	\mathcal{\hat{L}}_{3}^{(L)}(\theta) =\sum_{s\in D_{L}}\hat{c}_{2,s}%
	^{(L)}(\theta)H\left(  0,y_{3},\frac{2\pi s}{b_{1}-b_{0}},y_{2};\theta\right)
	,\quad\hat{p}^{(L)}(x_{2}|\mathbf{y}_{2};\theta) =\sum_{l\in D_{L}}\hat
	{c}_{2,l}^{(L)}(\theta)\exp\!\left(  \frac{2\pi ilx_{2}}{b_{1}-b_{0}}\right)
	\]
	are well defined because the Fej\'{e}r-adjusted Fourier coefficients preserve
	a valid, nonnegative, and normalized filtered density, ensuring that the
	corresponding likelihood update remains strictly positive. By using similar
	arguments, nonnegative $\hat{p}^{(L)}(x_{3}|\mathbf{y}_{3};\theta)$ and
	$\mathcal{\hat{L}}_{4}^{(L)}\left(  \theta\right)  $ can be obtained from
	$\hat{p}^{(L)}(x_{2}|\mathbf{y}_{2};\theta)$, and so on up to $\hat{p}%
	^{(L)}(x_{n}|\mathbf{y}_{n};\theta)$ and $\mathcal{\hat{L}}_{n}^{(L)}\left(
	\theta\right)  $.
	
	\begin{proposition}
		\label{prop:implement}
		Under the same conditions in Theorem \ref{thm:fourier series update 1}, for
		any truncation level $L\geq2,$ and every $k\geq2,$ the
		approximated likelihood update
		\begin{equation}
			\mathcal{\hat{L}}_{k}^{(L)}(\theta)=\sum_{s\in D_{L}}\hat{c}_{k-1,s}%
			^{(L)}(\theta)H\left(  0,y_{k},\frac{2\pi s}{b_{1}-b_{0}},y_{k-1}%
			;\theta\right)  \label{eq:implement 1}%
		\end{equation}
		is strictly positive, where the Fej\'{e}r-adjusted Fourier coefficients can be
		updated by
		\begin{equation}
			\hat{c}_{k,l}^{(L)}\left(  \theta\right)  =\sum_{s\in D_{L}}\frac
			{(1-\left\vert l\right\vert \left(  L+1\right)  ^{-1})\hat{c}_{k-1,s}%
				^{(L)}(\theta)}{\mathcal{\hat{L}}_{k}^{(L)}(\theta)(b_{1}-b_{0})}H\left(
			\frac{-2\pi l}{b_{1}-b_{0}},y_{k},\frac{2\pi s}{b_{1}-b_{0}},y_{k-1}%
			;\theta\right)  ,\text{ }l\in D_{L}. \label{eq:implement 2}%
		\end{equation}
		The associated filtered density $\hat{p}^{(L)}(x_{k}%
		|\mathbf{y}_{k};\theta)=\sum_{l\in D_{L}}\hat{c}_{k,l}^{(L)}(\theta)
		e^{\frac{ 2\pi ilx_{k}}{b_{1}-b_{0}}}$ is nonnegative on $[b_{0},b_{1}].$
	\end{proposition}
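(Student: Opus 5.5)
Proposition \ref{prop:implement} will follow by induction on $k$. The inductive hypothesis is that $\hat p^{(L)}(x_{k-1}|\mathbf{y}_{k-1};\theta)=\sum_{l\in D_L}\hat c^{(L)}_{k-1,l}(\theta)e^{2\pi i l x/(b_1-b_0)}$ is a nonnegative trigonometric polynomial integrating to one on $[b_0,b_1]$. The base case $k-1=1$ is the Fej\'er statement already recorded before the proposition and justified in online Appendix \ref{sec:uni convergence fourier}: the Fej\'er means of a density $p$ are the convolution of $p$ with the nonnegative Fej\'er kernel $F_L$, which has unit mass, so they are nonnegative and normalized.

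\textbf{Step 1: positivity of $\hat{\mathcal L}^{(L)}_k$.} I will undo the Fourier bookkeeping. By the definition (\ref{eq:def H}) with $u=0$, $\Psi(0,y_k|x_0,y_{k-1};\theta)=p_Y(y_k|x_0,y_{k-1};\theta)$. Exchanging the finite sum and the integral gives
\[
\sum_{s\in D_L}\hat c^{(L)}_{k-1,s}H\Big(0,y_k,\tfrac{2\pi s}{b_1-b_0},y_{k-1};\theta\Big)=\int_{\mathcal X}p_Y(y_k|x_0,y_{k-1};\theta)\,\hat p^{(L)}(x_0|\mathbf{y}_{k-1};\theta)\,dx_0 .
\]
The integrand is nonnegative, so the integral is nonnegative. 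For strict positivity I will use that $\hat p^{(L)}$ is a nonzero nonnegative trigonometric polynomial, since it integrates to one. Such a function vanishes only at finitely many points, so it is positive almost everywhere. It then suffices that $p_Y(y_k|\cdot,y_{k-1};\theta)$ is positive on a set of positive measure. This holds at the observed $y_k$ (implicitly, the observed data have positive likelihood), so the integral is strictly positive. This is the point I expect to need a mild implicit assumption. I would state it as: the marginal transition density at the observed data is not a.e.\ zero in $x_0$.

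\textbf{Step 2: the updated coefficients form a Fej\'er mean of a genuine density.} Define the unnormalized "exact-step" density
\[
q_k(x)=\frac{1}{\hat{\mathcal L}^{(L)}_k}\int_{\mathcal X}p_{(X,Y)}(x,y_k|x_0,y_{k-1};\theta)\,\hat p^{(L)}(x_0|\mathbf{y}_{k-1};\theta)\,dx_0 .
\]
This is the Bayes update (\ref{eq:filter update}) applied to the approximate prior. It is nonnegative, and it integrates to one because integrating out $x$ returns $p_Y$ and hence the numerator $\hat{\mathcal L}^{(L)}_k$. Repeating the computation in the proof of Theorem \ref{thm:fourier series update 1} with $\hat p^{(L)}$ in place of the exact filter shows that the $l$-th Fourier coefficient of $q_k$ equals
\[
\sum_{s\in D_L}\frac{\hat c^{(L)}_{k-1,s}}{\hat{\mathcal L}^{(L)}_k(b_1-b_0)}H\Big(\tfrac{-2\pi l}{b_1-b_0},y_k,\tfrac{2\pi s}{b_1-b_0},y_{k-1};\theta\Big).
\]
This uses $\Psi(u,\cdot)=\int e^{iux}p_{(X,Y)}\,dx$ and Fubini, which is justified by compactness and finiteness of the sum. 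Hence (\ref{eq:implement 2}) says precisely that $\hat c^{(L)}_{k,l}=(1-|l|/(L+1))\,\widehat{q_k}(l)$, and $\hat p^{(L)}(\cdot|\mathbf{y}_k;\theta)=F_L*q_k$.

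\textbf{Step 3: close the induction.} Since $F_L\ge0$ has unit mass and $q_k$ is a probability density, $F_L*q_k$ is nonnegative and integrates to one. This is the same Fej\'er argument as in the base case, now applied to $q_k$ instead of the true filter; no smoothness of $q_k$ is needed, only integrability. This proves nonnegativity of $\hat p^{(L)}(x_k|\mathbf{y}_k;\theta)$ and restores the inductive hypothesis, which gives positivity of $\hat{\mathcal L}^{(L)}_{k+1}$ by Step 1.

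The main technical care lies in Step 2, namely checking that the interchange of the $x_0$ integral with the finite trigonometric sum reproduces exactly the $H$-expression, including the sign conventions. The hypothesis $L\ge2$ plays no essential role beyond guaranteeing a nontrivial kernel.
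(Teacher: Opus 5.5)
Your proof is correct and is essentially the paper's argument. The paper also starts from the Fej\'{e}r-kernel positivity of $\hat p^{(L)}(x_1|y_1;\theta)$, writes $\hat{\mathcal{L}}^{(L)}_k=\int p_Y(y_k|x,y_{k-1};\theta)\,\hat p^{(L)}(x|\mathbf{y}_{k-1};\theta)\,dx$, and (in the proof of Lemma \ref{prop:one step error}) identifies $\hat p^{(L)}(\cdot|\mathbf{y}_k;\theta)$ as the Fej\'{e}r mean of the untruncated update $\bar p^{(L)}$, which is exactly your $q_k$; your explicit requirement that $p_Y(y_k|\cdot,y_{k-1};\theta)$ not vanish a.e.\ is a fair point that the paper leaves implicit in Section 3.2 and only supplies later through the mixing condition of Assumption \ref{assump: bound derivative}, which gives $\hat{\mathcal{L}}^{(L)}_k\ge\epsilon$.
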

	
	The main implication of the proposition is
	that this Fej\'{e}r-adjusted truncated recursion generates a sequence of valid
	approximated filtered densities and strictly positive likelihood updates,
	ensuring that the entire recursion remains well-defined and numerically
	stable. Next, we analyze the computation cost of the Fourier recursion implementation.
	The likelihood update requires summing over $s\in D_{L},$ and therefore has
	cost $O\left(  \left\vert D_{L}\right\vert \right)  =O(\left(  2L+1\right)
	).$ The dominant cost comes from the update of the Fourier coefficients. For
	each $l\in D_{L},$ the coefficient $\hat{c}_{k+1,l}^{(L)}(\theta)$ is computed
	by summing over all $s\in D_{L}.$ Since there are $\left\vert D_{L}\right\vert
	$ coefficients to update and each coefficient requires $\left\vert
	D_{L}\right\vert $ summation terms, one recursive update costs $O(\left\vert
	D_{L}\right\vert ^{2})=O(\left(  2L+1\right)  ^{2}).$ Repeating this recursion
	over $n$ observations gives the total computational cost $O\left(
	n(2L+1)^{2}\right)  .$ We also provide the discussion for the multivariate
	latent states and the associated acceleration of Proposition 
	\ref{prop:implement} in Online Appendix \ref{sec:multi accelerate}.
	
	\section{Approximation and large-sample theory of LMMs\label{sec:theory}}
	
	Estimation and inference for truncated Fourier recursion involve three distinct technical difficulties. First, truncation errors arise at every filtering step and propagate recursively. Second, even abstracting from truncation, each likelihood update $\ell_k(\theta)$ depends on the evolving filter rather than only on the current state variables, so the marginal likelihood cannot be written directly as a standard additive functional. This prevents a direct application of routine MLE theory. Third, the Fisher information of MMLE is intrinsically difficult to evaluate. In this section, we address these three difficulties in turn. First, we establish the uniform
	convergence of the truncated Fourier recursion and analyze how the truncation
	level affects the accuracy. Second, we prove the consistency and asymptotic
	normality of the approximate MMLE (AMMLE, hereafter)
	obtained from the truncated Fourier filter for general LMMs. Third, we develop a feasible framework
	for estimating the asymptotic covariance matrix of the proposed estimators by
	using the Hessian of the sample log-likelihood. Finally, we extend the asymptotic theory to the misspecified setting, where the maintained LMM family may not contain the true transition law.
	
	Hereafter, we focus exclusively on the case where the latent process $X_{k}$
	is continuous-valued and the dominating measure $\nu_{X}$ turns into the
	Lebesgue measure. When $X_{k}$ is discrete-valued, the infinite summations in
	Theorem \ref{thm:fourier series update 1} reduce to finite summations. This
	renders the truncation exact and entirely eliminates approximation errors.
	Therefore, the asymptotic properties of MMLEs, including the consistency and
	asymptotic normality, can thus be straightforwardly established following the related literature. We start by imposing appropriate conditions.
	
	\begin{assumption}
		\label{ass:S6} Let $d_X$, $d_Y$ be the dimensions of the latent and observed state spaces, respectively. The state spaces $\mathcal{X}$ and $\mathcal{Y}$ are both
		compactly supported and take the form $\mathcal{X}=[b_{0},b_{1}]^{d_{X}}$ and $\mathcal{Y}%
		=[a_{0},a_{1}]^{d_{Y}},$ for  constants $b_{0}<b_{1}$ and
		$a_{0}<a_{1}$.
	\end{assumption}
	
	\begin{assumption}
		\label{assump: bound derivative}Assume that the derivative $\partial
		p(x_{1},y_{1}|x_{0},y_{0};\theta)/\partial x_{1}$ is uniformly bounded, i.e.,
		\[
		\sup_{y_{1},y_{0},x_{0}}\sup_{\theta\in\Theta}\left\vert \frac{\partial
			p_{(X,Y)}(x_{1},y_{1}|x_{0},y_{0};\theta)}{\partial x_{1}}\right\vert \leq
		C_{1}
		\]
		for some positive constant $C_{1}$. For every $k\in\mathbb{N}^{\ast}$, there
		exists a constant $0<\epsilon_{k}\leq1$ and a reference probability measure
		$\lambda_{k}$, such that
		\[
		\epsilon_{k}\lambda_{k}(A)\leq\int_{A}p_{(X,Y)}(x_{1},y_{k}|x_{0}%
		,y_{k-1};\theta)dx_{1}\leq\epsilon^{-1}_{k}\lambda_{k}(A)
		\]
		for any $x_{0}\in\mathcal{X}$ and the Borel set $A\subset\mathcal{X}$. In
		particular, we assume $\epsilon_{k}$ is uniformly bounded away from zero,
		i.e., $\inf_{k\geq1}\epsilon_{k}>\epsilon>0$ for some deterministic constant
		$\epsilon>0$.
	\end{assumption}
	
	The compact-support assumption is mainly introduced for technical convenience,
	as it facilitates uniform control of the transition density and its
	derivatives. As explained in Section \ref{sec:exact fourier}, the compactness can be obtained by introducing a one-to-one mapping without changing inherent property of LMMs. The condition in Assumption \ref{assump: bound derivative} is a
	standard mixing condition used to establish exponential filter stability and
	uniform-in-time particle approximation bounds (see, e.g., \cite{gland2004}).
	Next, we assume that the marginal likelihood estimator $\hat{\theta
	}^{(n)}:=\arg\max_{\theta\in\Theta}\prod^{n}_{k=1}\mathcal{L}_{k}(\theta)$ exists and is
	unique in $\Theta.$ Since the truncated filtered density $\hat{p}^{(L)}%
	(x_{k}|\mathbf{y}_{k};\theta)$ generated by Proposition
	\ref{prop:implement} is positive for $k=1,2,...,n,$ approximated
	log-likelihood $\hat{\ell}^{(n,L)}=\sum_{k=1}^{n}\hat{\ell}_{k}^{(L)}(\theta)$
	is well defined. Let $\hat{\theta}_{\text{AMMLE}}^{(n,L)}%
	=\operatorname*{argmax}_{\theta\in\Theta}\sum_{k=1}^{n}\hat{\ell}_{k}%
	^{(L)}(\theta)$, we assume $\hat{\theta}_{\text{AMMLE}}^{(n,L)}$ also exists
	and is unique in $\Theta.$
	
	\subsection{Uniform error control of filtered densities and likelihood
		updates\label{sec:uniform bound}}
	
	Now we establish a framework for uniform error control regarding the
	filtered density and the likelihood update derived under the truncated Fourier
	recursion in Proposition \ref{prop:implement}. The theoretical guarantee
	of these error bounds stems from the forgetting property of filter\footnote{The forgetting property of filter means that perturbations in the filtered density are
		contracted as the recursion moves forward. So, approximation errors
		generated at earlier steps do not amplify over time.} as well as the uniform convergence of Fourier series (see Online Appendix \ref{sec:uni convergence fourier}). Consequently, the
		one-step approximation errors remain uniformly controlled when
		propagated through the recursion. The precise statements of these
	bounds are presented in Theorem \ref{thm:uni converge of filter} below. Its proof is relegated to \ref{appendix:proofs}.
	
	\begin{theorem}
		\label{thm:uni converge of filter}Under Assumptions \ref{ass:S6} and
		\ref{assump: bound derivative}, the uniform error between true and
		approximated filtered density can be controlled as
		\begin{equation}
			\sup_{f:\mathcal{X}\rightarrow\mathbb{R},\,\left\Vert f\right\Vert _{\infty
				}=1}\left\vert \int_{\mathcal{X}}\left\vert p(x|\mathbf{y}_{n};\theta)-\hat
			{p}^{(L)}(x|\mathbf{y}_{n};\theta)\right\vert f(x)\,dx\right\vert \leq
			\tilde{C}_{d_{X}}\frac{\log L}{L} \label{eq:fourier error}%
		\end{equation}
		for any $\theta\in\Theta$ and $n\geq1,$ where $\tilde{C}_{d_{X}}$ is a constant
		that depends on the latent dimension.
	\end{theorem}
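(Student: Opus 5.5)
The argument is a local-error-plus-forgetting decomposition, of the kind used for uniform-in-time particle approximations in \cite{gland2004}.

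\textbf{Operator notation.} Write the exact one-step Bayes map of (\ref{eq:filter update}) as
\[
\Phi_k(\mu)(dx)=\frac{\int p_{(X,Y)}(x,y_k|x_0,y_{k-1};\theta)\,\mu(dx_0)\,dx}{\int\!\int p_{(X,Y)}(x',y_k|x_0,y_{k-1};\theta)\,\mu(dx_0)\,dx'},
\]
so that $p_k=\Phi_k(p_{k-1})$, where $p_k$ abbreviates $p(\cdot|\mathbf{y}_k;\theta)$.

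Proposition \ref{prop:implement} and the derivation preceding it show that the truncated recursion can be written as $\hat p_k=F_L\,\Phi_k(\hat p_{k-1})$. Here $F_L$ is convolution with the Fej\'er kernel of order $L$. To see this, note that (\ref{eq:implement 2}) equals the Fej\'er weight times the exact update (\ref{eq:fourier filter update}) applied to the density $\hat p_{k-1}$. This uses the fact that $\hat{\mathcal L}_k^{(L)}$ is exactly the normalizing constant of $\Phi_k(\hat p_{k-1})$.

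\textbf{Telescoping.} The error is decomposed as
\[
p_n-\hat p_n=\sum_{k=1}^{n}\Big[\Phi_{k+1:n}\big(\Phi_k(\hat p_{k-1})\big)-\Phi_{k+1:n}\big(F_L\Phi_k(\hat p_{k-1})\big)\Big],
\]
with $\Phi_{k+1:n}=\Phi_n\circ\cdots\circ\Phi_{k+1}$ and the initial term handled in the same way. Each summand compares two filters that differ only through the single truncation error introduced at step $k$.

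\textbf{Forgetting.} The two-sided minorization in Assumption \ref{assump: bound derivative} gives a Dobrushin coefficient for each normalized kernel of at most $1-\epsilon^2$. Through the standard Le Gland--Oudjane argument for the nonlinear (normalized) map, this yields
\[
\|\Phi_{k+1:n}(\mu)-\Phi_{k+1:n}(\nu)\|_{TV}\le \epsilon^{-2}(1-\epsilon^{2})^{n-k}\,\|\mu-\nu\|_{TV}.
\]
This bound holds uniformly in $\theta$, in $n$, and in the observation path. It is the forgetting property cited in the text, and the plan is to invoke it rather than reprove it.

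\textbf{Local error.} It remains to bound $\|g-F_Lg\|_{TV}$, where $g=\Phi_k(\hat p_{k-1})$. Because of the mixing bound, $g$ is a density in $x$ obtained by averaging $p_{(X,Y)}(x,y_k|x_0,y_{k-1})$ against a probability measure and dividing by a normalizer that is at least $\epsilon$. Assumption \ref{assump: bound derivative} then shows that $g$ is Lipschitz in $x$, with a constant $C_1/\epsilon$ that does not depend on $k$, $n$ or $\theta$.

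The uniform Fourier convergence result of Online Appendix \ref{sec:uni convergence fourier} is then used. For a Lipschitz function, the Fej\'er means satisfy $\|g-F_Lg\|_\infty\le C\,\mathrm{Lip}(g)\,\log L/L$. This follows by splitting the Fej\'er integral at $|t|\sim 1/L$ and using the kernel bound $\min(L,1/(Lt^2))$; the $\log L$ factor comes from the integral $\int_{1/L}|t|\cdot 1/(Lt^2)\,dt$. On the compact set $\mathcal X$ the uniform bound converts to total variation after multiplying by the constant $(b_1-b_0)^{d_X}$. In $d_X$ dimensions, the product Fej\'er kernel adds one such term per coordinate, which is the source of $\tilde C_{d_X}$.

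\textbf{Summation and main obstacle.} Summing the geometric series gives
\[
\|p_n-\hat p_n\|_{TV}\le \epsilon^{-2}\,C\,\frac{C_1}{\epsilon}\,\frac{\log L}{L}\sum_{j\ge0}(1-\epsilon^2)^j=\tilde C_{d_X}\frac{\log L}{L},
\]
and this equals the supremum over $\|f\|_\infty=1$ in (\ref{eq:fourier error}).

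The main obstacle is the contraction step for the \emph{normalized} nonlinear maps. One has to check that the perturbation $F_L$ leaves the minorization usable: $F_L$ preserves positivity and total mass, so every intermediate $\hat p_k$ is a genuine probability density, and the contraction bound then applies to it. One also has to check that the Lipschitz constant of the predicted-updated density stays uniform even though $\hat p_{k-1}$ is itself only approximate. Both checks rest on the positivity guaranteed by Proposition \ref{prop:implement}.
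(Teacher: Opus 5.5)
Your argument is the paper's proof with the bookkeeping made explicit, and the difference between the two only affects constants:
\begin{itemize}
\item your local estimate is the paper's Lemma~\ref{prop:one step error}, namely the Fej\'er rate for $\bar p^{(L)}=\Phi_k(\hat p_{k-1})$, whose $x$-Lipschitz constant is $C_1/\epsilon$ because $\hat{\mathcal L}_k^{(L)}\ge\epsilon$;
\item where you telescope in total variation with a Dobrushin-type forgetting bound, the paper invokes Theorem~4.8 of \cite{gland2004} with the Hilbert-metric Birkhoff coefficient $(1-\epsilon^2)/(1+\epsilon^2)$.
\end{itemize}

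One step fails as written, however, and the paper's Lemma~\ref{prop:one step error} has the same defect. The bound $\|g-F_Lg\|_\infty\le C\,\mathrm{Lip}(g)\log L/L$ holds for functions that are Lipschitz on the torus. But $F_L$ acts on the $(b_1-b_0)$-periodic extension $\tilde g$ of $g=\Phi_k(\hat p_{k-1})$. That extension is discontinuous across $\partial\mathcal X$ unless $g$ takes matching values on opposite faces. Assumptions~\ref{ass:S6} and~\ref{assump: bound derivative} do not provide this; Online Appendix~\ref{sec:uni convergence fourier} simply assumes the function vanishes on the boundary. For example, $g(x)=2x$ on $[0,1]$ gives $F_Lg(0)\to 1\neq g(0)=0$, so the sup-norm error does not even tend to zero.

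The conclusion survives, because your telescoping only needs $\|g-F_Lg\|_{L^1}$. Rescale $\mathcal X$ to $[-1/2,1/2]^{d_X}$, absorbing constants that depend on $b_1-b_0$, and let $\delta_j(x)=1/2-|x_j|$.
\begin{itemize}
\item A wrap-around in coordinate $j$ requires $|u_j|\ge\delta_j(x)$. Hence $|\tilde g(x-u)-g(x)|\le\mathrm{Lip}(g)\sum_j|u_j|+2\|g\|_\infty\sum_j\mathbf{1}\{|u_j|\ge\delta_j(x)\}$.
\item Since $F_L(t)\le 1/(4(L+1)t^2)$ on $[-1/2,1/2]$, we have $\int_{|t|\ge\delta}F_L(t)\,dt\le\min\{1,1/(2(L+1)\delta)\}$.
\item Moreover $\int_0^{1/2}\min\{1,1/(L\delta)\}\,d\delta=O(\log L/L)$. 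Integrating over $x$ therefore gives $\|g-F_Lg\|_{L^1}\le C\,d_X\bigl(\mathrm{Lip}(g)+\|g\|_\infty\bigr)\log L/L$.
\item Finally, $g$ is a probability density on $\mathcal X$, so $\|g\|_\infty\le(b_1-b_0)^{-d_X}+\mathrm{Lip}(g)\,\mathrm{diam}(\mathcal X)$. This is uniform in $k$, $n$ and $\theta$.
\end{itemize}
With this $L^1$ local estimate in place of the sup-norm one, the rest of your proof goes through unchanged.
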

	
	
	A novel feature of this result is that it provides
	a deterministic, tractable bound, decisively bypassing the simulation
	noise and the local degeneracy typically associated with most Monte Carlo-based filters. As a direct
	consequence, this density-level bound ensures that all filtered moments are
	automatically well-controlled. Because the latent space is assumed to be
	bounded, the functions $f(x)=x$ or $f(x)=x^{2}$ are bounded up to a scaling
	constant. By applying these appropriately scaled test functions to the
	integral in (\ref{eq:fourier error}), the approximation errors for the
	filtered mean and variance inherit the same convergence order of $(\log L)/L.$
	Moreover, recall that the $k$th-step true and approximated likelihood update
	functions are obtained from the updating system as $\mathcal{\hat{L}}%
	_{k}^{(L)}(\theta)=\int_{\mathcal{X}}p_{Y}(y_{k}|x_{k-1},y_{k-1};\theta
	)\hat{p}^{(L)}(x_{k-1}|\mathbf{y}_{k-1};\theta)dx_{k-1},$ while the marginal
	density $p_{Y}(y_{k}|x_{k-1},y_{k-1};\theta)$ admits the upper bound as
	illustrated in Lemma~\ref{lemma: likelihood bound} in Online Appendix. The
	error bounds of likelihood updates can also be well controlled by taking
	$f(x_{k-1})=p_{Y}(y_{k}|x_{k-1},y_{k-1};\theta)$ in Theorem
	\ref{thm:uni converge of filter}.
	
	\begin{corollary}
		\label{coro:uniform error for likelihood}Under Assumptions \ref{ass:S6} and
		\ref{assump: bound derivative}, the error between $k$th-step likelihoods and
		log-likelihoods can be controlled as
		\[
		\sup_{k\geq2}\lvert\mathcal{\hat{L}}_{k}^{(L)}(\theta)-\mathcal{L}_{k}%
		(\theta)\rvert\leq C_{d_{X}}^{\mathcal{L}}\frac{\log L}{L}\quad \mbox{and} \quad \sup_{k\geq
			2}\lvert\hat{\ell}_{k}^{(L)}(\theta)-\ell_{k}(\theta)\rvert\leq C_{d_{X}%
		}^{\ell}\frac{\log L}{L}
		\]
		for any $\theta\in\Theta$, where $C_{d_{X}}^{\mathcal{L}}$ and $C_{d_{X}%
		}^{\ell}$ are two constants that depend on the latent dimension.
	\end{corollary}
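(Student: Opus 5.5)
The plan is to derive both bounds from Theorem \ref{thm:uni converge of filter}, following the route sketched in the paragraph before the statement.

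For the likelihood bound, fix $k\geq2$ and write both updates in integral form. By (\ref{eq:likelihood update}),
\[
\mathcal{L}_{k}(\theta)=\int_{\mathcal{X}}p_{Y}(y_{k}|x,y_{k-1};\theta)\,p(x|\mathbf{y}_{k-1};\theta)\,dx .
\]
By Proposition \ref{prop:implement} and the integral representation shown just before it,
\[
\mathcal{\hat{L}}_{k}^{(L)}(\theta)=\int_{\mathcal{X}}p_{Y}(y_{k}|x,y_{k-1};\theta)\,\hat{p}^{(L)}(x|\mathbf{y}_{k-1};\theta)\,dx .
\]
Subtracting gives
\[
\bigl|\mathcal{\hat{L}}_{k}^{(L)}-\mathcal{L}_{k}\bigr|\leq \|p_{Y}(y_{k}|\cdot,y_{k-1};\theta)\|_{\infty}\int_{\mathcal{X}}\bigl|p-\hat{p}^{(L)}\bigr|\,dx .
\]
The integral is the left side of (\ref{eq:fourier error}) with $f\equiv1$, so it is at most $\tilde{C}_{d_X}\log L/L$ uniformly in $k$. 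It then remains to bound $\sup_{x_0,y_0,y_1,\theta}p_{Y}(y_1|x_0,y_0;\theta)$ by a constant $\bar{p}_Y$; the paper attributes this to Lemma~\ref{lemma: likelihood bound}. To obtain it directly, write $p_Y(y_1|x_0,y_0)=\int_{\mathcal{X}}p_{(X,Y)}(x_1,y_1|x_0,y_0)\,dx_1$. A fixed $y_1$-section of a joint density can be large, so the sup must be justified. One would use the minorization/majorization of Assumption \ref{assump: bound derivative}, which gives $p_Y(y_k|x_0,y_{k-1})\leq\epsilon^{-1}\lambda_k(\mathcal{X})\le\epsilon^{-1}$. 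Since $\lambda_k$ is a probability measure, this bound does not depend on $x_0$. This yields $C^{\mathcal{L}}_{d_X}=\epsilon^{-1}\tilde{C}_{d_X}$.

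For the log-likelihood, apply the mean value theorem to the logarithm:
\[
|\log a-\log b|\leq |a-b|/\min(a,b),
\]
so one needs a uniform lower bound on both $\mathcal{L}_k$ and $\mathcal{\hat{L}}_k^{(L)}$. The same assumption gives $p_Y(y_k|x_0,y_{k-1})\geq\epsilon_k\lambda_k(\mathcal{X})=\epsilon_k\geq\epsilon$ for every $x_0$. Integrating against any probability density then gives a lower bound of $\epsilon$. The true filter is a probability density. For the approximate filter, Proposition \ref{prop:implement} guarantees that $\hat{p}^{(L)}$ is nonnegative, and the Fejér construction preserves unit mass because $\hat{c}^{(L)}_{k,0}=c_{k,0}$ and, by the normalization argument, equals $1/(b_1-b_0)$. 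Hence $\mathcal{\hat{L}}_{k}^{(L)}(\theta)\geq\epsilon$ as well, and $C^{\ell}_{d_X}=C^{\mathcal{L}}_{d_X}/\epsilon$.

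The main obstacle is checking that the approximate filter is exactly normalized at every step. The recursion (\ref{eq:implement 2}) divides by $\mathcal{\hat{L}}_k^{(L)}$, and one must verify that the resulting $l=0$ coefficient equals $(b_1-b_0)^{-1}$. This follows because $H(0,y_k,\cdot)$ integrated gives exactly $\mathcal{\hat{L}}_k^{(L)}$ and the Fejér factor at $l=0$ is $1$. Once this is settled, the argument applies for every $k\geq2$ with constants independent of $k$ and $\theta$, which gives the supremum over $k$.
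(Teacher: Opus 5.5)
Your proposal is correct and follows essentially the paper's route. You write $\hat{\mathcal{L}}_k^{(L)}$ and $\mathcal{L}_k$ as integrals of $p_Y(y_k|\cdot,y_{k-1};\theta)$ against the approximate and exact filters and invoke Theorem~\ref{thm:uni converge of filter}; the paper plugs in the normalized test function $f\propto p_Y$, while you take $f\equiv1$ and pull out $\sup p_Y$, which is equivalent. You then use the two-sided bound $\epsilon\le p_Y\le\epsilon^{-1}$ of Lemma~\ref{lemma: likelihood bound} both for the constant and for the Lipschitz bound on $\log$.

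Your explicit check that the recursion yields $\hat{c}^{(L)}_{k,0}=(b_1-b_0)^{-1}$ is a correct detail that the paper only asserts. It is what gives $\hat{p}^{(L)}$ unit mass and hence $\hat{\mathcal{L}}_k^{(L)}\ge\epsilon$. One small clarification: for $k\ge2$, $\hat{c}^{(L)}_{k,0}$ is the zeroth coefficient of the one-step update of the truncated filter, not of the exact filter, even though both equal $(b_1-b_0)^{-1}$.
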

	
	\subsection{Large-sample theory of approximated MMLEs}
	
	We also prove the asymptotic properties of $\hat{\theta}^{(n)}$ and
	$\hat{\theta}_{\text{AMMLE}}^{(n,L)}$ for general LMMs. To obtain consistency
	and asymptotic normality of MLEs, a routine strategy typically requires uniform
	convergence of the normalized log-likelihood, a central limit theorem for the
	score, and convergence of the normalized Hessian, which are usually
	established by representing or approximating the likelihood and its
	derivatives through stationary and ergodic one-step contributions (see, e.g.,
	\cite{vanderVaart1998}). In SSMs and HMMs, the additional
	conditional-independence structure in (\ref{eq:SSM 1})-(\ref{eq:SSM 2}) makes
	this possible (see, e.g.,
	\cite{cappe2005inference,Pouzo2022}).\footnote{For instance, filter forgetting
		typically implies that $\ell_{k}(\theta)$ can be approximated by a stationary
		counterpart driven by the infinite observation history, i.e., $\ell_{k}%
		(\theta)\approx\ell^{*}_{k}(\theta)=\log p(Y_{k}|Y_{-\infty:k-1};\theta)$,
		with $|\ell_{k}(\theta)-\ell^{*}_{k}(\theta)|\le C \rho^{k}$. $\ell^{*}%
		_{k}(\theta)$ can be written as the same one-step function of $(Y_{k}%
		,p(\cdot|Y_{-\infty:k-1};\theta))$ for every $k$. Consequently, the sample
		log-likelihood admits the stationary additive approximation $n^{-1}\sum
		^{n}_{k=1}\ell_{k}(\theta;\nu)+o_{p}(1)=n^{-1}\sum^{n}_{k=1}\ell^{*}%
		_{k}(\theta)$. The right-hand side is thus an additive functional generated by
		the same one-step function, to which standard ergodic arguments can be
		applied. Analogous approximations can be established for the score and
		Hessian.} However, general LMMs do not necessarily satisfy these additional
	restrictions. Furthermore, both $\ell_{k}(\theta)$ and $\partial\ell_{k}(\theta)/\partial
	\theta$ cannot be written directly as additive
	functionals of the original Markov process $(X_{k},Y_{k}).$ 

	To overcome those difficulties, we
	introduce a function-valued Markovian augmentation that restores the ergodic
	additive structure required by standard MLE asymptotic theory. The observations $\mathbf{Y}_{k}=\mathbf{y}_{k}$ are
	regarded as a fixed realized sample path generated under the true parameter value
	$\theta_{0}.$ When evaluating the likelihood at a candidate value
	$\theta,$ the filtering density $p(\cdot|\mathbf{y}_{k};\theta)$ is computed
	according to the Bayes updating system in Section \ref{sec:bayes updating}
	under this candidate parameter value, i.e., $p(\cdot|\mathbf{Y}_{k};\theta)=\left.
	p(\cdot|\mathbf{y}_{k};\theta)\right\vert _{\mathbf{y}_{k}=\mathbf{Y}_{k}}.$ This shows that
	$p(\cdot|\mathbf{Y}_{k};\theta)$ can be viewed as a function-valued process
	that maps vector process $\mathbf{Y}_{k}$ into a probability density space $\mathcal{S}$.
	Subsequently, the filter derivative can be defined as the vector-valued
	process $\partial p(\cdot|\mathbf{Y}_{k};\theta)/\partial\theta=\left(
	\partial p(\cdot|\mathbf{Y}_{k};\theta)/\partial\theta_{1},\ldots,\partial
	p(\cdot|\mathbf{Y}_{k};\theta)/\partial\theta_{d}\right)  ^{\intercal}$ defined on a proper space $\dot{\mathcal{S}}$. We leave the rigorous definitions of $\mathcal{S}$ and $\dot{\mathcal{S}}$ in Online Appendix \ref{eq:S_state}-\ref{eq:Sdot_state}.
	
	We establish the consistency and asymptotic normality of the MMLE by first
	embedding $\left(  X_{k},Y_{k}\right)  $ into larger processes  $\xi_{k}%
	^{+}(\theta)=(Y_{k+1},\xi_{k}(\theta)),$ where $\xi_{k}%
	(\theta):=\left(  X_{k},Y_{k},p(\cdot|\mathbf{Y}_{k};\theta),\partial
	p(\cdot|\mathbf{Y}_{k};\theta)/\partial\theta\right).$  Then, we analyze the associated
	asymptotic properties.\footnote{\cite{TadicDoucet2005} use Markov augmentation $\xi_k(\theta)$ to study the stability and long-run behavior of the filtering process and its parameter derivatives. We use a related but further enlarged Markov augmentation $\xi^{+}_{k}(\theta)$ for a different purpose: to rewrite the likelihood and its score as Markov processes so that we can develop MMLE theory and analyze the feasible approximate MMLE based on Fourier truncation.} The key role of this augmentation is that the
	likelihood increment and its score can now be represented as one-step
	functions of the enlarged state, i.e., $\ell_{k}(\theta)=\ell\left(  \xi
	_{k-1}^{+}(\theta);\theta\right)  ,$ $\partial\ell_{k}(\theta)/\partial
	\theta=S\left(  \xi_{k-1}^{+}(\theta);\theta\right)  ,$ for two one-step
	functions $\ell\left(  \cdot;\theta\right)  ,S\left(  \cdot;\theta\right)  $.
	Hence, once the augmented process $\xi_{k}^{+}(\theta)$ is shown to be ergodic
	and Markov, the path dependence of the likelihood is absorbed into the state,
	allowing standard large-sample arguments for additive functionals of ergodic
	processes to be applied. Let $\Xi=\mathcal{X}\times\mathcal{Y}\times\mathcal{S}\times\dot{\mathcal{S}}$,
	for any fixed $\theta\in\Theta$. For $\xi_{0}=(x_{0},y_{0},f_{0},g_{0})\in\Xi
	$, we denote the one-step MMLE score function as
	\[
	S(y,\xi_{0};\theta)=\frac{\int_{\mathcal{X}}\frac{\partial p_{Y}}%
		{\partial\theta}(y|\tilde{x}_{0},y_{0};\theta)f_{0}(\tilde{x}_{0})\,d\tilde
		{x}_{0}+\int_{\mathcal{X}}p_{Y}(y|\tilde{x}_{0},y_{0};\theta)g_{0}(\tilde
		{x}_{0})\,d\tilde{x}_{0}}{\int_{\mathcal{X}}p_{Y}(y|\tilde{x}_{0},y_{0}%
		;\theta)f_{0}(\tilde{x}_{0})\,d\tilde{x}_{0}},
	\]
	and the one-step MMLE Fisher-information ingredient $I^{M}(y,\xi_{0}%
	;\theta_{0})=S(y,\xi_{0};\theta_{0})S(y,\xi_{0};\theta_{0})^{\intercal}.$ As formally established in \ref{appendix:proofs}, the augmented Markov process
	$\xi_{k}^{+}(\theta)$ is ergodic and thus admits a unique stationary
	distribution. Let the random vector $\xi^{+}(\theta_{0})$ follow the stationary distribution of
	the ergodic Markov process $\xi_{k}^{+}(\theta_{0})$.
	Finally, the MMLE Fisher information is defined as the stationary expectation
	of this one-step contribution, i.e.,
	\begin{equation}
		I^{M}=\mathbb{E}[I^{M}(\xi^{+}(\theta_{0});\theta_{0})]=\mathbb{E}[I^{M}%
		(Y^{+},\xi(\theta_{0});\theta_{0})]. \label{eq:fisher inform}%
	\end{equation}
	Building upon this crucial regularity condition, we are now
	ready to state the principal asymptotic properties of both
	$\hat{\theta}_{\text{AMMLE}}^{(n,L)}$ obtained from our Fourier recursion and $\hat{\theta}^{(n)}$.
	
	\begin{theorem}
		\label{thm:mimle-long-span}Under the same conditions in Theorem \ref{thm:uni converge of filter}, we further assume Assumptions \ref{ass:S2},
		\ref{ass:S4} and \ref{ass:S5} hold. Then the MMLE $\hat{\theta}^{(n)}$ is
		consistent and asymptotically normal:
		\[
		\hat{\theta}^{(n)}\overset{p}{\rightarrow}\theta_{0},\text{ }n^{1/2}%
		(\hat{\theta}^{(n)}-\theta_{0})\overset{d}{\rightarrow}N(0,\left(
		I^{M}\right)  ^{-1}),
		\]
		as $n\rightarrow\infty,$ where $I^{M}$ is defined in (\ref{eq:fisher inform}). 
        
        If Assumption \ref{assump:hessian bound} also holds in addition, the following two assertions hold.
        
		(i) For any fixed truncation level $L,$ the error between approximated MMLE
		$\hat{\theta}_{\text{AMMLE}}^{(n,L)}$ and the true parameter $\theta_{0}$ can
		be controlled as
		\begin{equation}
			\left\Vert \hat{\theta}_{\text{AMMLE}}^{(n,L)}-\theta_{0}\right\Vert ^{2}\leq
			C_{d_{X}}\frac{\log L}{L}+r_{n},\text{ with }r_{n}=o_{p}(1),
			\label{eq:par est bound}%
		\end{equation}
		where $\left\Vert \cdot\right\Vert $ denotes the Euclidean norm, 
        and $C_{d_{X}}$ is a constant that depends on the latent dimension.
        
		(ii) If the truncation level $L_{n}$ satisfies $n\log L_{n}/L_{n}%
		\rightarrow0,$ the asymptotic normality of $\hat{\theta}_{\text{AMMLE}%
		}^{(n,L_{n})}$ can be established as
		\begin{equation}
			n^{1/2}(\hat{\theta}_{\text{AMMLE}}^{(n,L_{n})}-\theta_{0}%
			)\overset{d}{\rightarrow}N(0,\left(  I^{M}\right)  ^{-1}),\quad
			n\rightarrow\infty. \label{eq:par est clt}%
		\end{equation}
		
	\end{theorem}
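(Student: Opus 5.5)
The plan follows the classical M-estimation route of \cite{vanderVaart1998}, applied to the augmented Markov chain $\xi_{k}^{+}(\theta)$ rather than to $(X_k,Y_k)$ directly.

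\textbf{Step 1 (ergodicity of the augmented chain).} First show that $\xi_{k}^{+}(\theta)=(Y_{k+1},X_k,Y_k,p(\cdot|\mathbf{Y}_k;\theta),\partial_\theta p(\cdot|\mathbf{Y}_k;\theta))$ is a Markov chain on $\mathcal{Y}\times\Xi$. The filter and its derivative are updated by the deterministic Bayes maps (\ref{eq:filter update}) and their $\theta$-derivatives, and these maps depend only on $(y_{k-1},y_k)$. Ergodicity with a unique stationary law then follows from two ingredients:
\begin{itemize}
\item[(a)] uniform ergodicity of $(X_k,Y_k)$ under $\theta_0$, which comes from the mixing bounds in Assumption~\ref{assump: bound derivative};
\item[(b)] exponential forgetting of the filter and its derivative, in total variation, with rate $\rho=1-\epsilon^2$ (as in \cite{gland2004,TadicDoucet2005}).
\end{itemize}
Together these give a coupling in which two copies of the chain started from different filter components merge geometrically fast. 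Consequently $n^{-1}\sum_k\ell(\xi^+_{k-1}(\theta);\theta)\to \ell^*(\theta):=\mathbb{E}[\ell(\xi^+(\theta);\theta)]$ almost surely for each $\theta$.

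\textbf{Step 2 (consistency of the MMLE).} Upgrade the pointwise convergence to uniform convergence over the compact set $\Theta$, using the Lipschitz/equicontinuity bounds on $\ell_k$ in $\theta$ presumably supplied by Assumptions~\ref{ass:S2}--\ref{ass:S5}. Then use identifiability, i.e.\ $\theta_0$ is the unique maximizer of $\ell^*$. The argmax theorem gives $\hat\theta^{(n)}\to_p\theta_0$.

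\textbf{Step 3 (asymptotic normality of the MMLE).} Expand the score around $\theta_0$:
\[
\hat\theta^{(n)}-\theta_0=-\big[n^{-1}\nabla^2_\theta S_n^\ell(\bar\theta)\big]^{-1}n^{-1}\nabla_\theta S_n^\ell(\theta_0).
\]
The score increments $S(\xi^+_{k-1}(\theta_0);\theta_0)$ are martingale differences with respect to $\sigma(\mathbf{Y}_k)$, because $\mathbb{E}[\nabla\mathcal{L}_k/\mathcal{L}_k\mid\mathbf{Y}_{k-1}]=\nabla\int p(y|\mathbf{y}_{k-1};\theta)\,dy=0$ at $\theta_0$. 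Their conditional variances average, by Step 1, to $I^M$, so the martingale CLT gives $n^{-1/2}\nabla S_n^\ell(\theta_0)\Rightarrow N(0,I^M)$. Applying the ergodic theorem to the Hessian, together with uniform continuity near $\theta_0$ and the information identity (stationary mean of the Hessian equals $-I^M$), gives the stated limit $N(0,(I^M)^{-1})$.

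\textbf{Steps 4--5 (the AMMLE).} For (i), apply Corollary~\ref{coro:uniform error for likelihood} uniformly in $\theta$; this uniformity needs checking, but the constants there do not depend on $\theta$. It gives
\[
\sup_\theta\big|n^{-1}\hat\ell^{(n,L)}(\theta)-n^{-1}\ell^{(n)}(\theta)\big|\le C\log L/L.
\]
Hence $\ell^*(\theta_0)-\ell^*(\hat\theta_{\text{AMMLE}})\le 2C\log L/L+o_p(1)$. Local quadratic curvature of $\ell^*$ at $\theta_0$ (negative definite $-I^M$), combined with identifiability on the rest of $\Theta$, turns this into $\|\hat\theta_{\text{AMMLE}}-\theta_0\|^2\le C'\log L/L+r_n$.

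For (ii), the extra ingredient is Assumption~\ref{assump:hessian bound}. I expect it to give analogous uniform bounds for the score error, $\sup_\theta\|\nabla\hat\ell^{(L)}_k-\nabla\ell_k\|\le C\log L/L$, by propagating the same forgetting argument through the derivative recursions of Proposition~\ref{prop:hessian updating}, and similarly Hessian closeness. Then
\[
n^{-1/2}\big(\nabla\hat\ell^{(n,L_n)}-\nabla\ell^{(n)}\big)(\theta_0)=O\big(n^{1/2}\log L_n/L_n\big)\to0,
\]
since $n\log L_n/L_n\to0$. The Taylor expansion of Step 3 then carries over verbatim.

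\textbf{Main obstacle.} The hardest part is Step 1 combined with the derivative error bounds. Proving ergodicity of a function-valued chain with a Fréchet-derivative component requires a contraction of the derivative filter. That derivative obeys a linear, inhomogeneous recursion driven by the filter, and one must show its forcing term stays bounded while the homogeneous part contracts at rate $\rho$. I would first try the Tadi\'c--Doucet decomposition of the derivative into a contracting signed measure of zero mass plus a bounded source term, using $\epsilon$ from Assumption~\ref{assump: bound derivative}.
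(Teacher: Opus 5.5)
Your treatment of the exact MMLE and of part (i) follows the paper's route: ergodicity of $\xi_k^+(\theta)$ via Tadi\'c--Doucet forgetting, uniform convergence from Lipschitz bounds (Assumption~\ref{ass:S5}), identification, and a Taylor expansion of the score. Your direct martingale-difference CLT for the score is a legitimate shortcut around the paper's Poisson-equation argument. One small correction: geometric ergodicity of $(X_k,Y_k)$ is Assumption~\ref{ass:S2} itself. It does not follow from Assumption~\ref{assump: bound derivative}, which only constrains the latent kernel along the realized observation path.

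The genuine gap is in part (ii). Assumption~\ref{assump:hessian bound} is only local strong concavity of the limit criterion, $-\nabla_\theta^2\ell(\theta)\ge c_0 I$ on $B(\theta_0,\delta)$. It gives nothing like $\sup_\theta\|\nabla\hat\ell^{(L)}_k-\nabla\ell_k\|\le C\log L/L$. Your Taylor step also needs $n^{-1}\nabla_\theta^2\hat\ell^{(n,L_n)}\to -I^M$ uniformly near $\theta_0$, i.e.\ second-derivative error bounds as well. Neither follows from the stated hypotheses. Controlling the Fej\'er error of $\partial_\theta\bar p^{(L)}$ requires bounds on $\partial_x\partial_\theta p_{(X,Y)}$, which are not assumed; Assumption~\ref{ass:S5} concerns only the exact filter derivatives. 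You would also need a derivative-level forgetting result for the \emph{perturbed} recursion, uniform in $L$. The paper never proves these, so your (ii) rests on an unproved lemma.

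The missing idea is that no derivative-level bound is needed: compare $\hat\theta_{\text{AMMLE}}^{(n,L)}$ with $\hat\theta^{(n)}$, not with $\theta_0$. Set $\Delta=\hat\theta_{\text{AMMLE}}^{(n,L)}-\hat\theta^{(n)}$.
\begin{itemize}
\item Since $\nabla_\theta\ell^{(n)}(\hat\theta^{(n)})=0$, a second-order expansion gives $n^{-1}[\ell^{(n)}(\hat\theta^{(n)})-\ell^{(n)}(\hat\theta_{\text{AMMLE}}^{(n,L)})]=\frac12\Delta^\top(-n^{-1}H_n(\theta''))\Delta$.
\item Both estimators lie in $B(\theta_0,\delta)$ with probability tending to one, by consistency and your bound (i) with $L_n\to\infty$. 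So uniform convergence $n^{-1}H_n\to H_0$ plus Assumption~\ref{assump:hessian bound} give $-n^{-1}H_n(\theta'')\ge (c_0/2)I$.
\item Since each estimator maximizes its own criterion, the left-hand side is at most $2\sup_\theta n^{-1}|\ell^{(n)}(\theta)-\hat\ell^{(n,L)}(\theta)|\le 2C\log L/L$, using only Corollary~\ref{coro:uniform error for likelihood}.
\end{itemize}
Hence $\|\Delta\|^2\le 8C\log L/(c_0L)$, so $\sqrt n\,\|\Delta\|\to0$ exactly when $n\log L_n/L_n\to0$, and Slutsky transfers the limit law of $\hat\theta^{(n)}$.

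This also explains the rate condition in the statement. Converting a likelihood-level error into a parameter error costs a square root, so one needs $n\log L_n/L_n\to0$. Your score-based route would need only $\sqrt n\log L_n/L_n\to0$, but only if its derivative bounds were available.
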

	
		We provide a unified approach to prove the asymptotic behaviors of
		(\ref{eq:par est bound})-(\ref{eq:par est clt}), see the detailed technical arguments
		 in \ref{appendix:proofs}.
	The proof proceeds by introducing a function-valued Markovian augmentation.
	Namely, by embedding the filtering density into an augmented state space
	$\xi_{k}(\theta)$, we are able to restore the ergodicity necessary for
	complex, path-dependent likelihood evaluation. Furthermore, by integrating the
	deterministic uniform Fourier error bound $O(\log L/L)$ shown in Theorem
	\ref{thm:uni converge of filter}, the proof avoids the intractable stochastic
	bounds typically associated with the simulation methods. This powerful synthesis of Fourier
	analysis and large sample theory ensures the asymptotic efficiency of the
	truncated estimator.
	
	\subsection{Consistent estimation of MMLE Fisher information}
	
	Obtaining $I^{M}$ analytically is difficult even for the simplest
	linear-Gaussian SSM as the expectation in (\ref{eq:fisher inform}) is taken
	over the functional space $\Xi$. We further develop a consistent estimation
	method for $I^{M}$ based on the theoretical results obtained above. From the proof
	of Theorem \ref{thm:mimle-long-span}, the MMLE Fisher information can be
	consistently estimated by the negative normalized Hessian of the sample
	log-likelihood. Specifically, the proof of Lemma \ref{lemma:mimle 1}
	establishes that, under the regularity conditions of the theorem, the
	following uniform convergence in probability holds:
	\[
	\sup_{\theta\in\Theta}\left\Vert \frac{1}{n}\frac{\partial^{2}}{\partial
		\theta\partial\theta^{\intercal}}\sum_{k=0}^{n-1}\log p(Y_{k+1}|\mathbf{Y}%
	_{k};\theta)-H_{0}(\theta)\right\Vert \overset{p}{\rightarrow}0,
	\]
	where $-H_{0}(\theta_{0})=I^{M}.$ Since the MMLE is consistent, namely,
	$\hat{\theta}^{(n)}\overset{p}{\rightarrow}\theta_{0}$, and $H_{0}(\theta)$ is
	continuous at $\theta_{0}$, the above uniform convergence implies that
	\begin{equation}
		\widehat{I}_{n}^{M}\left(  \hat{\theta}^{(n)}\right)  :=-\frac{1}{n}\left.
		\frac{\partial^{2}}{\partial\theta\partial\theta^{\intercal}}\sum_{k=0}%
		^{n-1}\log p(Y_{k+1}|\mathbf{Y}_{k};\theta)\right\vert _{\theta=\hat{\theta
			}^{(n)}}\overset{p}{\rightarrow}I^{M}. \label{eq:fisher true}%
	\end{equation}
	However, directly obtaining $\widehat{I}_{n}^{M}$ in analytical form remains
	difficult. Now, the proposed Fourier recursion becomes useful, as it provides
	a practical computational route for evaluating $\widehat{I}_{n}^{M}$. Namely,
	when an observed sample path $\mathbf{y}_{n}=(y_{0},\ldots,y_{n})$ is
	available, the exact one-step likelihood $p(y_{k+1}\mid\mathbf{y}_{k};\theta)$
	can be replaced by its truncated approximation $\hat{p}^{(L)}(y_{k+1}%
	\mid\mathbf{y}_{k};\theta)$ obtained by our Fourier recursion. 
    This
	leads to the computable approximation
	\begin{equation}
		\widehat{I}_{n,L}^{M}\left(  \hat{\theta}_{\text{AMMLE}}^{(n,L)}\right)
		=-\frac{1}{n}\left.  \frac{\partial^{2}}{\partial\theta\partial\theta
			^{\intercal}}\sum_{k=0}^{n-1}\log\hat{p}^{(L)}(y_{k+1}|\mathbf{y}_{k}%
		;\theta)\right\vert _{\theta=\hat{\theta}_{\text{AMMLE}}^{(n,L)}}.
		\label{eq:fisher fourier}%
	\end{equation}
	Since $\widehat{I}_{n}^{M}\left(  \theta\right)  =-n^{-1}\nabla_{\theta}%
	^{2}S_{n}^{\ell}(\theta)$, Proposition \ref{prop:hessian updating} can be
	utilized to obtain $\widehat{I}_{n,L}^{M}\left(  \theta\right)  $ using the
	truncation to (\ref{eq:gradient fourier coeff}%
	)-(\ref{eq:hessian fourier coeff}). This avoids the need for second-order
	numerical differentiation.
	
	Under additional regularity conditions, the proof strategy underlying Theorem
	\ref{thm:uni converge of filter} can be extended to the first- and
	second-order parameter derivatives of the filtering recursion leading to
	similar asymptotic results.\footnote{
		Although $\partial^{2} p(x_{k}%
		|\mathbf{Y}_{k};\theta)/\partial{\theta^2}$ and $\partial^2 \hat{p}^{(L)}(x_{k}%
		|\mathbf{Y}_{k};\theta)/\partial{\theta^2}$ are
		signed measures with zero total mass, the filtering recursion still
		exponentially attenuates the propagation of previously accumulated
			approximation errors. One can establish $\sup_{k\geq0}\sup_{\theta\in\Theta}%
		\Vert\partial^2 p(x_{k}%
		|\mathbf{Y}_{k};\theta)/\partial{\theta^2}-\partial^2 \hat{p}^{(L)}(x_{k}%
		|\mathbf{Y}_{k};\theta)/\partial{\theta^2}\Vert\rightarrow0$ as $L\rightarrow\infty$, which
		consequently implies $\sup_{n\geq1}\sup_{\theta\in\Theta}\Vert\widehat{I}%
		_{n,L}^{M}(\theta)-\widehat{I}_{n}^{M}(\theta)\Vert\rightarrow0$ as
		$L\rightarrow\infty$. Combining the preceding uniform approximation result
		with the consistency established in Theorem \ref{thm:mimle-long-span} gives
		$\widehat{I}_{n,L_{n}}^{M}\bigl(\hat{\theta}_{\text{AMMLE}}^{(n,L_{n}%
			)}\bigr)\overset{p}{\longrightarrow}I^{M}$, for an appropriate sequence
		$L=L_{n}\rightarrow\infty$. The detailed derivative-level calculations are
		omitted for brevity.} We abbreviate $\widehat{I}_{n,L}^{M}\bigl(\hat{\theta
	}_{\text{AMMLE}}^{(n,L)}\bigr)$ as $\widehat{I}_{n,L}^{M}$. The asymptotic
	normality of the MMLE further allows standard likelihood-based
	inference. For each component $\theta_{j}\in\theta$, the asymptotic variance
	and the corresponding standard error can be estimated by $\widehat{\text{SE}}%
	(\hat{\theta}_{\text{AMMLE},j}^{(n,L)})=n^{-1/2}[(\hat{I}_{n,L}^{M}%
	)^{-1}]_{jj}^{1/2}.$ Hence an approximate $95\%$ confidence interval for $\theta_{j}$ is given by
	$\hat{\theta}_{\text{AMMLE},j}^{(n,L)}\pm1.96\,\widehat{\text{SE}}(\hat
	{\theta}_{\text{AMMLE},j}^{(n,L)})$.
	\subsection{Estimation and inference under model misspecification}
	Model misspecification is pervasive in statistical modeling, as parametric specifications rarely coincide exactly with the true data-generating process. What happens if the maintained model is misspecified? Can $\hat{\theta}^{(n)}$ and $\hat{\theta}^{(n)}_{\text{AMMLE}}$ still converge even when the true
	transition law falls outside the maintained parametric family? To address
	these questions, it is necessary for us to discuss the model misspecification.
	Let the maintained family of LMM transition densities be $\mathcal{P=}%
	\left\{  p_{(X,Y)}(x_{1},y_{1}|x_{0},y_{0};\theta):\theta\in\Theta\right\}  .$
	We allow the data to be generated by a stationary Markov process whose
	transition density is $p_{(X,Y)}^{\ast}(x_{1},y_{1}|x_{0},y_{0}).$ However,
	$p_{(X,Y)}^{\ast}$ is not necessarily contained in $\mathcal{P}$. Accordingly,
	correct specification $H_{0}$ and the model misspecification $H_{1}$
	correspond to
	\begin{align*}
		H_{0}: \exists\theta_{0}\in\Theta, \quad p_{(X,Y)}^{\ast}(\cdot)=p_{(X,Y)}(\cdot;\theta_{0}); \quad H_{1}:p_{(X,Y)}^{\ast}\not \in \mathcal{P}.
	\end{align*}
	When $\left(  X_{k},Y_{k}\right)  $ evovles under $p_{(X,Y)}^{\ast},$
	additional notation is useful to distinguish the true data-generating process
	from the maintained LMM. We define $\xi_{k}^{+,\ast}(\theta)=(Y_{k+1},\xi
	_{k}^{\ast}(\theta)),$ where $\xi_{k}^{\ast}(\theta):=\left(  X_{k}%
	,Y_{k},p(\cdot|\mathbf{Y}_{k};\theta),\partial p(\cdot|\mathbf{Y}_{k}%
	;\theta)/\partial\theta\right)  .$ Under the misspecified counterparts of the
	stability conditions in Lemma \ref{lem:filter-ergodic}, $\xi_{k}^{+,\ast}(\theta)$ admits a
	stationary distribution $\xi^{+,\ast}(\theta)$ for each $\theta.$
	
	To answer the asymptotic behavior of $\hat{\theta}^{(n)}$ and $\hat{\theta}^{(n)}_{\text{AMMLE}}$ under misspecification, we define the population marginal log-likelihood
	under the true law as $\ell^{\ast}(\theta)=\mathbb{E}\left[  \ell\left(
	\xi^{+,\ast}(\theta);\theta\right)  \right]  ,$ where $\ell(  \xi
	_{k}^{+,\ast}(\theta);\theta)  =\log p(Y_{k+1}|\mathbf{Y}_{k};\theta),$ and the one-step score contribution by $S(  \xi_{k}%
	^{+,\ast}(\theta);\theta)  =\nabla_{\theta}\log p(Y_{k+1}|\mathbf{Y}%
	_{k};\theta).$ The pseudo-true parameter set is then defined as $\Theta^{\ast}=\arg
	\max_{\theta\in\Theta}\ell^{\ast}(\theta).$
	In misspecification, $\Theta^{\ast}$ provides the natural replacement.\footnote{This definition nests the
		correctly specified case. Under $H_{0}$ and the identification condition, $\Theta^{\ast
		}=\left\{  \theta_{0}\right\}  .$ Under misspecification, by contrast, such
		$\theta_{0}$ need not exist within the maintained family. } We focus on the
	point-identified case in which $\Theta^{\ast
	}$ is a singleton,
	i.e., $\Theta^{\ast}=\left\{  \theta^{\ast}\right\}  \subset \mathrm{int}(\Theta),$
	following the convention of \cite{White1982}. The singleton restriction is not
	essential in principle; more general cases can also be considered, but are
	beyond the scope of this paper. The following proposition shows the asymptotic
	behaviour of $\hat{\theta}^{(n)}$ when the LMM is misspecified. Its proof is given in \ref{appendix:proofs}.
	
	\begin{proposition}
		\label{prop:misspecified_mmle}Suppose the regularity conditions in Theorem \ref{thm:mimle-long-span} continue to hold under the true law $ p^{\ast}_{(X,Y)}$ with $\theta_0$ replaced by the pseudo-true parameter $\theta^{\ast}$. Then, it holds that
		\[
		n^{1/2}\left(  \hat{\theta}^{(n)}-\theta^{\ast}\right)
		\overset{d}{\rightarrow}\mathcal{N}\left(  0,A_{\ast}^{-1}B_{\ast}A_{\ast
		}^{-1}\right)  ,
		\]
		where $A_{\ast}=-\nabla_{\theta}^{2}\ell^{\ast}(\theta^{\ast}),$ $B_{\ast
		}=\Gamma_{0}^{\ast}+\sum_{k=1}^{\infty}\left(  \Gamma_{k}^{\ast}+(\Gamma
		_{k}^{\ast})^{\intercal}\right)  ,$ and $\Gamma_{k}^{\ast}=\mathrm{Cov}%
		(  S(  \xi_{0}^{+,\ast}(\theta^{\ast});\theta^{\ast})
		,S(  \xi_{k}^{+,\ast}(\theta^{\ast});\theta^{\ast}))  .$
		For any fixed truncation level $L,$ the error between approximated MLE
		$\hat{\theta}_{\text{AMMLE}}^{(n,L)}$ and $\theta^{\ast}$ can
		be controlled as
		\begin{equation}
			\left\Vert \hat{\theta}_{\text{AMMLE}}^{(n,L)}-\theta^{\ast}\right\Vert
			^{2}\leq C_{d_{X}}\frac{\log L}{L}+r_{n}^{\ast},\quad r_{n}^{\ast}%
			=o_{p}(1).\label{eq:ammle_misspec_bound}%
		\end{equation}
		If $n\log L_{n}/L_{n}\rightarrow0,$ then $n^{1/2}(  \hat{\theta}_{\text{AMMLE}}^{(n,L)}-\theta^{\ast})
		\overset{d}{\rightarrow}\mathcal{N}\left(  0,A_{\ast}^{-1}B_{\ast}A_{\ast
		}^{-1}\right)  .\label{eq:misspec_fourier_rate}$

	\end{proposition}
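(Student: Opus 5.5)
The plan is to follow the proof of Theorem \ref{thm:mimle-long-span}, replacing the correctly specified law by $p^{\ast}_{(X,Y)}$ and $\theta_0$ by $\theta^{\ast}$. The one genuinely new ingredient is that the score is no longer a martingale difference, so the limiting variance becomes a long-run covariance.

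\textbf{Consistency of $\hat\theta^{(n)}$.} I would use the augmented process $\xi_k^{+,\ast}(\theta)=(Y_{k+1},\xi_k^\ast(\theta))$. By the assumed misspecified analogue of Lemma \ref{lem:filter-ergodic}, it is an ergodic Markov chain with stationary law $\xi^{+,\ast}(\theta)$. Each increment can be written $\ell_k(\theta)=\ell(\xi_{k-1}^{+,\ast}(\theta);\theta)$. Filter forgetting (Assumption \ref{assump: bound derivative}, which only concerns the maintained kernel and is therefore unaffected by misspecification) makes the influence of the initial filter decay geometrically. The ergodic theorem then gives pointwise convergence
\[
n^{-1}\sum_{k}\ell_k(\theta)\to\ell^\ast(\theta).
\]
I would upgrade this to uniform convergence over the compact $\Theta$ via stochastic equicontinuity, using the bounded score bounds of Assumptions \ref{ass:S4}--\ref{ass:S5}. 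The argmax theorem with unique $\theta^\ast\in\mathrm{int}(\Theta)$ then yields $\hat\theta^{(n)}\overset{p}{\to}\theta^\ast$.

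\textbf{Asymptotic normality.} I would expand the first-order condition:
\[
0=n^{-1/2}\sum_k S(\xi_{k-1}^{+,\ast}(\hat\theta^{(n)});\hat\theta^{(n)}),
\]
around $\theta^\ast$. The uniform Hessian convergence from the proof of Lemma \ref{lemma:mimle 1}, combined with consistency, gives $-n^{-1}\nabla^2_\theta\sum\ell_k(\bar\theta)\overset{p}{\to}A_\ast$. At $\theta^\ast$ we have $\mathbb{E}[S(\xi^{+,\ast}(\theta^\ast);\theta^\ast)]=\nabla\ell^\ast(\theta^\ast)=0$; interchanging derivative and expectation is justified by dominated convergence. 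However, $S(\xi_k^{+,\ast})$ is not a martingale difference with respect to $\sigma(\mathbf Y_k)$, because the model conditional law is not the true one.

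\textbf{Main obstacle: the score CLT.} I would apply a CLT for additive functionals of a geometrically ergodic Markov chain. The plan is to solve the Poisson equation $\hat S-P\hat S=S$ on $\Xi$ and use the martingale decomposition $S(\xi_k)=[\hat S(\xi_{k+1})-P\hat S(\xi_k)]+[\hat S(\xi_k)-\hat S(\xi_{k+1})]$. The telescoping part contributes $O_p(n^{-1/2})$, and the martingale CLT applies to the first part. The resulting variance equals $\Gamma_0^\ast+\sum_{k\ge1}(\Gamma^\ast_k+\Gamma_k^{\ast\intercal})$, which converges absolutely under geometric mixing. The key technical input is geometric ergodicity in a norm on the function-valued component under which $S$ is Lipschitz and bounded. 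Doeblin minorization of $p$ gives contraction of the filter, and the true chain $(X_k,Y_k)$ is assumed geometrically mixing. Slutsky's lemma then delivers $A_\ast^{-1}B_\ast A_\ast^{-1}$.

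\textbf{AMMLE.} Corollary \ref{coro:uniform error for likelihood} is deterministic in the data path, so it holds under $p^\ast$ as well. It gives
\[
\sup_\theta\Bigl|n^{-1}\hat\ell^{(n,L)}-n^{-1}\ell^{(n)}\Bigr|\le C\log L/L.
\]
Since $\ell^\ast$ has a nondegenerate quadratic maximum at $\theta^\ast$ (because $A_\ast>0$), I would combine
\[
\ell^\ast(\theta^\ast)-\ell^\ast(\hat\theta_{\text{AMMLE}})\ge c\|\hat\theta_{\text{AMMLE}}-\theta^\ast\|^2
\]
(locally, with identification handling the region away from $\theta^\ast$) with the standard three-term argmax inequality. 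This yields (\ref{eq:ammle_misspec_bound}), with $r_n^\ast$ collecting the uniform LLN error.

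For the CLT I would need the score-level version of the bound, namely a derivative analogue of Theorem \ref{thm:uni converge of filter}, which Assumption \ref{assump:hessian bound} supports. It gives
\[
\sup_\theta\|n^{-1/2}\nabla\hat\ell^{(n,L_n)}-n^{-1/2}\nabla\ell^{(n)}\|\lesssim n^{1/2}\log L_n/L_n\to0
\]
under $n\log L_n/L_n\to0$, together with Hessian closeness. Repeating the Taylor expansion with the AMMLE first-order condition then transfers the sandwich limit.
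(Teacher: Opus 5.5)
Your treatment of $\hat{\theta}^{(n)}$ and of the bound (\ref{eq:ammle_misspec_bound}) follows the paper's route: starred versions of Lemmas \ref{lem:filter-ergodic} and \ref{lemma:clt for s}, a Taylor expansion of the first-order condition, and the three-term argmax inequality. You also make explicit something the paper leaves implicit. Without the Bartlett identity, the score has mean zero only because $\nabla_\theta \ell^{\ast}(\theta^{\ast})=0$, and the martingale variance from the Poisson equation is the long-run covariance $B_\ast$.

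\textbf{The gap: the AMMLE central limit step.} You rely on score- and Hessian-level analogues of Theorem \ref{thm:uni converge of filter} with rate $\log L/L$, and you say Assumption \ref{assump:hessian bound} supports them. It does not. That assumption is a curvature condition on the population criterion. It gives no control over how Fej\'{e}r truncation errors in $\partial_\theta \hat{p}^{(L)}$ and $\partial_\theta^2 \hat{p}^{(L)}$ are created and propagated. Such bounds would need regularity that is not among the hypotheses, for instance:
\begin{itemize}
\item a uniform bound on $\partial_x\partial_\theta p_{(X,Y)}$, to control the one-step Fej\'{e}r error of $\partial_\theta \bar{p}^{(L)}$;
\item forgetting of derivative errors along the \emph{approximate} recursion, whereas Assumption \ref{ass:S5} concerns only exact filter derivatives.
\end{itemize}
The paper itself says these derivative-level calculations need extra conditions and omits them. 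A sign that you are using more than is available: your argument would give the CLT under $n^{1/2}\log L_n/L_n\rightarrow 0$, a weaker requirement than the one in the statement.

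\textbf{The fix.} No derivative-level approximation is needed. Compare $\hat{\theta}_{\text{AMMLE}}^{(n,L_n)}$ directly with $\hat{\theta}^{(n)}$, as in the proof of Theorem \ref{thm:mimle-long-span}. The score of $\ell^{(n)}$ vanishes at the interior maximizer $\hat{\theta}^{(n)}$. A second-order expansion around $\hat{\theta}^{(n)}$, uniform Hessian convergence, and the curvature condition at $\theta^{\ast}$ then give, with probability tending to one,
\[
\frac{c_0}{4}\bigl\Vert \hat{\theta}_{\text{AMMLE}}^{(n,L_n)}-\hat{\theta}^{(n)}\bigr\Vert^{2}
\leq \frac{1}{n}\Bigl[\ell^{(n)}(\hat{\theta}^{(n)})-\ell^{(n)}(\hat{\theta}_{\text{AMMLE}}^{(n,L_n)})\Bigr]
\leq 2\sup_{\theta\in\Theta}\frac{1}{n}\bigl\vert \ell^{(n)}(\theta)-\hat{\ell}^{(n,L_n)}(\theta)\bigr\vert
\leq 2C\frac{\log L_n}{L_n}.
\]
The middle inequality uses only that $\hat{\theta}_{\text{AMMLE}}^{(n,L_n)}$ maximizes $\hat{\ell}^{(n,L_n)}$. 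The last uses only the level bound of Corollary \ref{coro:uniform error for likelihood}, which is deterministic in the data path and so survives misspecification.

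Keeping the expansion segment inside the curvature ball requires consistency of the AMMLE. This follows from your bound (\ref{eq:ammle_misspec_bound}), because $r_n^{\ast}$ does not depend on $L$ and $\log L_n/L_n\rightarrow 0$.

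Hence $n\Vert \hat{\theta}_{\text{AMMLE}}^{(n,L_n)}-\hat{\theta}^{(n)}\Vert^{2}\precsim n\log L_n/L_n\rightarrow 0$. Slutsky's lemma then transfers the sandwich limit from $\hat{\theta}^{(n)}$. This is exactly where the rate condition $n\log L_n/L_n\rightarrow 0$ comes from.
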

	
	Under $H_{0}$, $\theta^{\ast}=\theta_{0}$ and the information equality yields
	$A_{\ast}=B_{\ast}=I^{M}.$ Thus Proposition \ref{prop:misspecified_mmle} reduces to
	Theorem \ref{thm:mimle-long-span} under correct specification. Under misspecification,
	$A_{\ast}$ and $B_{\ast}$ generally differ. Nevertheless, the Fourier recursion provides feasible estimators of the corresponding sandwich covariance matrix, and hence supports misspecification-robust inference.\footnote{Let $\hat{A}_{n,L}=$ $\widehat{I}_{n,L}^{M},$ and $\hat{\Gamma}_{j,n,L}%
		=n^{-1}\sum_{k=j}^{n-1}\hat{s}_{k,L}\hat{s}_{k-j,L}^{\intercal}$ where
		$\hat{s}_{k,L}=\nabla_{\theta}\log\mathcal{\hat{L}}_{k+1}^{(L)}(\theta
		)|_{\theta=\hat{\theta}_{\mathrm{AMMLE}}^{(n,L)}}.$ A HAC estimator of
		$B_{\ast}$ is $\hat{B}_{n,L}=\hat{\Gamma}_{0,n,L}+\sum_{j=1}^{b_{n}}%
		\omega\left(  j/b_{n}\right)  (\hat{\Gamma}_{j,n,L}+\hat{\Gamma}%
		_{j,n,L}^{\intercal}),$ where $\omega(\cdot)$ is a HAC kernel function and
		$b_{n}$ is the bandwidth (see, e.g., \cite{Andrews1991}) . The feasible misspecification-robust covariance
		estimator is $\hat{V}_{n,L}=\hat{A}_{n,L}^{-1}\hat{B}_{n,L}\hat{A}_{n,L}%
		^{-1}.$ Under the same argument as in Section 4.3 and standard HAC regularity
		conditions, these estimators are consistent for $A_{\ast},$ $B_{\ast}$ and
		$A_{\ast}^{-1}B_{\ast}A_{\ast}^{-1}$, respectively.}

	\section{Simulation\label{sec:simulation}}
	
	\subsection{Validation in the linear-Gaussian benchmark}
	
We first evaluate the accuracy and robustness of the Fourier recursion in a linear-Gaussian SSM, for which the Kalman filter provides an exact benchmark. In
	particular, we consider the following linear-Gaussian LMM:%
	\begin{equation}
		Y_{k}=X_{k}+U_{k},\quad X_{k}=\alpha+\beta X_{k-1}+\sigma V_{k},
		\label{eq:linear 2}%
	\end{equation}
	where $U_{k}$ and $V_{k}$ are two independent standard normal random variables.
	
	We simulate a time series of $(X_{k},Y_{k})$ consisting of $n=4000$
	consecutive values from (\ref{eq:linear 2}) by setting $\left(  \alpha
	,\beta,\sigma\right)  =\left(  -0.73,0.9,0.363\right)  ^{\intercal}$. We
	initialize each path at $Y_{0}=-7.3$ and sample the first latent variable
	$X_{0}$ from the stationary distribution of $X_{k}$, i.e., the Gaussian
	distribution $\mathcal{N}\left(  \alpha/(1-\beta),\sigma^{2}/(1-\beta
	^{2})\right)  .$ Accordingly, we choose the initial filtered density
	$p(x_{0}|\mathbf{y}_{0};\theta)$ as the Gaussian density, i.e., $p(x_{0}%
	|\mathbf{y}_{0};\theta)=\exp\left(  -\frac{(x_{0}-\alpha/(1-\beta))^{2}%
	}{2\sigma^{2}/\left(  1-\beta^{2}\right)  }\right)  /\left(  2\pi\left(
	\sigma^{2}/\left(  1-\beta^{2}\right)  \right)  \right)  ^{1/2}$. Under the
	true parameter setting, we compare the likelihood update, one-step
	prediction density, filtered density, the smoothed density and the MMLEs obtained from the truncated Fourier recursion in Proposition
	\ref{prop:implement} with their exact counterparts.
	
	In Figure \ref{fig:fourier_kalman_comparison}, we report the relative errors
	between the likelihood update, filtered mean, predicted mean and the smoothed
	mean obtained using the Fourier updating system and their corresponding true
	values. We only illustrate the first $500$ observations for simplicity. It can
	be observed that the relative errors become sufficiently small starting from
	$L=16$, and are virtually negligible. Table \ref{table:DSV} further shows that, for $n=1000,2000,4000$, the discrepancies between the MMLEs obtained from Fourier recursion and the
	true values are numerically negligible. Figure
	\ref{fig:lg_error_cost_tradeoff} reports the trade-off between computational
	cost and accuracy for log marginal likelihood evaluation. The Fourier recursion has
	computational complexity $O(L^{2})$ according to Section \ref{sec:truncation}
	and an error that decays as $O(\log L/L)$ from Corollary
	\ref{coro:uniform error for likelihood}, implying a theoretical accuracy-cost
	curve decreasing with the nominal cost to the power around $-1/2$. For
	particle filters, standard results yield $\sqrt{N}$ convergence with cost
	$O(N)$. \footnote{The uniform error of particle filter can be formulated as
		$\sup_{\left\Vert f\right\Vert _{\infty}=1}| \int_{\mathcal{X} }%
		p(x|\mathbf{y}_{n};\theta)f(x)dx-\sum_{j=1}^{N}\omega_{n}^{(j)}\left(
		\theta\right)  f(x_{n}^{(j)})| =O_{p}( N^{-1/2}) , $ where $\omega_{n}%
		^{(j)}\left(  \theta\right)  $ denotes the weights for the $j$th particle
		$x_{n}^{(j)}$ (see, e.g., \cite{gland2004}).} This implies that the particle
	filter will produce a similar $-1/2$ slope in the accuracy-cost trade-off. Therefore, the
	particle-filter bound is stochastic while the Fourier recursion bound in
	(\ref{eq:fourier error}) is deterministic. This distinction explains the
	pattern in Figure \ref{fig:lg_error_cost_tradeoff}: the Fourier recursion
	closely tracks the theoretical curve, whereas the particle filter displays
	substantial fluctuations around it.
	
	\begin{figure}[tbh]
		\centering
		\par%
		\begin{tabular}
			[c]{cc}%
			\includegraphics[width=0.5\textwidth]{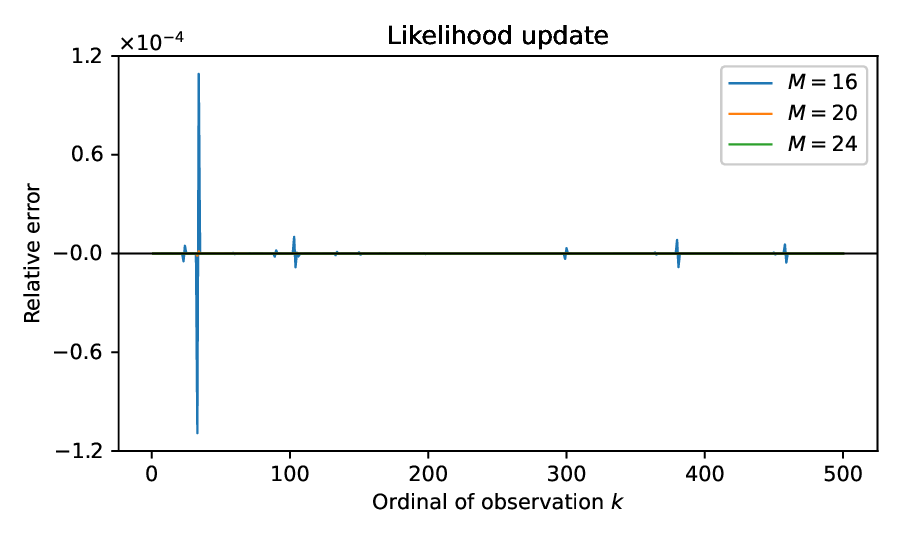} &
			\includegraphics[width=0.5\textwidth]{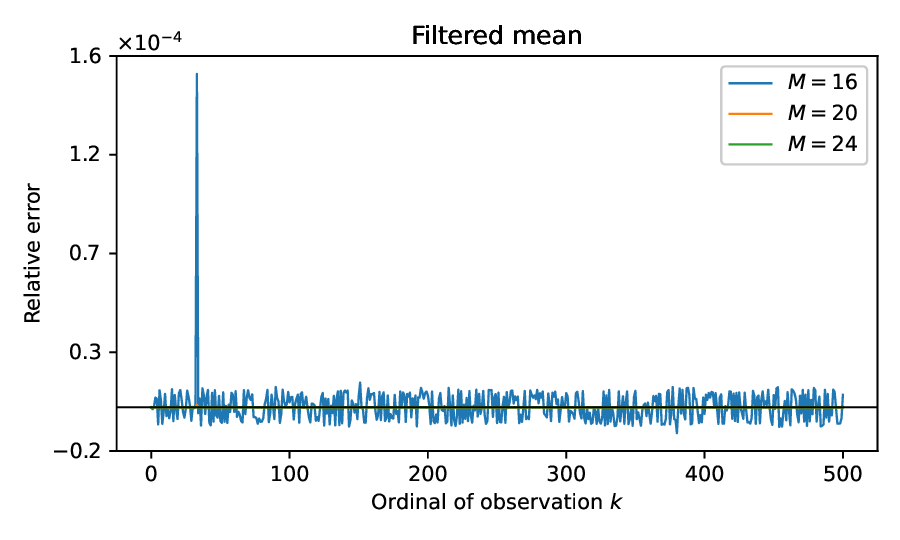}\\[0.35cm]%
			\includegraphics[width=0.5\textwidth]{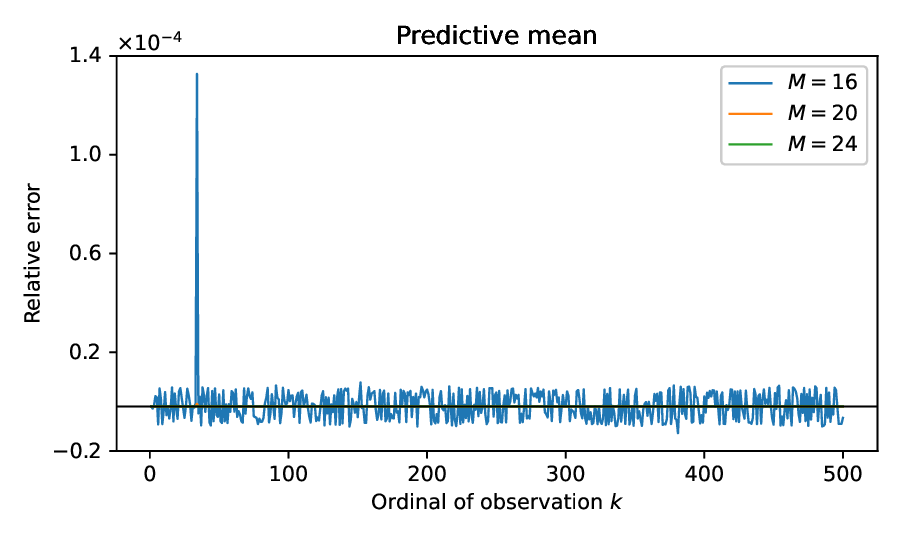} &
			\includegraphics[width=0.5\textwidth]{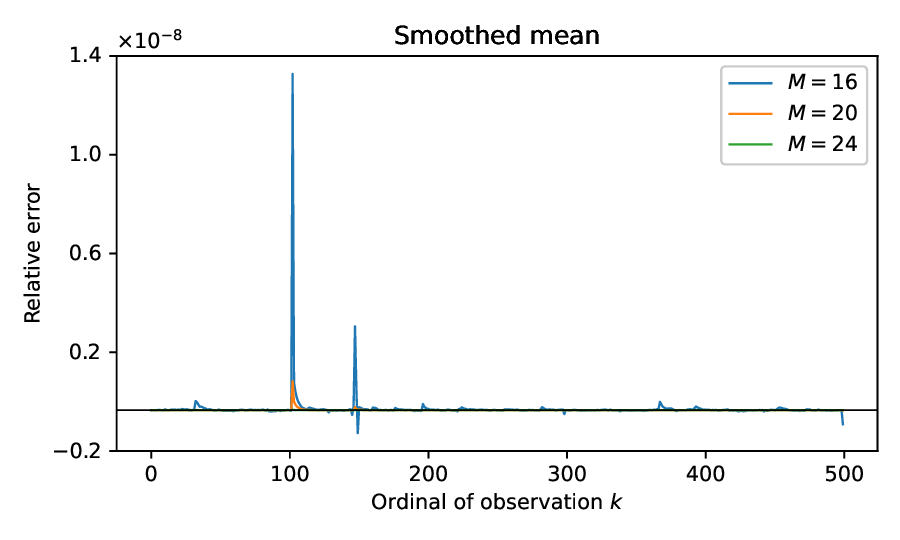}
		\end{tabular}
		\caption{{\protect\small Comparison between the Fourier updating system and
				the associated true value under the linear-Gaussian state-space model
				(\ref{eq:linear 2}). For different truncation levels, $b_{0}=-15$, and
				$b_{1}=-1$}}%
		\label{fig:fourier_kalman_comparison}%
	\end{figure}
	
	\begin{table}[th]
		\centering
		\begin{threeparttable}
			
			\caption{Comparison of MMLE results for the linear-Gaussian state space model.}
			\label{table:DSV}
			
			{\small
				\setlength{\tabcolsep}{0pt}
				\renewcommand{\arraystretch}{1.12}
				\begin{tabular*}{\textwidth}{@{\extracolsep{\fill}}lccccccc@{}}
					\hline
					Parameter & True
					& \multicolumn{2}{c}{1000 obs}
					& \multicolumn{2}{c}{2000 obs}
					& \multicolumn{2}{c}{4000 obs} \\
					&
					& Bias & Std. dev.
					& Bias & Std. dev.
					& Bias & Std. dev. \\
					\hline
					\multicolumn{8}{@{}l}{Panel A: $\hat{\theta}^{(n)}_{\mathrm{AMMLE}}$ obtained from Proposition~\ref{prop:implement}}\\
					\hline
					$\alpha$ & $-0.73$
					& $-0.132$ & $0.360$
					& $-0.046$ & $0.224$
					& $-0.023$ & $0.158$ \\
					$\beta$  & $0.9$
					& $-0.018$ & $0.049$
					& $-0.006$ & $0.031$
					& $-0.003$ & $0.022$ \\
					$\sigma$ & $0.363$
					& $0.017$ & $0.096$
					& $0.004$ & $0.067$
					& $0.003$ & $0.049$ \\
					\hline
					\multicolumn{8}{@{}l}{Panel B: $\hat{\theta}^{(n)}_{\mathrm{MMLE}}$ obtained from Kalman filter}\\
					\hline
					$\alpha$ & $-0.73$
					& $-0.131$ & $0.360$
					& $-0.046$ & $0.225$
					& $-0.023$ & $0.157$ \\
					$\beta$  & $0.9$
					& $-0.018$ & $0.049$
					& $-0.006$ & $0.030$
					& $-0.003$ & $0.022$ \\
					$\sigma$ & $0.363$
					& $0.016$ & $0.096$
					& $0.004$ & $0.067$
					& $0.003$ & $0.049$ \\
					\hline
				\end{tabular*}
			}
			
			\begin{tablenotes}[flushleft]
				\small
				\item \textit{Notes:} ``Std. dev.'' denotes the finite-sample standard deviation.
				In both panels, each bias and finite-sample standard deviation is computed based on 500 estimators.
				The Fourier recursion is implemented using Proposition~\ref{prop:implement}, with the truncation level set to $L=24$, $b_0=-15$, and $b_1=-1$.
			\end{tablenotes}
			
		\end{threeparttable}
	\end{table}
	
	\begin{figure}[ptb]
		\centering
		\par
		\begin{minipage}{0.48\textwidth}
			\centering
			\includegraphics[width=\textwidth]{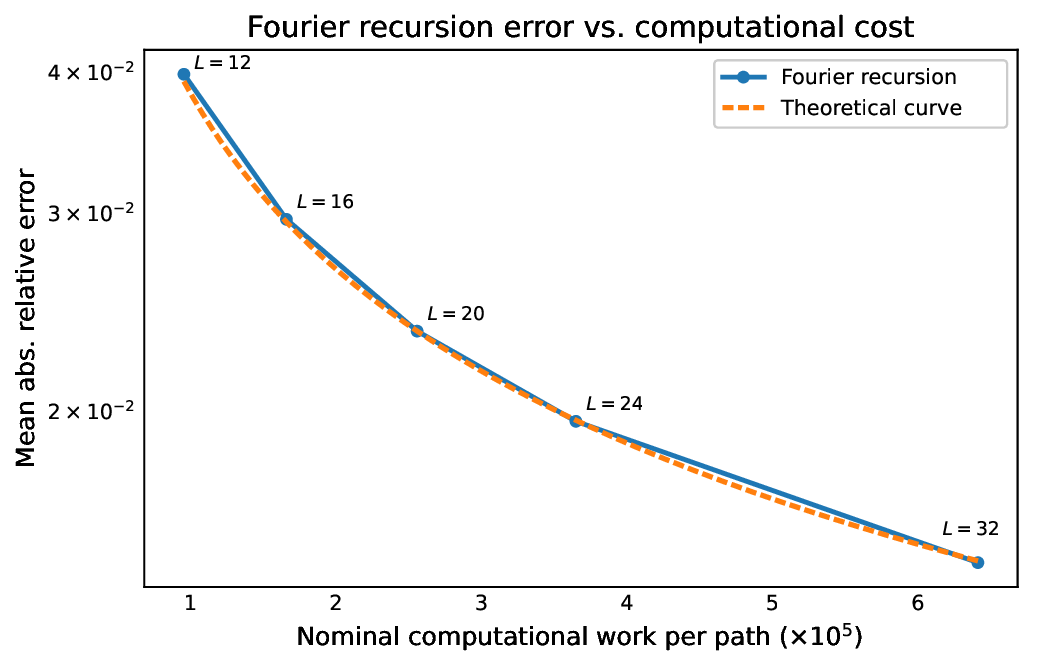}
		\end{minipage}
		\hfill\begin{minipage}{0.48\textwidth}
			\centering
			\includegraphics[width=\textwidth]{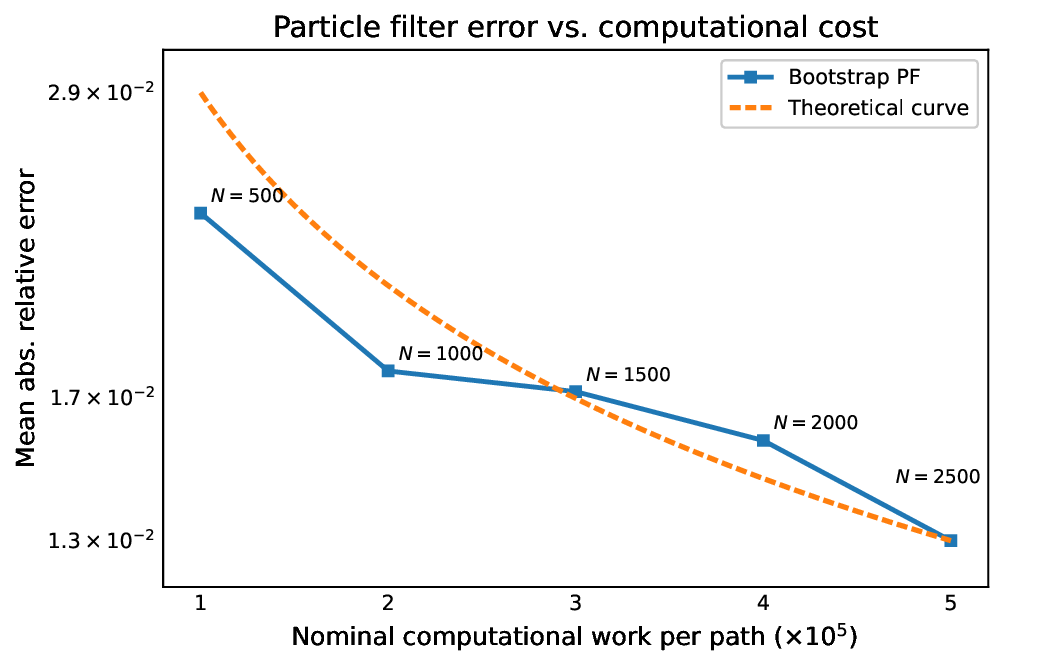}
		\end{minipage}
		\caption{{\protect\small Error-cost trade-off under the linear-Gaussian SSM.
				The vertical axis reports the mean absolute relative error of the log marginal
				likelihood, averaged over 200 simulated paths of length 50, relative to the
				exact Kalman filter. The horizontal axis reports the nominal
				computational work per path. Dashed curves show the corresponding theoretical
				reference rates.}}%
		\label{fig:lg_error_cost_tradeoff}%
	\end{figure}

	We also examine the estimation of the Fisher information matrix under the
	linear-Gaussian specification. Since the Kalman filter yields the exact
	likelihood in this setting, the observed Fisher information defined in
	(\ref{eq:fisher true}) evaluated at the true parameter value is used as the
	benchmark. We compare it with the Fourier-recursion observed Fisher
	information in (\ref{eq:fisher fourier}) evaluated at
	$\hat{\theta}_{\text{AMMLE}}^{(n,L)}$ with $(L,n)=(36,4000)$. The relative
	Frobenius error is $||\widehat{I}_{n,L}^{M}-I^{M}||_{F}\big/\left\Vert
	I^{M}\right\Vert _{F}=1.22\%,$ by taking the Kalman information matrix as the
	true benchmark.
	
	\subsection{Evaluation in nonlinear and non-Gaussian LMMs}
	
	\textit{Non-Gaussian cases.} We replace $U_{k}$ in (\ref{eq:linear 2}) with a non-Gaussian innovation,
	i.e., $U_{k}=\log\varepsilon_{k}^{2},$ where $\varepsilon_{k}$ follows a
	standard normal distribution, while keeping (\ref{eq:linear 2}) the same. The
	model is therefore immediately identical to (\ref{eq:discrete sv}) if we
	denote $Y_{k}=\log R_{k}^{2},$ $X_{k}=\log\sigma_{k}^{2}$. This situation
	illustrates how the Fourier recursion behaves under the non-Gaussian LMMs. We
	still evaluate the performance of the method by comparing the recovered
	filtering distributions, likelihood values, and reporting the AMMLE results.
	For simplicity, we still use $\left(  \alpha,\beta,\sigma\right)  =\left(
	-0.73,0.9,0.363\right)  ^{\intercal}$. Because this model nests within the
	framework of both SSMs and LMMs with explicit conditional characteristic
	function, we can proceed with our analysis by applying either Corollary
	\ref{thm:fourier series update 3} or Corollary
	\ref{thm:fourier series update 2}.
	
	We then compare the MMLE performance of the Fourier recursion, the auxiliary particle filter (APF), and the
	Kalman filter. As can be seen from Table \ref{table:logsv_mle_500}, the APF
	method produces a relatively large bias, while its finite-sample standard
	deviation remains small. This phenomenon is mainly due to the nonsmoothness of
	the simulated likelihood. When the initial value is not sufficiently close to
	the true parameter, the optimizer may become trapped near the starting point
	and fail to move effectively toward the true value.
	
	\begin{table}[tbh]
		\centering
		\begin{threeparttable}
			\caption{Comparison of MLEs for the discrete-time stochastic volatility model.}
			\label{table:logsv_mle_500}
			
			{\small
				\setlength{\tabcolsep}{0pt}
				\renewcommand{\arraystretch}{1.15}
				\begin{tabular*}{\textwidth}{@{\extracolsep{\fill}}lccccccc@{}}
					\hline
					Parameter & True
					& \multicolumn{2}{c}{APF}
					& \multicolumn{2}{c}{Kalman}
					& \multicolumn{2}{c}{Fourier recursion} \\
					&
					& Bias & Std. dev.
					& Bias & Std. dev.
					& Bias & Std. dev. \\
					\hline
					$\alpha$
					& $-0.730$
					& $-0.304$ & $0.388$
					& $-0.412$ & $0.702$
					& $-0.175$ & $0.388$ \\
					
					$\beta$
					& $0.900$
					& $-0.149$ & $0.048$
					& $-0.056$ & $0.095$
					& $-0.024$ & $0.053$ \\
					
					$\sigma$
					& $0.363$
					& $0.008$ & $0.094$
					& $0.072$ & $0.167$
					& $0.013$ & $0.080$ \\
					\hline
				\end{tabular*}
			}
			
			\begin{tablenotes}[flushleft]
				\small
				\item \textit{Notes:} The sample size is $n=500$ for each of the $500$ simulated paths.
				For each estimator, the table reports the bias and finite-sample standard deviation.
				The Fourier recursion is implemented with truncation levels $L=24$, $a_0=-2$, $a_1=2$, $b_0=-12$, and $b_1=-2$.
				The APF implementation uses 4000 particles.
			\end{tablenotes}
			
		\end{threeparttable}
	\end{table}
	
	\textit{Nonlinear cases.} We now consider the SRNN characterization
	introduced in Example \ref{example:SRNN}. In principle, SRNNs can be trained by maximizing the marginal likelihood. Since such training requires accurate likelihood evaluation, the Fourier recursion provides a practical computational tool for likelihood-based training of SRNN-type models. For illustrative purposes, we restrict to the case
	in which both $X_{k}$ and $Y_{k}$ are
	one-dimensional. To retain a benchmark for comparison, the observed data are
	still generated from the linear-Gaussian SSM in (\ref{eq:linear 2}) under the
	same parameter setting, so that the Kalman filter provides a benchmark. We then fit a nonlinear SRNN model to this
	observed time series of length $n=500$. Specifically, the transition
	and observation functions are parameterized by single-hidden-layer feedforward
	neural networks with four tanh linear unit hidden units:
	\begin{equation}
		f_{\theta}(x)=a_{f,0}+\sum_{j=1}^{4}a_{f,j}\,\varphi\!\left(  b_{f,j}%
		x+c_{f,j}\right)  ,\quad g_{\theta}(x)=a_{g,0}+\sum_{j=1}^{4}a_{g,j}%
		\,\varphi\!\left(  b_{g,j}x+c_{g,j}\right)  , \label{eq:srnn simulation}%
	\end{equation}
	where $\varphi(z)=\tanh(z)$. Although the data-generating process remains the
	linear-Gaussian SSM, the fitted model is deliberately specified as a nonlinear
	SRNN. This design allows us to assess whether the proposed Fourier recursion
	can support likelihood-based estimation of a flexible SRNN specification in a
	setting where a Kalman-filter benchmark is available. Both the state
	transition network and the observation network are one-hidden-layer scalar
	multilayer perceptrons (MLPs), which are highly nonlinear. We set $\eta
	_{k}\sim\mathcal{N}\left(  0,\sigma_{\eta}^{2}\right)  $, $\varepsilon_{k}%
	\sim\mathcal{N}\left(  0,\sigma_{\varepsilon}^{2}\right)  $. Thus, the tuning
	parameters $\theta=\{a_{f,j},b_{f,j},c_{f,j},a_{g,j},b_{g,j},c_{g,j}%
	,\sigma_{\xi},\sigma_{\eta};j=0,1,2,3,4\}$ are estimated by maximizing the
	marginal likelihood of SRNN characterized by (\ref{eq:srnn simulation}). The marginal likelihood is
	evaluated using the Fourier recursion in Proposition
	\ref{prop:implement}, and the resulting objective function is
	optimized with the Adam algorithm (see, e.g., \cite{kingma2015adam}). \begin{figure}[ptbh]
		\centering
		\includegraphics[width=0.49\textwidth]{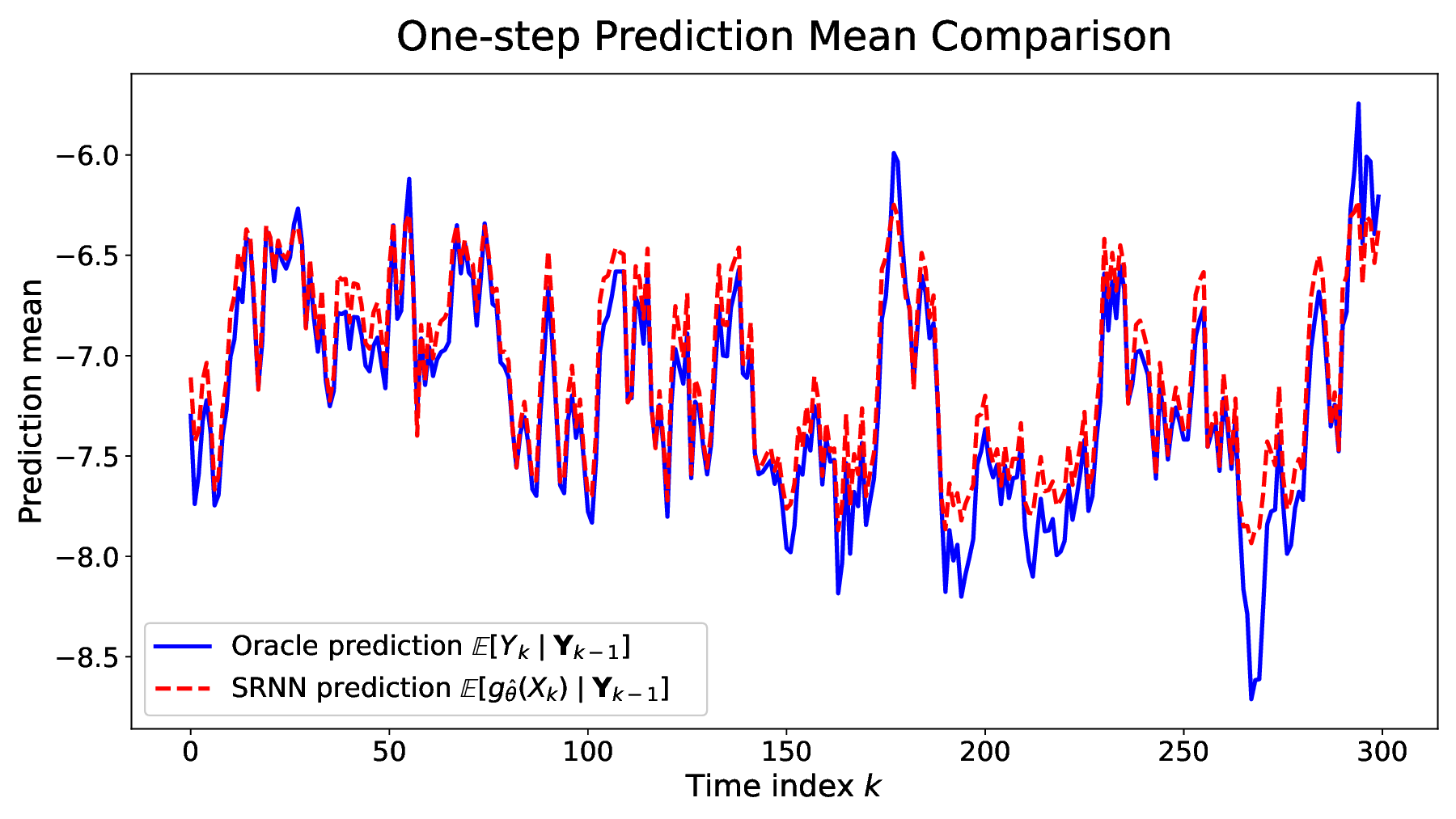}
		\hfill
		\includegraphics[width=0.49\textwidth]{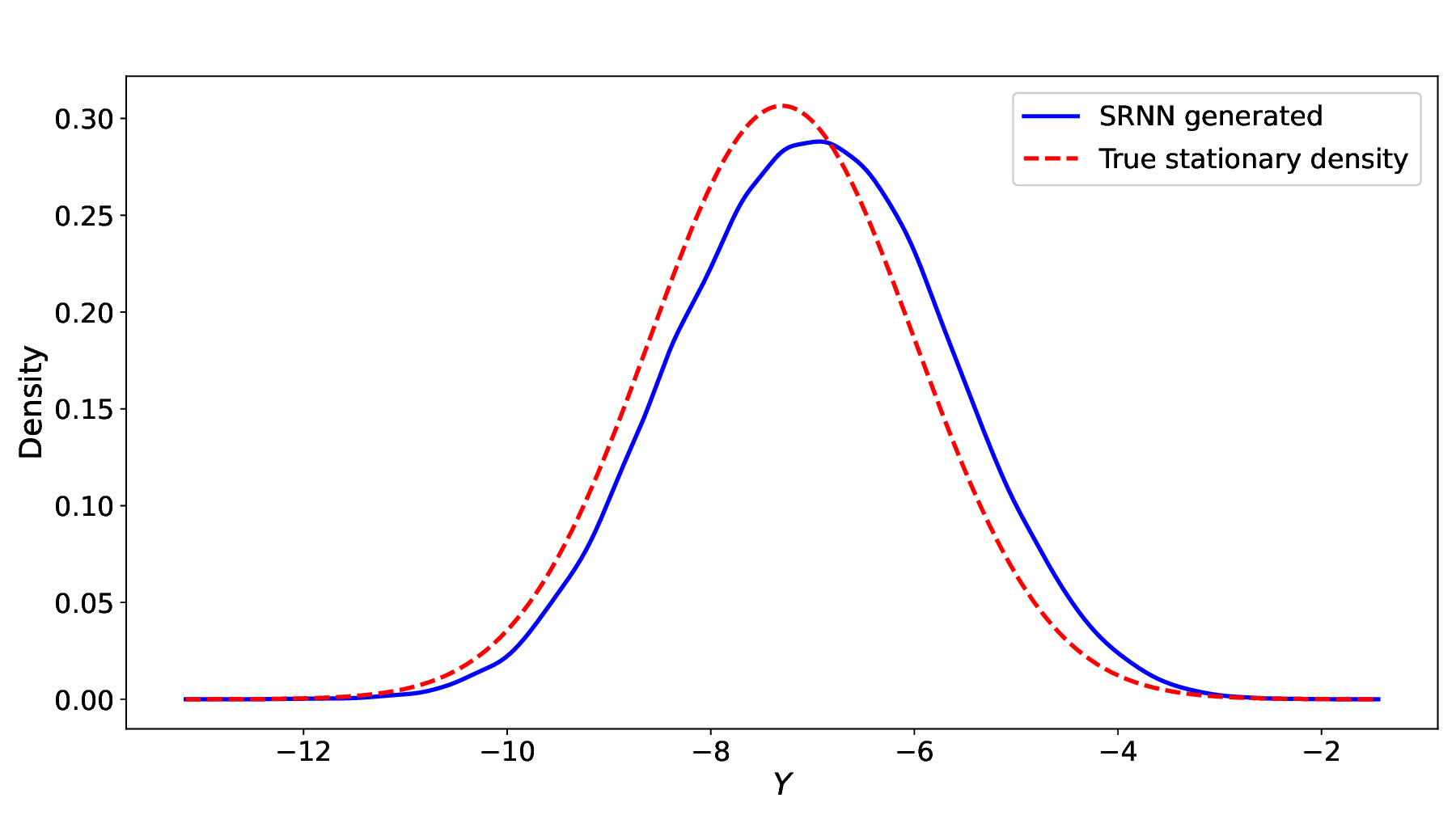}\caption{{\protect\small Comparison
				between the fitted SRNN and the oracle benchmark. The left panel reports the
				prediction trajectory of the trained SRNN together with the oracle prediction
				given by the Kalman filter under the true model. The right panel compares the
				marginal distribution at $k=500$ obtained from a large number of
				SRNN-generated sample paths with the true marginal distribution of the
				underlying data-generating model.}}%
		\label{fig:two_results}%
	\end{figure}
	
	A central criterion for assessing whether the SRNN has been effectively
	trained is its one-step-ahead predictive performance. In the SRNN formulation,
	this corresponds to comparing the fitted predictive mean $\mathbb{E}%
	[g_{\hat{\theta}}(X_{k})|\mathbf{Y}_{k-1}]$ with the oracle
	prediction $\mathbb{E}\left[  Y_{k}|\mathbf{Y}_{k-1}\right]  ,$ which is
	available from the Kalman filter. The left panel of Figure
	\ref{fig:two_results} reports this comparison and shows that the trained SRNN
	closely tracks the Kalman-filter benchmark. Beyond predictive accuracy, we
	also examine whether the trained SRNN reproduces the distributional features
	of the underlying data-generating process. To this end, we generate a large
	number of sample paths from the fitted SRNN and compare their marginal
	distribution at $k=500$ with the true marginal distribution implied by the
	data-generating model. We further report summary statistics. The true sample
	has a mean of $-7.30$, a variance of $1.70$, and an excess kurtosis of $0.02$,
	compared with $-7.00$, $1.75$, and $-0.18$, respectively, for the
	SRNN-generated sample.
	
	Next, we consider a continuous-time model, namely, the BNS model in Example
	\ref{example:bns} with Gamma subordinator, i.e., $Z_{t}-Z_{s}\sim Gamma\left(
	\nu(t-s),\gamma\right)  ,$ and $\beta=-1/2.$ Let $X_{k}=V_{k\Delta},$
	$Y_{k}=\log S_{k\Delta}/S_{(k-1)\Delta}.$ The BNS model is a typical affine
	model, namely, the conditional characteristic function can be written as
	$\phi_{(X,Y)}(u,v|x_{0},y_{0})=\exp\left(  C_{1,\Delta}(u,v)+C_{2,\Delta
	}(u,v)x_{0}\right)  $ where
	\begin{align*}
		C_{1,\Delta}(u,v)  &  =iv\mu\Delta-v\lambda\int_{0}^{\Delta}\log\left(
		1-\frac{iv\rho+iue^{-\lambda s}+\lambda^{-1}\left(  iv\beta-1/2v^{2}\right)
			\left(  1-e^{-\lambda s}\right)  }{\gamma}\right)  ds,\\
		C_{2,\Delta}(u,v)  &  =iue^{-\lambda\Delta}+\left(  iv\beta-1/2v^{2}\right)
		\frac{1-e^{-\lambda\Delta}}{\lambda}.
	\end{align*}
	The true parameter values are chosen to mimic a standard Heston volatility
	specification in \cite{bates2006maximum} through moment matching. In the
	BNS-Gamma model, the stationary latent variance satisfies $\mathbb{E}%
	(X_{t})=\nu/\gamma$ and $\mathrm{Var}(X_{t})=\nu/(2\gamma^{2})$, while its
	persistence is governed by $\mathrm{Corr}(X_{t+\Delta},X_{t})=\exp
	(-\lambda\Delta)$. We set the long-run variance level to $m=0.225$ and the
	stationary variance to $s^{2}=0.00703$. Matching $\nu/\gamma=m$ and
	$\nu/\gamma^{2}=2s^{2}$ gives $\gamma=16$ and $\nu=3.6$. The mean-reversion
	parameter is set to $\lambda=10$, and the return drift and leverage parameters
	are set to $\mu=0.1$ and $\rho=-0.2$. \begin{figure}[ptbh]
		\centering
		\vspace{-3pt} \begin{minipage}{0.8\textwidth}
			\centering
			\includegraphics[width=\textwidth]{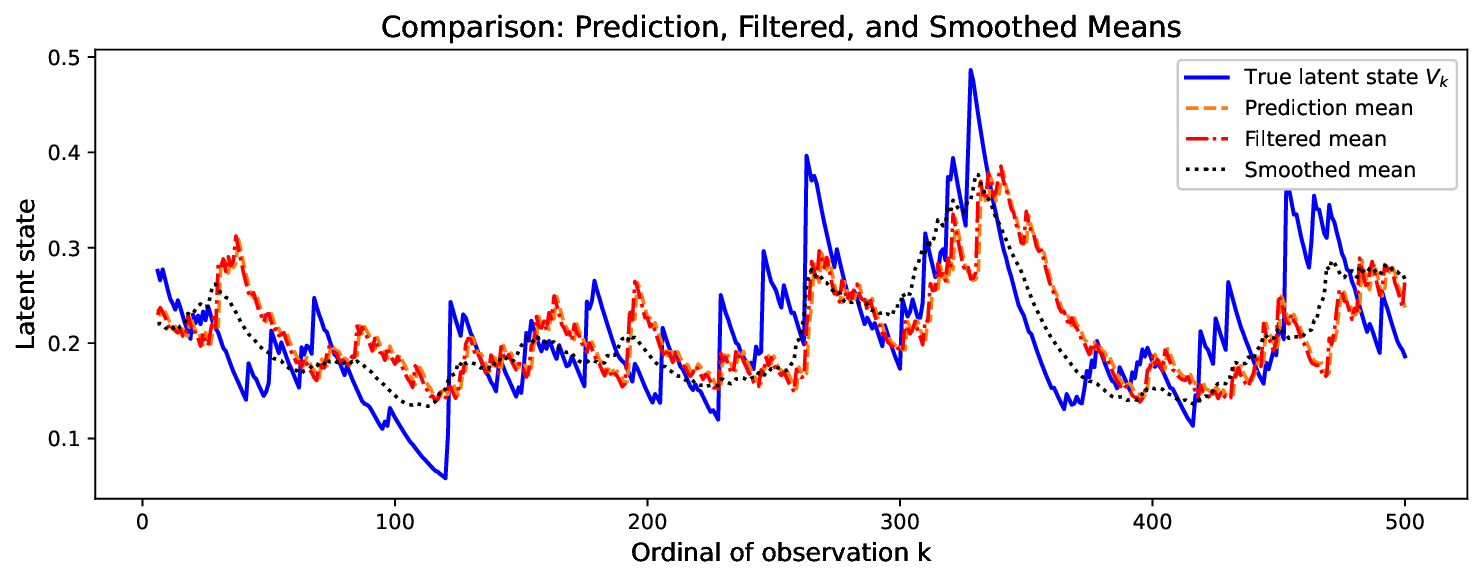}
			\vspace{-0.2cm}
		\end{minipage}
		\par
		\begin{minipage}{0.8\textwidth}
			\centering
			\includegraphics[width=\textwidth]{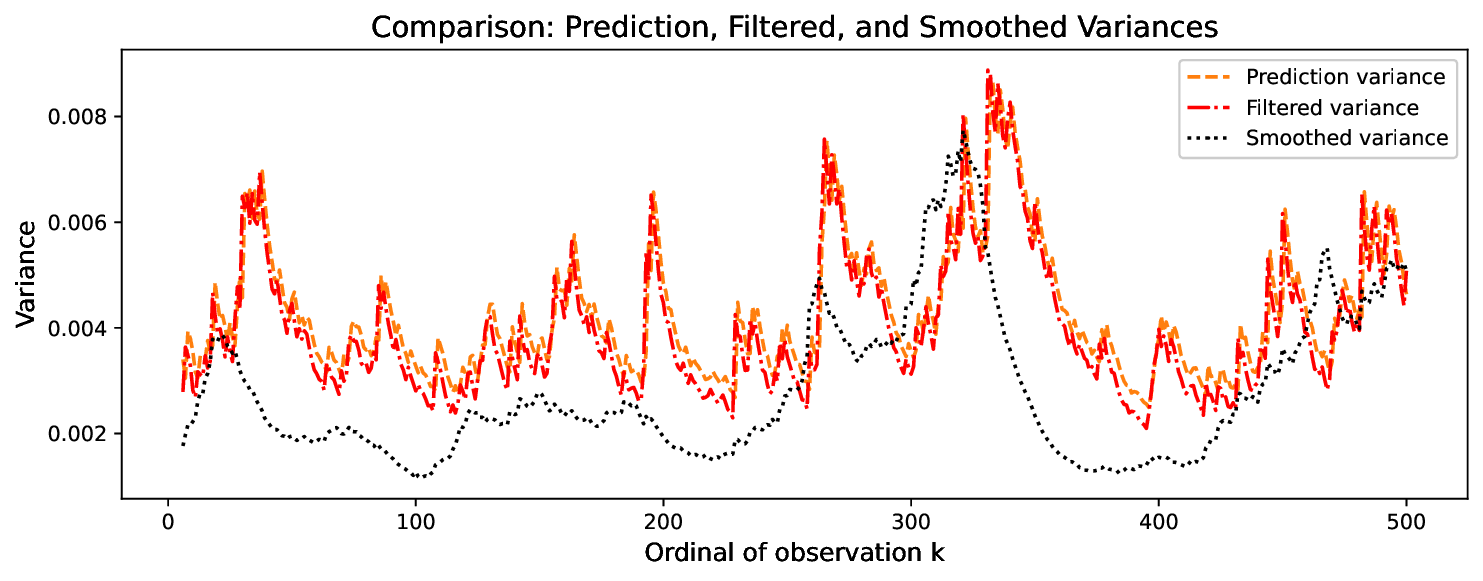}
			\vspace{-0.2cm}
		\end{minipage}
		\vspace{-15pt}\caption{{\protect\small Fourier filtering, prediction and
				smoothing for the BNS-Gamma model. The Fourier recursion is implemented using
				Corollary \ref{thm:fourier series update 2} with finite truncation and the
				smoothing is implemented using Proposition \ref{prop:smoothing}, with the
				truncation level set to $L=40$, $b_{0}=0$, and $b_{1}=0.9$.}}%
		\label{fig:bns_filter_smooth_comparison}%
	\end{figure}
	
	\begin{figure}[ptbh]
		\centering
		\includegraphics[width=0.95\textwidth]{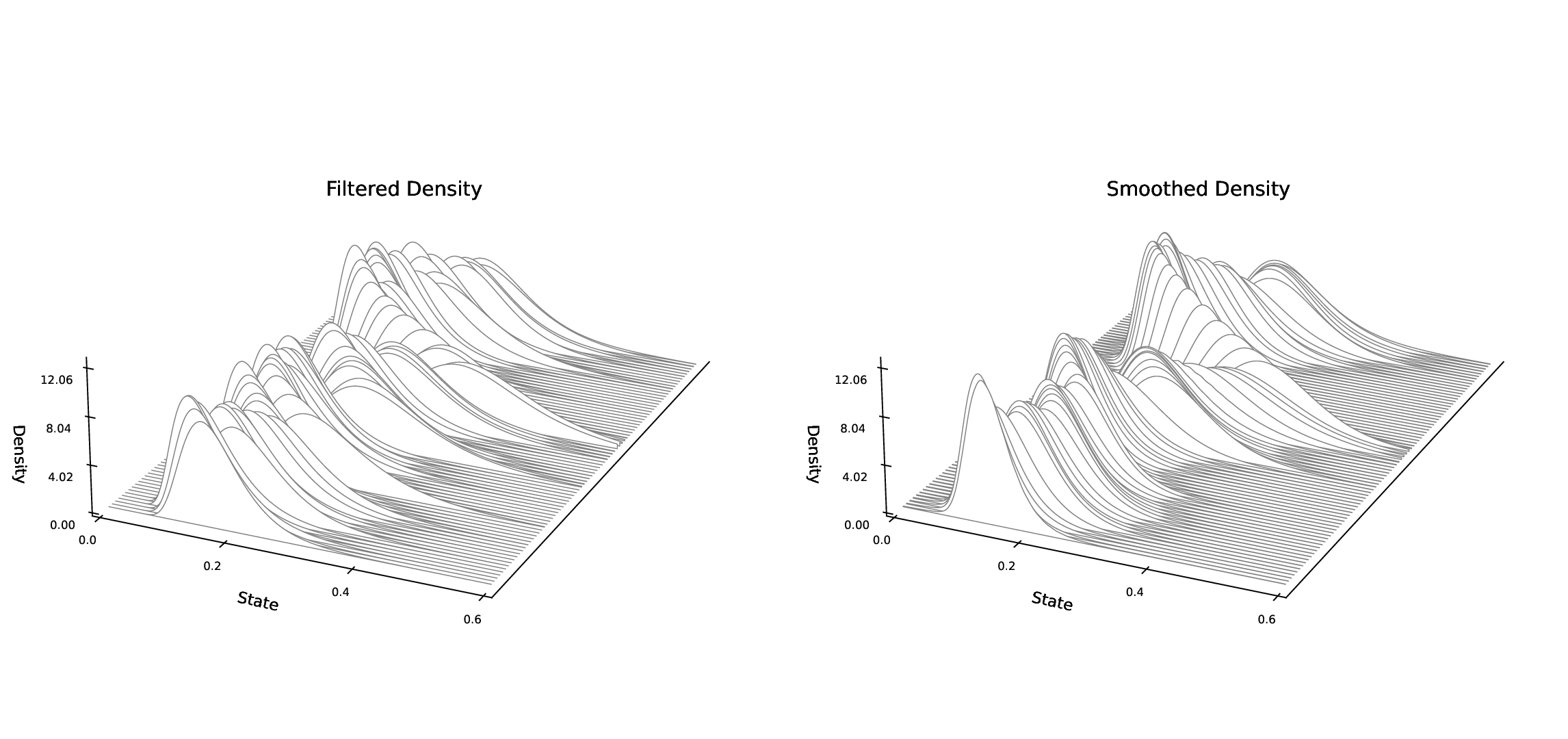}
		\caption{{\protect\small Filtered and smoothed density estimates for the
				BNS-Gamma model. The densities are recovered from the Fourier coefficients at
				selected observation times. }}%
		\label{fig:bns_filtered_smoothed_density_waterfall}%
	\end{figure}\begin{table}[tbh]
		\caption{AMMLE results for the BNS-Gamma model.}%
		\label{tab:bns_gamma_mc_fixed_beta}%
		\centering
		\par
		{\small \setlength{\tabcolsep}{0pt} \renewcommand{\arraystretch}{1.12} }
		\par
		{\small
			\begin{tabular*}
				{\textwidth}[c]{@{\extracolsep{\fill}}lccccccc@{}}\hline
				Parameter & True & \multicolumn{2}{c}{1000 obs} & \multicolumn{2}{c}{2000 obs}
				& \multicolumn{2}{c}{4000 obs}\\
				&  & Bias & Std. dev. & Bias & Std. dev. & Bias & Std. dev.\\\hline
				\multicolumn{8}{l}{$\hat{\theta}_{\mathrm{AMMLE}}^{(n)}$ obtained from Proposition \ref{prop:implement}
					}\\\hline
				$\mu$ & 0.1 & 0.063 & 0.369 & 0.054 & 0.317 & $-0.028$ & 0.251\\
				$\rho$ & $-0.2$ & $-0.029$ & 0.156 & $-0.003$ & 0.135 & 0.015 & 0.104\\
				$\lambda$ & 10 & 1.858 & 5.365 & 1.939 & 4.963 & 0.921 & 3.340\\
				$\nu$ & 3.6 & $-0.170$ & 1.208 & 0.101 & 0.942 & $-0.023$ & 0.616\\
				$\gamma$ & 16 & 0.195 & 5.953 & 0.467 & 4.721 & $-0.196$ & 2.757\\\hline
			\end{tabular*}
		}
		\par
		\begin{tablenotes}[flushleft]
			\footnotesize
			\item \textit{Notes:} The Fourier recursion is implemented using Proposition~\ref{prop:implement}, with the truncation level set to $L=40$, $b_0=0$, and $b_1=0.9$, and other settings are the same as Table~\ref{table:DSV}.
		\end{tablenotes}
	\end{table}Figure~\ref{fig:bns_filter_smooth_comparison} compares the Fourier
	prediction, filtering, and smoothing results for the BNS-Gamma model. The
	prediction and filtered means closely track the latent state, while smoothing
	further attenuates short-run fluctuations by using information from the full
	sample. The Fourier smoothed mean reduces the MSE to $0.0039$ and the RMSE to
	$0.0621$. Relative to the Fourier filtered mean, whose RMSE is $0.0805$,
	smoothing improves the RMSE by $22.90\%$.
	Figure~\ref{fig:bns_filtered_smoothed_density_waterfall} illustrates the
	resulting filtered and smoothed densities recovered from the Fourier
	coefficients. This evidence suggests that the Fourier recursion accurately
	captures the overall shape of the latent-state density. Table~\ref{tab:bns_gamma_mc_fixed_beta} reports the performance
	of the AMMLE based on the Fourier recursion. The estimates display small
	biases across all parameters, and the finite-sample standard deviations
	decline as the sample size increases from 1000 to 4000. The
	persistence and Gamma-shape parameters, $\lambda$ and $\gamma$, exhibit larger
	dispersion in smaller samples, but their precision improves substantially with
	more observations.
	
	\section{Empirical application: latent bankruptcy
		pressure\label{sec:empirical}}
	
	Systemic financial risk is inherently latent and is observed mainly through
	realized outcomes such as bankruptcies and credit-market disruptions.
	Aggregate distress may arise from default spillovers, common exposures, fire
	sales, balance-sheet linkages, and expectation-driven feedback (see, e.g.,
	\cite{AcemogluOzdaglarTahbazSalehi2015}, \cite{JacksonPernoud2021} and
	references therein). Because the underlying financial network and individual
	transmission channels are often unobserved, we model aggregate bankruptcy
	counts as noisy signals of a persistent latent bankruptcy-pressure state,
	which summarizes the aggregate effects of unobserved contagion and feedback
	mechanisms. Using the Fourier recursion, we estimate and filter this latent
	state in a nonlinear SSM with count observations. The framework also allows
	additional network, market, or macro-financial variables to be incorporated
	when available.
	
	\subsection{A latent Poisson LMM of bankruptcy pressure}
	
	Although the structural systemic forces, such as direct externalities and
	expectation-driven feedback, are conceptually clear, many of their underlying
	drivers are not separately observed. We therefore summarize these unobserved
	forces through an aggregate latent state. Namely, we model the observed
	bankruptcy count as a Poisson signal of an aggregate latent state:
	\begin{equation}
		Y_{k} \mid X_{k} \sim\mathrm{Poisson}(\lambda_{k}), \quad\log\lambda
		_{k}=\alpha+X_{k},\ X_{k} =\gamma X_{k-1}+\sigma\eta_{k-1},
		\label{eq:poisson model}%
	\end{equation}
	where $\eta_{k-1}\sim N(0,1)$. The latent state $X_{k}$ is interpreted as
	persistent aggregate bankruptcy pressure, providing a reduced-form but
	empirically tractable perspective for analyzing systemic financial risk. A
	higher value of $X_{k}$ raises the Poisson intensity and therefore represents
	an economic environment in which such filings are expected to occur more
	frequently. Accordingly, $\alpha$ determines the baseline log intensity when
	latent pressure is at its normalized long-run level, whereas $X_{k}$ captures
	persistent deviations from that baseline. Thus, $X_{k}$ is an outcome-based
	indicator of systemic financial distress rather than a structural measure of
	any individual transmission channel. The parameter vector is therefore
	$\theta=(\alpha,\gamma,\sigma)$.
	
	The specification is not restricted to bankruptcy filings. The same Poisson
	latent-state structure in \eqref{eq:poisson model} can be applied to other
	economic count outcomes in which observed events are noisy signals of an
	underlying latent pressure or risk state, such as corporate defaults, loan
	delinquencies. Explanatory variables $Z_{k}$, $U_{k}$ can be incorporated
	either into the Poisson intensity, i.e., $\log\lambda_{k}=\alpha+X_{k}%
	+Z_{k}^{\prime}\beta,$ or into the latent-state transition, i.e.,
	$X_{k}=\gamma X_{k-1}+U_{k}^{\prime}\Gamma+\sigma\eta_{k-1}$. Thus, the
	baseline model can be enriched when additional economically relevant
	information becomes observable.\footnote{The model can also be extended to
		financial network count data. Let $Y_{ij,t}$ denote the observed outcome from
		institution $i$ to institution $j$ at time $t$, such as lending transactions,
		exposure incidents, credit events, or distress transmissions. A Poisson
		network observation equation can be specified as $Y_{ij,t}|X_{t},Z_{ij,t}%
		\sim\operatorname{Poisson}(\lambda_{ij,t})$, with $\log\lambda_{ij,t}%
		=\alpha+\delta_{i}^{\mathrm{out}}+\delta_{j}^{\mathrm{in}}+X_{t}%
		+Z_{ij,t}^{\prime}\beta,$ where $\delta_{i}^{\mathrm{out}}$ and $\delta
		_{j}^{\mathrm{in}}$ capture sender and receiver heterogeneity, and $Z_{ij,t}$
		contains institution-level, or macro-financial explanatory variables.}
	
	\subsection{Data and estimated bankruptcy pressure}
	
	The main empirical series is constructed from the UCLA-LoPucki Bankruptcy
	Research Database (BRD). The BRD records large public-company bankruptcy cases
	filed in U.S. bankruptcy courts.\footnote{The UCLA-LoPucki Bankruptcy Research
		Database contains data on large public-company bankruptcy cases filed in U.S.
		bankruptcy courts since October 1, 1979. The database defines a public company
		using SEC annual-report filings and defines a large company using an asset
		threshold of at least $100$ million in 1980 dollars.} We aggregate the filing
	dates in the BRD to quarterly frequency and define $Y_{k}$ as the number of
	large public-company bankruptcy filings in quarter $k$. The resulting sample
	runs from 1980Q2 to 2022Q4 and contains 171 quarterly observations. The
	outcome is therefore a nonnegative integer count of large public-company
	bankruptcy filings, rather than a measure of all business bankruptcies. In the
	sample, the mean quarterly count is 7.12, the median is 5, and the maximum is 35.
	
	Let $\hat{\theta}$ be the AMMLEs obtained from the Fourier recursion,
	$m_{k|k}=\mathbb{E}[X_{k}|\mathbf{Y}_{k};\hat{\theta}]$ denote the filtered
	mean of the latent state and $m_{k|k-1}=\mathbb{E}[X_{k}|\mathbf{Y}_{k-1}%
	;\hat{\theta}]$ denote the one-step-ahead predicted mean. For visualization,
	we map these state estimates back into bankruptcy-count units using $\hat
	{Y}_{k|k}=\mathbb{E}[\exp(\hat{\alpha}+X_{k})|\mathbf{Y}_{k};\hat{\theta}]$
	and $\hat{Y}_{k|k-1}=\mathbb{E}[\exp(\hat{\alpha}+X_{k})|\mathbf{Y}_{k-1}%
	;\hat{\theta}]$. These two quantities are reported as the filtered and
	predicted recovered bankruptcy intensities. The likelihood itself is evaluated
	using the full predicted and filtered densities rather than only these
	posterior means.
	
	\begin{table}[tbh]
		\caption{Approximated MLEs for the Poisson latent-state model}%
		\label{tab:mle_results}
		\centering
		\par
		\begin{threeparttable}
			\small
			\renewcommand{\arraystretch}{1.15}
			\setlength{\tabcolsep}{11pt}
			
			\begin{tabular*}{\textwidth}{
					@{\extracolsep{\fill}}lccc@{}
				}
				\midrule
				Parameter
				& Estimate
				& Standard error
				& 95\% confidence interval
				\\
				\midrule
				
				$\alpha$
				& 1.5243
				& 0.2906
				& $[0.9548,\ 2.0938]$
				\\
				
				$\gamma$
				& 0.9106
				& 0.0396
				& $[0.8329,\ 0.9883]$
				\\
				
				$\sigma$
				& 0.3436
				& 0.0431
				& $[0.2591,\ 0.4281]$
				\\
				
				\midrule
			\end{tabular*}
			
			\begin{tablenotes}[flushleft]
				\footnotesize
				\item \textit{Notes:} Standard errors can be computed from the inverse observed-information
				matrix generated by the Fourier recursion in
				Proposition~\ref{prop:hessian updating}. The reported intervals are 95\% Wald confidence intervals constructed as the estimate plus or minus 1.96 standard errors..
				The numerical implementation uses Fourier truncation parameter
				\(L=36\) and latent-state box \([-5,5]\).
				The intercept implies a baseline bankruptcy intensity of
				\(\exp(\widehat{\alpha})=4.592\) bankruptcies per quarter.
			\end{tablenotes}
		\end{threeparttable}
	\end{table}The MLEs are reported in Table \ref{tab:mle_results}. The estimated
	autoregressive coefficient $\gamma$ is 0.9106, indicating strong persistence
		in the latent bankruptcy-pressure state. This
	implies a slow-moving latent factor rather than a purely transitory shock. The
	implied half-life of a state innovation is approximately 7.4 quarters. The innovation standard deviation is
	0.3436, which allows the latent state to move substantially during stress
	episodes while still maintaining a smooth path relative to the raw count series. Figure \ref{fig:fit_counts} plots the observed bankruptcy count together with
	the filtered and predicted recovered intensities. The filtered intensity
	tracks the broad movements in the data while smoothing out high-frequency
	count noise. The predicted intensity is less reactive by construction, because
	it is based only on information available before observing $Y_{k}$. The
	difference between the filtered and predicted intensity is therefore
	informative about the news contained in the current bankruptcy observation.
	
	\begin{figure}[ptb]
		\centering
		\includegraphics[width=0.88\textwidth]{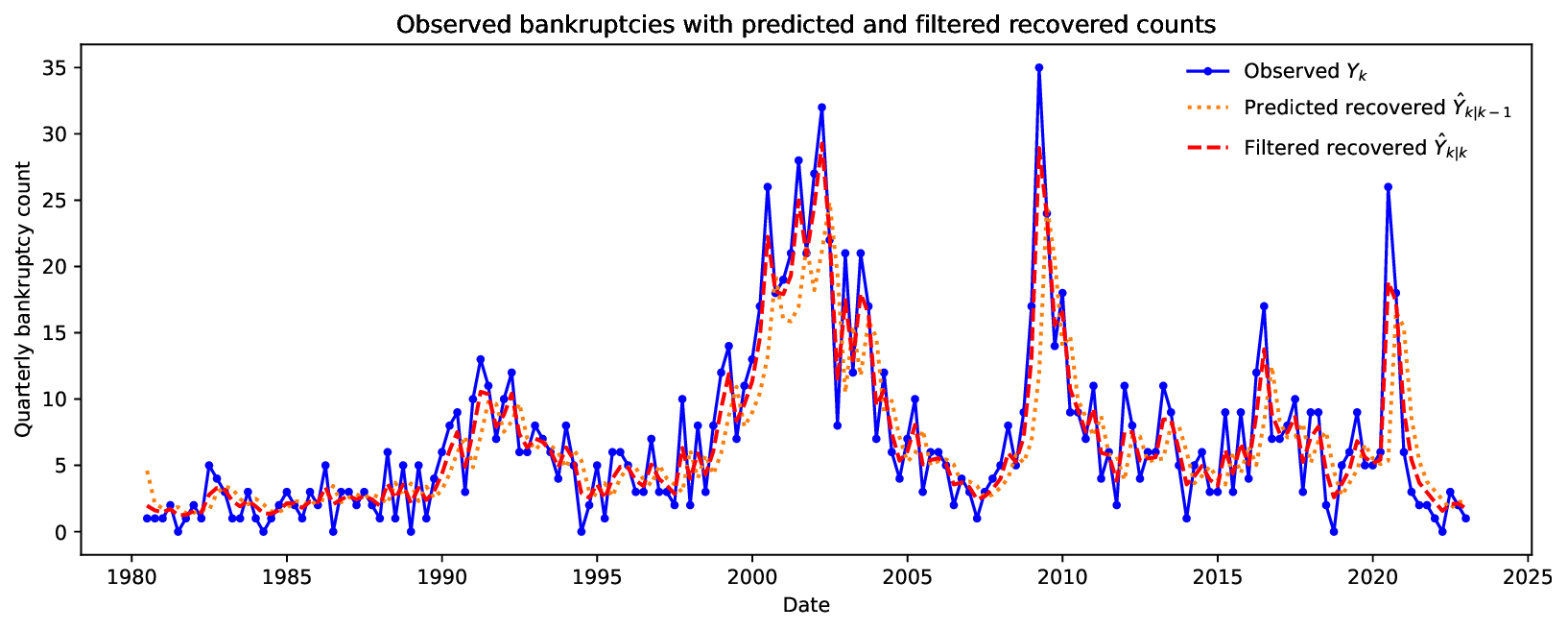}
		\caption{Observed bankruptcies, filtered intensity, and predicted intensity}%
		\label{fig:fit_counts}%
	\end{figure}
	
	\subsection{Estimating the Bankruptcy-count news}
	
	The Poisson state-space representation yields an additional object that is not
	available from the raw bankruptcy-count series. Let $J_{k}=m_{k|k}-m_{k|k-1},$
	which we refer to as bankruptcy-count news. The filtered mean $m_{k|k}%
	=m_{k|k-1}+J_{k}$ decomposes the updated assessment of bankruptcy pressure
	into a component anticipated from past filings and a revision induced by the
	current realization. Thus, $J_{k}$ measures the informational content of the
	current bankruptcy count for the persistent distress environment, rather than
	the level of realized bankruptcies itself. A positive $J_{k}$ indicates that
	the current count leads to an upward revision in latent bankruptcy pressure,
	whereas a negative $J_{k}$ indicates that the realization is weaker than the
	previously predicted state would imply.
	
	This distinction is economically important because identical, or similarly
	large, bankruptcy counts can convey different information depending on the
	distress environment anticipated before their realization. A high count may
	generate little news when bankruptcy pressure was already elevated, whereas a
	more moderate count may produce a substantial upward revision when prior
	pressure was low. Moreover, $J_{k}$ is not merely a contemporaneous count
	forecast error. Under the estimated state dynamics, its effect on the expected
	future state satisfies $\mathbb{E}[X_{k+h}|\mathbf{Y}_{k}]-\mathbb{E}%
	[X_{k+h}|\mathbf{Y}_{k-1}]=\gamma^{h}J_{k},\quad h\geq1.$ Consequently,
	$J_{k}$ measures how the current bankruptcy realization changes the expected
	path of persistent corporate distress. It should therefore be interpreted as a
	model-implied revision in bankruptcy pressure, rather than as a structurally
	identified economic shock.
	
	\begin{figure}[ptbh]
		\centering
		\includegraphics[width=0.6\textwidth]{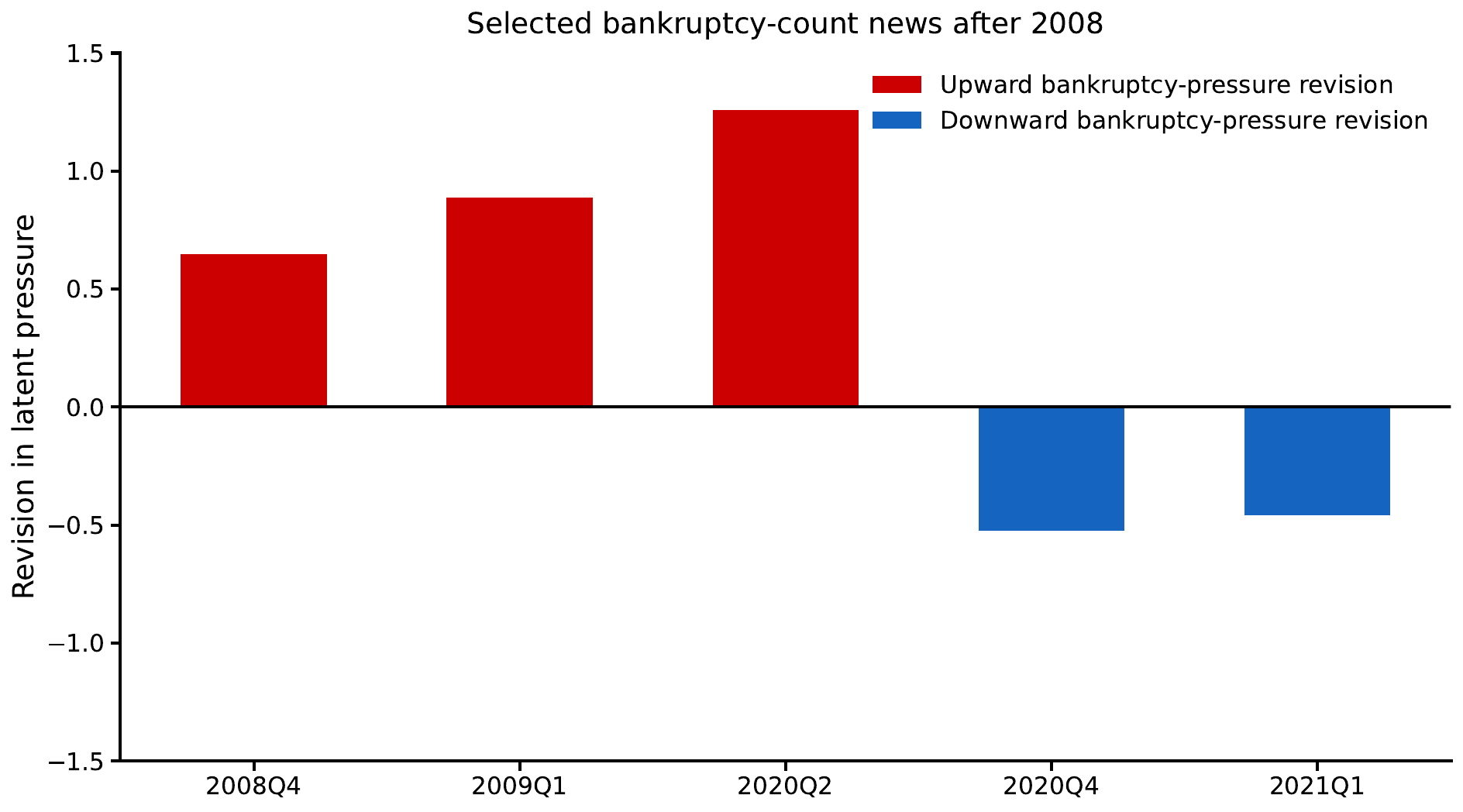}\caption{{\protect\small Selected
				large bankruptcy-count news after 2008. Positive values of $J_{k}$ indicate
				upward revisions in latent bankruptcy pressure, while negative values indicate
				downward revisions.}}%
		\label{fig:bankruptcy_count_news}%
	\end{figure}
	
	Figure \ref{fig:bankruptcy_count_news} interprets the selected revisions
	through this lens. In 2008Q4 and 2009Q1, around the collapse of Lehman
	Brothers and the acute phase of the global financial crisis, realized
	bankruptcy filings were well above the model's predicted intensities. The
	revision in 2009Q1 is particularly informative: even though the predicted
	intensity was already high at 11.91, the realization of 35 filings raised the
	filtered intensity to 28.94 and generated a state revision of 0.888. This
	suggests that the bankruptcy realization did not merely confirm an already
	weak environment, but revealed additional persistence in corporate distress as
	credit conditions deteriorated and refinancing constraints intensified. The
	same logic applies more sharply in 2020Q2. Before observing the quarter, the
	model predicted only 5.35 bankruptcies; after observing 26 filings, the
	filtered intensity rose to 18.84 and $J_{k}$ reached 1.258, consistent with
	the abrupt deterioration in corporate cash flows, economic activity, and
	uncertainty at the onset of the COVID-19 shock. By contrast, the negative
	revisions in 2020Q4 and 2021Q1 show that the model also adjusts downward when
	realized bankruptcies are lower than the previously elevated state would
	imply. This pattern is plausible because several policy interventions may have
	weakened the immediate pass-through from macro-financial stress to formal
	bankruptcy filings. These included the Federal Reserve's corporate credit
	facilities and the Main Street Lending Program.
	
	\section{Conclusion and discussions\label{sec:conclusion}}
	
	This paper develops a unified, closed-form Fourier recursion framework for
	statistical inference of latent Markov models. By representing the evolving
	filtering density through its Fourier coefficients and updating these
	coefficients recursively as new observations arrive, the proposed method
	jointly delivers filtering, prediction, likelihood evaluation, and online
	computation of the score and Hessian within a single deterministic system. The
	framework applies beyond standard SSMs and accommodates a broad
	class of latent Markov structures for which the relevant transformed
	transition functions are tractable, including models specified through
	conditional characteristic functions. Importantly, the positivity-preserving
	truncation ensures that, at every truncation level, the approximated filtering
	density remains a valid probability density and the likelihood remains well
	defined, thereby providing a numerically stable alternative to
	simulation-based likelihood and filtering procedures. This unified treatment
	of filtering, likelihood evaluation, and likelihood derivatives is one of the
	main practical advantages of the Fourier-recursion approach.
	
	The paper also establishes theoretical guarantees for the truncated recursion
	and the resulting likelihood-based estimators. We derive uniform approximation
	bounds for the filtering recursion and show how these bounds propagate to
	likelihood evaluation and latent-state moments. Building on a function-valued
	Markov augmentation, we establish consistency and asymptotic normality of the
	exact and approximate MLEs under the stated regularity conditions. When the
	truncation level increases sufficiently rapidly with the sample size, the
	approximation error becomes asymptotically negligible, so that the approximate
	estimator inherits the large-sample behavior of the infeasible exact MMLE.
	Moreover, because the score and Hessian are accumulated recursively within the
	same Fourier system, the negative normalized Hessian provides a feasible
	observed-information estimator for the standard likelihood-based inference. The misspecified-model case has also been treated  following a similar theoretical framework. Table \ref{tab:method_comparison}
	summarizes the comparison between the Fourier recursion and other methods.
	
	Several limitations of the current framework also point to promising
	directions for future research. First, an important direction for future
	research is to develop a sparse Fourier extension of the multivariate
	recursion. By reducing the number of retained Fourier coefficients, this
	extension may alleviate the curse of dimensionality and enable the framework
	to accommodate more complex data structures, such as high-dimensional network
	data. Second, the likelihood and asymptotic theory developed in this paper may
	provide a foundation for further statistical analysis of latent-variable
	models under model misspecification, including the characterization of
	pseudo-true parameters, robust inference, and specification testing.
	\newcolumntype{Y}{>{\RaggedRight\arraybackslash}X} \begin{sidewaystable}[p]
		\caption{Comparison of representative inference methods for LMMs}
		\label{tab:method_comparison}
		\centering
		
		\begingroup
		\small
		\renewcommand{\arraystretch}{1.24}
		\setlength{\tabcolsep}{7pt}
		\setlength{\aboverulesep}{0.45ex}
		\setlength{\belowrulesep}{0.65ex}
		\hyphenpenalty=10000
		\exhyphenpenalty=10000
		
		\begin{tabularx}{0.96\linewidth}{
				@{}
				>{\RaggedRight\arraybackslash}p{0.12\linewidth}
				>{\RaggedRight\arraybackslash}p{0.12\linewidth}
				Y
				Y
				Y
				Y
				@{}
			}
			\toprule
			
			\multicolumn{2}{c}{\textbf{Comparison}}
			& \multicolumn{1}{c}{\textbf{Kalman-filter type}}
			& \multicolumn{1}{c}{\textbf{Particle-filter type}}
			& \multicolumn{1}{c}{\textbf{MCMC type}}
			& \multicolumn{1}{c}{\textbf{Fourier recursion}}
			\\
			\midrule
			
			\textbf{Scope}
			& Model class
			& SSMs
			& SSMs
			& SSMs
			& LMMs with tractable \(H\) and \(H_X\)
			\\
			\midrule
			\addlinespace[5pt]
			
			\multirow[c]{2}{*}{%
				\shortstack[l]{%
					\textbf{Theoretical}\\
					\textbf{guarantee}}}
			& Filter convergence
			& \(\backslash\)
			& \(O_p(N^{-1/2})\); affected by particle degeneracy
			& \(\backslash\)
			& \(O(\log L/L)\) shown in Theorem \ref{thm:uni converge of filter}
			\\
			
			\cmidrule(lr){2-6}
			& Parameter estimation
			& Potentially biased MLEs
			& MLEs affected by Monte Carlo variability
			& Posterior mean affected by Monte Carlo variability
			& Consistency and asymptotic normality of MLEs shown in Theorem \ref{thm:mimle-long-span}
			\\
			\midrule
			\addlinespace[5pt]
			
			\multirow[c]{4}{*}{\textbf{Implementation}}
			& Optimization
			& Stable
			& Unstable
			& \(\backslash\)
			& Stable
			\\
			
			\cmidrule(lr){2-6}
			& Fisher information
			& \(\backslash\)
			& \(\backslash\)
			& \(\backslash\)
			& Tractable shown in Proposition \ref{prop:hessian updating}
			\\
			
			\cmidrule(lr){2-6}
			
			& Mode
			& Online
			& Online
			& Offline
			& Online
			\\
			
			\cmidrule(lr){2-6}
			& Tuning
			& Model-specific approximations
			& Particle number, resampling, proposal design
			& Proposal design, chain length, diagnostics
			& Truncation level \(L\) as shown in Proposition \ref{prop:implement}
			\\
			
			\bottomrule
		\end{tabularx}
		
		\endgroup
	\end{sidewaystable}
	\setcounter{section}{0} \setcounter{subsection}{0} \setcounter{equation}{0}
	\renewcommand
	{\thesection}{Appendix \Alph{section}}
	\renewcommand\thesubsection{\thesection.\arabic{subsection}} \renewcommand{\theequation}{\Alph{section}.\arabic{equation}}
	
	\section{Inference of LMMs with explanatory covariates \label{sec:covariates}}
	
	We consider the case when explanatory covariates $U_{k}$ are included
	into the LMMs. Let $\mathbf{u}_{k}=(u_{0},\ldots,u_{k})^{\intercal}$ denote
	the realized covariate history. Following the convention in
	\cite{DurbinKoopman2012}, explanatory variables are treated as observed
	external inputs. Therefore, inference is then based on the conditional
	marginal likelihood $\mathcal{L}_{n}(\mathbf{u}_{n};\theta)=p(\mathbf{y}%
	_{n}|\mathbf{u}_{n};\theta)$ and the filtered density $p(x_{k}|\mathbf{y}%
	_{k},\mathbf{u}_{k};\theta).$ We formulate it more generally by allowing the
	latent-observed transition mechanism to depend on the realized explanatory
	variables. Specifically, we work with the following conditional formulation.
	
	\begin{assumption}
		\label{assump:covariates} The covariate process $U_{k}$ is observed and
		explanatory to the LMM if the following conditions are satisfied. First,
		future covariates do not provide additional information for the current
		updating step. In particular, for $n\geq k$, $p(y_{k}|\mathbf{y}%
		_{k-1},\mathbf{u}_{n};\theta) = p(y_{k}|\mathbf{y}_{k-1},\mathbf{u}_{k}%
		;\theta), $ and $p(x_{k}|\mathbf{y}_{k},\mathbf{u}_{k+1};\theta) =
		p(x_{k}|\mathbf{y}_{k},\mathbf{u}_{k};\theta). $ Second, conditional on the
		realized covariate history $\mathbf{U}_{k}=\mathbf{u}_{k}$, $\left(
		X_{k},Y_{k}\right)  $ retains the conditional Markov structure $p_{(X,Y)}%
		(x_{k},y_{k}|\mathbf{x}_{k-1},\mathbf{y}_{k-1}, \mathbf{u}_{k};\theta) =
		p_{(X,Y)}(x_{k},y_{k}|x_{k-1},y_{k-1},u_{k};\theta).$
	\end{assumption}
	
	Under Assumption \ref{assump:covariates}, the conditional marginal likelihood
	can be decomposed as $\mathcal{L}_{n}(\mathbf{u}_{n};\theta)=\prod_{k=1}%
	^{n}\mathcal{L}_{k}(\mathbf{u}_{k};\theta),$ where $\mathcal{L}_{k}%
	(\mathbf{u}_{k};\theta)=p(y_{k}|\mathbf{y}_{k-1},\mathbf{u}_{k};\theta).$ A
	similar updating system can be obtained by replacing the transition and
	marginal transition densities without covariates by their conditional
	counterparts. More precisely, $p_{(X,Y)}(x_{k},y_{k}|x_{k-1},y_{k-1};\theta)$
	is replaced by $p_{(X,Y)}(x_{k},y_{k}|x_{k-1},y_{k-1},u_{k};\theta),$ and the
	marginal transition density $p_{Y}(y_{k}|x_{k-1},y_{k-1};\theta)$ is replaced
	by $p_{Y}(y_{k}|x_{k-1},y_{k-1},u_{k};\theta)$.
	
	Here, we derive the Bayesian updating system in Section
	\ref{sec:bayes updating} with explanatory covariates $U_{k}$. By the
	conditional Markov structure in Assumption \ref{assump:covariates}, the
	marginal transition densities satisfy $p_{Y}(y_{k}|x_{k-1},\mathbf{y}%
	_{k-1},\mathbf{u}_{k};\theta)=p_{Y}(y_{k}|x_{k-1},y_{k-1},u_{k};\theta),$ and
	$p_{X}(x_{k}|x_{k-1},\mathbf{y}_{k-1},\mathbf{u}_{k};\theta)=p_{X}%
	(x_{k}|x_{k-1},y_{k-1},u_{k};\theta).$ Together with the non-anticipation
	condition in Assumption \ref{assump:covariates}, these identities yield the
	same recursive structure as in Section \ref{sec:bayes updating}. First, the
	likelihood updating term can be written as
	\begin{align*}
		\mathcal{L}_{k}(\mathbf{u}_{k};\theta)  &  =\int_{\mathcal{X}}p_{Y}%
		(y_{k}|x_{k-1},\mathbf{y}_{k-1},\mathbf{u}_{k};\theta)p(x_{k-1}|\mathbf{y}%
		_{k-1},\mathbf{u}_{k};\theta)\nu_{X}(dx_{k-1})\\
		&  =\int_{\mathcal{X}}p_{Y}(y_{k}|x_{k-1},y_{k-1},u_{k};\theta)p(x_{k-1}%
		|\mathbf{y}_{k-1},\mathbf{u}_{k-1};\theta)\nu_{X}(dx_{k-1}).
	\end{align*}
	The second equality follows from Assumption \ref{assump:covariates}: the first
	factor is simplified by the conditional Markov structure, and the second
	factor is simplified by the non-anticipation condition. Similarly, the
	filtering update is
	\begin{align*}
		p(x_{k}|\mathbf{y}_{k},\mathbf{u}_{k};\theta)  &  =\frac{1}{\mathcal{L}%
			_{k}(\mathbf{u}_{k};\theta)}\int_{\mathcal{X}}p_{(X,Y)}(x_{k},y_{k}%
		|x_{k-1},\mathbf{y}_{k-1},\mathbf{u}_{k};\theta)p(x_{k-1}|\mathbf{y}%
		_{k-1},\mathbf{u}_{k};\theta)\nu_{X}(dx_{k-1})\\
		&  =\frac{1}{\mathcal{L}_{k}(\mathbf{u}_{k};\theta)}\int_{\mathcal{X}%
		}p_{(X,Y)}(x_{k},y_{k}|x_{k-1},y_{k-1},u_{k};\theta)p(x_{k-1}|\mathbf{y}%
		_{k-1},\mathbf{u}_{k-1};\theta)\nu_{X}(dx_{k-1}).
	\end{align*}
	This is the usual Bayes update, with the joint transition density conditioned
	on the realized covariate $u_{k}$. Finally, the one-step prediction from time
	$k$ to time $k+1$ is conditional on the next realized covariate $u_{k+1}$. By
	the law of total probability,
	\begin{align*}
		p(x_{k+1}|\mathbf{y}_{k},\mathbf{u}_{k+1};\theta)  &  =\int_{\mathcal{X}}%
		p_{X}(x_{k+1}|x_{k},\mathbf{y}_{k},\mathbf{u}_{k+1};\theta)p(x_{k}%
		|\mathbf{y}_{k},\mathbf{u}_{k+1};\theta)\nu_{X}(dx_{k})\\
		&  =\int_{\mathcal{X}}p_{X}(x_{k+1}|x_{k},y_{k},u_{k+1};\theta)p(x_{k}%
		|\mathbf{y}_{k},\mathbf{u}_{k};\theta)\nu_{X}(dx_{k}).
	\end{align*}
	The second equality applies the non-anticipation condition and the conditional
	Markov structure in Assumption \ref{assump:covariates}. Therefore, observed
	explanatory covariates only make the model-specific transition and marginal
	transition densities depend on the realized covariate path. 
	
	\section{Lemmas and proofs\label{appendix:proofs}}
	
	\subsection{Proof of Theorem \ref{thm:fourier series update 1}}
	
	\begin{proof}
		We first derive the likelihood update. By the definition of $\not \Psi $ in
		(\ref{eq:def psi}) and the Bayes updating system, we have
		\begin{equation}
			\mathcal{L}_{k}(\theta)=\int_{\mathcal{X}}\Psi(0,y_{k}|x_{k-1},y_{k-1}%
			;\theta)p(x_{k-1}|\mathbf{y}_{k-1};\theta)\nu_{X}(dx_{k-1}).
			\label{eq:proof likelihood one}%
		\end{equation}
		Substituting the Fourier representation of $p(x_{k-1}|\mathbf{y}_{k-1}%
		;\theta)$ into (\ref{eq:proof likelihood one}), we obtain gives
		\begin{equation}
			\mathcal{L}_{k}(\theta)=\sum_{s\in\mathbb{Z}}c_{k-1,s}(\theta)H\left(
			0,y_{k},\frac{2\pi s}{b_{1}-b_{0}},y_{k-1};\theta\right)  ,\nonumber
		\end{equation}
		which proves (\ref{eq:fourier likelihood update}). We next derive the
		recursion for the filtered Fourier coefficients. Similarly, using the
		filtering update in (\ref{eq:filter update}), we have
		\begin{equation}
			c_{k,l}(\theta)=\frac{1}{\mathcal{L}_{k}(\theta)(b_{1}-b_{0})}\int%
			_{\mathcal{X}}\Psi\left(  -\frac{2\pi l}{b_{1}-b_{0}},y_{k}|x_{k-1}%
			,y_{k-1};\theta\right)  p(x_{k-1}|\mathbf{y}_{k-1};\theta)\nu_{X}(dx_{k-1}).
			\label{eq:proof coefficient two}%
		\end{equation}
		By the definition of $H$ in (\ref{eq:def H}), substituting the Fourier
		representation of $p(x_{k-1}|\mathbf{y}_{k-1};\theta)$ into
		(\ref{eq:proof coefficient two}), we have
		\begin{equation}
			c_{k,l}(\theta)=\sum_{s\in\mathbb{Z}}\frac{c_{k-1,s}(\theta)}{\mathcal{L}%
				_{k}(\theta)(b_{1}-b_{0})}H\left(  -\frac{2\pi l}{b_{1}-b_{0}},y_{k}%
			,\frac{2\pi s}{b_{1}-b_{0}},y_{k-1};\theta\right)  ,\nonumber
		\end{equation}
		which proves (\ref{eq:fourier filter update}). It remains to derive the
		one-step prediction formula. The predictive density satisfies $p(x_{k}%
		|\mathbf{y}_{k-1};\theta)=\int_{\mathcal{X}}p_{X}(x_{k}|x_{k-1},y_{k-1}%
		;\theta)p(x_{k-1}|\mathbf{y}_{k-1};\theta)\nu_{X}(dx_{k-1}).$ Hence,
		(\ref{eq:fourier prediction update}) can be proved similarly.
	\end{proof}
	
	\subsection{Proof of Theorem \ref{thm:uni converge of filter}}
	
	To prove Theorem \ref{thm:uni converge of filter}, we first need a lemma that
	characterizes the one-step approximation error.
	
	\begin{lemma}
		\label{prop:one step error}Under Assumption \ref{ass:S6},
		\ref{assump: bound derivative}, the $k$th-step truncation admits the one-step
		error as
		\[
		\sup_{f:\mathcal{X\rightarrow}\mathbb{R},\left\Vert f\right\Vert _{\infty
			}=1\,}\left\vert \int_{\mathcal{X}}\left\vert \bar{p}^{(L)}(x|\mathbf{y}%
		_{k};\theta)-\hat{p}^{(L)}(x|\mathbf{y}_{k};\theta)\right\vert
		f(x)\,dx\right\vert \leq R_{1}^{(k)}(L),
		\]
		for any $\theta\in\Theta$, where $\left\Vert f\right\Vert _{\infty}=\sup
		_{x\in\mathcal{X}}\left\vert f(x)\right\vert $ and $R_{1}^{(k)}(L)=O\left(
		\log L/L\right)  $.
	\end{lemma}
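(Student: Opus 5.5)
The plan is to read $\bar p^{(L)}(\cdot|\mathbf{y}_k;\theta)$ as the \emph{untruncated} one-step Bayes update of the previous approximate filter. That is,
\[
\bar p^{(L)}(x|\mathbf{y}_k;\theta)=\frac{1}{\hat{\mathcal L}_k^{(L)}(\theta)}\int_{\mathcal X}p_{(X,Y)}(x,y_k|x_0,y_{k-1};\theta)\,\hat p^{(L)}(x_0|\mathbf{y}_{k-1};\theta)\,dx_0 ,
\]
and $\hat p^{(L)}(\cdot|\mathbf{y}_k;\theta)$ is exactly its Fej\'er mean $\sigma_L\bar p^{(L)}$. This identification follows from comparing (\ref{eq:implement 2}) with (\ref{eq:fourier filter update}): the coefficients of $\bar p^{(L)}$ are the bracketed sums, and they are damped by $1-|l|/(L+1)$.

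Since the supremum over $\|f\|_\infty=1$ of $\int|\bar p-\hat p|f$ is just the $L^1$ norm, the lemma reduces to a deterministic Fej\'er approximation bound,
\[
\|\sigma_L g-g\|_{L^1(\mathcal X)}\quad\text{with } g=\bar p^{(L)},
\]
uniform in $k$, $\theta$ and the data.

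\textbf{Step 1 (regularity of $g$, uniform in $k$).} By Proposition \ref{prop:implement}, $\hat p^{(L)}(\cdot|\mathbf{y}_{k-1};\theta)$ is a probability density. The lower bound in Assumption \ref{assump: bound derivative} with $A=\mathcal X$ gives $\int p_{(X,Y)}(x,y_k|x_0,y_{k-1};\theta)\,dx\ge\epsilon_k\ge\epsilon$ for every $x_0$. Integrating against $\hat p^{(L)}$ then yields $\hat{\mathcal L}_k^{(L)}(\theta)\ge\epsilon$.

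Differentiating under the integral and using the derivative bound $C_1$ gives $|\partial_x g|\le C_1/\epsilon$. Moreover, a density on the compact box with bounded derivative and unit mass is bounded, so $\|g\|_\infty\le (b_1-b_0)^{-d_X}+C_1(b_1-b_0)/\epsilon$ (or a similar constant). Hence the periodic extension of $g$ to the torus has total variation bounded by a constant $V$ that does not depend on $k$, $\theta$, $L$. The possible jump at $b_0\sim b_1$ only adds $2\|g\|_\infty$, which is harmless in $L^1$.

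\textbf{Step 2 (Fej\'er $L^1$ estimate).} Write $\sigma_Lg-g=\int K_L(t)\,[g(\cdot-t)-g(\cdot)]\,dt$, where $K_L\ge0$ is the Fej\'er kernel of unit mass. Minkowski's inequality and the BV translation bound $\|g(\cdot-t)-g\|_{L^1}\le V|t|$ then give
\[
\|\sigma_Lg-g\|_{L^1}\le V\int K_L(t)|t|\,dt .
\]
The standard estimate $K_L(t)\lesssim\min\{L,\,1/(Lt^2)\}$ yields $\int K_L(t)|t|\,dt\lesssim \int_0^{1/L}Lt\,dt+\int_{1/L}^{\pi}\frac{dt}{Lt}=O(\log L/L)$. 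This gives $R_1^{(k)}(L)=O(\log L/L)$ with a constant independent of $k$.

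For $d_X>1$ I would use the product Fej\'er kernel and apply the same argument coordinatewise. This telescopes the translation in each coordinate and produces a constant that depends on $d_X$.

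\textbf{Main obstacle.} The main obstacle is Step 1's uniformity: the bound must not degrade with $k$. This hinges entirely on the normalizer lower bound $\hat{\mathcal L}_k^{(L)}\ge\epsilon$. That bound in turn needs $\hat p^{(L)}$ to be a genuine nonnegative probability density, and that is exactly why the Fej\'er (rather than Dirichlet) truncation and Proposition \ref{prop:implement} are used.

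The boundary discontinuity of the periodized density is the other delicate point. It rules out a sup-norm $O(\log L/L)$ bound, but the $L^1$/BV route above absorbs it without loss.
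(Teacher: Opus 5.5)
Your proof is correct, but it works in a different norm from the paper's. The paper bounds $\sup_x|\bar p^{(L)}-\hat p^{(L)}|$ by invoking the uniform Fej\'er estimate $\sup_x|\sigma_Lf-f|\le C_dM_f\log L/L$ of Section~\ref{sec:uni convergence fourier}. It takes $M_f\le C_1/\epsilon$, obtained exactly as in your Step~1: the derivative bound together with $\hat{\mathcal L}_k^{(L)}(\theta)\ge\epsilon$. It then integrates over $\mathcal X$, picking up a factor $(b_1-b_0)^{d_X}$. You instead bound the $L^1$ error directly, using the translation estimate $\|g(\cdot-t)-g\|_{L^1}\le V|t|$ and the first absolute moment of the Fej\'er kernel.

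Both routes rest on the same two ingredients, namely $\hat p^{(L)}=\sigma_L\bar p^{(L)}$ and the $k$-uniform Lipschitz constant $C_1/\epsilon$ coming from positivity of the Fej\'er truncation. Both also give the same $O(\log L/L)$ rate.

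In principle the paper's route buys a pointwise bound. However, the uniform estimate it cites is derived under the hypothesis that the function vanishes on $\partial\mathcal X$, which is what makes the periodic extension Lipschitz. That hypothesis is never verified for $\bar p^{(L)}$ and fails in general: at a seam jump the Fej\'er mean converges to the midpoint of the one-sided limits, so the sup-norm error does not tend to zero. Your bounded-variation argument needs only the interior Lipschitz bound and absorbs the seam jump at no cost. It proves exactly the $L^1$-type quantity the lemma states, which is all that the proof of Theorem~\ref{thm:uni converge of filter} uses, so it is the more robust of the two arguments.
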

	
	\begin{proof}
		See Online Appendix \ref{sec:one-step error}.
	\end{proof}
	
	With Lemma \ref{prop:one step error}, we are able to prove Theorem
	\ref{thm:uni converge of filter} by utilizing the standard procedure of
	\cite{gland2004}.
	
	\begin{proof}
		Let $\hat{r}_{n}^{(L)}(x;\theta):=p(x|\mathbf{y}_{n};\theta)-\hat{p}%
		^{(L)}(x|\mathbf{y}_{n};\theta),$ Theorem 4.8 of \cite{gland2004} implies
		that
		\[
		\sup_{f:\mathcal{X}\rightarrow\mathbb{R},\,\left\Vert f\right\Vert
			=1}\left\vert \int_{\mathcal{X}}\hat{r}_{n}^{(L)}(x;\theta)f(x)dx\right\vert
		\leq\left(  R_{1}^{(n)}(L)+\frac{2R_{1}^{(n-1)}(L)}{\epsilon_{n}^{2}}%
		+\sum_{k=1}^{n-2}\tau_{n:k+3}\frac{4R_{1}^{(k)}(L)}{\epsilon_{k+2}^{2}%
			\epsilon_{k+1}^{2}\log3}\right)  ,
		\]
		where $\tau_{n:m}=\tau_{n}\tau_{n-1}\cdots\tau_{m}$ for $m\leq n$, with
		$\tau_{k}$ denoted as the Birkhoff contraction coefficient for the $k$th-step
		transition kernel. Assumption~\ref{assump: bound derivative} and
		Proposition~3.9(1) of \cite{gland2004} imply for any $k\geq1$ that $\tau
		_{k}\leq\frac{1-\epsilon_{k}^{2}}{1+\epsilon_{k}^{2}}\leq\frac{1-\epsilon^{2}%
		}{1+\epsilon^{2}}.$ Thus, by bringing in the one-step error in
		Lemma~\ref{prop:one step error}, it holds that
		\begin{align*}
			\sup_{f:\mathcal{X}\rightarrow\mathbb{R},\,\left\Vert f\right\Vert
				=1}\left\vert \int_{\mathcal{X}}\hat{r}_{n}^{(L)}(x;\theta)f(x)dx\right\vert
			&  \precsim\left(  \frac{\log L}{L}+\frac{2\log L}{\epsilon^{2}L}+\frac{4\log
				L}{L\epsilon^{4}\log3}\sum_{k=1}^{n-2}\left(  \frac{1-\epsilon^{2}}%
			{1+\epsilon^{2}}\right)  ^{n-k+2}\right) \\
			&  \leq\left(  1+\frac{2}{\epsilon^{2}}+\frac{4}{\epsilon^{4}\log3}\sum
			_{k=1}^{n-2}\left(  \frac{1-\epsilon^{2}}{1+\epsilon^{2}}\right)
			^{n-k+2}\right)  \frac{\log L}{L}.
		\end{align*}
		Here $a\precsim b$ means that $a\leq Cb$ for some positive constant $C.$
	\end{proof}
	
	\subsection{Proof of Theorem \ref{thm:mimle-long-span}}
	
	To prove Theorem \ref{thm:mimle-long-span}, we first need two auxiliary lemmas
	to ensure that the augmented process $\xi_{k}^{+}(\theta)$ is ergodic and the
	sample log-likelihood function $n^{-1}\sum_{k=1}^{n}\ell_{k}(\theta)$
	satisfies several regular properties.
	
	\begin{lemma}
		\label{lem:filter-ergodic}Suppose Assumptions \ref{ass:S6}, \ref{ass:S2},
		\ref{ass:S4} and \ref{ass:S5} hold. Then $\xi_{k}(\theta)$, $\xi_{k}%
		^{+}(\theta)$, and $(X_{k+1},Y_{k+1},\xi_{k}(\theta))$ are ergodic Markov
		processes. Let $(X^{+},Y^{+},X,Y,p(\cdot;\theta),\dot{p}(\cdot;\theta))$
		follow the stationary distribution of $(X_{k+1},Y_{k+1},\xi_{k}(\theta)),$
		$\xi(\theta)=(X,Y,p(\cdot;\theta),\dot{p}(\cdot;\theta)),\quad$and $\xi
		^{+}(\theta)=(Y^{+},\xi(\theta)).$ Then $\xi(\theta)$ and $\xi^{+}(\theta)$
		follow the stationary distributions of $\xi_{k}(\theta)$ and $\xi_{k}%
		^{+}(\theta)$, respectively. Moreover, $(X,Y)$ follows the stationary
		distribution of the ergodic process $(X_{k},Y_{k})$, and for any measurable
		function $f$ on $\mathcal{X}\times\mathcal{Y}$, $(x_{0},y_{0},f_{0},g_{0}%
		)\in\Xi,$ we have
		\begin{align}
			\mathbb{E}[f(X^{+},Y^{+})|X,Y]  &  =\mathbb{E}[f(X_{1},Y_{1})|X_{0}%
			,Y_{0}],\text{ }\label{eq:32}\\
			\mathbb{E}[f(X^{+},Y^{+})|\xi(\theta)]  &  =\mathbb{E}[f(X^{+},Y^{+})|(X,Y)].
			\label{eq:33}%
		\end{align}
		
	\end{lemma}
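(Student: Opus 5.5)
The proof has three parts: the Markov property of the augmented chain, its ergodicity (handled by a coupling argument), and the stationary identities (\ref{eq:32})--(\ref{eq:33}).

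\emph{Markov property.} By (\ref{eq:filter update}), the filter satisfies a deterministic recursion
\[
p(\cdot|\mathbf{Y}_{k+1};\theta)=\Phi_\theta\bigl(p(\cdot|\mathbf{Y}_k;\theta),Y_k,Y_{k+1}\bigr),
\]
where $\Phi_\theta$ is the normalized Bayes map. Differentiating this map in $\theta$ gives a recursion for the filter derivative,
\[
\partial_\theta p(\cdot|\mathbf{Y}_{k+1};\theta)=\dot\Phi_\theta\bigl(p,\partial_\theta p,Y_k,Y_{k+1}\bigr).
\]
This is the functional analogue of Proposition \ref{prop:hessian updating}. Hence $\xi_{k+1}(\theta)=G_\theta(\xi_k(\theta),X_{k+1},Y_{k+1})$ for a measurable $G_\theta$ on $\Xi$.

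Since $(X_k,Y_k)$ is Markov, the conditional law of $(X_{k+1},Y_{k+1})$ given the whole past equals $p_{(X,Y)}(\cdot|X_k,Y_k;\theta_0)$. It therefore depends on $\xi_k(\theta)$ only through $(X_k,Y_k)$. This makes $\xi_k(\theta)$, $(X_{k+1},Y_{k+1},\xi_k(\theta))$ and $\xi_k^+(\theta)=(Y_{k+1},\xi_k(\theta))$ time-homogeneous Markov chains on the appropriate product spaces. Measurability of $G_\theta$ on $\mathcal{S}\times\dot{\mathcal{S}}$, with the topology fixed in the Online Appendix, follows from continuity of the Bayes map in total variation. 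This continuity uses the positivity and boundedness in Assumption \ref{assump: bound derivative}.

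\emph{Ergodicity.} This is the main step. First, $(X_k,Y_k)$ is uniformly ergodic with unique invariant law $\pi$ under the mixing and recurrence conditions (Assumptions \ref{ass:S2}, \ref{ass:S4}). The filter component needs a separate argument, since a Markov chain on a function space need not be Harris recurrent.

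I would follow a forgetting and coupling argument in the spirit of \cite{TadicDoucet2005}. By Assumption \ref{assump: bound derivative} and the Birkhoff contraction used in the proof of Theorem \ref{thm:uni converge of filter}, two filters started from different initial densities but driven by the same observations satisfy
\[
\|p_k-p_k'\|_{TV}\le C\rho^k,\qquad \rho=\tfrac{1-\epsilon^2}{1+\epsilon^2}.
\]
For the derivative component, $\dot\Phi_\theta$ is affine in $\partial_\theta p$. Its linear part is the same contraction acting on zero-mass signed measures, and its inhomogeneous part is Lipschitz in $p$. Under the derivative bounds of Assumption \ref{ass:S5}, this yields $\|\partial_\theta p_k-\partial_\theta p_k'\|\le Ck\rho^k$.

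Given this, I would construct the invariant law as the law of the filter started in the infinite past, $(X,Y,p(\cdot|Y_{-\infty:0}),\partial_\theta p(\cdot|Y_{-\infty:0}))$, using a stationary two-sided version of $(X_k,Y_k)$. The limit exists as a Cauchy limit by the geometric contraction. Uniqueness and ergodicity follow by coupling: run two copies of $(X_k,Y_k)$ until they meet, which happens at a finite time by uniform ergodicity, and from then on drive both with common observations. Forgetting then makes their filter components merge. Consequently, time averages of bounded Lipschitz functionals converge for every initial point, which gives ergodicity (and uniqueness) of the invariant law. The same argument applies to $\xi_k^+(\theta)$ and to $(X_{k+1},Y_{k+1},\xi_k(\theta))$, since each of these is a deterministic-plus-Markov extension. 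Their stationary laws are the pushforwards described in the statement. Marginalizing shows that $(X,Y)\sim\pi$ and that $\xi(\theta)$ and $\xi^+(\theta)$ are stationary for the respective chains.

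\emph{Identities (\ref{eq:32})--(\ref{eq:33}).} The stationary law of $(X_{k+1},Y_{k+1},\xi_k)$ is (invariant law of $\xi$) $\otimes$ the transition kernel from $(X,Y)$. This is because the one-step conditional law of $(X_{k+1},Y_{k+1})$ given $\xi_k$ is $p_{(X,Y)}(\cdot|X_k,Y_k;\theta_0)$, and this property is preserved under the invariant law. Hence $\mathbb{E}[f(X^+,Y^+)|\xi(\theta)]=\int f\,p_{(X,Y)}(\cdot|X,Y)$. This expression is $(X,Y)$-measurable, which gives (\ref{eq:33}). Applying the tower property then yields (\ref{eq:32}).
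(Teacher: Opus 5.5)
Your Markov-property step and your derivation of (\ref{eq:32})--(\ref{eq:33}) coincide with the paper's Steps~1 and~3. The paper's identity (\ref{eq:S.41}) is exactly your observation that, given $\xi_k(\theta)$, the law of $(X_{k+1},Y_{k+1})$ is the $\theta_0$-kernel at $(X_k,Y_k)$.

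The ergodicity step is where you take a genuinely different route. The paper never couples paths or builds a two-sided stationary process. It works at the level of transition densities:
\begin{itemize}
\item It proves the two-initial-condition bound (\ref{eq:S.44}) for the class $G_{Lip}$ analytically in Lemma~\ref{lemma:lip condi xi_k}. The difference is split into three pieces. $D_1$ has the same path law but different initial filters, and is handled by Tadi\'c--Doucet forgetting. $D_{2,1}$ restarts the filter at $m=\lfloor k/2\rfloor$ and uses forgetting again. $D_{2,2}$ uses the total-variation geometric ergodicity of $(X_k,Y_k)$ from Assumption~\ref{ass:S2}.
\item It obtains the invariant law from compactness of $\Xi$ (Arzel\`a--Ascoli under Assumption~\ref{ass:S5}) together with Prohorov.
\item It extends from $G_{Lip}$ to $C_b(\Xi)$ by Stone--Weierstrass.
\item It gets the a.s.\ law of large numbers from a bounded Poisson-equation solution and a martingale SLLN.
\end{itemize}

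Your coupling is the pathwise version of the same split. $P(T>k/2)$ replaces $D_{2,2}$, and forgetting after coalescence replaces $D_{2,1}$, so you get the same geometric rate.

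What your route buys:
\begin{itemize}
\item The infinite-past construction identifies the invariant law explicitly, as the filter $p(\cdot|Y_{-\infty:0};\theta)$ together with its derivative. This is the object the paper only alludes to in its footnote on $\ell_k^{*}$.
\item It does not need compactness or Stone--Weierstrass for existence and uniqueness. You still need the equi-Lipschitz bound of Assumption~\ref{ass:S5}, so that the $L^1$ Cauchy limit stays in $\mathcal{S}\times\dot{\mathcal{S}}$.
\end{itemize}

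What the paper's route buys:
\begin{itemize}
\item It avoids constructing a Markovian coupling and a two-sided extension.
\item It delivers (\ref{eq:S.43}) in exactly the form reused to build the bounded Poisson solution for the score CLT in Lemma~\ref{lemma:clt for s}.
\end{itemize}

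Two small repairs.
\begin{itemize}
\item Assumption~\ref{assump: bound derivative} is not among the lemma's hypotheses. Derive the Birkhoff contraction and the continuity of the Bayes map from Assumption~\ref{ass:S4} instead. Its two-sided bound $\epsilon_{(X,Y)}v_{(X,Y)}\le p_{(X,Y)}\le\epsilon_{(X,Y)}^{-1}v_{(X,Y)}$ is uniform in $(x_0,y_0)$. Because the Hilbert metric is scale-invariant, it gives the same contraction coefficient.
\item When you pass from the coupling to a.s.\ convergence of time averages, make the step explicit. Apply Birkhoff's ergodic theorem to the stationary copy, which is ergodic because the invariant law is unique. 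Then transfer the result to an arbitrary starting point through the coupling.
\end{itemize}
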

	
	\begin{proof}
		See Online Appendix \ref{sec:ergodicity}.
	\end{proof}
	
	\begin{lemma}
		\label{lemma:mimle 1}Suppose Assumption \ref{ass:S6}, \ref{ass:S2},
		\ref{ass:S4} and \ref{ass:S5} hold. Let $\ell^{(n)}(\theta)=\sum_{k=1}^{n}%
		\ell_{k}(\theta)$ be the log-likelihood of LMM with $n$ observations,
		$S_{n}\left(  \theta\right)  =\partial\ell^{(n)}(\theta)/\partial\theta$ and
		$H_{n}(\theta)=\partial^{2}\ell^{(n)}(\theta)/\partial\theta\partial
		\theta^{\intercal}.$ Then, there exists a deterministic function $\ell
		(\theta)$, such that: \ (i) $n^{-1}\ell^{(n)}(\theta
		)\overset{p}{\longrightarrow}\ell(\theta)$ uniformly holds for $\theta
		\in\Theta$, (ii) $\ell(\theta)<\ell(\theta_{0})$ for any $\theta\neq\theta
		_{0}$, (iii) $n^{-1/2}S_{n}(\theta_{0})\overset{d}{\longrightarrow}%
		N(0,I^{M}).$ There also exists a matrix-valued function $H_{0}(\theta)$ such
		that: $n^{-1}H_{n}(\theta)\overset{p}{\longrightarrow}H_{0}(\theta)$ uniformly
		holds for $\theta\in\Theta$, where $H_{0}(\theta)$ is Lipschitz on $\Theta,$
		and $-H_{0}(\theta_{0})=I^{M}.$
	\end{lemma}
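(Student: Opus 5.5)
The strategy is to write every sample object as an additive functional of the augmented chain $\xi^{+}_{k}(\theta)$ and then apply ergodic limit theory. By construction, $\ell_{k}(\theta)=\ell(\xi_{k-1}^{+}(\theta);\theta)$ with
$$\ell(y,\xi_{0};\theta)=\log\int_{\mathcal{X}}p_{Y}(y|\tilde x_{0},y_{0};\theta)f_{0}(\tilde x_{0})\,d\tilde x_{0},$$
and $\partial\ell_{k}/\partial\theta=S(\xi_{k-1}^{+}(\theta);\theta)$. The Hessian increment is likewise a one-step function of a further augmentation that also carries the second filter derivative; the same ergodicity argument applies to it under the assumed smoothness. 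Lemma \ref{lem:filter-ergodic} supplies ergodicity and the stationary laws $\xi^{+}(\theta)$. Given this, I set $\ell(\theta):=\mathbb{E}[\ell(\xi^{+}(\theta);\theta)]$ and $H_{0}(\theta):=\mathbb{E}[\partial^{2}_{\theta}\ell(\xi^{+}(\theta);\theta)]$. Pointwise convergence in (i) and for the Hessian then follows from the ergodic theorem for additive functionals of Markov chains. The transition bounds $\epsilon\lambda_{k}\le p\le\epsilon^{-1}\lambda_{k}$ in Assumption \ref{assump: bound derivative}, together with compactness (Assumption \ref{ass:S6}), keep $\ell(\cdot;\theta)$ bounded, and the moment conditions in Assumptions \ref{ass:S2}--\ref{ass:S5} give integrability of $S$ and $\partial^{2}\ell$.

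Uniformity over $\Theta$ will be obtained by the standard route: pointwise convergence plus stochastic equicontinuity, using a Lipschitz bound in $\theta$ on the one-step functions. The key input is exponential forgetting of the filter and its $\theta$-derivative, via the Birkhoff contraction $\tau_{k}\le(1-\epsilon^{2})/(1+\epsilon^{2})$ used in the proof of Theorem \ref{thm:uni converge of filter}. This gives $\sup_{\theta}\|\partial_{\theta}p(\cdot|\mathbf{Y}_{k};\theta)\|$ bounded uniformly in $k$, and similarly for the second derivative. Hence $n^{-1}\ell^{(n)}$ and $n^{-1}H_{n}$ are uniformly Lipschitz in $\theta$ with a tight constant. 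A finite $\delta$-net on compact $\Theta$ then upgrades pointwise to uniform convergence, and the same bound makes $H_{0}$ Lipschitz.

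For (ii), I would write $\ell(\theta_{0})-\ell(\theta)$ as the stationary expectation of $\log\bigl(p(Y^{+}|\cdot;\theta_{0})/p(Y^{+}|\cdot;\theta)\bigr)$. Conditionally on the true-parameter state, and using \eqref{eq:33}, this is a Kullback--Leibler divergence, so it is nonnegative. Strictness for $\theta\ne\theta_{0}$ should come from the identification assumption among \ref{ass:S2}--\ref{ass:S5}: equality of the stationary one-step predictive laws forces equality of the observed-process laws. For (iii), at $\theta_{0}$ the increments $S(\xi^{+}_{k-1}(\theta_{0});\theta_{0})$ satisfy $\mathbb{E}[S\mid\mathbf{Y}_{k-1}]=\nabla_{\theta}\int p(y|\mathbf{Y}_{k-1};\theta)\,dy=0$, so they form a square-integrable stationary ergodic martingale difference sequence. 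The martingale CLT (Billingsley--Ibragimov) then gives $n^{-1/2}S_{n}(\theta_{0})\Rightarrow N(0,\mathbb{E}[SS^{\intercal}])=N(0,I^{M})$, by \eqref{eq:fisher inform}.

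It remains to show $-H_{0}(\theta_{0})=I^{M}$. Differentiating $\int p(y|\mathbf{Y}_{k-1};\theta)\,dy=1$ twice gives $\mathbb{E}[\partial^{2}\ell_{k}\mid\mathbf{Y}_{k-1}]=-\mathbb{E}[S_{k}S_{k}^{\intercal}\mid\mathbf{Y}_{k-1}]$ at $\theta_{0}$; taking stationary expectations completes the step. Throughout, I must address that the sample path starts from an arbitrary initial filter rather than the stationary one. Exponential forgetting makes the difference between the actual and stationary-started increments $O(\rho^{k})$, which is negligible after normalization. \textbf{Main obstacle:} justifying ergodicity of the function-valued chain and the uniform-in-$k$ derivative forgetting that underlies both the equicontinuity step and the initial-condition coupling. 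For that step I would rely on Lemma \ref{lem:filter-ergodic} and adapt the signed-measure contraction argument of \cite{TadicDoucet2005}.
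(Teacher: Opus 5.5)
Your architecture matches the paper's. It uses additive functionals of $\xi^{+}_{k}(\theta)$, extended by the second filter derivative for the Hessian, and the ergodic theorem for pointwise limits. Uniformity comes from a Lipschitz-in-$\theta$ bound plus a finite cover of $\Theta$. Part (ii) is the Kullback--Leibler argument, and $-H_{0}(\theta_{0})=I^{M}$ follows from the conditional Bartlett identity. The routes differ in two places.

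For (iii), the paper does not use the martingale-difference structure directly. It solves a Poisson equation for $S(\cdot;\theta_{0})$ on the augmented chain and applies a martingale CLT to the resulting martingale part, obtaining $N(0,I_{S})$. Only afterwards does it identify $I_{S}=-H_{0}(\theta_{0})=I^{M}$, via $\mathbb{E}[S_{n}S_{n}^{\intercal}]=-\mathbb{E}[H_{n}(\theta_{0})]$. Your use of $\mathbb{E}[S_{k}\mid\mathbf{Y}_{k-1}]=0$ gives the variance $I^{M}$ in one step and is cleaner. The paper's identification step actually relies on the same property, to kill cross terms, and hence also needs the filter at $\theta_{0}$ to be the true conditional law. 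If the recursion is initialised at that true conditional law, your increments are exact martingale differences. A CLT for non-stationary martingale differences then applies, with $n^{-1}\sum_{k}\mathbb{E}[S_{k}S_{k}^{\intercal}\mid\mathbf{Y}_{k-1}]\to I^{M}$ obtained from ergodicity of the augmented chain. That removes the need for your stationary coupling.

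For uniformity, the bounds you flag as the main obstacle are simply assumed in the paper, not derived. Assumption \ref{ass:S5} makes the derivatives of $p(y_{k+1}|\mathbf{y}_{k};\theta)$ up to third order bounded uniformly in $k$, $\mathbf{y}_{k}$ and $\theta$. It also bounds the filter derivatives uniformly in $L^{1}$. Lipschitz continuity of $n^{-1}\ell^{(n)}$ and $n^{-1}H_{n}$, and hence of $H_{0}$, is then immediate. If you derive these bounds from forgetting instead, two points need care:
\begin{itemize}
\item The Birkhoff/Hilbert-metric contraction controls only positive measures, so the signed derivative measures need a \cite{TadicDoucet2005}-type argument.
\item Lipschitz continuity of the Hessian needs control of third-order filter derivatives, not just the second-order bound your sketch mentions.
\end{itemize}
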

	
	\begin{proof}
		See Online Appendix \ref{sec:mimle}.
	\end{proof}
	
	Now we are able to prove Theorem \ref{thm:mimle-long-span}.
	
	\begin{proof}
		The proof consists of two steps. First, we prove the consistency of
		$\hat{\theta}^{(n)}$, i.e., $\hat{\theta}^{(n)}\overset{p}{\longrightarrow
		}\theta_{0}.$ Second, we prove the asymptotic normality of $\hat{\theta}%
		^{(n)}$, i.e., $\sqrt{n}(\hat{\theta}^{(n)}-\theta_{0}%
		)\overset{d}{\longrightarrow}N(0,\left(  I^{M}\right)  ^{-1}).$ Based on the
		additive forms of log-likelihood, score, and Hessian constructed via the
		ergodic Markov processes in Lemma \ref{lem:filter-ergodic}, we are able to
		follow the route in \cite{vanderVaart1998} to complete the proof.
		
		Step 1: Proving the consistency of $\hat{\theta}^{(n)}$. As long as we have
		(i) and (ii) of Lemma \ref{lemma:mimle 1}, by Theorem 5.7 of van der Vaart
		(1998), we obtain $\hat{\theta}^{(n)}\overset{p}{\longrightarrow}\theta_{0}.$
		Step 2: Proving the asymptotic normality of $\hat{\theta}^{(n)}$: Following
		the standard route, we obtain by the first-order condition that $S_{n}%
		(\hat{\theta}^{(n)})=0,$ where $S_{n}\left(  \theta\right)  =\partial
		\ell^{(n)}(\theta)/\partial\theta.$ Then, by the Taylor expansion, we obtain
		$S_{n}(\theta_{0})+H_{n}(\theta_{n}^{A})(\hat{\theta}^{(n)}-\theta_{0})=0,$
		with $\theta_{n}^{A}=\lambda_{n}\hat{\theta}^{(n)}+(1-\lambda_{n})\theta
		_{0},\quad\lambda_{n}\in\lbrack0,1].$ Since $\hat{\theta}^{(n)}%
		\overset{p}{\longrightarrow}\theta_{0}$, we have $\theta_{n}^{A}%
		\overset{p}{\longrightarrow}\theta_{0}$. Therefore,
		\[
		\sqrt{n}(\hat{\theta}^{(n)}-\theta_{0})=-[n^{-1}H_{n}(\theta_{n}^{A}%
		)]^{-1}n^{-1/2}S_{n}(\theta_{0}).
		\]
		By Lemma \ref{lemma:mimle 1}, $n^{-1/2}S_{n}(\theta_{0}%
		)\overset{d}{\longrightarrow}N(0,I^{M})$ and $n^{-1}H_{n}(\theta
		)\overset{p}{\longrightarrow}H_{0}(\theta),$ we have
		\[
		\left\Vert n^{-1}H_{n}(\theta_{n}^{A})-H_{0}(\theta_{0})\right\Vert
		\leq\left\Vert n^{-1}H_{n}(\theta_{n}^{A})-H_{0}(\theta_{n}^{A})\right\Vert
		+\left\Vert H_{0}(\theta_{n}^{A})-H_{0}(\theta_{0})\right\Vert ,
		\]
		where $\left\Vert n^{-1}H_{n}(\theta_{n}^{A})-H_{0}(\theta_{n}^{A})\right\Vert
		\leq\sup_{\theta\in\Theta}\left\Vert n^{-1}H_{n}(\theta)-H_{0}(\theta
		)\right\Vert _{1}\rightarrow0$. Since $H_{0}(\theta)$ is Lipschitz, we have
		$\left\Vert H_{0}(\theta_{n}^{A})-H_{0}(\theta_{0})\right\Vert \leq
		C_{H}\left\Vert \theta_{n}^{A}-\theta_{0}\right\Vert ,$ which also goes to
		zero since $\theta_{n}^{A}\overset{p}{\longrightarrow}\theta_{0}.$ Thus, we
		obtain $\left\Vert n^{-1}H_{n}(\theta_{n}^{A})-H_{0}(\theta_{0})\right\Vert
		\overset{p}{\longrightarrow}0.$ Hence, by Slutsky's theorem, the asymptotic
		normality of $\hat{\theta}^{(n)}$ is obtained.
		
		Finally, we derive the upper bounds for $\left\Vert \hat{\theta}%
		_{\text{AMMLE}}^{(n,L)}-\theta_{0}\right\Vert $. Consider the non-negative
		term $\ell(\theta_{0})-\ell(\hat{\theta}_{\text{AMMLE}}^{(n,L)})$. On one
		hand, it admits the upper bound $\ell(\theta_{0})-\ell(\hat{\theta
		}_{\text{AMMLE}}^{(n,L)})\leq J_{1}+J_{2}+J_{3},$ where
		\[
		J_{1}:=\left\vert \ell(\theta_{0})-\frac{1}{n}\sum_{k=1}^{n}\ell_{k}%
		(\theta_{0})\right\vert ,\text{ }J_{2}:=\left\vert \ell(\hat{\theta
		}_{\text{AMMLE}}^{(n,L)})-\frac{1}{n}\sum_{k=1}^{n}\ell_{k}(\hat{\theta
		}_{\text{AMMLE}}^{(n,L)})\right\vert ,\text{ }J_{3}:=\frac{1}{n}\left[
		\sum_{k=1}^{n}\ell_{k}(\theta_{0})-\sum_{k=1}^{n}\ell_{k}(\hat{\theta
		}_{\text{AMMLE}}^{(n,L)})\right]  .
		\]
		It is easy to verify that both $J_{1}$ and $J_{2}$ will be smaller than
		$\sup_{\theta\in\Theta}\left\vert \ell(\theta)-n^{-1}\sum_{k=1}^{n}\ell
		_{k}(\theta)\right\vert .$ Further, $J_{3}$ can be controlled by
		\begin{align*}
			J_{3}  &  \leq\left\vert \frac{1}{n}\sum_{k=1}^{n}\ell_{k}(\theta_{0}%
			)-\frac{1}{n}\sum_{k=1}^{n}\hat{\ell}_{k}^{(L)}(\theta_{0})\right\vert
			+\frac{1}{n}\left(  \sum_{k=1}^{n}\hat{\ell}_{k}^{(L)}(\theta_{0})-\sum
			_{k=1}^{n}\ell_{k}(\hat{\theta}_{\text{AMMLE}}^{(n,L)})\right) \\
			&  \leq\left\vert \frac{1}{n}\sum_{k=1}^{n}\ell_{k}(\theta_{0})-\frac{1}%
			{n}\sum_{k=1}^{n}\hat{\ell}_{k}^{(L)}(\theta_{0})\right\vert +\frac{1}%
			{n}\left(  \sum_{k=1}^{n}\hat{\ell}_{k}^{(L)}(\hat{\theta}_{\text{AMMLE}%
			}^{(n,L)})-\sum_{k=1}^{n}\ell_{k}(\hat{\theta}_{\text{AMMLE}}^{(n,L)})\right)
			\\
			&  \leq2\sup_{\theta\in\Theta}\lvert\frac{1}{n}\sum_{k=1}^{n}\ell_{k}%
			(\theta)-\frac{1}{n}\sum_{k=1}^{n}\hat{\ell}_{k}^{(L)}(\theta)\rvert
			\leq2R_{\ell}(L),
		\end{align*}
		where the second inequality follows from the fact that $\hat{\theta
		}_{\text{AMMLE}}^{(n,L)}=\operatorname*{argmax}_{\theta\in\Theta}\sum
		_{k=1}^{n}\hat{\ell}_{k}^{(L_{k-1})}(\theta),$ and the last inequality follows
		from Corollary~\ref{coro:uniform error for likelihood} with $R_{\ell
		}(L)=O(\log L/L)$. Thus we obtain that
		\begin{equation}
			\ell(\theta_{0})-\ell(\hat{\theta}_{\text{AMMLE}}^{(n,L)})\leq2\sup_{\theta
				\in\Theta}\left\vert \ell(\theta)-\frac{1}{n}\sum_{k=1}^{n}\ell_{k}%
			(\theta)\right\vert +2R_{\ell}(L). \label{eq:pf taylor 0}%
		\end{equation}
		On the other hand, the Taylor expansion yields
		\[
		\ell(\hat{\theta}_{\text{AMMLE}}^{(n,L)})=\ell(\theta_{0})+\frac{1}{2}%
		(\hat{\theta}_{\text{AMMLE}}^{(n,L)}-\theta_{0})^{\top}\left.  \frac
		{\partial^{2}\ell(\theta)}{\partial\theta\partial\theta^{\top}}\right\vert
		_{\theta=\theta^{\prime}}(\hat{\theta}_{\text{AMMLE}}^{(n,L)}-\theta_{0}),
		\]
		where $\theta^{\prime}$ lies on the segment between $\theta_{0}$ and
		$\hat{\theta}_{\text{AMMLE}}^{(n,L)}$, and thus
		\begin{equation}
			\ell(\theta_{0})-\ell(\hat{\theta}_{\text{AMMLE}}^{(n,L)})=\frac{-1}{2}%
			(\hat{\theta}_{AMMLE}^{(n,L)}-\theta_{0})^{\top}\left.  \frac{\partial^{2}%
				\ell(\theta)}{\partial\theta\partial\theta^{\top}}\right\vert _{\theta
				=\theta^{\prime}}(\hat{\theta}_{\text{AMMLE}}^{(n,L)}-\theta_{0})\geq
			\frac{c_{0}}{2}\left\Vert \hat{\theta}_{\text{AMMLE}}^{(n,L)}-\theta
			_{0}\right\Vert ^{2}. \label{eq:pf taylor 1}%
		\end{equation}
		By combining the previous two displays, it holds
		\[
		\left\Vert \hat{\theta}_{\text{AMMLE}}^{(n,L)}-\theta_{0}\right\Vert ^{2}%
		\leq\frac{4}{c_{0}}\sup_{\theta\in\Theta}\left\vert \ell(\theta)-\frac{1}%
		{n}\sum_{k=1}^{n}\ell_{k}(\theta)\right\vert +\frac{4}{c_{0}}R_{\ell}(L).
		\]
		By taking the upper limit, the first term on the right side vanishes since
		$n^{-1}\sum_{k=1}^{n}\ell_{k}(\theta)\overset{p}{\rightarrow}\ell(\theta)$
		proved in (\ref{eq:likelihood convergence}), and it yields
		(\ref{eq:par est bound}).
		
		Moreover, following similar Taylor expansion in (\ref{eq:pf taylor 1}), we
		obtain that%
		\[
		n^{-1}\ell^{(n)}(\hat{\theta}^{(n)})-n^{-1}\ell^{(n)}(\hat{\theta
		}_{\text{AMMLE}}^{(n,L)})=\frac{1}{2}(\hat{\theta}_{AMMLE}^{(n,L)}-\hat
		{\theta}^{(n)})^{\top}\left(  -n^{-1}H_{n}(\theta^{\prime\prime})\right)
		(\hat{\theta}_{\text{AMMLE}}^{(n,L)}-\hat{\theta}^{(n)}),
		\]
		Since $\Theta$ is compact and $n^{-1}H_{n}(\theta)\overset{p}{\longrightarrow
		}H_{0}(\theta)$ uniformly holds for $\theta\in\Theta$ as shown in Step 2.2, we
		have $H_{0}(\theta)=\partial^{2}\ell(\theta)/\partial\theta\partial
		\theta^{\top}.$ Under Assumption \ref{assump:hessian bound}, we have
		$-n^{-1}H_{n}(\theta)\geq c_{0}/2I$ with probability tending to one.
		Therefore, $c_{0}\left\Vert \hat{\theta}_{AMMLE}^{(n,L)}-\hat{\theta}%
		^{(n)}\right\Vert ^{2}/4\leq n^{-1}\left\vert \ell^{(n)}(\hat{\theta}%
		^{(n)})-\ell^{(n)}(\hat{\theta}_{\text{AMMLE}}^{(n,L)})\right\vert ,$ with
		probability tending to one. Also, we notice that
		\begin{align*}
			n^{-1}\left[  \ell^{(n)}(\hat{\theta}^{(n)})-\ell^{(n)}(\hat{\theta
			}_{\text{AMMLE}}^{(n,L)})\right]   &  \leq n^{-1}\lvert\ell^{(n)}(\hat{\theta
			}^{(n)})-\hat{\ell}^{(n,L)}(\hat{\theta}^{(n)})\rvert+n^{-1}[\hat{\ell
			}^{(n,L)}(\hat{\theta}^{(n)})-\ell^{(n)}(\hat{\theta}_{\text{AMMLE}}%
			^{(n,L)})]\\
			&  \leq n^{-1}\lvert\ell^{(n)}(\hat{\theta}^{(n)})-\hat{\ell}^{(n,L)}%
			(\hat{\theta}^{(n)})\rvert+n^{-1}[\hat{\ell}^{(n,L)}(\hat{\theta
			}_{\text{AMMLE}}^{(n,L)})-\ell^{(n)}(\hat{\theta}_{\text{AMMLE}}^{(n,L)})]\\
			&  \leq2\sup_{\theta\in\Theta}n^{-1}\lvert\ell^{(n)}(\theta)-\hat{\ell
			}^{(n,L)}(\theta)\rvert.
		\end{align*}
		Since we have shown in Corollary \ref{coro:uniform error for likelihood} that
		$\sup_{\theta\in\Theta}\lvert n^{-1}\ell^{(n)}(\theta)-n^{-1}\hat{\ell
		}^{(n,L)}(\theta)\rvert\leq R_{\ell}(L),$ we obtain $\limsup_{n\rightarrow
			\infty}\left\Vert \hat{\theta}^{(n)}-\hat{\theta}_{\text{AMMLE}}%
		^{(n,L)}\right\Vert ^{2}\leq8R_{\ell}(L)/c_{0}$ with probability tending to
		one. Since $R_{\ell}(L)=\log L/L$, if we choose the truncation level $L_{n}$
		such that $n\log L_{n}/L_{n}\rightarrow0,$ (\ref{eq:par est clt}) can be
		immediately obtained by using Slutsky's lemma.
	\end{proof}
	\subsection{Proof of Proposition \ref{prop:misspecified_mmle}}
		\begin{proof}
			Under the misspecified counterparts of the conditions of Lemmas \ref{lem:filter-ergodic} and  \ref{lemma:clt for s}, the same arguments yield the uniform convergence of the population criterion and Hessian under the true law. Since $\theta^{\ast}$ is the unique interior maximizer of $\ell^{\ast}(\theta)$, $\hat{\theta}^{(n)} \overset{p}{\rightarrow}\theta^{\ast}$. Moreover, the starred analogue of Lemma \ref{lemma:clt for s} gives
			\[
			n^{-1/2}S_n(\theta^{\ast}) \overset{d}{\rightarrow} \mathcal{N}(0,B_{\ast}),\quad -n^{-1}H_n(\theta)  \overset{p}{\rightarrow} A_{\ast}
			\]
			locally uniformly at $\theta^{\ast}$. Applying the same Taylor-expansion argument as in Theorem \ref{thm:mimle-long-span} yields the stated sandwich asymptotic variance. The arguments for the AMMLE approximation bound and its asymptotic equivalence to the MMLE are identical to those in Theorem \ref{thm:mimle-long-span} after replacing $(\ell,\theta_0)$ with $(\ell^{\ast},\theta^{\ast})$. We omit the detailed derivation for brevity.
		\end{proof}

\clearpage
\section*{Online Supplementary Material}

\setcounter{section}{0}
\renewcommand{\thesection}{S\arabic{section}}
\renewcommand{\thesubsection}{S\arabic{section}.\arabic{subsection}}

\setcounter{figure}{0}
\renewcommand{\thefigure}{S\arabic{figure}}

\setcounter{table}{0}
\renewcommand{\thetable}{S\arabic{table}}

\setcounter{equation}{0}
\renewcommand{\theequation}{S\arabic{equation}}

\section{Tools from Fourier analysis: the convergence of Fourier series
\label{sec:uni convergence fourier}}
We here discuss the convergence of Fourier series. The results in this section
are classic conclusions from Fourier analysis, which will serve as a
bridge in the subsequent proofs. Let $B=b_{1}-b_{0}$. For a function $f$
defined on $[b_{0},b_{1}]^{d}$, we consider its periodic extension with period
$B$ in each coordinate. Its Fourier coefficients are given by $c_{l}%
=B^{-d}\int_{[b_{0},b_{1}]^{d}}f(\xi)\exp\left(  -2\pi il\cdot\xi/B\right)
d\xi,\quad l\in\mathbb{Z}^{d}.$ Formally, the corresponding Fourier-series
representation is $f(x)=\sum_{l\in\mathbb{Z}^{d}}c_{l}\exp\left(  2\pi il\cdot
x/B\right)  .$ For $L\geq1$, let $D_{L}=\left\{  l\in\mathbb{Z}^{d}:\left\Vert
l\right\Vert _{\infty}\leq L\right\}  .$ The rectangularly Fej\'{e}r-adjusted
truncated Fourier series are defined by
\begin{equation}
\sigma_{L}f(x):=\sum_{l\in D_{L}}\left\{  \prod_{j=1}^{d}\left(
1-\frac{|l_{j}|}{L+1}\right)  \right\}  c_{l}\exp\left(  \frac{2\pi il\cdot
x}{B}\right)  . \label{eq:Fejer}%
\end{equation}

Uniform convergence of Fej\'{e}r means for continuous functions is classical;
see, for example, Chapter 1.3 of \cite{jackson1930theory}, Chapter V of
\cite{timan1963theory}, and Chapter XVII of \cite{zygmund2002trigonometric}.
We now record an explicit rate for the tensor-product Fej\'{e}r mean used in \eqref{eq:Fejer}.

Without loss of generality, we first take $B=1$ and work on $[-1/2,1/2]^{d}$.
Suppose that $f\in C^{1}([-1/2,1/2]^{d})$ and that $f$ vanishes on the
boundary of $[-1/2,1/2]^{d}$. Let
\[
M_{f}:=\max_{1\leq j\leq d}\sup_{x\in\lbrack-1/2,1/2]^{d}}\left\vert \partial
f(x)/\partial x_{j}\right\vert .
\]
The one-dimensional Fej\'{e}r kernel is given by $F_{L}(t)=\sum_{|m|\leq
L}\left(  1-|m|/(L+1)\right)  \exp(2\pi imt)$ or equivalently, $F_{L}%
(t)=\left(  L+1\right)  ^{-1}\left(  \sin(\pi(L+1)t)/\sin(\pi t)\right)
^{2}.$ Now \eqref{eq:Fejer} can be written as
\begin{equation}
\sigma_{L}f(x)=\int_{[-1/2,1/2]^{d}}f(x-u)K_{L}^{(d)}(u)\,du=\int%
_{[-1/2,1/2]^{d}}f(u)K_{L}^{(d)}(x-u)\,du, \label{eq:Fejer 1}%
\end{equation}
where $K_{L}^{(d)}(u)=\prod_{j=1}^{d}F_{L}(u_{j}),\quad u=(u_{1},\ldots
,u_{d})$ and is referred to as the tensor-product Fej\'{e}r kernel. Moreover,
it is easy to verify that $K_{L}^{(d)}(u)\geq0,\quad\int_{\lbrack
-1/2,1/2]^{d}}K_{L}^{(d)}(u)\,du=1.$ Therefore, we obtain
\[
\left\vert \sigma_{L}f(x)-f(x)\right\vert \leq\int_{\lbrack-1/2,1/2]^{d}%
}\left\vert f(x-u)-f(x)\right\vert K_{L}^{(d)}(u)\,du.
\]
Since $f$ is Lipschitz, i.e., $|f(x-u)-f(x)|\leq M_{f}\sum_{j=1}^{d}|u_{j}|,$
let $I_{L}:=\int_{-1/2}^{1/2}|t|F_{L}(t)\,dt,$ it follows that
\[
|\sigma_{L}f(x)-f(x)|\leq M_{f}\int_{[-1/2,1/2]^{d}}\left(  \sum_{j=1}%
^{d}|u_{j}|\right)  \prod_{r=1}^{d}F_{L}(u_{r})\,du=dM_{f}I_{L}.
\]
It remains to estimate $I_{L}$. Following Chapter 1.3 of
\cite{jackson1930theory}, we know that $I_{L}\leq C\log(L+1)/(L+1),$ and
consequently, the uniform convergence rate of Fej\'{e}r means can be written
as%
\[
\sup_{x\in\lbrack-1/2,1/2]^{d}}|\sigma_{L}f(x)-f(x)|\leq C_{d}M_{f}\frac{\log
L}{L}.
\]

Therefore, in Section~\ref{sec:truncation}, if the filtered density is
approximated by the Fej\'{e}r-adjusted truncated Fourier series, the
approximation can be interpreted as the Fej\'{e}r mean of the periodic
extension of the filtered density on $\mathcal{X}$. Although the original
filtered density need not be periodic, the resulting finite-dimensional
approximation remains a proper density on one period $\mathcal{X}$: it is
strictly positive on $\mathcal{X}$ and integrates to one for every truncation
level $L$. Hence, regardless of the truncation order, the Fej\'{e}r-adjusted
Fourier approximation produces a valid density function on the state space
$\mathcal{X}$.

\section{Accelerated implementation of the Fourier recursion
\label{sec:multi accelerate}}

It is useful to recast Proposition \ref{prop:implement} into an explicit
recursive algorithm.

\begin{algorithm}
{\small \label{alg:fourier truncation}\textbf{Implementation of Fourier
recursion system} }

{\small \textbf{Input:} Observations $\mathbf{y}_{n}$, parameter $\theta$,
candidate truncation level $L.$ }

{\small \textbf{Output:} Approximate log-likelihood $\hat{\ell}_{n}(\theta)$
and Fourier coefficients of filtered densities $\{\hat{c}_{k,l}^{(L)}
(\theta)\}_{l\in D_{L}}.$ }

{\small \textbf{Step 1. Initialization.} Set $\tilde{\ell}_{n}(\theta)=0.$
Compute the initial Fourier coefficients $\{\hat{c}_{1,l}^{(L)}(\theta
)\}_{l\in D_{L}}.$ }

{\small \textbf{Step 2. Recursive likelihood updating and filtering.} For
$k=1,\ldots,n-1$, repeat the following steps. }

\begin{enumerate}

\item {\small Compute $\mathcal{\hat{L}}_{k+1}^{(L)}(\theta)\text{ by
(\ref{eq:implement 1}). }$ }

\item {\small Update the approximate log-likelihood by $\hat{\ell}_{n}%
(\theta)\leftarrow\hat{\ell}_{n}(\theta)+\log\mathcal{\hat{L}}_{k+1}%
^{(L)}(\theta).$ }

\item {\small For each $l\in D_{L}$, update the next-step Fourier coefficient
$\hat{c}_{k+1,l}^{(L)}(\theta)$ by (\ref{eq:implement 2}). }
\end{enumerate}

{\small \textbf{Step 3. Output.} Return $\hat{\ell}_{n}(\theta)$ and
$\{\hat{c} _{k,l}^{(L)}(\theta)\}_{l\in D_{L}},\quad k=1,\ldots,n.$ }
\end{algorithm}

For multivariate latent states, the truncation set becomes $D_{L}=\left\{
l\in\mathbb{Z}^{d},\text{ }\left\Vert l\right\Vert _{\infty}\leq L\right\}  ,$
where $\left\Vert \cdot\right\Vert _{\infty}$ denotes the maximum norm.
Algorithm \ref{alg:fourier truncation} then extends directly, but the
computational complexity increases to $O\left(  n(2L+1)^{2d_{X}}\right)  ,$
where $d_{X}$ is the dimension of latent process. Because this cost grows
rapidly with the dimension of the latent state, it is useful to consider more
efficient truncation schemes. The main source of this computational burden is
not the Fourier representation itself, but the use of the full rectangular
truncation set $D_{L}$.

A natural remedy is to replace the rectangular set with sparse Fourier
truncation sets. Sparse Fourier expansions commonly use non-rectangular
truncation sets (see, e.g.,
\cite{BungartzGriebel2004SparseGrids,DungTemlyakovUllrich2018HyperbolicCross,Trefethen2017Cubature}%
). If the rectangular truncation set $D_{L}$ is replaced, a similar Fourier
recursion can be applied to multivariate latent processes directly. The first
alternative is the $\ell_{1}$-ball truncation $D_{L}^{\ell_{1}}=\left\{
l\in\mathbb{Z}^{d_{X}}:\Vert l\Vert_{1}\leq L\right\}  .$ This truncation is
naturally associated with Fourier coefficients whose decay is controlled by
the $\ell_{1}$-norm of the frequency index. More precisely, it is suitable
when the filtered density belongs to the weighted Sobolev-type space
$H_{\ell_{1}}^{s}(X)=\left\{  f:\sum_{l\in\mathbb{Z}^{d_{X}}}(1+\Vert
l\Vert_{1})^{2s}|f_{l}|^{2}<\infty\right\}  .$ Its cardinality satisfies
$|D_{L}^{\ell_{1}}|\asymp L^{d_{X}},$ so the asymptotic polynomial order
remains the same as that of the rectangular truncation. However, the constant
is substantially smaller, especially when $d_{X}$ is not small. Therefore the
resulting implementation still has the order $O(n|D_{L}^{\ell_{1}}%
|^{2})=O(nL^{2d_{X}}),$ but with a smaller leading constant.

A second alternative is the spherical $\ell_{2}$-ball truncation $D_{L}%
^{\ell_{2}}=\left\{  l\in\mathbb{Z}^{d_{X}}:\Vert l\Vert_{2}\leq L\right\}  .$
This choice is more closely related to isotropic smoothness. In particular, it
is appropriate when the target function belongs to the periodic isotropic
Sobolev space $H_{\ell_{2}}^{s}(X)=\left\{  f:\sum_{l\in\mathbb{Z}^{d_{X}}%
}(1+\Vert l\Vert_{2}^{2})^{s}|f_{l}|^{2}<\infty\right\}  .$ Since
$|D_{L}^{\ell_{2}}|\asymp L^{d_{X}},$ the computational complexity is again
$O(n\left\vert D_{L}^{\ell_{2}}\right\vert ^{2})=O\left(  nL^{2d_{X}}\right)
,$ but the number of retained coefficients is smaller than that under the
rectangular truncation with the same cutoff level $L$.

When the target function has dominating mixed smoothness, a more efficient
choice is the hyperbolic cross truncation. Define $D_{L}^{\mathrm{HC}%
}=\left\{  l\in\mathbb{Z}^{d_{X}}:\prod_{j=1}^{d_{X}}(1+|l_{j}|)\leq
L\right\}  .$ This truncation is naturally associated with the mixed Sobolev
space $H_{\mathrm{mix}}^{s}(X)=\left\{  f:\sum_{l\in\mathbb{Z}^{d_{X}}}%
\prod_{j=1}^{d_{X}}(1+|l_{j}|^{2})^{s}|f_{l}|^{2}<\infty\right\}  .$ The key
feature of the hyperbolic cross is that it keeps frequency vectors for which
not too many coordinates are simultaneously large. Therefore, it retains the
important low-order and mixed interaction terms. The cardinality of the
hyperbolic cross satisfies $\left\vert D_{L}^{\mathrm{HC}}\right\vert \asymp
L(\log L)^{d_{X}-1}.$ Consequently, the double summation in Algorithm
\ref{alg:fourier truncation} has total cost $O( n\left\vert D_{L}%
^{\mathrm{HC}}\right\vert ^{2}) =O( nL^{2}(\log L)^{2(d_{X}-1)}) .$ This is
substantially smaller than the rectangular cost $O\left(  nL^{2d_{X}}\right)
$.

\section{Specializations and illustrations of the Fourier
recursion\label{sec:illustration}}

Theorem \ref{thm:fourier series update 1} shows the Fourier updating for
likelihood evaluation, filtering, and prediction, which can be carried out
once the transform functions $H$ and $H_{X}$ are available. Therefore, the
implementation of the Fourier recursion reduces to the evaluation of these two
functions. This section details how to obtain $H$ and $H_{X}$ using the
transition law of the latent Markov model for the most widely adopted model classes.

\subsection{Finite-state LMMs}

We begin with the simplest case. Consider a finite latent state space, as in
the regime-switching Markov models in Example \ref{example:regime}, where
$\mathcal{X}=\{0,1,\ldots,M-1\}.$ In this setting, since $\nu_{X}$ is the
counting measure, all integrals with respect to latent state $X$ appearing in
$\Psi,\Psi_{X},H$ and $H_{X}$ reduce to finite sums. Namely, (\ref{eq:def psi}%
) and (\ref{eq:def psi X}) reduce to $\Psi(u,y_{1}|x_{0},y_{0};\theta
)=\sum_{m=0}^{M-1}e^{ium}p_{(X,Y)}(m,y_{1}|x_{0},y_{0};\theta)$ and $\Psi
_{X}(u|x_{0},y_{0};\theta)=\sum_{m=0}^{M-1}e^{ium}p_{X}(m|x_{0},y_{0};\theta
)$, respectively. Substituting the finite-sum representations of $\Psi$ and
$\Psi_{X}$ into the definition of their Fourier-transformed functions $H$ and
$H_{X}$ yields
\[
H(u,y_{1},s,y_{0};\theta)=\sum_{m,r\in\mathcal{X}}e^{ium+isr}p_{(X,Y)}%
(m,y_{1}|r,y_{0};\theta),\text{ }H_{X}(u,s,y_{0};\theta)=\sum_{m,r\in
\mathcal{X}}e^{ium+isr}p_{X}(m|r,y_{0};\theta).
\]

Therefore, when $\mathcal{X}$ is finite, as long as the transition density of
the LMM is explicitly known, both $H$ and $H_{X}$ can be computed directly
from their definitions. However, when $X$ is continuous-valued and the
transition density is difficult to obtain, e.g., the continuous-time models in
Example \ref{example:heston}, \ref{example:bns}, the evaluation of $H$ or
$H_{X}$ is no longer immediate, since it generally requires the computation of
Fourier-type integrals over the latent state space. Hence, it motivates us to
develop efficient evaluation methods for $H$ and $H_{X}$ for LMMs where the
latent process $X$ is continuous-valued.

\subsection{State-space models\label{sec:ssm}}

With these two extra properties (\ref{eq:SSM 1})-(\ref{eq:SSM 2}) of SSMs,
Theorem \ref{thm:fourier series update 1} can be further simplified. First of
all, the transformed functions $H$ and $H_{X}$ in (\ref{eq:def H}%
)-(\ref{eq:def H X}) no longer depend on the previous observed state $y_{0}$.
Therefore, we write $H^{\text{SSM}}(u,y_{1},s;\theta)$ and $H_{X}^{\text{SSM}%
}(u,s;\theta)$ to denote their state-space specialization. Secondly, since the
Fourier expansion for $q_{Y}\left(  y|x;\theta\right)  $ can be written as
$q_{Y}\left(  y|x;\theta\right)  =\sum_{l\in\mathbb{Z}}\varphi_{Y}\left(
y,\frac{-2\pi l}{b_{1}-b_{0}};\theta\right)  \frac{e^{2\pi ilx/(b_{1}-b_{0})}%
}{b_{1}-b_{0}}$, combined with (\ref{eq:def H X}), the transformed function
$H$ in (\ref{eq:def H}) can be represented as%
\[
H^{\text{SSM}}\left(  \frac{2\pi u}{b_{1}-b_{0}},y_{1},\frac{2\pi s}%
{b_{1}-b_{0}};\theta\right)  =\sum_{l\in\mathbb{Z}}\frac{1}{b_{1}-b_{0}%
}\varphi_{Y}\left(  y_{1},\frac{2\pi(u-l)}{b_{1}-b_{0}};\theta\right)
H_{X}^{\text{SSM}}\left(  \frac{2\pi l}{b_{1}-b_{0}},\frac{2\pi s}{b_{1}%
-b_{0}};\theta\right)  .
\]
And by using the conditional independence restriction in (\ref{eq:SSM 2}) and
the Fourier series representation of $p(x_{k}|\mathbf{y}_{k-1};\theta)$,
$\mathcal{L}_{k}\left(  \theta\right)  $ can be represented as
\begin{equation}
\mathcal{L}_{k}\left(  \theta\right)  =\int_{\mathcal{X}}q_{Y}\left(
y_{k}|x_{k};\theta\right)  p(x_{k}|\mathbf{y}_{k-1};\theta)\nu_{X}%
(dx_{k})=\sum_{l\in\mathbb{Z}}c_{k,l}^{-}\left(  \theta\right)  \varphi
_{Y}\left(  y_{k},\frac{2\pi l}{b_{1}-b_{0}};\theta\right)  ,
\label{eq:likelihood prediction update}%
\end{equation}
Similarly, by combining (\ref{eq:fourier filter update}%
)-(\ref{eq:fourier prediction update}), the Fourier coefficients $\left\{
c_{k,l}\left(  \theta\right)  \right\}  _{l\in\mathbb{Z}}$ now can be updated
from $\{c_{k,l}^{-}\left(  \theta\right)  \}_{l\in\mathbb{Z}}$ as
\[
c_{k,l}(\theta)=\sum_{r\in\mathbb{Z}}\frac{c_{k,r}^{-}(\theta)}{\mathcal{L}%
_{k}(\theta)(b_{1}-b_{0})}\varphi_{Y}\left(  y_{k},\frac{2\pi(r-l)}%
{b_{1}-b_{0}};\theta\right)  .
\]
Finally, the Fourier coefficients of the next one-step predictive density,
$\{c_{k+1,l}^{-}\left(  \theta\right)  \}_{l\in\mathbb{Z}}$, according to
(\ref{eq:fourier prediction update}), can be represented with $\left\{
c_{k,l}\left(  \theta\right)  \right\}  _{l\in\mathbb{Z}}$. Therefore,
(\ref{eq:likelihood prediction update}) and
(\ref{eq:fourier prediction update}) together constitute Corollary
\ref{thm:fourier series update 3}.

Corollary \ref{thm:fourier series update 3} is a more operational version of
Theorem \ref{thm:fourier series update 1} for SSMs by exploiting the special
structure of SSMs to reduce the evaluation of the Fourier-transformed $H$ to
the computation of the more explicit quantity $\varphi_{Y}$, thereby
substantially reducing the computational complexity of the update procedure.
Panel A of Table \ref{tab:representative-fourier-recursion} reports the
model-specific expressions of $\varphi_{Y}$ and $H_{X}^{\text{SSM}}$ for
several representative state-space models. These expressions are available
either in closed form or as Fourier transform of deterministic functions.
Although closed-form formulae are not always available, the Fourier transform
integrals can be evaluated efficiently through fast Fourier transform (FFT).
Therefore, the quantities required in Corollary
\ref{thm:fourier series update 3} are computationally tractable for a broad
class of SSMs.

\subsection{LMMs with tractable conditional characteristic
functions\label{sec:affine}}

It suffices for us to show how $H$ and $H_{X}$ can be obtained from
$\phi_{(X,Y)}$ when the latent state is continuous-valued. Recall that the
definition of $\Psi$ involves a Dirac delta function $\delta(Y_{1}-y_{1}).$ In
fact, within the distribution theory of functional analysis, the Dirac delta
function admits a rigorous series expansion representation. Namely, if we
assume that the observed process $Y_{k}$ is also compactly supported, namely,
$Y_{k}\in\lbrack a_{0},a_{1}],$ for some $a_{0}<a_{1},$ $\delta(Y_{1}-y_{1})$
can be represented by an expansion as\footnote{This identity is understood in
the distributional sense. More precisely, the series $(a_{1}-a_{0})^{-1}%
\sum_{l\in\mathbb{Z}}\exp\left\{  \frac{2\pi il(Y_{1}-y_{1})}{a_{1}-a_{0}%
}\right\}  $ is the Fourier-series representation of the Dirac delta on
$[a_{0},a_{1}]$. That is, for any sufficiently regular test function $f$,
$\mathbb{E}\left[  f\left(  Y_{1}\right)  \delta(Y_{1}-y_{1})\right]
=(a_{1}-a_{0})^{-1}\mathbb{E}\left[  f\left(  Y_{1}\right)  \sum_{l\in D}%
\exp\left(  \frac{2\pi il\left(  Y_{1}-y_{1}\right)  }{a_{1}-a_{0}}\right)
\right]  .$ For a more detailed treatment, see, e.g., Chapters 6--7 of
\cite{rudin1991functional}.}
\begin{equation}
\delta(Y_{1}-y_{1})=\frac{1}{a_{1}-a_{0}}\sum_{l\in D}\exp\left(  \frac{2\pi
il\left(  Y_{1}-y_{1}\right)  }{a_{1}-a_{0}}\right)  ,
\label{eq:expansion of dirac}%
\end{equation}
where $D$ is a set of integers that depends on $Y_{k}$. When $Y_{k}$ is
continuous-valued, $D=\mathbb{Z}$ and the above expansion becomes the standard
Fourier series of the Dirac delta function. When $Y_{k}$ is discrete-valued,
say $Y_{k}\in\left\{  0,1,2,...,N-1\right\}  ,$ then $D=\left\{
0,1,2,...,N-1\right\}  ,$ and (\ref{eq:expansion of dirac}) turns into
$\delta(Y_{1}-y_{1})=N^{-1}\sum_{l=0}^{N-1}\exp\left(  2\pi il\left(
Y_{1}-y_{1}\right)  /N\right)  ,$ Hence, by the Fourier representation of the
Dirac delta function in (\ref{eq:expansion of dirac}), $\Psi$ defined in
(\ref{eq:def psi}) can be written as%
\[
\Psi(u,y_{1}|x_{0},y_{0};\theta)=\sum_{l\in D}\frac{\exp\left(  -2\pi
ily_{1}/\left(  a_{1}-a_{0}\right)  \right)  }{a_{1}-a_{0}}\phi_{(X,Y)}\left(
\left.  u,\frac{2\pi l}{a_{1}-a_{0}}\right\vert x_{0},y_{0};\theta\right)  ,
\]
and by the definition in (\ref{eq:def psi X}), $\Psi_{X}$ can be represented
by $\phi_{(X,Y)}$ as
\begin{equation}
\Psi_{X}(u|x_{0},y_{0};\theta)=\phi_{(X,Y)}\left(  u,0|x_{0},y_{0}%
;\theta\right)  . \label{eq:Psi X repre}%
\end{equation}
To express $H$ and $H_{X}$ in a unified way, let $\Phi\left(  u,v,y_{0}%
,s;\theta\right)  :=\int_{\mathcal{X}}\exp(isx_{0})\phi_{(X,Y)}\left(
u,v|x_{0},y_{0};\theta\right)  dx_{0}$ denote the Fourier transform of
$\phi_{(X,Y)}$ w.r.t. previous latent state $x_{0}.$ Combining
(\ref{eq:Psi X repre}), transform functions $H$ and $H_{X}$ can be represented
as follows:%
\begin{align}
H(u,y_{1},s,y_{0};\theta)  &  =\sum_{l\in D}\frac{\exp\left(  -2\pi
ily_{1}/\left(  a_{1}-a_{0}\right)  \right)  }{a_{1}-a_{0}}\Phi\left(
u,\frac{2\pi l}{a_{1}-a_{0}},y_{0},s;\theta\right)  ,\label{eq:H repre}\\
H_{X}(u,s,y_{0};\theta)  &  =\Phi\left(  u,0,y_{0},s;\theta\right)  .\nonumber
\end{align}
Thus, the evaluation of $H$ and $H_{X}$ is reduced to the evaluation of the
single auxiliary function $\Phi,$ which directly leads to Corollary
\ref{thm:fourier series update 2}. Even outside the affine class, $\Phi\left(
u,v,y_{0},s;\theta\right)  $ can also be computed either analytically or
accurately approximated since it is the Fourier transform of $\phi
_{(X,Y)}\left(  u,v|x_{0},y_{0};\theta\right)  $ w.r.t. $x_{0}$.\footnote{In
numerical implementation, after discretizing $\mathcal{X}$ with an appropriate
grid, this Fourier transform can be efficiently evaluated using the fast
Fourier transform (FFT), thereby substantially reducing the computational
burden. Similarly, when $H$ or $H_{X}$ can not be obtained analytically, FFT
can also be employed to compute these two functions efficiently, since both
can be represented as Fourier transforms of certain functions.}%
\begin{table}[ptbh]
{\normalsize \centering
\setlength{\tabcolsep}{3pt} \renewcommand{\arraystretch}{1.7}
\begin{threeparttable}
			\caption{Representative Fourier recursions for LMMs}
			\label{tab:representative-fourier-recursion}
			\vspace{0.4em}
			\begin{tabular}{@{} m{0.15\textwidth} m{0.22\textwidth} m{0.36\textwidth} m{0.21\textwidth} @{}}
				\toprule
				\multicolumn{4}{@{}l}{\text{Panel A: Fourier recursion for SSMs in Section \ref{sec:ssm}}} \\
				\midrule
				Model class & Representatives & $\phi_Y(y,u;\theta)$ & $H_X^{\rm SSM}(u,s;\theta)$ \\
				\midrule
				\multirow{2}{=}[-0em]{Linear structure}&
				DSV model in (\ref{eq:discrete sv}) &
				$\displaystyle \frac{e^{iuy}2^{-iu}}{\Gamma(1/2)}\Gamma\left(1/2-iu\right)$ &
				$\displaystyle e^{i\alpha u-\frac{\sigma^2u^2}{2}}{\cal J}(s+\beta u)$ \\
				\cmidrule{2-4}
				&
				Factor ARCH in (\ref{eq:factor arch}) &
				$\displaystyle
				|\Lambda|^{-1}
				e^{\frac{iuy}{\Lambda}
					-\frac{\sigma_\varepsilon^2u^2}{2\Lambda^2}}$ &
				$\displaystyle
				e^{-\frac{\omega u^2}{2}-\frac{s^2}{2a u^2}}
				\sqrt{\frac{2\pi}{a u^2}}$ \\
				\midrule
				\multirow{2}{=}[-0.2em]{Nonlinear structure} &
				QTSM in (\ref{eq:qtsm}) &
				$\displaystyle \int_{b_0}^{b_1}p_\epsilon(y-A-Bx-Cx^2)e^{iux}\,dx$ &
				$\displaystyle \phi_\eta(\Sigma u)e^{i\mu u}{\cal J}(s+\Phi u)$ \\
				\cmidrule{2-4}
				&
				SRNN in (\ref{eq:SRNN}) &
				$\displaystyle \int_{b_0}^{b_1}p_\epsilon(y-g_\theta(x))e^{iux}\,dx$ &
				$\displaystyle \phi_\eta(u)\int_{b_0}^{b_1}e^{isx+iuf_\theta(x)}\,dx$ \\
				\midrule
				\multirow{2}{=}[-0em]{Dynamic GLMs} &
				Binary SSM in (\ref{eq:logit}) &
				$\displaystyle  e^{-iu\alpha}\Delta_Y^B(y+iu,1-y-iu) $&
				$\displaystyle \phi_\eta(u){\cal J}(s+Au)$ \\
				\cmidrule{2-4}
				&
				Poisson SSM in (\ref{eq:poisson model}) &
				$\displaystyle \frac{c^{-iu}}{y!} \Delta_Y^G\!\left(y+iu;ce^{b_1},ce^{b_0}\right)$ &
				$\displaystyle e^{-\frac{1}{2}\sigma^2u^2}{\cal J}(s+\gamma_1u)$ \\
			\end{tabular}
			\begin{tabular}{@{}
					m{0.18\textwidth}
					m{0.28\textwidth}
					>{\centering\arraybackslash}m{0.48\textwidth}
					@{}}
				\midrule
				\multicolumn{3}{@{}l}{\text{Panel B: Fourier recursion for models with tractable $\phi_{(X,Y)}$ in Section \ref{sec:affine}}} \\
				\midrule
				Model class & Representatives & \hspace{1.2em}$\Phi(u,v,y_0,s;\theta)$ \\
				\midrule
				\multirow{3}{=}[-0em]{Affine models}
				& DPS models in Example \ref{example:heston}
				& \multirow{3}{*}{%
					\parbox{0.48\textwidth}{\centering
						$\displaystyle
						e^{C_0(u,v,y_0)}
						\frac{
							e^{A_s(u,v,y_0)b_1}
							-
							e^{A_s(u,v,y_0)b_0}
						}{
							A_s(u,v,y_0)
						}$}} \\
				\cmidrule{2-2}
				& BNS models in Example \ref{example:bns} & \\
				\cmidrule{2-2}
				& Hawkes process in Example \ref{example:hawkes} & \\
				\midrule
			\end{tabular}
			\begin{tablenotes}
				\small
				\item \textit{Notes:} For Panel A, $X_k\in[b_0,b_1]$, $c=e^\alpha$, $\ell_\alpha(x)=ce^x/(1+ce^x)$, and ${\cal J}(v)=\int_{b_0}^{b_1}e^{ivx}\,dx$. The expressions are written on compact support $[b_0,b_1]$. The endpoint-difference notations are $\Delta_Y^G(a;r_1,r_0)=\gamma(a,r_1)-\gamma(a,r_0)$ and $\Delta_Y^B(a,b)=B_{\ell_\alpha(b_1)}(a,b)-B_{\ell_\alpha(b_0)}(a,b)$, where $\gamma(a,z)=\int_0^z t^{a-1}e^{-t}dt$ is the lower incomplete Gamma function and $B_z(a,b)=\int_0^z t^{a-1}(1-t)^{b-1}dt$ is the incomplete Beta function. For Panel B, $A_s(u,v,y_0)=is+C_1(u,v,y_0)$. The numerical integrals above can be efficiently computed by FFT.
			\end{tablenotes}
		\end{threeparttable}
}\end{table}\newcolumntype{Y}{>{\RaggedRight\arraybackslash}X}

\section{Auxiliary assumptions, lemmas and proofs\label{sec:auxiliary proof}}

\subsection{Proof of Proposition \ref{prop:smoothing}\label{sec:smoothing}}

\begin{proof}
To do smoothing, a general strategy is to consider a backward recursion. Given
filtered density $p\left(  x_{k}|\mathbf{y}_{k};\theta\right)  $ and the
previous smoothed density $p\left(  x_{k+1}|\mathbf{y}_{n};\theta\right)  ,$
by applying the Bayes' rule, the smoothed density $p\left(  x_{k}%
|\mathbf{y}_{n};\theta\right)  $ can be written as%
\begin{equation}
p\left(  x_{k}|\mathbf{y}_{n};\theta\right)  =\int_{\mathcal{X}}p\left(
x_{k}|x_{k+1},\mathbf{y}_{n};\theta\right)  p\left(  x_{k+1}|\mathbf{y}%
_{n};\theta\right)  \nu_{X}(dx_{k+1}), \label{eq:smooth 1}%
\end{equation}
where $p\left(  x_{k}|x_{k+1},\mathbf{y}_{n};\theta\right)  :=\mathbb{E}%
\left[  \delta\left(  X_{k}-x_{k}\right)  |X_{k+1}=x_{k+1},\mathbf{Y}%
_{n}=\mathbf{y}_{n};\theta\right]  .$ Here the backward conditional density
$p\left(  x_{k}|x_{k+1},\mathbf{y}_{n};\theta\right)  $ can be further
simplified. First, applying Bayes' rule to $p\left(  x_{k}|x_{k+1}%
,\mathbf{y}_{n};\theta\right)  $ gives
\begin{equation}
p\left(  x_{k}|x_{k+1},\mathbf{y}_{n};\theta\right)  =\frac{p\left(
\mathbf{y}_{k+1:n}|x_{k+1},x_{k},\mathbf{y}_{k};\theta\right)  p\left(
x_{k}|x_{k+1},\mathbf{y}_{k};\theta\right)  }{p\left(  \mathbf{y}%
_{k+1:n}|x_{k+1},\mathbf{y}_{k};\theta\right)  }. \label{eq:smooth 1.1}%
\end{equation}
Under SSM properties (\ref{eq:SSM 1}) and (\ref{eq:SSM 2}), once $x_{k+1}$ is
given, the future observations $\mathbf{y}_{k+1:n}$ are independent of both
$x_{k}$ and the past observations $\mathbf{y}_{k}$, i.e., $p\left(
\mathbf{y}_{k+1:n}|x_{k+1},x_{k},\mathbf{y}_{k};\theta\right)  =p\left(
\mathbf{y}_{k+1:n}|x_{k+1};\theta\right)  ,$ and $p\left(  \mathbf{y}%
_{k+1:n}|x_{k+1},\mathbf{y}_{k};\theta\right)  =p\left(  \mathbf{y}%
_{k+1:n}|x_{k+1};\theta\right)  .$ According to these simplified terms, we
notice that (\ref{eq:smooth 1.1}) can be written as
\[
p\left(  x_{k}|x_{k+1},\mathbf{y}_{n};\theta\right)  =\frac{p\left(
\mathbf{y}_{k+1:n}|x_{k+1};\theta\right)  p\left(  x_{k}|x_{k+1}%
,\mathbf{y}_{k};\theta\right)  }{p\left(  \mathbf{y}_{k+1:n}|x_{k+1}%
;\theta\right)  }=p\left(  x_{k}|x_{k+1},\mathbf{y}_{k};\theta\right)  ,
\]
and substituting the above equation into (\ref{eq:smooth 1}), we have%
\begin{equation}
p\left(  x_{k}|\mathbf{y}_{n};\theta\right)  =\int_{\mathcal{X}}q_{X}\left(
x_{k+1}|x_{k};\theta\right)  p\left(  x_{k}|\mathbf{y}_{k};\theta\right)
\frac{p\left(  x_{k+1}|\mathbf{y}_{n};\theta\right)  }{p\left(  x_{k+1}%
|\mathbf{y}_{k};\theta\right)  }\nu_{X}(dx_{k+1}), \label{eq:smooth 2}%
\end{equation}
by using (\ref{eq:SSM 1}). Let $r_{k}(x_{k};\theta):=p\left(  x_{k}%
|\mathbf{y}_{n};\theta\right)  /p\left(  x_{k}|\mathbf{y}_{k};\theta\right)  $
with $r_{n}(x;\theta)=1,$ then by using (\ref{eq:SSM 2}), (\ref{eq:smooth 2})
turns into
\begin{equation}
r_{k}(x_{k})=\int_{\mathcal{X}}\frac{q_{X}\left(  x_{k+1}|x_{k};\theta\right)
q_{Y}\left(  y_{k+1}|x_{k+1};\theta\right)  }{\mathcal{L}_{k+1}(\theta
)}r_{k+1}(x_{k+1};\theta)\nu_{X}(dx_{k+1}). \label{eq:smooth ratio update}%
\end{equation}
This implies that the density ratio $r_{k}(x_{k};\theta)$ can be recursively
evaluated backwards, i.e., recovering $r_{k}$ from $r_{k+1}$. To handle
$r_{k},$ following the similar Fourier-series representation of filtered
density, $r_{k}(x_{k};\theta)$ can also be expanded by Fourier series as
$r_{k}(x_{k};\theta)=\sum_{l\in\mathbb{Z}}\bar{b}_{k,l}\left(  \theta\right)
\exp\left(  \frac{2\pi ilx_{k}}{b_{1}-b_{0}}\right)  .$ By plugging the
Fourier series of $r_{k}(x_{k})$ into (\ref{eq:smooth ratio update}), the
associated Fourier coefficients $\left\{  \bar{b}_{k,l}\left(  \theta\right)
\right\}  _{l\in\mathbb{Z}}$ can be obtained as:
\begin{equation}
\bar{b}_{k,l}\left(  \theta\right)  =\sum_{s\in\mathbb{Z}}\frac{\bar
{b}_{k+1,s}\left(  \theta\right)  }{(b_{1}-b_{0})\mathcal{L}_{k+1}(\theta
)}H\left(  \frac{2\pi s}{b_{1}-b_{0}},y_{k+1},\frac{-2\pi l}{b_{1}-b_{0}%
},y_{k};\theta\right)  . \label{eq:smooth 3}%
\end{equation}
Finally, we express the smoothed density $p\left(  x_{k}|\mathbf{y}_{n}%
;\theta\right)  $ by a Fourier-series expansion $p\left(  x_{k}|\mathbf{y}%
_{n};\theta\right)  =\sum_{l\in\mathbb{Z}}b_{k,l}\left(  \theta\right)
\exp\left(  \frac{2\pi ilx_{k}}{b_{1}-b_{0}}\right)  .$ Since the smoothed
density is given by the product of the backward smoothing density ratio
$r_{k}(x_{k})$ and the filtered density $p\left(  x_{k}|\mathbf{y}_{k}%
;\theta\right)  $, namely, $p\left(  x_{k}|\mathbf{y}_{n};\theta\right)
=r_{k}(x_{k})p\left(  x_{k}|\mathbf{y}_{k};\theta\right)  ,$ the Fourier
coefficients of the smoothed density are obtained through the convolution of
the corresponding Fourier coefficients $b_{k,l}\left(  \theta\right)
=\sum_{j+s=l}\bar{b}_{k,s}(\theta)c_{k,j}(\theta).$
\end{proof}

\subsection{Proof of Lemma \ref{prop:one step error}\label{sec:one-step error}%
}

This subsection collects the auxiliary assumptions and technical lemmas used
to establish the uniform error control result for the truncated Fourier
recursion. The assumptions below impose regularity conditions on the
transition, the likelihood update, and the truncation scheme, ensuring that
the approximation error generated at each filtering step remains uniformly
controlled over the relevant parameter space. The subsequent lemmas provide
the main intermediate bounds, including positivity and stability of the
likelihood update, one-step truncation error control, and recursive
propagation of the approximation error. Together, these results deliver the
uniform approximation bound stated in Section \ref{sec:uniform bound} of the
main text.

Before moving on, we need to review some notation in \cite{gland2004}. For any
probability measure $\mu$ on $\mathcal{X}$, denote the transition operator
$R_{k}$ and its normalized version $\bar{R}_{k}$ as
\begin{align*}
R_{k}\mu(dx_{1})  &  =\int_{\mathcal{X}}p_{(X,Y)}(x_{1},y_{k}|x_{0}%
,y_{k-1};\theta)\mu(dx_{0})\,dx_{1},\\
\bar{R}_{k}\mu(dx_{1})  &  =\frac{\int_{\mathcal{X}}p_{(X,Y)}(x_{1}%
,y_{k}|x_{0},y_{k-1};\theta)\mu(dx_{0})}{\int_{\mathcal{X}}p_{X}(x_{k}%
|x_{0},y_{k-1};\theta)\mu(dx_{0})}\,dx_{1}.
\end{align*}
In fact, under Assumption~\ref{assump: bound derivative}, the operator $R_{k}$
is contraction under the Hilbert metric with the Birkhoff contraction
coefficient satisfying
\[
\tau(R_{k}):=\sup_{0<h(\mu,\mu^{\prime})<\infty}\frac{h(R_{k}\mu,R_{k}%
\mu^{\prime})}{h(\mu,\mu^{\prime})}\leq\frac{1-\epsilon_{k}^{2}}%
{1+\epsilon_{k}^{2}},
\]
(see Lemma~3.8 and Proposition~3.9 in \cite{gland2004}). For convenience,
assume $\{\epsilon_{k}\}_{k\geq1}$ is bounded away from zero; then $\tau
(R_{k})$ is uniformly smaller than 1. Based on the assumptions above, before
proving the positivity of the filtered density, we first provide an auxiliary lemma.

\begin{lemma}
\label{lemma: likelihood bound}Under Assumption~\ref{assump: bound derivative}%
, the approximated likelihood $\mathcal{\hat{L}}_{k}^{(L_{k-1})}(\theta)$ is
uniformly bounded as $\epsilon\leq\mathcal{\hat{L}}_{k}^{(L)}(\theta
)\leq\epsilon^{-1},$ where $\epsilon$ is the mixing coefficient in
Assumption~\ref{assump: bound derivative}.
\end{lemma}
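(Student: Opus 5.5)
The argument rests on two facts already available. First, by Proposition~\ref{prop:implement} and the Fej\'{e}r construction in Online Appendix~\ref{sec:uni convergence fourier}, each truncated filtered density $\hat{p}^{(L)}(\cdot|\mathbf{y}_{k-1};\theta)$ is nonnegative on $\mathcal{X}$ and integrates to one, so it defines a probability measure $\hat{\mu}_{k-1}(dx_0)=\hat{p}^{(L)}(x_0|\mathbf{y}_{k-1};\theta)\,dx_0$. Second, the approximated likelihood admits the integral representation
\[
\mathcal{\hat{L}}_{k}^{(L)}(\theta)=\int_{\mathcal{X}}p_{Y}(y_{k}|x_{0},y_{k-1};\theta)\,\hat{\mu}_{k-1}(dx_{0}).
\]
This follows from inserting the definition of $H$ into (\ref{eq:implement 1}) and exchanging the finite sum over $D_L$ with the integral, exactly as in the displayed identity for $\mathcal{\hat{L}}_{2}^{(L)}$ in Section~\ref{sec:truncation}. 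Bounding the likelihood therefore reduces to bounding $p_Y(y_k|x_0,y_{k-1};\theta)$ uniformly in $x_0$ and then averaging against $\hat{\mu}_{k-1}$.

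For the pointwise bound, write $p_Y(y_k|x_0,y_{k-1};\theta)=\int_{\mathcal{X}}p_{(X,Y)}(x_1,y_k|x_0,y_{k-1};\theta)\,dx_1$. Apply the mixing condition of Assumption~\ref{assump: bound derivative} with $A=\mathcal{X}$. Since $\lambda_k$ is a probability measure, $\lambda_k(\mathcal{X})=1$, and this gives
\[
\epsilon_k\le p_Y(y_k|x_0,y_{k-1};\theta)\le\epsilon_k^{-1}
\]
for every $x_0\in\mathcal{X}$ and $\theta\in\Theta$. Integrating against the probability measure $\hat{\mu}_{k-1}$ preserves both bounds, so $\epsilon_k\le\mathcal{\hat{L}}_{k}^{(L)}(\theta)\le\epsilon_k^{-1}$.

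To make the constant uniform in $k$, use $\epsilon_k\ge\inf_{j}\epsilon_j>\epsilon$. This yields $\epsilon\le\mathcal{\hat{L}}_{k}^{(L)}(\theta)\le\epsilon^{-1}$, with the same argument applying to the exact $\mathcal{L}_k(\theta)$. The same reasoning covers a step-dependent truncation $L_{k-1}$, since positivity and normalization hold for every truncation level.

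The main obstacle is justifying that $\hat{\mu}_{k-1}$ is a genuine probability measure at every step, which requires that the recursion (\ref{eq:implement 2}) reproduce exactly the Fej\'{e}r mean of a normalized nonnegative function. The approach is induction on $k$. Assume $\hat{p}^{(L)}(\cdot|\mathbf{y}_{k-1};\theta)$ is a density. Then the untruncated coefficients $\sum_{s\in D_L}\hat{c}^{(L)}_{k-1,s}H(\cdot)/(\mathcal{\hat L}_k^{(L)}(b_1-b_0))$ are the Fourier coefficients of the nonnegative normalized function
\[
\int p_{(X,Y)}(\cdot,y_k|x_0,y_{k-1};\theta)\,\hat{\mu}_{k-1}(dx_0)\big/\mathcal{\hat L}_k^{(L)},
\]
where the denominator is positive by the bound just established. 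Multiplying these coefficients by the Fej\'{e}r weights amounts to convolving with the nonnegative, unit-mass kernel $K_L^{(d)}$, which preserves nonnegativity and mass. Each step of the induction uses the lower bound at step $k$ to legitimize the normalization, which in turn produces the density needed at step $k+1$.
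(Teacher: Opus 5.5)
Your proposal is correct and follows the same route as the paper. The paper writes $\hat{\mathcal{L}}_k^{(L)}(\theta)$ as the average of $p_Y(y_k|x_{k-1},y_{k-1};\theta)$ against the truncated filtered density, applies the mixing condition with $A=\mathcal{X}$ to get $p_Y\in[\epsilon_k,\epsilon_k^{-1}]$, and uses that the Fej\'{e}r-adjusted density is a probability density. Your one addition is the explicit induction showing that each Fej\'{e}r step is the Fej\'{e}r mean of the normalized untruncated update; the paper simply imports this from Proposition~\ref{prop:implement}, and your version usefully makes explicit how positivity of the likelihood and validity of $\hat{p}^{(L)}$ support each other from step to step.
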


\begin{proof}
Recall that
\[
\mathcal{\hat{L}}_{k}^{(L)}(\theta)=\int_{\mathcal{X}}p_{Y}(y_{k}%
|x_{k-1},y_{k-1};\theta)\hat{p}^{(L)}(x_{k-1}|\mathbf{y}_{k-1};\theta
)dx_{k-1},
\]
where $p_{Y}(y_{k}|x_{k-1},y_{k-1};\theta)$ is continuous with respect to
$y_{k}$ and satisfies
\[
p_{Y}(y_{k}|x_{k-1},y_{k-1};\theta)=\int_{\mathcal{X}}p_{(X,Y)}(x_{k}%
,y_{k}|x_{k-1},y_{k-1};\theta)dx_{k}\in\left[  \epsilon_{k},\frac{1}%
{\epsilon_{k}}\right]  .
\]
Combined with the fact that $\int_{\mathcal{X}}\hat{p}^{(L)}(x_{k-1}%
|\mathbf{y}_{k-1};\theta)dx_{k-1}=1,$ the result can be immediately obtained.
\end{proof}

We denote $\bar{p}^{(L)}(x|\mathbf{y}_{k};\theta)$ as the untruncated filtered
density at time $k$ updated from the truncated filtered density $\hat{p}%
^{(L)}(x_{k-1}|\mathbf{y}_{k-1};\theta),$ i.e.,
\[
\bar{p}^{(L)}(x|\mathbf{y}_{k};\theta)=\left(  \mathcal{\hat{L}}_{k}%
^{(L)}(\theta)\right)  ^{-1}\int_{\mathcal{X}}p_{(X,Y)}(x_{k},y_{k}%
|x_{k-1},y_{k-1};\theta)\hat{p}^{(L)}(x_{k-1}|\mathbf{y}_{k-1};\theta
)\,dx_{k-1}.
\]
We have shown that both $\mathcal{\hat{L}}_{k}^{(L)}(\theta)$ and $\hat
{p}^{(L)}(x_{k+1}|\mathbf{y}_{k+1};\theta)$ keep positive, and this gives us
validate a definition of the approximated log-likelihood $\hat{\ell}_{k}%
^{(L)}(\theta)=\log\mathcal{\hat{L}}_{k}^{(L)}(\theta)$ and the prerequisite
to use the forgetting property of filtering (see, e.g., \cite{gland2004}).
First, we present the one-step error bounds. Now we are able to give the proof
of Lemma \ref{prop:one step error}.

\noindent\textit{Proof of Lemma \ref{prop:one step error}. }Since $\hat{p}%
^{(L)}(x|\mathbf{y}_{k};\theta)$ is the Fej\'{e}r-adjusted Fourier series for
$\bar{p}^{(L)}(x|\mathbf{y}_{k};\theta),$ by using the uniform convergence
result in \ref{sec:uni convergence fourier}, it holds
\begin{align}
\sup_{x\in\mathcal{X}}\lvert\bar{p}^{(L)}(x|\mathbf{y}_{k};\theta)-\hat
{p}^{(L)}(x|\mathbf{y}_{k};\theta)\rvert &  \leq\sup_{x\in\mathcal{X}%
}\left\vert \frac{\partial\bar{p}^{(L)}(x|\mathbf{y}_{k};\theta)}{\partial
x}\right\vert \frac{(b_{1}-b_{0})^{d_{X}}C_{d_{X}}\log L}{L}\nonumber\\
&  \leq\frac{C_{1}}{\epsilon}\frac{(b_{1}-b_{0})^{d_{X}}C_{d_{X}}\log L}{L}.
\label{eq:pf one step 1}%
\end{align}
By Assumption~\ref{assump: bound derivative} and Lemma
\ref{lemma: likelihood bound}, we have
\begin{align*}
\left\vert \frac{\partial\bar{p}^{(L_{1})}(x|\mathbf{y}_{2};\theta)}{\partial
x}\right\vert  &  =\left\vert \frac{\int_{\mathcal{X}}\partial p_{(X,Y)}%
(x,y_{2}|x_{1},y_{1};\theta)/\partial x\times\hat{p}^{(L)}(x_{1}|y_{1}%
;\theta)dx_{1}}{\mathcal{\hat{L}}_{2}^{(L)}(\theta)}\right\vert \\
&  \leq\frac{1}{\mathcal{\hat{L}}_{2}^{(L)}(\theta)}\int_{\mathcal{X}%
}\left\vert \frac{\partial p_{(X,Y)}(x,y_{2}|x_{1},y_{1};\theta)}{\partial
x}\right\vert \hat{p}^{(L)}(x_{1}|y_{1};\theta)dx_{1}%
\end{align*}
Then, combining the above inequality with (\ref{eq:pf one step 1}), we obtain
$\sup_{x\in\mathcal{X}}\lvert\bar{p}^{(L)}(x|\mathbf{y}_{k};\theta)-\hat
{p}^{(L)}(x|\mathbf{y}_{k};\theta)\rvert\leq\tilde{C}_{d_{X}}\log L /\left(
\epsilon L\right)  ,$ for some constant $\tilde{C}_{d_{X}}.$ Thus, by
integrating with respect to $x\in\mathcal{X}$, for any $f$ with $\left\Vert
f\right\Vert _{\infty}=1$, it holds
\[
\left\vert \int_{\mathcal{X}}\left\vert \bar{p}^{(L_{k-1})}(x|\mathbf{y}%
_{k};\theta)-\hat{p}^{(L_{k})}(x|\mathbf{y}_{k};\theta)\right\vert
f(x)dx\right\vert \leq(b_{1}-b_{0})^{d_{X}}\sup_{x\in\mathcal{X}}\lvert\bar
{p}^{(L_{k-1})}(x|\mathbf{y}_{k};\theta)-\hat{p}^{(L_{k})}(x|\mathbf{y}%
_{k};\theta)\rvert.
\]

\subsection{Proof of Lemma \ref{lem:filter-ergodic}\label{sec:ergodicity}}

This subsection provides the auxiliary assumptions and lemmas used to prove
Lemma \ref{lem:filter-ergodic}. These conditions are formulated as regularity
requirements on the latent Markov process, the observation density, and the
derivatives of the filtered likelihood, and are intended to support the
consistency and asymptotic normality results stated in the main text. The
lemmas below establish the required stability, ergodicity, score central limit
theorem, and Hessian convergence results. These ingredients are then combined
to justify the limiting behavior of the estimator under the Fourier-based
filtering approximation. Let \begin{align}
	\mathcal{S}
	&=
	\overline{\left\{
		p(\cdot\mid\mathbf{y}_{k};\theta):
		k\geq0,\ 
		\mathbf{y}_{k}=(y_{0},\ldots,y_{k})\in\mathcal{Y}^{k+1},\
		\theta\in\Theta
		\right\}},
	\label{eq:S_state}
	\\
	\dot{\mathcal{S}}
	&=
	\dot{\mathcal{S}}_{1}\times\cdots\times\dot{\mathcal{S}}_{d},
	\quad
	\dot{\mathcal{S}}_{j}
	=
	\overline{\left\{
		\frac{\partial p(\cdot\mid\mathbf{y}_{k};\theta)}
		{\partial\theta_{j}}:
		k\geq0,\
		\mathbf{y}_{k}=(y_{0},\ldots,y_{k})\in\mathcal{Y}^{k+1},\
		\theta\in\Theta
		\right\}},
	\label{eq:Sdot_state}
\end{align} 
where the closures in (\ref{eq:S_state})--(\ref{eq:Sdot_state}) are taken with respect to the uniform norm
$\|f\|_{\infty}:=\sup_{x\in\mathcal{X}}|f(x)|$. The preceding assumptions play two roles. Assumptions \ref{ass:S2}%
--\ref{ass:S5} ensure stability, regularity, and identification of the latent
Markov model and the filtered likelihood, while Assumption
\ref{assump:hessian bound} provides the local concavity needed for
likelihood-based inference to ensure the.

\begin{assumption}
\label{ass:S2}There exist constants $C_{(X,Y)}>0$ and $r_{(X,Y)}\in(0,1)$ such
that, for any $(x_{0},y_{0})\in\mathcal{X}\times\mathcal{Y}$,
\[
\int_{\mathcal{X}\times\mathcal{Y}}\left\vert p_{(X,Y)}(k,\tilde{x},\tilde
{y}|x_{0},y_{0};\theta_{0})-\pi_{(X,Y)}(\tilde{x},\tilde{y};\theta
_{0})\right\vert \,d\tilde{x}\,d\tilde{y}\leq C_{(X,Y)}r_{(X,Y)}^{k},
\]
where $p_{(X,Y)}(k,\tilde{x},\tilde{y}|x_{0},y_{0};\theta_{0})$ denotes the
$k$ steps transition density, i.e.,
\[
p_{(X,Y)}(k,\tilde{x},\tilde{y}|x_{0},y_{0};\theta_{0})=\mathbb{E}\left[
\delta\left(  X_{k}-\tilde{x}\right)  \delta\left(  Y_{k}-\tilde{y}\right)
|X_{0}=x_{0},Y_{0}=y_{0},\theta_{0}\right]  .
\]

\end{assumption}

\begin{assumption}
\label{ass:S4}The transition density $p_{(X,Y)}$ satisfies the following
conditions. There exist constants $\epsilon_{(X,Y)}\in(0,1)$ and $K_{(X,Y)}%
>0$, and a function $v_{(X,Y)}:\mathcal{X}\times\mathcal{Y}\times
\Theta\rightarrow(0,\infty),$ such that, for any $(x,y),(x_{0},y_{0}%
)\in\mathcal{X}\times\mathcal{Y}$, $\theta\in\Theta$, and $f_{0}\in
\mathcal{S}$,
\[
\epsilon_{(X,Y)}v_{(X,Y)}(x,y;\theta)\leq p_{(X,Y)}(x,y|x_{0},y_{0}%
;\theta)\leq\epsilon_{(X,Y)}^{-1}v_{(X,Y)}(x,y;\theta),
\]
and, for any observed next value $y$ and observed current value $y_{0}$,
\[
\int_{\mathcal{X}}\left\Vert \int_{\mathcal{X}}\frac{\partial p_{(X,Y)}%
(\tilde{x},y|\tilde{x}_{0},y_{0};\theta)}{\partial\theta}f_{0}(\tilde{x}%
_{0})\,d\tilde{x}_{0}\right\Vert d\tilde{x}\leq K_{(X,Y)}\int_{\mathcal{X}%
^{2}}p_{(X,Y)}(\tilde{x},y|\tilde{x}_{0},y_{0};\theta)f_{0}(\tilde{x}%
_{0})\,d\tilde{x}_{0}d\tilde{x}.
\]

\end{assumption}

\begin{assumption}
\label{ass:S5}The density of the initial state $(X_{0},Y_{0})$ is continuous
in $(x_{0},y_{0},\theta)$. The equality $p(y_{k+1}|\mathbf{y}_{k}%
;\theta)=p(y_{k+1}|\mathbf{y}_{k};\theta_{0})$ holds almost everywhere in
$y_{k+1}\in\mathcal{Y}$ if and only if $\theta=\theta_{0}$. Moreover,
derivatives of $p(y_{k+1}|\mathbf{y}_{k};\theta),$ $p_{Y}(y|x_{0},y_{0}%
;\theta)$ with respect to $\theta$ up to the third order, are uniformly
bounded over $k\geq0$, $\mathbf{y}_{k}\in\mathcal{Y}^{k+1}$, $y_{k+1}%
,y_{0},y\in\mathcal{Y}$, $x\in\mathcal{X}$ and $\theta\in\Theta$. For any $k\geq0$, $\mathbf{y}_{k}\in\mathcal{Y}^{k+1}$, and
$\theta\in\Theta$, the first- and second-order filter derivatives
$\partial p(\cdot|\mathbf{y}_{k};\theta)/\partial\theta$ and
$\partial^{2}p(\cdot|\mathbf{y}_{k};\theta)/
\partial\theta\partial\theta^{\intercal}$ are uniformly bounded in
$L^{1}(\mathcal{X})$ and uniformly Lipschitz in $x$.
\end{assumption}

\begin{assumption}
\label{assump:hessian bound}The Hessian matrix of $\ell(\theta)$ satisfies
$-\partial^{2}\ell(\theta)/\partial\theta\partial\theta^{\top}\geq c_{0}I,$
$\forall\theta\in B\left(  \theta_{0},\delta\right)  $ a.s., where $B\left(
\theta_{0},\delta\right)  =\left\{  \theta\text{ }|\text{ }\left\Vert
\theta-\theta_{0}\right\Vert \leq\delta\right\}  .$
\end{assumption}

The first step is to augment the latent state with the filtered density and
its derivative. This converts the recursive filtering system into a Markov
process on an enlarged state space, which allows standard ergodic arguments to
be applied to functions of the filtered likelihood. We now show the proof of
Lemma \ref{lem:filter-ergodic}.

\noindent\textit{Proof of Lemma \ref{lem:filter-ergodic}. }The proof consists of three
steps. In the first step, we prove the Markov property of the processes
$\xi_{k}(\theta)=(X_{k},Y_{k},p(\cdot|\mathbf{Y}_{k};\theta),\partial
p(\cdot|\mathbf{Y}_{k};\theta)/\partial\theta{)},$ $\xi_{k}^{+}(\theta
)=\left(  Y_{k+1},\xi_{k}(\theta)\right)  $ and $\left(  X_{k+1},Y_{k+1}%
,\xi_{k}(\theta)\right)  .$ In the second step, we prove the ergodicity of
these processes. In the third step, we prove the distributional properties
(i), (ii) and (iii). Since the proofs of the ergodic Markov properties of
$\xi_{k}^{+}(\theta)$ and $(X_{k+1},Y_{k+1},\xi_{k}(\theta))$ follow the same
procedure as that of $\xi_{k}(\theta)$, in the first two steps we concentrate
on the case of $\xi_{k}(\theta)$ and omit the repetitive proofs for $\xi
_{k}^{+}(\theta)$ and $(X_{k+1},Y_{k+1},\xi_{k}(\theta))$. As a preparation,
we introduce two operators to describe the one-step transition of the filter
and the filter derivative, respectively. Indeed, it follows from the Bayes
updating system that the filter satisfies $p(x_{k+1}|\mathbf{Y}_{k+1}%
;\theta)=\Gamma\left[  Y_{k+1},Y_{k},p(\cdot|\mathbf{Y}_{k};\theta)\right]
(x_{k+1}),$ and the operator $\Gamma$ is defined as follows. For any
$(y_{1},y_{0},f_{0})\in\mathcal{Y}\times\mathcal{Y}\times\mathcal{S}$,
\[
\Gamma\lbrack y_{1},y_{0},f_{0}](x):={\frac{\int_{\mathcal{X}}%
p_{(X,Y)}(x,y_{1}|\widetilde{x}_{0},y_{0};\theta)f_{0}(\widetilde{x}%
_{0})d\widetilde{x}_{0}}{\int_{\mathcal{X}}p_{Y}(y_{1}|\widetilde{x}_{0}%
,y_{0};\theta)f_{0}(\widetilde{x}_{0})d\widetilde{x}_{0}}}.
\]
In particular, $\Gamma$ can be interpreted as a mapping from $\mathcal{Y}%
\times\mathcal{Y}\times\mathcal{S}$ to $\mathcal{S}$, and $\Gamma$ describes
how the filter $p(\cdot|\mathbf{Y}_{k+1};\theta)$ depends on the observations
$(Y_{k+1},Y_{k})$ and the filter $p(\cdot|\mathbf{Y}_{k};\theta)$ at time $k$,
that is,
\begin{equation}
p(\cdot|\mathbf{Y}_{k+1};\theta)=\Gamma\left[  Y_{k+1},Y_{k},p(\cdot
|\mathbf{Y}_{k};\theta)\right]  . \label{eq:S.36}%
\end{equation}
Then, differentiating both sides with respect to $\theta$ leads to
\begin{equation}
{\frac{\partial p(x_{k+1}|\mathbf{Y}_{k+1};\theta)}{\partial\theta}}%
=\Gamma^{\prime}\left[  Y_{k+1},Y_{k},p(\cdot|\mathbf{Y}_{k};\theta
),{\frac{\partial p(\cdot|\mathbf{Y}_{k};\theta)}{\partial\theta}}\right]
(x_{k+1}), \label{eq:S.38}%
\end{equation}
where the operator $\Gamma^{\prime}$ is defined as follows. For any
$(y_{1},y_{0},f_{0},g_{0})\in\mathcal{Y}\times\mathcal{Y}\times\mathcal{S}%
\times\mathcal{\dot{S}},$ we define%
\begin{align*}
\Gamma^{\prime}[y_{1},y_{0},f_{0},g_{0}](x)  &  :={}{\frac{\int_{\mathcal{X}%
}{\frac{\partial p_{(X,Y)}}{\partial\theta}}(x,y_{1}|\widetilde{x}_{0}%
,y_{0};\theta)f_{0}(\widetilde{x}_{0})d\widetilde{x}_{0}+\int_{\mathcal{X}%
}p_{(X,Y)}(x,y_{1}|\widetilde{x}_{0},y_{0};\theta)g_{0}(\widetilde{x}%
_{0})d\widetilde{x}_{0}}{\int_{\mathcal{X}}p_{Y}(y_{1}|\widetilde{x}_{0}%
,y_{0};\theta)f_{0}(\widetilde{x}_{0})d\widetilde{x}_{0}}}\\
&  -\Gamma\lbrack y_{1},y_{0},f_{0}](x){\frac{\int_{\mathcal{X}}%
{\frac{\partial p_{Y}}{\partial\theta}}(y_{1}|\widetilde{x}_{0},y_{0}%
;\theta)f_{0}(\widetilde{x}_{0})d\widetilde{x}_{0}+\int_{\mathcal{X}}%
p_{Y}(y_{1}|\widetilde{x}_{0},y_{0};\theta)g_{0}(\widetilde{x}_{0}%
)d\widetilde{x}_{0}}{\int_{\mathcal{X}}p_{Y}(y_{1}|\widetilde{x}_{0}%
,y_{0};\theta)f_{0}(\widetilde{x}_{0})d\widetilde{x}_{0}}.}%
\end{align*}
Similar to $\Gamma$, the operator $\Gamma^{\prime}$ can be interpreted as a
mapping from $\mathcal{Y}\times\mathcal{Y}\times\mathcal{S}\times
\mathcal{\dot{S}}$ to $\mathcal{\dot{S}}$, and $\Gamma^{\prime}$ describes how
the filter derivative $\partial p(\cdot|\mathbf{Y}_{k+1};\theta)/\partial
\theta$ depends on the observations $(Y_{k+1},Y_{k})$, the filter
$p(\cdot|\mathbf{Y}_{k};\theta)$ and the filter derivative $\partial
p(\cdot|\mathbf{Y}_{k};\theta)/\partial\theta$ at time $k$, that is,
\[
{\frac{\partial p(\cdot|\mathbf{Y}_{k+1};\theta)}{\partial\theta}}%
=\Gamma^{\prime}\left[  Y_{k+1},Y_{k},p(\cdot|\mathbf{Y}_{k};\theta
),{\frac{\partial p(\cdot|\mathbf{Y}_{k};\theta)}{\partial\theta}}\right]  .
\]

\textit{Step 1. Proving the Markov property.} To prove the Markov property of
$\xi_{k}(\theta)$, it suffices to show that
\[
\mathbb{E}\left[  F(\xi_{k+1}(\theta))|\xi_{k}(\theta)=\xi_{k},\ldots,\xi
_{0}(\theta)=\xi_{0}\right]  =\mathbb{E}\left[  F(\xi_{k+1}(\theta))|\xi
_{k}(\theta)=\xi_{k}\right]  ,
\]
\ for any $F\in C_{b}(\Xi)$, where $\xi_{k}=(x_{k},y_{k},f_{k},g_{k})\in\Xi,$
and $\Xi=\mathcal{X}\times\mathcal{Y}\times\mathcal{S}\times\mathcal{\dot{S}%
},$ for $k=0,1,\ldots$. Indeed, by the updating formulae for the filter and
the filter derivative, we have%
\begin{align*}
&  F(\xi_{k+1}(\theta))\\
&  ={}F\left(  X_{k+1},Y_{k+1},p(\cdot|\mathbf{Y}_{k+1};\theta),{\frac
{\partial p(\cdot|\mathbf{Y}_{k+1};\theta)}{\partial\theta}}\right) \\
&  ={}F\left(  X_{k+1},Y_{k+1},\Gamma\left[  Y_{k+1},Y_{k},p(\cdot
|\mathbf{Y}_{k};\theta)\right]  ,\Gamma^{\prime}\left[  Y_{k+1},Y_{k}%
,p(\cdot|\mathbf{Y}_{k};\theta),{\frac{\partial p(\cdot|\mathbf{Y}_{k}%
;\theta)}{\partial\theta}}\right]  \right)
\end{align*}
We define $G(x_{k+1},y_{k+1},x,y,f,g):=F\left(  x_{k+1},y_{k+1},\Gamma\lbrack
y_{k+1},y,f],\Gamma^{\prime}[y_{k+1},y,f,g]\right)  $. Then we have
$\mathbb{E}\left[  F(\xi_{k+1}(\theta))|\xi_{k}(\theta)=\xi_{k}\right]
=\mathbb{E}\left[  G(X_{k+1},Y_{k+1},\xi_{k})|\xi_{k}(\theta)=\xi_{k}\right]
,$ and%
\begin{align*}
&  \mathbb{E}\left[  F(\xi_{k+1}(\theta))|\xi_{k}(\theta)=\xi_{k},\xi
_{k-1}(\theta)=\xi_{k-1},\ldots,\xi_{0}(\theta)=\xi_{0}\right] \\
&  =\mathbb{E}\left[  G(X_{k+1},Y_{k+1},\xi_{k})|\xi_{k}(\theta)=\xi_{k}%
,\xi_{k-1}(\theta)=\xi_{k-1},\ldots,\xi_{0}(\theta)=\xi_{0}\right]
\end{align*}
On the one hand, note that each element in $\xi_{k}(\theta)=\left(
X_{k},Y_{k},p(\cdot|\mathbf{Y}_{k};\theta),\partial p(\cdot|\mathbf{Y}%
_{k};\theta){/}\partial\theta\right)  $ is adapted to the filtration generated
by $\{(X_{j},Y_{j})\}_{j=1}^{k}.$ Since $\xi_{k}(\theta)$ contains $\left(
X_{k},Y_{k}\right)  ,$ by the Markov property of $(X_{k},Y_{k})$, we obtain
\begin{equation}
\mathbb{E}\left[  G(X_{k+1},Y_{k+1},\xi_{k})|\xi_{k}(\theta)=\xi_{k}\right]
=\mathbb{E}\left[  G(X_{k+1},Y_{k+1},\xi_{k})|(X_{k},Y_{k})=(x_{k}%
,y_{k})\right]  . \label{eq:S.41}%
\end{equation}
Similarly, as $\xi_{k}(\theta),\xi_{k-1}(\theta),\ldots,\xi_{0}(\theta)$ are
also adapted, by the Markov property of $(X_{k},Y_{k})$, we obtain%
\[
\mathbb{E}\left[  G(X_{k+1},Y_{k+1},\xi_{k})|\xi_{k}(\theta)=\xi_{k}%
,\ldots,\xi_{0}(\theta)=\xi_{0}\right]  =\mathbb{E}\left[  G(X_{k+1}%
,Y_{k+1},\xi_{k})|(X_{k},Y_{k})=(x_{k},y_{k})\right]
\]
Therefore, the Markov property of $\xi_{k}(\theta)$ can be immediately
obtained by%
\[
\mathbb{E}\left[  F(\xi_{k+1}(\theta))|\xi_{k}(\theta)=\xi_{k},\ldots,\xi
_{0}(\theta)=\xi_{0}\right]  =\mathbb{E}\left[  F(\xi_{k+1}(\theta))|\xi
_{k}(\theta)=\xi_{k}\right]
\]
\noindent\ 

\textit{Step 2. Proving the ergodicity.} As a preparation, we introduce the
function class $G_{Lip}$, which collects all functions satisfying that $G\in
C_{b}(\Xi)$ and there exists a constant $C_{G}>0$ such that, for any
$(x_{0},y_{0})\in\mathcal{X}\times\mathcal{Y},\quad f_{0},f_{1}\in
\mathcal{S},\quad g_{0},g_{1}\in\mathcal{\dot{S}},$ it holds
\[
\left\vert G(x_{0},y_{0},f_{0},g_{0})-G(x_{0},y_{0},f_{1},g_{1})\right\vert
\leq C_{G}\left(  \Vert f_{0}-f_{1}\Vert_{L_{1}}+\Vert g_{0}-g_{1}\Vert
_{L_{1}}\right)  .
\]
We name this property as the $\mathcal{S}\times\mathcal{\dot{S}}$-Lipschitz
property. Then, we use three substeps to prove the ergodicity of $\xi
_{k}(\theta)$. First, we show that proving the ergodicity of $\xi_{k}(\theta)$
boils down to proving that there exist a random variable $\xi(\theta)$ and a
constant $r\in(0,1)$ such that, for any $G\in G_{Lip}$, $\xi_{0}\in\Xi$, and
$k>0$, it holds
\begin{equation}
\left\vert \mathbb{E}\left[  G(\xi_{k}(\theta))|\xi_{0}(\theta)=\xi
_{0}\right]  -\mathbb{E}\left[  G(\xi(\theta))\right]  \right\vert \leq
K_{G}r^{k}, \label{eq:S.43}%
\end{equation}
where $K_{G}$ is a constant depending only on $G$. Second, we show that
proving the inequality (\ref{eq:S.43}) boils down to proving that, for any
$\xi_{0},\overline{\xi}_{0}\in\Xi$, it holds
\begin{equation}
\left\vert \mathbb{E}\left[  G(\xi_{k}(\theta))|\xi_{0}(\theta)=\xi
_{0}\right]  -\mathbb{E}\left[  G(\xi_{k}(\theta))|\xi_{0}(\theta
)=\overline{\xi}_{0}\right]  \right\vert \leq K_{G}r^{k}, \label{eq:S.44}%
\end{equation}
where $K_{G}$ and $r$ are the constants in the inequality (\ref{eq:S.43}).
Third, we prove this last inequality.

\textit{Step 2.1. Reducing the proof of ergodicity to proving (\ref{eq:S.43}%
).} As long as we have $\left\vert \mathbb{E}\left[  G(\xi_{k}(\theta
))|\xi_{0}(\theta)=\xi_{0}\right]  -\mathbb{E}\left[  G(\xi(\theta))\right]
\right\vert \leq K_{G}r^{k},$ we can prove the ergodicity of $\xi_{k}(\theta)$
in the following way. First, we prove that the law of $\xi(\theta)$ is the
unique stationary distribution of $\xi_{k}(\theta)$. To prove this, as
discussed in the definition of ergodicity, it suffices to show that
$\lim_{k\rightarrow\infty}\mathbb{E}\left[  G(\xi_{k}(\theta))\right]
=\mathbb{E}\left[  G(\xi(\theta))\right]  $ for any $G\in C_{b}(\Xi)$ and for
any initial distribution of $\xi_{0}(\theta)$. Note that $\mathbb{E}\left[
G(\xi_{k}(\theta))\right]  =\mathbb{E}\left\{  \mathbb{E}\left[  G(\xi
_{k}(\theta))|\xi_{0}(\theta)\right]  \right\}  $, it follows from
(\ref{eq:S.43}) that
\begin{equation}
\lim_{k\rightarrow\infty}\mathbb{E}\left[  G(\xi_{k}(\theta))\right]
=\mathbb{E}\left[  G(\xi(\theta))\right]  \label{eq:S.45}%
\end{equation}
for any $G\in G_{Lip}$ and for any initial distribution of $\xi_{0}(\theta)$.
Thus, to prove the same convergence for any $G\in C_{b}(\Xi)$, it suffices to
show that $G_{Lip}$ is dense in $C_{b}(\Xi)$ as we have already proved in
Lemma \ref{lemma:lip condi xi_k}. Second, we prove that $n^{-1}\sum
_{k=0}^{n-1}G(\xi_{k}(\theta))\rightarrow\mathbb{E}[G(\xi(\theta))],$ a.s.,
for any $G\in C_{b}(\Xi)$. Indeed, since $G_{Lip}$ is dense in $C_{b}(\Xi)$
under the supremum norm, it suffices for us to show that $n^{-1}\sum
_{k=0}^{n-1}G(\xi_{k}(\theta))\rightarrow\mathbb{E}[G(\xi(\theta))]$ for any
$G\in G_{Lip}$. We follow the standard route based on the following Poisson
equation for $V$ (see, e.g., Section 17.4 of \cite{meyn2009markov}) in order
to use the martingale functional CLT:
\begin{equation}
V(\xi)-TV(\xi)=G(\xi)-\mathbb{E}[G(\xi(\theta))], \label{eq:S.47}%
\end{equation}
for any $\xi\in\Xi$, where the operator $T$ is defined by $TV(\xi
_{0})=\mathbb{E}[V(\xi_{1}(\theta))|\xi_{0}(\theta)=\xi_{0}].$ By
calculations, we obtain that the function
\begin{equation}
V_{0}(\xi_{0})=\sum_{k=0}^{\infty}\left(  \mathbb{E}\left[  G(\xi_{k}%
(\theta))|\xi_{0}(\theta)=\xi_{0}\right]  -\mathbb{E}[G(\xi(\theta))]\right)
\label{eq:S.48}%
\end{equation}
is a solution for the Poisson equation in (\ref{eq:S.47}). The function
$V_{0}$ is well-defined because the above series converges as $\left\vert
\mathbb{E}\left[  G(\xi_{k}(\theta))|\xi_{0}(\theta)=\xi_{0}\right]
-\mathbb{E}[G(\xi(\theta))]\right\vert \leq K_{G}r^{k}$ according to
(\ref{eq:S.43}). In addition, (\ref{eq:S.43}) also implies that $V_{0}$ is
bounded. Using $V_{0}$ in (\ref{eq:S.48}), we rewrite the summation as
\[
\frac{1}{n}\sum_{k=0}^{n-1}G(\xi_{k}(\theta))-\mathbb{E}[G(\xi(\theta
))]=\frac{1}{n}\sum_{k=0}^{n-1}\left(  V_{0}(\xi_{k}(\theta))-TV_{0}(\xi
_{k-1}(\theta))\right)  +\frac{1}{n}\left(  V_{0}(\xi_{0}(\theta))-V_{0}%
(\xi_{n}(\theta))\right)  .
\]
Note that $n^{-1}\left(  V_{0}(\xi_{0}(\theta))-V_{0}(\xi_{n}(\theta))\right)
\rightarrow0$ a.s., since $V_{0}$ is bounded. Thus, to prove convergence, it
suffices to show that
\begin{equation}
\frac{1}{n}\sum_{k=0}^{n-1}\left(  V_{0}(\xi_{k}(\theta))-TV_{0}(\xi
_{k-1}(\theta))\right)  \rightarrow0,\text{ a.s..} \label{eq:pf 1.3}%
\end{equation}
Denote by $\mathcal{F}_{k}^{\xi}$ the filtration generated by the process
$\{\xi_{k}(\theta)\}$. Then, by definition of $T$ and the Markov property, we
have $TV_{0}(\xi_{k-1}(\theta))=\mathbb{E}[V_{0}(\xi_{k}(\theta))|\mathcal{F}%
_{k-1}^{\xi}].$ Thus, the summation is a martingale difference sequence. By
standard results (e.g., Theorem 2.19 of \cite{hall1980martingale}), we obtain
the convergence in (\ref{eq:pf 1.3}).

\textit{Step 2.2. Reducing the proof of (\ref{eq:S.43}) to prove
(\ref{eq:S.44}).} Let $T_{k}G\left(  \xi_{0}\right)  =\mathbb{E}[G(\xi
_{k}(\theta))|\xi_{0}(\theta)=\xi_{0}]$ for any $G\in G_{Lip}$. By the tower
property of conditional expectation and the time-homogeneous Markov property,
we have $\mathbb{E}[G(\xi_{k+\ell}(\theta))|\xi_{0}(\theta)=\xi_{0}%
]=\mathbb{E}\left[  \mathbb{E}[G(\xi_{k+i}(\theta))|\xi_{k}(\theta)]|\xi
_{0}(\theta)=\xi_{0}\right]  =\mathbb{E}\left[  T_{i}G\left(  \xi_{k}\right)
|\xi_{0}(\theta)=\xi_{0}\right]  $ \noindent for any $k>0$ and $i\geq0$. Then,
as long as we have (\ref{eq:S.44}), we can prove (\ref{eq:S.43}) in the
following way.

First, (\ref{eq:S.44}) implies that
\begin{equation}
\left\vert \mathbb{E}[G(\xi_{k+i}(\theta))|\xi_{0}(\theta)]-\mathbb{E}%
[G(\xi_{i}(\theta))]\right\vert =\left\vert \mathbb{E}\left[  T_{i}G(\xi
_{k}(\theta))-T_{i}G(\xi)|\xi_{0}(\theta)\right]  \right\vert \leq K_{G}r^{k}.
\label{eq:S.49}%
\end{equation}
So, the sequence $\{\mathbb{E}[G(\xi_{k}(\theta))|\xi_{0}(\theta)=\xi
_{0}]\}_{k\geq0}$ is a Cauchy sequence for any $G\in G_{Lip}$, implying the
existence of $\lim_{k\rightarrow\infty}\mathbb{E}[G(\xi_{k}(\theta))|\xi
_{0}(\theta)=\xi_{0}].$ Moreover, (\ref{eq:S.44}) implies
\[
\lim_{k\rightarrow\infty}\mathbb{E}[G(\xi_{k}(\theta))|\xi_{0}(\theta)=\xi
_{0}]=\lim_{k\rightarrow\infty}\mathbb{E}[G(\xi_{k}(\theta))|\xi_{0}%
(\theta)=\xi_{0}^{\prime}]
\]
for any $\xi_{0},\xi_{0}^{\prime}\in\Xi$. Hence the limit does not depend on
the initial state. We denote this limit as $\Pi(G)$, which only depends on
$G\in G_{Lip}$. Then, letting $i\rightarrow\infty$ in (\ref{eq:S.49}), we
obtain
\begin{equation}
\left\vert \Pi(G)-\mathbb{E}[G(\xi_{k}(\theta))|\xi_{0}(\theta)=\xi
_{0}]\right\vert \leq K_{G}r^{k}. \label{eq:S.50}%
\end{equation}
Since $\mathbb{E}[G(\xi_{k}(\theta))]=\mathbb{E}[\mathbb{E}[G(\xi_{k}%
(\theta))|\xi_{0}(\theta)]],$ we obtain from (\ref{eq:S.50}) that $\left\vert
\Pi(G)-\mathbb{E}[G(\xi_{k}(\theta))]\right\vert \leq K_{G}r^{k},$ which
implies that $\lim_{k\rightarrow\infty}\mathbb{E}[G(\xi_{k}(\theta))]=\Pi(G)$
for any $G\in G_{Lip}$. As discussed previously, since the state space $\Xi$
is compact, by the Prohorov theorem (e.g., see \cite{billingsley1995probability}), there
exists a subsequence $\{\xi_{k_{j}}(\theta)\}$ and a random variable
$\xi(\theta)$ such that $\xi_{k_{j}}(\theta)$ converges weakly to $\xi
(\theta)$. Thus, $\lim_{k\rightarrow\infty}\mathbb{E}[G(\xi_{k}(\theta
))]=\mathbb{E}[G(\xi(\theta))]$ for any $G\in G_{Lip}$. Since this limit is
equal to $\Pi(G)$, we conclude that $\Pi(G)=\mathbb{E}[G(\xi(\theta))]\quad
$for all $G\in G_{Lip}.$ And (\ref{eq:S.44}) can be shown according to Lemma
\ref{lemma:lip condi xi_k}.

Step 3: Proving the distributional properties (i), (ii), and (iii): We will
use the fact that, for any Markov processes $R_{i}$ and $(R_{i},S_{i})$, if
the law of $(R,S)$ is a stationary distribution of $(R_{i},S_{i})$, then by
definition of stationarity, the law of $R$ must be a stationary distribution
of $R_{i}$. By definition, we have that $(X^{+},Y^{+},\xi(\theta))$ follows
the stationary distribution of $(X_{(k+1)},Y_{(k+1)},\xi_{k}(\theta)).$ Then,
using the above-mentioned fact and the uniqueness of the stationary
distributions of the processes $(X_{k},Y_{k}),\quad\xi_{k}(\theta),\quad
$and$\quad\xi_{k}^{+}(\theta)=(X_{k+1},Y_{k+1},\xi_{k}(\theta)),$ we obtain
that $(X,Y,\xi(\theta))$ and $\xi^{+}(\theta)=(X^{+},Y^{+},\xi(\theta))$
follow the stationary distributions of $(X_{k},Y_{k}),\quad\xi_{k}%
(\theta),\quad$and$\quad\xi_{k}^{+}(\theta),$ respectively. Then, to prove
(\ref{eq:32})-(\ref{eq:33}), it suffices to show that $\mathbb{E}%
[F(X_{k+1},Y_{k+1})|\xi_{k}(\theta)=\xi_{0}]=\mathbb{E}[F(X_{k+1}%
,Y_{k+1})|(X_{k},Y_{k})=(x_{0},y_{0})],$ for any $F\in C_{b}(\mathcal{X}%
\times\mathcal{Y})$. This follows from equality (\ref{eq:S.41}) in Step 1,
which is used to prove the Markov property of $\xi_{k}(\theta)$.

\subsection{Proof of Lemma \ref{lemma:mimle 1}\label{sec:mimle}}

To prove Lemma \ref{lemma:mimle 1}, we first give two auxiliary lemmas that
characterize exponential forgetting property, Lipschitz properties and
asymptotic normality of augmented Markov process. We will use the same
assumptions as in \ref{sec:ergodicity}.

\begin{lemma}
\label{lemma:lip condi xi_k}Under the same conditions as in Lemma
\ref{lem:filter-ergodic}, let $\Xi=\mathcal{X}\times\mathcal{Y}\times
\mathcal{S}\times\mathcal{\dot{S}}$, $C_{b}(\Xi)$ as the collection of
continuous bounded function over $\Xi,$ and $G_{Lip}$ collects every $G\in
C_{b}(\Xi)$ such that there exists a constant $C_{G}>0$ , for any
$(x_{0},y_{0})\in\mathcal{X}\times\mathcal{Y},\qquad f_{0},f_{1}\in
\mathcal{S},\qquad g_{0},g_{1}\in\mathcal{\dot{S}},$ it holds
\[
\left\vert G(x_{0},y_{0},f_{0},g_{0})-G(x_{0},y_{0},f_{1},g_{1})\right\vert
\leq C_{G}\left(  \Vert f_{0}-f_{1}\Vert_{L_{1}}+\Vert g_{0}-g_{1}\Vert
_{L_{1}}\right)  .
\]
Then, for any $\xi_{0},\overline{\xi}_{0}\in\Xi$, we have
\begin{equation}
\left\vert \mathbb{E}\left[  G(\xi_{k}(\theta))|\xi_{0}(\theta)=\xi
_{0}\right]  -\mathbb{E}\left[  G(\xi_{k}(\theta))|\xi_{0}(\theta
)=\overline{\xi}_{0}\right]  \right\vert \leq K_{G}r^{k},
\end{equation}
for any $G\in G_{Lip}.$ Moreover, we claim that $G_{Lip}$ is dense in
$C_{b}(\Xi).$
\end{lemma}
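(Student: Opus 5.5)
The plan has two parts: a coupling estimate for the exponential bound, and an approximation argument for density.

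\emph{Exponential bound.} Fix $\xi_0=(x_0,y_0,f_0,g_0)$ and $\overline{\xi}_0=(\bar x_0,\bar y_0,\bar f_0,\bar g_0)$. I would run the two chains as follows.

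First, couple the observable components $(X_k,Y_k)$. Assumption \ref{ass:S2} gives geometric convergence in total variation of the $k$-step law of $(X_k,Y_k)$, uniformly in the starting point. So there is a coupling under which the two $(X,Y)$ paths coincide after a random time $\tau$ with $\mathbb{P}(\tau>m)\le C_{(X,Y)}r_{(X,Y)}^{m}$. From $\tau$ onward both filters are driven by the same observation sequence $Y_{\tau+1},\dots,Y_k$; they differ only through their states at time $\tau$.

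Next, split the difference. I would write
\[
\bigl|\mathbb{E}[G(\xi_k)\mid\xi_0]-\mathbb{E}[G(\xi_k)\mid\overline{\xi}_0]\bigr|\le 2\|G\|_\infty\,\mathbb{P}(\tau>k/2)+C_G\,\mathbb{E}\bigl[(\|f_k-\bar f_k\|_{L_1}+\|g_k-\bar g_k\|_{L_1})\mathbf 1_{\tau\le k/2}\bigr],
\]
using that $G\in G_{Lip}$ and that the $(x,y)$ arguments agree on $\{\tau\le k\}$. The first term is geometric in $k$.

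For the filter component of the second term, I would use the forgetting property. The lower and upper bounds in Assumption \ref{ass:S4}, $\epsilon_{(X,Y)}v\le p_{(X,Y)}\le\epsilon_{(X,Y)}^{-1}v$, give a Birkhoff contraction coefficient $\tau(R_k)\le(1-\epsilon^2)/(1+\epsilon^2)$ for each operator $\Gamma[y_{j+1},y_j,\cdot]$, exactly as in the proof of Theorem \ref{thm:uni converge of filter}, which invokes \cite{gland2004}. Hence, along a common observation path, $\|f_k-\bar f_k\|_{L_1}\le C\rho^{k-\tau}$ with $\rho<1$.

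The derivative component is the main obstacle. I would write $\Gamma'$ as a sum of two pieces. The first is linear in $g$, namely the map $g\mapsto$ the normalized $R$ applied to $g$, minus a projection. Since $g$ has zero mass, this linearized normalized operator contracts signed zero-mass measures in total variation at the same Dobrushin rate. The second piece is a forcing term that depends on $f$ through $\partial_\theta p_{(X,Y)}$. The second condition of Assumption \ref{ass:S4} bounds this forcing in $L_1$ by $K_{(X,Y)}$ times the mass of $f$. Lipschitz dependence of the forcing on $f$ follows from the same bound together with Lemma \ref{lemma: likelihood bound}-type lower bounds on the normalizer. Unrolling the recursion then gives
\[
\|g_k-\bar g_k\|_{L_1}\le \rho^{k-\tau}\|g_\tau-\bar g_\tau\|_{L_1}+C\sum_{j=\tau}^{k}\rho^{k-j}\|f_j-\bar f_j\|_{L_1}\le C'(k-\tau+1)\rho^{k-\tau},
\]
which is still geometric after enlarging $\rho$ slightly. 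Uniform $L_1$ boundedness of elements of $\dot{\mathcal S}$ (Assumption \ref{ass:S5}) controls $\|g_\tau-\bar g_\tau\|_{L_1}$. On $\{\tau\le k/2\}$ everything is at most $C\tilde\rho^{k/2}$, so taking $r=\max(r_{(X,Y)},\tilde\rho)^{1/2}$ gives the bound with $K_G$ depending on $\|G\|_\infty$ and $C_G$.

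\emph{Density.} $\mathcal S$ and $\dot{\mathcal S}$ are closures of sets that are uniformly bounded and, by Assumption \ref{ass:S5} and Assumption \ref{assump: bound derivative}, uniformly Lipschitz in $x$. By Arzel\`a--Ascoli they are compact in $\|\cdot\|_\infty$, hence also compact in $L_1$, so $\Xi$ is a compact metric space under the metric $|x-x'|+|y-y'|+\|f-f'\|_{L_1}+\|g-g'\|_{L_1}$. On a compact metric space, Lipschitz functions are uniformly dense in $C_b$ by Stone--Weierstrass, since they form a point-separating algebra containing constants, or alternatively by inf-convolution (McShane) approximation. Every function Lipschitz in this metric lies in $G_{Lip}$, which proves density. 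One point to check is that the topology on $\Xi$ used for $C_b(\Xi)$ agrees with the $L_1$ topology. On these compact sets the sup-norm and $L_1$ topologies coincide, because both are Hausdorff and the identity map from sup-norm to $L_1$ is continuous on a compact set.
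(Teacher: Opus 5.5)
Your plan uses the same two ingredients as the paper: forgetting of $(f,g)$ along a common observation path, and geometric mixing of $(X_k,Y_k)$ from Assumption \ref{ass:S2}, balanced at $k/2$. The paper implements this analytically rather than by coupling. It writes the difference as $D_1+D_2$, splits $D_2$ at $m=\lfloor k/2\rfloor$ by restarting the filter from a fixed $(f',g')$, and then applies Chapman--Kolmogorov. Your maximal coupling is valid on these compact Polish spaces, with $\mathbb{P}(\tau>m)$ bounded by the $m$-step total-variation distance, and it packages the same estimate more cleanly. Your density argument is the paper's, and you correctly check closure under products, which the paper's Stone--Weierstrass verification omits.

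The step that fails as written is your treatment of the derivative. The linearized map $A(f)\colon g\mapsto D\Phi_f[g]$ does not contract zero-mass signed measures in total variation in one step. Set $w(\tilde x_0)=p_Y(y_1\mid\tilde x_0,y_0;\theta)$, $\bar K(\tilde x_0,dx)=p_{(X,Y)}(x,y_1\mid\tilde x_0,y_0;\theta)\,dx/w(\tilde x_0)$, $\tilde g=gw/f(w)$ and $\hat f=fw/f(w)$. Then $A(f)g=(\tilde g-\tilde g(\mathcal X)\hat f)\bar K$. The Bayes reweighting can inflate $\|g\|_{TV}$ by up to $\sup w/\inf w$, which is of order $\epsilon_{(X,Y)}^{-2}$ under Assumption \ref{ass:S4}, before $\bar K$ contracts by $\delta(\bar K)\le 1-\epsilon_{(X,Y)}^2$. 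Concretely, take a two-point space with $f$ concentrated on the low-likelihood state and $g\propto(1,-1)$. The one-step factor then tends to $\delta(\bar K)\sup w/\inf w$, which can be of order $\epsilon_{(X,Y)}^{-2}$ while the kernel still satisfies Assumption \ref{ass:S4}. Intuitively, a surprising observation amplifies perturbations. So your inequality $\|g_k-\bar g_k\|_{L_1}\le\rho^{k-\tau}\|g_\tau-\bar g_\tau\|_{L_1}+\cdots$ cannot be obtained by iterating one-step contractions.

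What you need is a multi-step estimate. Along a common observation path, $A(f_{k-1})\cdots A(f_j)$ is the derivative at $f_j$ of the $n$-step normalized filter map $\Phi^{(n)}\mu=R^{(n)}\mu/R^{(n)}\mu(\mathcal X)$, where $n=k-j$ and $R^{(n)}$ is the unnormalized $n$-step Bayes operator. For zero-mass $\nu$, write $\nu=a(\mu_1-\mu_2)$ with $a=\|\nu\|_{TV}/2$ and probability measures $\mu_i$. Then
\[
D\Phi^{(n)}_{f}[\nu]=\frac{a}{R^{(n)}f(\mathcal X)}\Bigl(R^{(n)}\mu_1(\mathcal X)\bigl(\Phi^{(n)}\mu_1-\Phi^{(n)}f\bigr)-R^{(n)}\mu_2(\mathcal X)\bigl(\Phi^{(n)}\mu_2-\Phi^{(n)}f\bigr)\Bigr).
\]
The likelihood ratios $R^{(n)}\mu_i(\mathcal X)/R^{(n)}f(\mathcal X)$ are at most $\epsilon_{(X,Y)}^{-2}$. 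The Hilbert-metric argument of \cite{gland2004} gives $\|\Phi^{(n)}\mu_i-\Phi^{(n)}f\|_{TV}\le C\rho^{n}$. Hence $\|A(f_{k-1})\cdots A(f_j)\nu\|_{TV}\le C\rho^{k-j}\|\nu\|_{TV}$, which is exactly what the paper imports from Theorems 3.1--3.2 of \cite{TadicDoucet2005}. With this constant in place of your one-step factor, your unrolled bound holds and the rest of your argument goes through. The forcing in that bound must also include the term $(A(f_j)-A(\bar f_j))\bar g_j$. One further caveat: Lipschitz dependence of the forcing on $f$ requires controlling $\partial_\theta p_{(X,Y)}$ on signed differences $f-\bar f$, for instance $\sup_{x_0}\int|\partial_\theta p_{(X,Y)}(x,y\mid x_0,y_0;\theta)|\,dx<\infty$. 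The second display of Assumption \ref{ass:S4} is stated only for $f_0\in\mathcal S$ and does not give this by itself.
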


\begin{proof}
See \ref{sec:aux lemma1}.
\end{proof}

\begin{lemma}
\label{lemma:clt for s}Under the same conditions in Theorem
\ref{thm:mimle-long-span}, there exists a matrix $I_{S},$ such that
\[
n^{-1/2}\left(  \sum_{k=0}^{n-1}S(\xi_{k}^{+};\theta_{0})\right)
\overset{d}{\longrightarrow}N(0,I_{S}),
\]
where $\xi_{k}^{+}(\theta)$ is ergodic Markov as shown in Lemma
\ref{lem:filter-ergodic}. Let $H_{n}(\theta)=\partial^{2}\ell_{n}%
(\theta)/\partial\theta\partial\theta^{\intercal},$ there also exists a
matrix-valued function $H_{0}(\theta)$ such that $n^{-1}H_{n}(\theta
)\overset{p}{\longrightarrow}H_{0}(\theta)$ uniformly holds for $\theta
\in\Theta$, and $H_{0}(\theta)$ is Lipschitz in $\theta$, i.e., there exists a
constant $C_{H}>0$ such that $\left\Vert H_{0}(\theta)-H_{0}(\theta^{\prime
})\right\Vert \leq C_{H}\left\Vert \theta-\theta^{\prime}\right\Vert .$ \ 
\end{lemma}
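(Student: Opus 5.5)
The score CLT reduces to a CLT for a martingale-difference array plus a geometrically small remainder. I would start by identifying the one-step score $S(\xi_k^+(\theta_0);\theta_0)=\nabla_\theta\log p(Y_{k+1}|\mathbf{Y}_k;\theta_0)$, using the representation of the score as a one-step function of the augmented state $\xi_k^+$, and then show it is a martingale difference with respect to $\mathcal{F}_k=\sigma(\mathbf{Y}_k)$ under $\theta_0$. Indeed, $\mathbb{E}[S(\xi_k^+;\theta_0)\mid\mathbf{Y}_k]=\int\nabla_\theta p(y|\mathbf{Y}_k;\theta)|_{\theta_0}\,dy=\nabla_\theta\int p(y|\mathbf{Y}_k;\theta)\,dy=0$. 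Interchanging derivative and integral is justified by the uniform bounds on derivatives in Assumption \ref{ass:S5} and the compactness of $\mathcal{Y}$ (Assumption \ref{ass:S6}). Boundedness of $S$ follows from the numerator bounds in Assumption \ref{ass:S5}/\ref{ass:S4} together with the lower bound $\epsilon$ on the likelihood from Lemma \ref{lemma: likelihood bound}, applied to the exact filter.

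I would then invoke a martingale CLT, for instance Corollary 3.1 of Hall and Heyde, with the conditional Lindeberg condition trivial by boundedness. The remaining condition is $n^{-1}\sum_{k}\mathbb{E}[SS^\intercal\mid\mathcal{F}_k]\to I_S$ in probability. The conditional variance is $G(\xi_k(\theta_0))$ for $G(\xi)=\mathbb{E}[I^M(Y^+,\xi;\theta_0)\mid\xi]$, which by (\ref{eq:33}) depends on $\xi$ only through the one-step kernel. The ergodic law of large numbers from Step 2.1 of Lemma \ref{lem:filter-ergodic} then gives convergence to $I_S:=\mathbb{E}[I^M(\xi^+(\theta_0);\theta_0)]=I^M$. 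To apply it I need to check that $G$ is continuous, or better $\mathcal{S}\times\dot{\mathcal{S}}$-Lipschitz, which follows since $S$ is a ratio of integrals linear in $(f_0,g_0)$ with denominator bounded below. A subtlety is that the paper's filtration conditions only on $\mathbf{Y}_k$, whereas $\xi_k$ contains $X_k$. I would therefore use the filtration $\sigma(\mathbf{X}_k,\mathbf{Y}_k)$ and note that the martingale property still holds because the score's conditional mean given $\mathbf{Y}_k$ vanishes. This is the main obstacle: the conditional mean given $(X_k,\mathbf{Y}_k)$ is not zero in general. To handle it, I would apply the Poisson-equation decomposition (\ref{eq:S.47})--(\ref{eq:S.48}) to $G=S$ on the augmented chain $\xi_k^+$. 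Since $\mathbb{E}_\pi S=0$, this writes $\sum S=\sum(\text{MDS})+O(1)$, with the bounded Poisson solution supplied by Lemma \ref{lemma:lip condi xi_k}. The CLT for the MDS part then gives $I_S=\mathbb{E}[\,V_0^2-(TV_0)^2\,]$, which reduces to $I^M$ by the martingale structure above.

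For the Hessian, I would write $n^{-1}H_n(\theta)=n^{-1}\sum_k h(\xi_k^{++}(\theta);\theta)$, where $\xi^{++}$ further augments with the second filter derivative, propagated by the differentiated operator $\Gamma''$. This augmentation is legitimate because Assumption \ref{ass:S5} bounds second filter derivatives in $L^1$ and makes them Lipschitz in $x$. The pointwise limit $H_0(\theta)=\mathbb{E}_\pi h$ follows by the same ergodic argument. To upgrade to uniform convergence on compact $\Theta$, I would show stochastic equicontinuity: the third-order derivative bounds in Assumption \ref{ass:S5} give $\|n^{-1}H_n(\theta)-n^{-1}H_n(\theta')\|\le C\|\theta-\theta'\|$ uniformly in $n$. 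A finite $\delta$-net then yields uniform convergence, and passing the same Lipschitz bound to the limit gives $H_0$ Lipschitz with $C_H=C$.
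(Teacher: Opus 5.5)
Your final argument is essentially the paper's proof. The score-CLT part uses a Poisson equation for the $\mathcal{S}\times\dot{\mathcal{S}}$-Lipschitz score on the augmented chain $\xi_k^+$, with a bounded solution from the geometric contraction, followed by a martingale CLT for $V_S(\xi_k^+)-TV_S(\xi_{k-1}^+)$ with the ergodic law of large numbers for its quadratic variation; the Hessian part uses augmentation by the second filter derivative, a $\theta$-Lipschitz bound from the third-order bounds in Assumption~\ref{ass:S5}, and a finite-net argument. Your middle paragraph is partly wrong and also unnecessary: the score is not a martingale difference for $\sigma(\mathbf{X}_k,\mathbf{Y}_k)$, as you yourself then concede, and $\mathbb{E}[I^M(Y^+,\xi;\theta_0)\mid\xi]$ integrates against $p_Y(y\mid X_k,Y_k;\theta_0)$, not $p(y\mid\mathbf{Y}_k;\theta_0)$; but since $S(y,\xi_0;\theta)$ does not depend on $x_0$, with the filter correctly initialized (as the paper's Bartlett-identity step already presumes) the score is a bounded martingale difference for the observation filtration whose conditional variance $\int I^M(y,\xi_k;\theta_0)\,p(y\mid\mathbf{Y}_k;\theta_0)\,dy$ is a bounded Lipschitz function of $\xi_k$, so your first route closes directly and gives $I_S=I^M$ at once.
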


\begin{proof}
See \ref{sec:aux lemma2}.
\end{proof}

\noindent\textit{Proof of Lemma \ref{lemma:mimle 1}. }We begin by expressing the
log-likelihood as $\ell^{(n)}(\theta)=\sum_{k=0}^{n-1}\ell(\xi_{k}^{+}%
(\theta);\theta),$ with
\[
\ell(\xi_{0}^{+};\theta)=\log\int_{\mathcal{X}}p_{Y}(y_{1}|y_{0},x_{0}%
;\theta)f_{0}(x_{0})dx_{0},
\]
for any $\xi_{0}^{+}=(y_{1},y_{0},x_{0},f_{0}(\cdot),g_{0}(\cdot
))\in\mathcal{Y}\times\mathcal{Y}\times\mathcal{X}\times\mathcal{X}%
\times\mathcal{S}\times\mathcal{\dot{S}}$, where
\[
\xi_{k}^{+}(\theta)=(Y_{k+1},Y_{k},X_{k},p(\cdot|\mathbf{Y}_{k};\theta
),\partial p(\cdot|\mathbf{Y}_{k};\theta)/\partial\theta)
\]
is ergodic Markov as we proved in Lemma \ref{lem:filter-ergodic}. By
Assumption \ref{ass:S5}, we have that $\ell(\xi_{0}^{+};\theta)$ is a
continuous and bounded function of $\xi_{0}^{+}$. Thus, by the property of
ergodic process, we obtain
\begin{equation}
n^{-1}\ell^{(n)}(\theta)\overset{a.s.}{\longrightarrow}\ell(\theta),
\label{eq:likelihood convergence}%
\end{equation}
with $\ell(\theta):=\mathbb{E}[\ell(\xi^{+}(\theta);\theta)],$ where $\xi
^{+}(\theta)$ follows the stationary distribution of $\xi_{i}^{+}(\theta)$,
and thus $n^{-1}\ell^{(n)}(\theta)\overset{p}{\longrightarrow}\ell(\theta).$

We now prove (i). Since we already have the pointwise convergence $n^{-1}%
\ell^{(n)}(\theta)\overset{p}{\longrightarrow}\ell(\theta)$ for any $\theta
\in\Theta$ and the parameter space $\Theta$ is compact, it suffices to show
the Lipschitz property of $n^{-1}\ell^{(n)}(\theta)$ in $\theta$, i.e., there
exists $C_{\ell}>0$ such that $\left\vert n^{-1}\ell^{(n)}(\theta)-n^{-1}%
\ell^{(n)}(\theta^{\prime})\right\vert \leq C_{\ell}\left\Vert \theta
-\theta^{\prime}\right\Vert .$ Then, the uniform convergence follows from the
standard finite covering argument in analysis, i.e., the compactness of
$\Theta$ guarantees that $\Theta$ can always be covered by a finite number of
balls with arbitrarily small radius and thus the uniform convergence can be
proven by combining the results in each ball, using the Lipschitz property of
$n^{-1}\ell^{(n)}(\theta)$. We now prove the Lipschitz property of $n^{-1}%
\ell^{(n)}(\theta)$. By calculations we have $\ell(\xi_{k}^{+}(\theta
);\theta)=\log p(Y_{k+1}|\mathbf{Y}_{k};\theta)$. By Assumption \ref{ass:S5},
we obtain that there exists $C_{\ell}>0$ such that $\left\vert \log
p(Y_{k+1}|\mathbf{Y}_{k};\theta)-\log p(Y_{k+1}|\mathbf{Y}_{k};\theta^{\prime
})\right\vert \leq C_{\ell}\left\Vert \theta-\theta^{\prime}\right\Vert $
holds almost surely for any $k>0$. Thus, we obtain
\begin{equation}
\left\vert n^{-1}\ell^{(n)}(\theta)-n^{-1}\ell^{(n)}(\theta^{\prime
})\right\vert \leq n^{-1}\cdot nC_{\ell}\left\Vert \theta-\theta^{\prime
}\right\Vert _{1}=C_{\ell}\left\Vert \theta-\theta^{\prime}\right\Vert .
\label{eq:pf lip l_n}%
\end{equation}

To prove (ii), i.e., $\ell(\theta)<\ell(\theta_{0})$ for any $\theta\in\Theta$
and $\theta\neq\theta_{0}$, we use the standard argument based on
Kullback-Leibler divergence. To start, we choose the initial state $\xi
_{0}^{+}(\theta)$ to follow the stationary distribution of $\xi_{i}^{+}%
(\theta)$. Then, we have $\ell(\theta)=\mathbb{E}[\ell(\xi^{+}(\theta
);\theta)]=\mathbb{E}[\ell(\xi_{k}^{+}(\theta);\theta)],$ for any $k\geq0.$ By
the tower property of conditional expectation, since $p(Y_{k+1}|\mathbf{Y}%
_{k};\theta_{0})$ is the conditional density of $Y_{k+1}$ given $\mathbf{Y}%
_{k}$, we obtain
\[
\ell(\theta)-\ell(\theta_{0})=\mathbb{E}\left[  \int_{\mathcal{Y}}\left(
\log\frac{p(y_{k+1}|\mathbf{Y}_{k};\theta)}{p(y_{k+1}|\mathbf{Y}_{k}%
;\theta_{0})}\right)  p(y_{k+1}|\mathbf{Y}_{k};\theta_{0})dy_{k+1}\right]  .
\]
Here, the right-hand side is the negative KL divergence between $p(y_{k+1}%
|\mathbf{Y}_{k};\theta)$ and $p(y_{k+1}|\mathbf{Y}_{k};\theta_{0})$ and under
Assumption \ref{ass:S5}, by Lemma 5.35 in Section 5.5 of
\cite{vanderVaart1998}, it is strictly negative for any $\theta\in\Theta$ and
$\theta\neq\theta_{0}$. Thus, we conclude that $\ell(\theta)<\ell(\theta_{0})$
for any $\theta\in\Theta$ and $\theta\neq\theta_{0}$.

To prove (iii), we will use the additive form of the score function, i.e.,
$S_{n}(\theta_{0})=\sum_{k=0}^{n-1}S(\xi_{k}^{+};\theta_{0}),$ where $\xi
_{k}^{+}(\theta)$ is ergodic Markov as we proved in Lemma
\ref{lem:filter-ergodic}. The detailed verification of the central limit
theorem for this additive score process is deferred to Lemma
\ref{lemma:clt for s}. Second, Lemma \ref{lemma:clt for s} $n^{-1}H_{n}%
(\theta)\overset{p}{\longrightarrow}H_{0}(\theta)$ uniformly holds for
$\theta\in\Theta$ and the Lipschitz property of $H_{0}(\theta)$.

Last, we use the Bartlett identities (see, e.g.,
\cite{bartlett1953a,bartlett1953b}), to prove $I_{S}=-I_{H}=I^{M}$. We first
prove $-H_{0}(\theta_{0})=I^{M},$ and then prove $I_{S}=-H_{0}(\theta_{0}).$
Since $H_{0}(\theta_{0})=\mathbb{E}[\mathbb{H}(\tilde{\xi}^{+}(\theta
_{0});\theta_{0})],$ where $\tilde{\xi}^{+}(\theta_{0})$ follows the
stationary distribution of $\tilde{\xi}_{k}^{+}(\theta_{0})$ and $\mathbb{H}$
is defined in (\ref{eq:S.62}), we obtain $H_{0}(\theta_{0})=\lim
_{k\rightarrow\infty}\mathbb{E}[\mathbb{H}(\tilde{\xi}_{k}^{+}(\theta
_{0});\theta_{0})].$ As discussed after (\ref{eq:S.63}), we have
$\mathbb{H}(\tilde{\xi}_{k}^{+}(\theta_{0});\theta_{0})=\frac{\partial^{2}%
}{\partial\theta\partial\theta^{\intercal}}\log p(Y_{k+1}|\mathbf{Y}%
_{k};\theta_{0}).$ Next, by the conditional version of the Bartlett
identities, we obtain
\begin{equation}
\mathbb{E}\left[  -\frac{\partial^{2}}{\partial\theta\partial\theta
^{\intercal}}\log p(Y_{k+1}|\mathbf{Y}_{k};\theta_{0})\right]  =\mathbb{E}%
\left[  \frac{\partial}{\partial\theta}\log p(Y_{k+1}|\mathbf{Y}_{k}%
;\theta_{0})\frac{\partial}{\partial\theta^{\intercal}}\log p(Y_{k+1}%
|\mathbf{Y}_{k};\theta_{0})\right]  . \label{eq:S.65}%
\end{equation}
As discussed after (\ref{eq:S.60}), we have
\begin{equation}
S(\xi_{k}^{+}(\theta_{0});\theta_{0})=\frac{\partial}{\partial\theta}\log
p(Y_{k+1}|\mathbf{Y}_{k};\theta_{0}), \label{eq:fisher information}%
\end{equation}
and using (\ref{eq:S.65}) we obtain $-H_{0}(\theta_{0})=\lim_{k\rightarrow
\infty}\mathbb{E}[S(\xi_{k}^{+}(\theta_{0});\theta_{0})S^{\intercal}(\xi
_{k}^{+}(\theta_{0});\theta_{0})].$ By the same route as that for proving
(\ref{eq:S.45}), we have $\lim_{k\rightarrow\infty}\mathbb{E}[G(\xi_{k}%
^{+}(\theta))]=\mathbb{E}[G(\xi^{+}(\theta))]$ for any $G\in C_{b}%
(\mathcal{Y}\times\Xi)$. Then, by Assumption \ref{ass:S5}, we have that
$S(\xi_{k}^{+}(\theta_{0});\theta_{0})S^{\intercal}(\xi_{k}^{+}(\theta
_{0});\theta_{0})$ is a continuous bounded function of $\xi_{k}^{+}$. So, we
obtain
\[
-H_{0}(\theta_{0})=\lim_{k\rightarrow\infty}\mathbb{E}[S(\xi_{k}^{+}%
(\theta_{0});\theta_{0})S^{\intercal}(\xi_{k}^{+}(\theta_{0});\theta
_{0})]=\mathbb{E}[S(\xi^{+}(\theta_{0});\theta_{0})S^{\intercal}(\xi
^{+}(\theta_{0});\theta_{0})]=I^{M}.
\]
We now prove $I_{S}=-H_{0}(\theta_{0}).$ Since $n^{-1/2}S_{n}(\theta
_{0})\overset{d}{\longrightarrow}N(0,I_{S})$ as proved previously, we obtain
by the definition of weak convergence that $I_{S}=\lim_{n\rightarrow\infty
}n^{-1}\mathbb{E}\left[  S_{n}(\theta_{0})S_{n}^{\intercal}(\theta
_{0})\right]  .$ By the definition of score $S_{n}(\theta_{0})$ and Hessian
$H_{n}(\theta_{0})$, and the conditional version of the Bartlett identities,
we have $\mathbb{E}\left[  S_{n}(\theta_{0})S_{n}^{\intercal}(\theta
_{0})\right]  =-\mathbb{E}[H_{n}(\theta_{0})],$ and thus we obtain
\[
I_{S}=-\lim_{n\rightarrow\infty}n^{-1}\mathbb{E}[H_{n}(\theta_{0}%
)]=-\lim_{n\rightarrow\infty}\frac{1}{n}\sum_{k=0}^{n-1}\mathbb{E}\left[
\mathbb{H}(\tilde{\xi}_{k}^{+}(\theta_{0});\theta_{0})\right]  .
\]
Then, as discussed prior to (\ref{eq:S.63}), we have $\lim_{k\rightarrow
\infty}\mathbb{E}\left[  \mathbb{H}(\tilde{\xi}_{k}^{+}(\theta_{0});\theta
_{0})\right]  =\mathbb{E}\left[  \mathbb{H}(\tilde{\xi}^{+}(\theta_{0}%
);\theta_{0})\right]  =H_{0}(\theta_{0}).$ Thus, the average of the sequence
$\left\{  \mathbb{E}\left[  \mathbb{H}(\tilde{\xi}_{k}^{+}(\theta_{0}%
);\theta_{0})\right]  \right\}  _{k=0}^{+\infty}$ also converges to
$H_{0}(\theta_{0})$, leading to $I_{S}=-H_{0}(\theta_{0}).$

\subsection{Proof of auxiliary Lemma \ref{lemma:lip condi xi_k}%
\label{sec:aux lemma1}}

\begin{proof}
To \textit{prove the inequality (\ref{eq:S.44}),} for ease of exposition, we
define the following operators iteratively based on $\Gamma$ and
$\Gamma^{\prime}$. Let $\Gamma_{1}=\Gamma,\quad\Gamma_{1}^{\prime}%
=\Gamma^{\prime}.$ Then we define
\begin{align}
\Gamma_{k+1}[y_{k+1:1},y_{0},f_{0}](x)  &  =\frac{\int_{\mathcal{X}}%
p_{(X,Y)}(x,y_{k+1}|\tilde{x}_{k},y_{k};\theta)\Gamma_{k}[y_{k:1},y_{0}%
,f_{0}](\tilde{x}_{k})d\tilde{x}_{k}}{\int_{\mathcal{X}}p_{Y}(y_{k+1}%
|\tilde{x}_{k},y_{k};\theta)\Gamma_{k}[y_{k:1},y_{0},f_{0}](\tilde{x}%
_{k})d\tilde{x}_{k}},\label{eq:S.51}\\
\Gamma_{k+1}^{\prime}[y_{k+1:1},y_{0},f_{0},g_{0}](x)  &  =A_{1}+A_{2}%
-\Gamma_{k+1}[y_{k+1:1},y_{0},f_{0}](x)\left(  B_{1}+B_{2}\right)
\label{eq:S.52}%
\end{align}
with $y_{k:1}=(y_{1},\ldots,y_{k})$ and $\left(  A_{1},A_{2},B_{1}%
,B_{2}\right)  $ defined as follows:%
\begin{align*}
A_{1}  &  =\frac{\int_{\mathcal{X}}\frac{\partial p_{(X,Y)}}{\partial\theta
}(x,y_{k+1}|\tilde{x}_{k},y_{k};\theta)\Gamma_{k}[y_{k:1},y_{0},f_{0}%
](\tilde{x}_{k})d\tilde{x}_{k},}{\int_{\mathcal{X}}p_{Y}(y_{k+1}|\tilde{x}%
_{k},y_{k};\theta)\Gamma_{k}[y_{k:1},y_{0},f_{0}](\tilde{x}_{k})d\tilde{x}%
_{k}},\text{ }\\
A_{2}  &  =\frac{\int_{\mathcal{X}}p_{(X,Y)}(x,y_{k+1}|\tilde{x}_{k}%
,y_{k};\theta)\Gamma_{k}^{\prime}[y_{k:1},y_{0},f_{0},g_{0}](\tilde{x}%
_{k})d\tilde{x}_{k}}{\int_{\mathcal{X}}p_{Y}(y_{k+1}|\tilde{x}_{k}%
,y_{k};\theta)\Gamma_{k}[y_{k:1},y_{0},f_{0}](\tilde{x}_{k})d\tilde{x}_{k}},\\
B_{1}  &  =\frac{\int_{\mathcal{X}}\frac{\partial p_{Y}}{\partial\theta
}(y_{k+1}|\tilde{x}_{k},y_{k};\theta)\Gamma_{k}[y_{k:1},y_{0},f_{0}](\tilde
{x}_{k})d\tilde{x}_{k}}{\int_{\mathcal{X}}p_{Y}(y_{k+1}|\tilde{x}_{k}%
,y_{k};\theta)\Gamma_{k}[y_{k:1},y_{0},f_{0}](\tilde{x}_{k})d\tilde{x}_{k}%
},\text{ }\\
B_{2}  &  =\frac{\int_{\mathcal{X}}p_{Y}(y_{k+1}|\tilde{x}_{k},y_{k}%
;\theta)\Gamma_{k}^{\prime}[y_{k:1},y_{0},f_{0},g_{0}](\tilde{x}_{k}%
)d\tilde{x}_{k}}{\int_{\mathcal{X}}p_{Y}(y_{k+1}|\tilde{x}_{k},y_{k}%
;\theta)\Gamma_{k}[y_{k:1},y_{0},f_{0}](\tilde{x}_{k})d\tilde{x}_{k}}.
\end{align*}
Then, by iteratively applying the updating formulae (\ref{eq:S.36}) and
(\ref{eq:S.38}), we obtain
\begin{align*}
p(x_{k}|\mathbf{Y}_{k};\theta)  &  =\Gamma_{k}[Y_{k:1},Y_{0},p_{0}(\cdot
|Y_{0};\theta)](x_{k}),\\
\frac{\partial p(x_{k}|\mathbf{Y}_{k};\theta)}{\partial\theta}  &  =\Gamma
_{k}^{\prime}\left[  Y_{k:1},Y_{0},p_{0}(\cdot|Y_{0};\theta),\frac{\partial
p_{0}(\cdot|Y_{0};\theta)}{\partial\theta}\right]  (x_{k}).
\end{align*}
Thus, by the Markov property of $(X_{k},Y_{k})$ and the same argument as in
Step 1, we have, for any $\xi_{0}=(x_{0},y_{0},f_{0},g_{0})\in\Xi,$%
\begin{align*}
&  \mathbb{E}[G(\xi_{k}(\theta))|\xi_{0}(\theta)=\xi_{0}]\\
&  =\mathbb{E}\left[  G\left(  X_{k},Y_{k},\Gamma_{k}[Y_{k:1},Y_{0}%
,p_{0}(\cdot|Y_{0};\theta)],\Gamma_{k}^{\prime}\left[  Y_{k:1},Y_{0}%
,p_{0}(\cdot|Y_{0};\theta),\frac{\partial p_{0}(\cdot|Y_{0};\theta)}%
{\partial\theta}\right]  \right)  |\xi_{0}(\theta)=\xi_{0}\right] \\
&  =\mathbb{E}\left[  G\left(  X_{k},Y_{k},\Gamma_{k}[Y_{k:1},y_{0}%
,f_{0}(\cdot)],\Gamma_{k}^{\prime}[Y_{k:1},y_{0},f_{0}(\cdot),g_{0}%
(\cdot)]\right)  |(X_{0},Y_{0})=(x_{0},y_{0})\right]  .
\end{align*}
To prove (\ref{eq:S.44}), we rewrite the above expression using the transition
density:
\begin{align*}
&  \mathbb{E}[G(\xi_{k}(\theta))|\xi_{0}(\theta)=\xi_{0}]\\
&  =\int_{\mathcal{X}^{k}\times\mathcal{Y}^{k}}G\left(  x_{k},y_{k},\Gamma
_{k}[Y_{k:1},y_{0},f_{0}(\cdot)],\Gamma_{k}^{\prime}[Y_{k:1},y_{0},f_{0}%
(\cdot),g_{0}(\cdot)]\right) \\
&  \quad\times\prod_{j=1}^{k}p_{(X,Y)}(x_{j},y_{j}|x_{j-1},y_{j-1}%
;\theta)\,dx_{1:k}\,dy_{1:k}.
\end{align*}
Let $q(x_{j+1:k},y_{j+1:k}|x_{j},y_{j};\theta)=\prod_{j=1}^{k}p_{(X,Y)}%
(x_{j},y_{j}|x_{j-1},y_{j-1};\theta)$, then, for any two initial values
$\xi_{0}=(x_{0},y_{0},f_{0},g_{0}),\quad\bar{\xi}_{0}=(\bar{x}_{0},\bar{y}%
_{0},\bar{f}_{0},\bar{g}_{0}),$ we decompose $\mathbb{E}[G(\xi_{k}%
(\theta))|\xi_{0}(\theta)=\xi_{0}]-\mathbb{E}[G(\xi_{k}(\theta))|\xi
_{0}(\theta)=\bar{\xi}_{0}]=D_{1}+D_{2},$ where%
\begin{align}
D_{1}  &  =\int_{\mathcal{X}^{k}\times\mathcal{Y}^{k}}\left[  G(x_{k}%
,y_{k},\Gamma_{k}[Y_{k:1},y_{0},f_{0}],\Gamma_{k}^{\prime}[Y_{k:1},y_{0}%
,f_{0},g_{0}])\right. \label{eq:S.53a}\\
&  \quad\left.  -G(x_{k},y_{k},\Gamma_{k}[Y_{k:1},\bar{y}_{0},\bar{f}%
_{0}],\Gamma_{k}^{\prime}[Y_{k:1},\bar{y}_{0},\bar{f}_{0},\bar{g}%
_{0}])\right]  q(x_{2:k},y_{2:k}|x_{1},y_{1};\theta)\,dx_{1:k}\,dy_{1:k}%
,\nonumber\\
D_{2}  &  =\int_{\mathcal{X}^{k}\times\mathcal{Y}^{k}}G(x_{k},y_{k},\Gamma
_{k}[Y_{k:1},\bar{y}_{0},\bar{f}_{0}],\Gamma_{k}^{\prime}[Y_{k:1},\bar{y}%
_{0},\bar{f}_{0},\bar{g}_{0}])q(x_{2:k},y_{2:k}|x_{1},y_{1};\theta)\nonumber\\
&  \times\left(  p_{(X,Y)}(x_{1},y_{1}|x_{0},y_{0};\theta)-p_{(X,Y)}%
(x_{1},y_{1}|\bar{x}_{0},\bar{y}_{0};\theta)\right)  dx_{1:k}\,dy_{1:k}
\label{eq:S.53b}%
\end{align}
Thus, to prove (\ref{eq:S.44}), it suffices to show that there exist constants
$C_{j}>0$ and $r\in(0,1)$ such that $|D_{j}|\leq C_{j}r_{j}^{k},\quad j=1,2.$
As a preparation, we show that there exist constants $C_{\Gamma}%
,C_{\Gamma^{\prime}}>0$ and $r_{\Gamma},r_{\Gamma^{\prime}}\in(0,1)$ such
that, for any $y_{k:1}\in\mathcal{Y}^{k},\quad x_{0}\in\mathcal{X},\quad
f_{0}(\cdot)\in\mathcal{S},\quad g_{0}(\cdot)\in\mathcal{\dot{S}},$
\begin{align}
\Vert\Gamma_{k}[y_{k:1},y_{0},f_{0}]-\Gamma_{k}[y_{k:1},\bar{y}_{0},\bar
{f}_{0}]\Vert_{L_{1}}  &  \leq C_{\Gamma}(r_{\Gamma})^{k},\label{eq:S.54a}\\
\Vert\Gamma_{k}^{\prime}[y_{k:1},y_{0},f_{0},g_{0}]-\Gamma_{k}^{\prime
}[y_{k:1},\bar{y}_{0},\bar{f}_{0},\bar{g}_{0}]\Vert_{L_{1}}  &  \leq
C_{\Gamma^{\prime}}(r_{\Gamma^{\prime}})^{k}. \label{eq:S.54b}%
\end{align}
We will apply Theorems 3.1-3.2 in \cite{TadicDoucet2005} to prove these
bounds. The transition probability for $(X_{k},Y_{k})$ in our case is more
complex than that in the references, the techniques in \cite{TadicDoucet2005}
still work. So, for the sake of brevity, we omit the modifications of their
proof and directly state the results as follows. Under the regularity
assumptions, following Theorems 3.1-3.2 in \cite{TadicDoucet2005}, we have
\begin{equation}
\Vert\Gamma_{k}[y_{k:1},y_{0},f_{0}]-\Gamma_{k}[y_{k:1},\bar{y}_{0},\bar
{f}_{0}]\Vert_{L_{1}}\leq C_{\Gamma}^{(1)}r_{\Gamma}^{k}\Vert f_{0}-\bar
{f}_{0}\Vert_{L_{1}}\leq C_{\Gamma}r_{\Gamma}^{k} \label{eq:S.55a}%
\end{equation}
and
\begin{align}
&  \Vert\Gamma_{k}^{\prime}[y_{k:1},y_{0},f_{0},g_{0}]-\Gamma_{k}^{\prime
}[y_{k:1},\bar{y}_{0},\bar{f}_{0},\bar{g}_{0}]\Vert_{L_{1}}\nonumber\\
&  \leq C_{\Gamma^{\prime}}^{(1)}r_{\Gamma^{\prime}}^{k}\Vert f_{0}-\bar
{f}_{0}\Vert_{L_{1}}\Vert g_{0}\Vert_{L_{1}}+C_{\Gamma^{\prime}}%
^{(2)}r_{\Gamma^{\prime}}^{k}\Vert g_{0}-\bar{g}_{0}\Vert_{L_{1}}.
\label{eq:S.55b}%
\end{align}
Since $f_{0},\bar{f}_{0}\in S$, we have $\Vert f_{0}-\bar{f}_{0}\Vert_{L_{1}%
}\leq2.$ From (\ref{eq:S.46}), there exists a constant $C_{\mathcal{\dot{S}}%
},$ such that $\Vert g_{0}\Vert_{L_{1}},\Vert\bar{g}_{0}\Vert_{L_{1}}\leq
C_{\mathcal{\dot{S}}}$. Then, we can find $C_{\Gamma^{\prime}}>0$ and
$r_{\Gamma^{\prime}}\in(0,1)$ such that
\[
\Vert\Gamma_{k}^{\prime}[y_{k:1},y_{0},f_{0},g_{0}]-\Gamma_{k}^{\prime
}[y_{k:1},\bar{y}_{0},\bar{f}_{0},\bar{g}_{0}]\Vert_{L_{1}}\leq C_{\Gamma
^{\prime}}r_{\Gamma^{\prime}}^{k}.
\]
Now we estimate $D_{1}$ and $D_{2}$. For $D_{1}$ in (\ref{eq:S.53a}), by the
Lipschitz property of $G\in G_{Lip}$, we have
\begin{align*}
&  \left\vert G(x_{k},y_{k},\Gamma_{k}[y_{k:1},y_{0},f_{0}],\Gamma_{k}%
^{\prime}[y_{k:1},y_{0},f_{0},g_{0}])-G(x_{k},y_{k},\Gamma_{k}[y_{k:1},\bar
{y}_{0},\bar{f}_{0}],\Gamma_{k}^{\prime}[y_{k:1},\bar{y}_{0},\bar{f}_{0}%
,\bar{g}_{0}])\right\vert \\
&  \leq C_{G}\Vert\Gamma_{k}[y_{k:1},y_{0},f_{0}]-\Gamma_{k}[y_{k:1},\bar
{y}_{0},\bar{f}_{0}]\Vert_{L_{1}}+C_{G}\Vert\Gamma_{k}^{\prime}[y_{k:1}%
,y_{0},f_{0},g_{0}]-\Gamma_{k}^{\prime}[y_{k:1},\bar{y}_{0},\bar{f}_{0}%
,\bar{g}_{0}]\Vert_{L_{1}}.
\end{align*}
Next, by the iterative definitions of $\Gamma_{k}$ and $\Gamma_{k}^{\prime}$
and the inequalities (\ref{eq:S.54a})--(\ref{eq:S.54b}), we obtain
\begin{align*}
&  \left\Vert \Gamma_{k}[y_{k:1},y_{0},f_{0}]-\Gamma_{k}[y_{k:1},\bar{y}%
_{0},\bar{f}_{0}]\right\Vert _{L_{1}}\leq C_{\Gamma}(r_{\Gamma})^{i-1},\\
&  \left\Vert \Gamma_{k}^{\prime}[y_{k:1},y_{0},f_{0},g_{0}]-\Gamma
_{k}^{\prime}[y_{k:1},\bar{y}_{0},\bar{f}_{0},\bar{g}_{0}]\right\Vert _{L_{1}%
}\leq C_{\Gamma^{\prime}}(r_{\Gamma^{\prime}})^{i-1},
\end{align*}
where $f_{0}^{(1)}(x)=\Gamma_{1}[y_{1},y_{0},f_{0}](x),$ $\bar{f}_{0}%
^{(1)}(x)=\Gamma_{1}[y_{1},\bar{y}_{0},\bar{f}_{0}](x),$ $g_{0}^{(1)}%
(x)=\Gamma_{1}^{\prime}[y_{1},y_{0},f_{0},g_{0}](x),$ and $\bar{g}_{0}%
^{(1)}(x)=\Gamma_{1}^{\prime}[y_{1},\bar{y}_{0},\bar{f}_{0},\bar{g}_{0}](x).$
Thus, for $D_{1}$ in (\ref{eq:S.53a}), we have
\begin{align*}
|D_{1}|  &  \leq\int_{\mathcal{X}^{k}\times\mathcal{Y}^{k}}C_{G}\left[
C_{\Gamma}(r_{\Gamma})^{k-1}+C_{\Gamma^{\prime}}(r_{\Gamma^{\prime}}%
)^{k-1}\right]  q(x_{1:k},y_{1:k}|x_{0},y_{0};\theta)\,dx_{1:k}dy_{1:k}\\
&  =C_{G}\left[  C_{\Gamma}(r_{\Gamma})^{k-1}+C_{\Gamma^{\prime}}%
(r_{\Gamma^{\prime}})^{k-1}\right]  ,
\end{align*}
where the last equality is owing to the density nature of $p_{(X,Y)}$. Then,
letting $C_{1}=C_{G}\left(  C_{\Gamma}/r_{\Gamma}+C_{\Gamma^{\prime}%
}/r_{\Gamma^{\prime}}\right)  ,$ $r_{1}=\max\{r_{\Gamma},r_{\Gamma^{\prime}%
}\},$ we obtain $|D_{1}|\leq C_{1}r_{1}^{k}.$ For $D_{2}$ in (\ref{eq:S.53b}),
we further decompose $D_{2}$ into $D_{2}=D_{2,1}+D_{2,2}$, where
\begin{align*}
D_{2,1}  &  =\int_{\mathcal{X}^{k}\times\mathcal{Y}^{k}}(G(x_{k},y_{k}%
,\Gamma_{k}[y_{k:1},\bar{y}_{0},\bar{f}_{0}](x_{k}),\Gamma_{k}^{\prime
}[y_{k:1},\bar{y}_{0},\bar{f}_{0},\bar{g}_{0}(\cdot)](x_{k}))\\
&  \qquad\qquad-G(x_{k},y_{k},\Gamma_{k-m}[y_{k:m+1},y_{m},f^{\prime}%
](x_{k}),\Gamma_{k-m}^{\prime}[y_{k:m+1},y_{m},f^{\prime},g^{\prime}%
](x_{k})))\\
&  \qquad\times q(x_{2:k},y_{2:k}|x_{1},y_{1};\theta)\left(  p_{(X,Y)}%
(x_{1},y_{1}|x_{0},y_{0};\theta_{0})-p_{(X,Y)}(x_{1},y_{1}|\bar{x}_{0},\bar
{y}_{0};\theta_{0})\right)  \,dx_{1:k}dy_{1:k},\\
D_{2,2}  &  =\int_{\mathcal{X}^{k}\times\mathcal{Y}^{k}}G(x_{k},y_{k}%
,\Gamma_{k-m}[y_{k:m+1},y_{m},f^{\prime}](x_{k}),\Gamma_{k-m}^{\prime
}[y_{k:m+1},y_{m},f^{\prime},g^{\prime}](x_{k}))\\
&  \times q(x_{2:k},y_{2:k}|x_{1},y_{1};\theta)\left(  p_{(X,Y)}(x_{1}%
,y_{1}|x_{0},y_{0};\theta_{0})-p_{(X,Y)}(x_{1},y_{1}|\bar{x}_{0},\bar{y}%
_{0};\theta_{0})\right)  \,dx_{1:k}dy_{1:k}.
\end{align*}
Here, we have $m=\lfloor k/2\rfloor$, where we denote by $\lfloor x\rfloor$
the largest integer that is no larger than $x$, and $(f^{\prime}%
,g^{\prime})\in\mathcal{S}\times \dot{\mathcal{S}}$ is fixed. Then, it
suffices to show that there exist $C_{2,j}>0$ and $r_{2,j}\in(0,1)$ such that
$|D_{2,j}|\leq C_{2,j}r_{2,j}^{k}$, for $j=1,2$. For $D_{2,1}$, similar to our
methods for $D_{1}$, by the iterative definition of $\Gamma_{k}$ and
$\Gamma_{k}^{\prime}$ and the inequalities (\ref{eq:S.54a})-(\ref{eq:S.54b}),
we obtain
\begin{align*}
&  \left\Vert \Gamma_{k}[y_{k:1},\bar{y}_{0},\bar{f}_{0}]-\Gamma
_{k-m}[y_{k:m+1},y_{m},f^{\prime}]\right\Vert _{L_{1}}\\
&  =\left\Vert \Gamma_{k-m}[y_{k:m+1},y_{m},\bar{f}_{0}^{(1)}]-\Gamma
_{k-m}[y_{k:m+1},y_{m},f^{\prime}]\right\Vert _{L_{1}}\\
&  \quad\leq C_{\Gamma}(r_{\Gamma})^{k-m},\\
&  \left\Vert \Gamma_{k}^{\prime}[y_{k:1},\bar{y}_{0},\bar{f}_{0},\bar{g}%
_{0}]-\Gamma_{k-m}^{\prime}[y_{k:m+1},y_{m},f^{\prime},g^{\prime}]\right\Vert
_{L_{1}}\\
&  =\left\Vert \Gamma_{k-m}^{\prime}[y_{k:m+1},y_{m},\bar{f}_{0}^{(1)},\bar
{g}_{0}^{(1)}]-\Gamma_{k-m}^{\prime}[y_{k:m+1},y_{m},f^{\prime},g^{\prime
}]\right\Vert _{L_{1}}\\
&  \leq C_{\Gamma^{\prime}}(r_{\Gamma^{\prime}})^{k-m},
\end{align*}
where $\bar{f}_{0}^{(1)}=\Gamma_{m}[y_{m:1},\bar{y}_{0},\bar{f}_{0}]$ and
$\bar{g}_{0}^{(1)}=\Gamma_{m}^{\prime}[y_{m:1},\bar{y}_{0},\bar{f}_{0},\bar
{g}_{0}].$ Then, by the Lipschitz property of $G\in G_{Lip}$, we have
\begin{align*}
|D_{2,1}|  &  \leq\int_{\mathcal{X}^{k}\times\mathcal{Y}^{k}}C_{G}\left[
C_{\Gamma}(r_{\Gamma})^{k-m}+C_{\Gamma^{\prime}}(r_{\Gamma^{\prime}}%
)^{k-m}\right]  q(x_{2:k},y_{2:k}|x_{1},y_{1};\theta)\\
&  \qquad\times\left\vert p_{(X,Y)}(x_{1},y_{1}|x_{0},y_{0};\theta
_{0})-p_{(X,Y)}(x_{1},y_{1}|\bar{x}_{0},\bar{y}_{0};\theta_{0})\right\vert
\,dx_{1:k}dy_{1:k}\\
&  \leq2C_{G}\left[  C_{\Gamma}(r_{\Gamma})^{k-m}+C_{\Gamma^{\prime}%
}(r_{\Gamma^{\prime}})^{k-m}\right]  ,
\end{align*}
where the last equality is owing to the density nature of $p_{(X,Y)}$. Then,
since $m=\lfloor k/2\rfloor$, we let $C_{2,1}=2C_{G}\left(  \frac{C_{\Gamma}%
}{r_{\Gamma}}+\frac{C_{\Gamma^{\prime}}}{r_{\Gamma^{\prime}}}\right)  ,\quad
r_{2,1}=\max\{r_{\Gamma},r_{\Gamma^{\prime}}\}^{1/2}$. Then, we obtain
$|D_{2,1}|\leq C_{2,1}r_{2,1}^{k}.$ For $D_{2,2}$, since the function
$G(x_{k},y_{k},\Gamma_{k-m}[y_{k:m+1},y_{m},f^{\prime}(\cdot)](x_{k}%
),\Gamma_{k-m}^{\prime}[y_{k:m+1},y_{m},f^{\prime},g^{\prime}](x_{k}))$ is
free of the variables $x_{1:(m-1)}$ and $y_{1:(m-1)}$, we have
\begin{align*}
D_{2,2}  &  =\int_{\mathcal{X}^{m}\times\mathcal{Y}^{m}}G_{m}(x_{m}%
,y_{m})q(x_{2:m},y_{2:m}|x_{1},y_{1};\theta)\\
&  \qquad\times\left\vert p_{(X,Y)}(x_{1},y_{1}|x_{0},y_{0};\theta
_{0})-p_{(X,Y)}(x_{1},y_{1}|\bar{x}_{0},\bar{y}_{0};\theta_{0})\right\vert
\,dx_{1:m}dy_{1:m},
\end{align*}
where $G_{m}(x_{m},y_{m})$ is defined as
\begin{align*}
G_{m}(x_{m},y_{m})  &  =\int_{\mathcal{X}^{k-m}\times\mathcal{Y}^{k-m}}%
G(x_{k},y_{k},\Gamma_{k-m}[y_{k:m+1},y_{m},f^{\prime}](x_{k}),\Gamma
_{k-m}^{\prime}[y_{k:m+1},y_{m},f^{\prime},g^{\prime}](x_{k}))\\
&  \qquad\times q(x_{m+1:k},y_{m+1:k}|x_{m},y_{m};\theta)\,dx_{(m+1):k}%
dy_{(m+1):k}.
\end{align*}
Then, using the Chapman-Kolmogorov equation for the transition density
$p_{(X,Y)}$, we have
\[
D_{2,2}=\int_{\mathcal{X}\times\mathcal{Y}}G_{m}(x_{m},y_{m})|p_{(X,Y)}%
(m,x_{m},y_{m}|x_{0},y_{0};\theta_{0})-p_{(X,Y)}(m,x_{m},y_{m}|\bar{x}%
_{0},\bar{y}_{0};\theta_{0})|\,dx_{m}dy_{m}.
\]
where $p_{(X,Y)}(m,x_{m},y_{m}|x_{0},y_{0};\theta_{0})$ denotes the $m$ steps
transition density. Since the function $G$ is bounded, there exists a constant
$B_{G}>0$ such that $|G(\xi_{0})|<B_{G}$ for any $\xi_{0}\in\Xi$. Thus, we
obtain
\[
|G_{m}(x_{m},y_{m})|\leq\int_{\mathcal{X}^{k-m}\times\mathcal{Y}^{k-m}}%
B_{G}q(x_{m+1:k},y_{m+1:k}|x_{m},y_{m};\theta)dx_{(m+1):k}dy_{(m+1):k}=B_{G},
\]
where the last equality is owing to the density nature of $p_{(X,Y)}$. Then,
by Assumption \ref{ass:S2} we obtain%
\begin{align*}
\left\vert D_{2,2}\right\vert  &  \leq\int_{\mathcal{X}\times\mathcal{Y}}%
B_{G}\left\vert p_{(X,Y)}(m,x_{m},y_{m}|x_{0},y_{0};\theta_{0})-p_{(X,Y)}%
(m,x_{m},y_{m}|\bar{x}_{0},\bar{y}_{0};\theta_{0})\right\vert \,dx_{m}dy_{m}\\
&  \leq2B_{G}C_{(X,Y)}r_{(X,Y)}^{m}.
\end{align*}
Then, since $m=\lfloor k/2\rfloor$, we let $C_{2,2}=2B_{G}C_{(X,Y)}%
r_{(X,Y)}^{-1},\quad r_{2,2}=r_{(X,Y)}^{1/2}.$ Then, we obtain $|D_{2,2}|\leq
C_{2,2}r_{2,2}^{k}.$

To show that $G_{Lip}$ is dense in $C_{b}(\Xi)$, we begin by proving that
$\mathcal{S}\times\mathcal{\dot{S}}$ is compact. By definition, $\mathcal{S}$ defined in (\ref{eq:S_state}) is the uniform-norm closure of the set of reachable filtered densities. Under Assumption \ref{ass:S5}, these densities are uniformly Lipschitz on the compact set $\mathcal{X}$, together with $||f||_{L^1}=1$, this implies uniform boundedness and equicontinuity. Hence, $\mathcal{S}$ is compact under the uniform norm by the Arzel\`a-Ascoli theorem. Similarly, the reachable filter derivatives are uniformly bounded in $L^1$ and uniformly Lipschitz on $\mathcal{X}$, so each $\dot{\mathcal{S}_j}$, and therefore $\dot{\mathcal{S}}$ is compact under the uniform norm. Since uniform convergence implies $L^1$ convergence on compact $\mathcal{X}$, both $\mathcal{S}$ and $\mathcal{\dot{S}}$ are compact under the $L^1$ topology. Therefore, $\Xi$ is
compact. So we have that $\Xi$ is compact, where the norm on $\Xi$ is defined
as
\begin{equation}
\Vert\xi_{0}\Vert_{\Xi,1}=|x_{0}|+|y_{0}|+\Vert f_{0}\Vert_{L_{1}}+\Vert
g_{0}\Vert_{L_{1}}, \label{eq:S.46}%
\end{equation}
for $\xi_{0}=(x_{0},y_{0},f_{0},g_{0})\in\Xi.$ The distance on $\Xi$ is
defined as $d(\xi_{0},\xi_{1})=\Vert\xi_{0}-\xi_{1}\Vert_{\Xi,1},$ for any
$\xi_{0},\xi_{1}\in\Xi$. Then, by the Stone-Weierstrass Theorem, proving that
$G_{Lip}$ is dense in $C_{b}(\Xi)$ boils down to verifying the following two
properties for $G_{Lip}$: (i) The constant function 1 belongs to $G_{Lip}$,
and for any $G_{1}$, $G_{2}\in G_{Lip}$, and any $c_{1}$, $c_{2}\in$
$\mathbb{R}$, it holds that $c_{1}G_{1}+c_{2}G_{2}\in G_{Lip}.$ (ii) For any
$\xi_{1}\neq\xi_{2},$ there exists a $G\in G_{Lip}$, such that $G\left(
\xi_{1}\right)  \neq G\left(  \xi_{2}\right)  .$

\noindent Indeed, (i) follows from the $\mathcal{S}\times\mathcal{\dot{S}}%
$-Lipschitz property and the boundedness of the functions in $G_{Lip}$.
Suppose that $\xi_{1}=(x_{1},y_{1},f_{1}(\cdot),g_{1}(\cdot)),$ $\xi
_{2}=(x_{2},y_{2},f_{2},g_{2})\in\Xi,$ and satisfy $f_{1}\neq f_{2}$. First,
we focus on the case when $f_{1}(\cdot)\neq f_{2}(\cdot)$. Recall that the
definition of $\mathcal{S}$, $f_{1}(\cdot)\neq f_{2}(\cdot)$ implies that
there exists $\phi(\cdot)\in C_{b}(\mathcal{X})$ such that $\int_{\mathcal{X}%
}\phi(y)f_{1}(y)dy\neq\int_{\mathcal{X}}\phi(y)f_{2}(y)dy.$ Using $\phi$, we
define $G(\xi_{0})=\int_{\mathcal{X}}\phi(y)f(y)dy,$ for any $\xi_{0}%
=(x_{0},y_{0},f_{0}(\cdot),g_{0}(\cdot))\in\Xi.$ Then, we have that $G(\xi
_{1})\neq G(\xi_{2})$, and we can verify that $G\in G_{Lip}$ because $G$
satisfies the $\mathcal{S}\times\mathcal{\dot{S}}$-Lipschitz property and $G$
is bounded. The cases where $g_{1}(\cdot)\neq g_{2}(\cdot)$ can be treated
similarly, and the cases where the $x$ dimension or $y$ dimension of $\xi_{k}$
and $\xi_{\ell}$ are different can be handled in an easier way, e.g., if
$x_{1}\neq x_{2}$, we define $G(\xi_{i})=x_{i},$ for any $i=1,2.$
\end{proof}

\subsection{Proof of auxiliary Lemma \ref{lemma:clt for s}%
\label{sec:aux lemma2}}

\begin{proof}
By Assumption \ref{ass:S5}, it can be verified that $S(\xi_{0}^{+};\theta
_{0})$ satisfies the following $\mathcal{S}\times\mathcal{\dot{S}}$-Lipschitz
property, i.e., there exists a constant $C_{S}>0$ such that for any
$(y,y_{0},x_{0})\in\mathcal{Y}\times\mathcal{Y}\times\mathcal{X},\qquad
f_{0},f_{1}\in\mathcal{S},\qquad g_{0},g_{1}\in\mathcal{\dot{S}},$ it holds
\[
\left\vert S(y,y_{0},x_{0},f_{0},g_{0};\theta_{0})-S(y,y_{0},x_{0},f_{1}%
,g_{1};\theta_{0})\right\vert \leq C_{S}\left(  \left\Vert f_{0}%
-f_{1}\right\Vert _{L_{1}}+\left\Vert g_{0}-g_{1}\right\Vert _{L_{1}}\right)
.
\]
Next, following the same route for proving (\ref{eq:S.43}), we have that there
exist constants $K_{S}>0$ and $r\in(0,1)$ such that, for any $\xi_{0}^{+}%
\in\mathcal{Y}\times\Xi$, it holds
\begin{equation}
\left\vert \mathbb{E}[S(\xi^{+}(\theta_{0});\theta_{0})|\xi_{0}^{+}(\theta
_{0})=\xi_{0}^{+}]-\mathbb{E}[S(\xi^{+}(\theta_{0});\theta_{0})]\right\vert
\leq K_{S}r^{k}. \label{eq:S.59}%
\end{equation}

Then, to apply the martingale CLT, we follow the standard route as we did in
(\ref{eq:S.47}) to consider the Poisson equation for $V$, i.e., $V(\xi_{0}%
^{+})-T^{+}V(\xi_{0}^{+})=S(\xi_{0}^{+};\theta_{0})-\mathbb{E}[S(\xi
^{+}(\theta_{0});\theta_{0})]$ for any $\xi_{0}^{+}\in\mathcal{Y}\times\Xi$,
where the operator $T^{+}$ is defined as $T^{+}G(\xi_{0}^{+})=\mathbb{E}%
[G(\xi_{k+1}^{+}(\theta_{0}))|\xi_{k}^{+}(\theta_{0})=\xi_{0}^{+}]$ for any
$G\in C_{b}(\mathcal{Y}\times\Xi)$. Similar to (\ref{eq:S.48}), we obtain by
calculations that a solution for the aforementioned Poisson equation is the
function $V_{S}$ defined as follows:
\begin{equation}
V_{S}(\xi_{0}^{+})=\sum_{k=0}^{+\infty}\left\{  \mathbb{E}[S(\xi^{+}%
(\theta_{0});\theta_{0})|\xi_{0}^{+}(\theta_{0})=\xi_{0}^{+}]-E[S(\xi
^{+}(\theta_{0});\theta_{0})]\right\}  . \label{eq:S.60}%
\end{equation}
The function $V_{S}$ is well-defined, because the series in (\ref{eq:S.48})
converges owing to (\ref{eq:S.59}). In addition, (\ref{eq:S.59}) also implies
that $V_{S}$ is bounded since $V_{S}(\xi_{0}^{+})|\leq\sum_{k=0}^{+\infty
}K_{S}r^{k}=K_{S}/\left(  1-r\right)  .$ In particular, as (\ref{eq:S.59})
implies $\mathbb{E}[S(\xi^{+}(\theta_{0});\theta_{0})]=\lim_{k\rightarrow
\infty}\mathbb{E}[S(\xi_{k}^{+}(\theta_{0});\theta_{0})],$ and $S(\xi_{k}%
^{+}(\theta_{0});\theta_{0})=\partial\log p(Y_{k+1}|\mathbf{Y}_{k};\theta
_{0})/\partial\theta,$ by the conditional version of the Bartlett identities,
we obtain $\mathbb{E}\left[  \frac{\log p(Y_{k+1}|\mathbf{Y}_{k};\theta_{0}%
)}{\partial\theta}\right]  =\mathbb{E}[S(\xi_{k}^{+}(\theta_{0});\theta
_{0})]=0,$ which immediately leads to $\mathbb{E}[S(\xi^{+}(\theta_{0}%
);\theta_{0})]=0$.

To prove the CLT for the score function $S_{n}(\theta_{0})$, using $V_{S}$ in
(\ref{eq:S.60}), which is the solution of the Poisson equation, we begin by
expressing $S_{n}(\theta_{0})$ as
\begin{equation}
S_{n}(\theta_{0})=\left(  V_{S}(\xi_{0}^{+}(\theta_{0}))-V_{S}(\xi_{n}%
^{+}(\theta_{0}))\right)  +\left(  \sum_{k=1}^{n}V_{S}(\xi_{k}^{+}(\theta
_{0}))-TV_{S}(\xi_{k-1}^{+}(\theta_{0}))\right)  . \label{eq:S.61}%
\end{equation}
Since $S_{n}(\theta_{0})$ is a vector, it follows from the Cramer-Wold device
(see, e.g., Section 2.3 of \cite{vanderVaart1998}) that proving $n^{-1/2}%
S_{n}(\theta_{0})\overset{d}{\longrightarrow}N(0,I_{S})$ boils down to proving
$n^{-1/2}v^{\intercal}S_{n}(\theta_{0})\overset{d}{\longrightarrow
}N(0,v^{\intercal}I_{S}v)$ for any $v\in\mathbb{R}^{d}$. In the right-hand
side of (\ref{eq:S.61}), for the first term, by the boundedness of $V_{S}$, we
obtain $n^{-1/2}v^{\intercal}\left[  V_{S}(\xi_{0}^{+}(\theta_{0}))-V_{S}%
(\xi_{n}^{+}(\theta_{0}))\right]  \overset{p}{\longrightarrow}0.$ For the
second term, since the process $(\xi_{k-1}^{+}(\theta_{0}),\xi_{k}^{+}%
(\theta_{0}))$ is ergodic, and since $v^{\intercal}\left[  V_{S}(\xi_{1}%
^{+})-TV_{S}(\xi_{0}^{+})\right]  \left[  V_{S}(\xi_{1}^{+})-TV_{S}(\xi
_{0}^{+})\right]  ^{\intercal}v$ is a continuous bounded function of $(\xi
_{1}^{+},\xi_{0}^{+})$ owing to $V_{S}(\xi_{0}^{+})\in C_{b}(\mathcal{Y}%
\times\Xi)$, we obtain by the property of ergodic process that
\[
\frac{1}{n}\sum_{k=1}^{n}\left(  v^{\intercal}\left[  V_{S}(\xi_{k}^{+}%
(\theta_{0}))-TV_{S}(\xi_{k-1}^{+}(\theta_{0}))\right]  \left[  V_{S}(\xi
_{k}^{+}(\theta_{0}))-TV_{S}(\xi_{k-1}^{+}(\theta_{0}))\right]  ^{\intercal
}v\right)  \overset{p}{\longrightarrow}v^{\intercal}I_{S}v.
\]
where $I_{S}:=\mathbb{E}\left[  \left(  \left[  V_{S}(\xi^{++}(\theta
_{0}))-TV_{S}(\xi^{+}(\theta_{0}))\right]  \left[  V_{S}(\xi^{++}(\theta
_{0}))-TV_{S}(\xi^{+}(\theta_{0}))\right]  ^{\intercal}\right)  \right]  ,$
with $\xi^{+}(\theta_{0})$ following the stationary distribution of $\xi
_{k}^{+}(\theta_{0})$ and the conditional distribution of $\xi^{++}(\theta
_{0})$ given $\xi^{+}(\theta_{0})$ is equal to that of $\xi_{k}^{+}(\theta
_{0})$ given $\xi_{k-1}^{+}(\theta_{0})$. Then, by Theorem 3.2 of
\cite{hall1980martingale}, we obtain
\[
n^{-1/2}\sum_{k=1}^{n}v^{\intercal}\left[  V_{S}(\xi_{k}^{+}(\theta
_{0}))-TV_{S}(\xi_{k-1}^{+}(\theta_{0}))\right]  \overset{d}{\longrightarrow
}N(0,v^{\intercal}I_{S}v).
\]
Thus, we conclude that $n^{-1/2}S_{n}(\theta_{0})\overset{d}{\longrightarrow
}N(0,I_{S}).$

To prove $n^{-1}H_{n}(\theta)\overset{p}{\longrightarrow}H_{0}(\theta)$, we
begin by expressing the Hessian $H_{n}(\theta)=\partial^{2}\ell_{n}%
(\theta)/\partial\theta\partial\theta^{\intercal}$ as
\begin{equation}
H_{n}(\theta)=\sum_{k=0}^{n-1}\mathbb{H}(\tilde{\xi}_{k}^{+}(\theta
);\theta),\text{ with }\tilde{\xi}_{k}^{+}(\theta):=\left(  \xi_{k}^{+}%
(\theta),\frac{\partial^{2}p(\cdot|\mathbf{Y}_{k};\theta)}{\partial
\theta\partial\theta^{\intercal}}\right)  , \label{eq:S.62}%
\end{equation}
where $\mathbb{H}$ is a $d\times d$ matrix-valued function defined by
\begin{align*}
&  \mathbb{H}(y,y_{0},x_{0},f_{0},g_{0},h_{0};\theta):=\frac{1}{\int%
_{\mathcal{X}}p_{Y}(y|y_{0},\tilde{x}_{0};\theta)f_{0}(\tilde{x}_{0}%
)d\tilde{x}_{0}}\times\left(  \int_{\mathcal{X}}\frac{\partial^{2}%
p_{Y}(y|y_{0},\tilde{x}_{0};\theta)}{\partial\theta\partial\theta^{\intercal}%
}f_{0}(\tilde{x}_{0})d\tilde{x}_{0}\right. \\
\qquad\quad &  \left.  +\int_{\mathcal{X}}\frac{\partial p_{Y}(y|y_{0}%
,\tilde{x}_{0};\theta)}{\partial\theta}(g_{0}(\tilde{x}_{0}))^{\intercal
}d\tilde{x}_{0}+\int_{\mathcal{X}}\frac{\partial p_{Y}(y|y_{0},\tilde{x}%
_{0};\theta)}{\partial\theta^{\intercal}}g_{0}(\tilde{x}_{0})d\tilde{x}%
_{0}+\int_{\mathcal{X}}p_{Y}(y|y_{0},\tilde{x}_{0};\theta)h_{0}(\tilde{x}%
_{0})d\tilde{x}_{0}\right) \\
\quad &  -\frac{1}{\left(  \int_{\mathcal{X}}p_{Y}(y|y_{0},\tilde{x}%
_{0};\theta)f_{0}(\tilde{x}_{0})d\tilde{x}_{0}\right)  ^{2}}\left[
\int_{\mathcal{X}}\frac{\partial p_{Y}(y|y_{0},\tilde{x}_{0};\theta)}%
{\partial\theta}f_{0}(\tilde{x}_{0})d\tilde{x}_{0}+\int_{\mathcal{X}}%
p_{Y}(y|y_{0},\tilde{x}_{0};\theta)g_{0}(\tilde{x}_{0})d\tilde{x}_{0}\right]
\\
\quad &  \times\left[  \int_{\mathcal{X}}\frac{\partial p_{Y}(y|y_{0}%
,\tilde{x}_{0};\theta)}{\partial\theta}f_{0}(\tilde{x}_{0})d\tilde{x}_{0}%
+\int_{\mathcal{X}}p_{Y}(y|y_{0},\tilde{x}_{0};\theta)g_{0}(\tilde{x}%
_{0})d\tilde{x}_{0}\right]  ^{\intercal}.
\end{align*}
We denote by $\mathcal{\dot{S}}_{2}$ the state space of $\partial^{2}%
p(\cdot|\mathbf{Y}_{k};\theta)/\partial\theta\partial\theta^{\intercal}$. Then
under Assumptions \ref{ass:S5}, $\mathcal{\dot{S}}_{2}$ is compact with the
norm and distance given by $\Vert\cdot\Vert_{L_{1}}$, which can be similarly
shown as (\ref{eq:S.46}). Next, following the same route as that for proving
Lemma \ref{lem:filter-ergodic}, we have that $\tilde{\xi}_{k}^{+}(\theta)$ is
ergodic Markov and $\lim_{k\rightarrow\infty}\mathbb{E}[G(\tilde{\xi}_{k}%
^{+}(\theta))]=\mathbb{E}[G(\tilde{\xi}^{+}(\theta))]$ for any $G\in
C_{b}(\mathcal{Y}\times\Xi\times\mathcal{\dot{S}}_{2})$, where $\tilde{\xi
}^{+}(\theta)$ follows the stationary distribution of $\tilde{\xi}_{k}%
^{+}(\theta)$. Then, by Assumption \ref{ass:S5}, we have $\mathbb{H}\in
C_{b}(\mathcal{Y}\times\Xi\times\mathcal{\dot{S}}_{2}).$ Thus, by property of
ergodic process $\tilde{\xi}_{k}^{+}(\theta)$, we obtain
\begin{equation}
\frac{1}{n}H_{n}(\theta)=\frac{1}{n}\sum_{k=0}^{n-1}\mathbb{H}(\tilde{\xi}%
_{k}^{+}(\theta);\theta)\overset{p}{\longrightarrow}H_{0}(\theta),
\label{eq:S.63}%
\end{equation}
with $H_{0}(\theta):=\mathbb{E}[\mathbb{H}(\tilde{\xi}^{+}(\theta);\theta)].$
By calculations, we have that the summand $\mathbb{H}(\tilde{\xi}_{k}%
^{+}(\theta);\theta)$ in (\ref{eq:S.62}) equals $\partial^{2}\log
p(Y_{k+1}|\mathbf{Y}_{k};\theta)/\partial\theta\partial\theta^{\intercal}$.
Since $\partial^{2}\log p(Y_{k+1}|\mathbf{Y}_{k};\theta)/\partial
\theta\partial\theta^{\intercal}$ is Lipschitz in $\theta$ by Assumption
\ref{ass:S5}, we obtain by the same reasoning as for $n^{-1}\ell_{n}(\theta)$
in Step 1 that there exists a constant $C_{H}>0$ such that $\left\Vert
n^{-1}H_{n}(\theta)-n^{-1}H_{n}(\theta^{\prime})\right\Vert \leq C_{H}%
\Vert\theta-\theta^{\prime}\Vert,$ and the convergence (\ref{eq:S.63}) of
$n^{-1}H_{n}(\theta)$ holds uniformly for $\theta\in\Theta$.

To prove that $H_{0}(\theta)$ is Lipschitz, i.e., $\Vert H_{0}(\theta
)-H_{0}(\theta^{\prime})\Vert_{1}\leq C_{H}\Vert\theta-\theta^{\prime}%
\Vert_{1},$ we begin by using the triangle inequality to obtain
\[
\Vert H_{0}(\theta)-H_{0}(\theta^{\prime})\Vert\leq\left\Vert \frac
{H_{n}(\theta)-H_{n}(\theta^{\prime})}{n}\right\Vert +\left\Vert \frac
{H_{n}(\theta)}{n}-H_{0}(\theta)\right\Vert +\left\Vert \frac{H_{n}%
(\theta^{\prime})}{n}-H_{0}(\theta^{\prime})\right\Vert .
\]
which almost surely holds for any $n$. In the right-hand side of the above
inequality, for the first term, as discussed after (\ref{eq:S.63}), we have
$\left\Vert n^{-1}H_{n}(\theta)-n^{-1}H_{n}(\theta^{\prime})\right\Vert \leq
C_{H}\Vert\theta-\theta^{\prime}\Vert;$ for the second and third terms, both
of them converge to zero in probability. So, we get $\Vert H_{0}(\theta
)-H_{0}(\theta^{\prime})\Vert\leq C_{H}\Vert\theta-\theta^{\prime}\Vert
+o_{p}(1).$Since both $\Vert H_{0}(\theta)-H_{0}(\theta^{\prime})\Vert$ and
$C_{H}\Vert\theta-\theta^{\prime}\Vert$ are deterministic, we conclude that
$\Vert H_{0}(\theta)-H_{0}(\theta^{\prime})\Vert\leq C_{H}\Vert\theta
-\theta^{\prime}\Vert.$
\end{proof}

	{\small
		\bibliographystyle{elsevier}
		\bibliography{newadded}

@article{HansenSargent2007,
  author  = {Hansen, Lars Peter and Sargent, Thomas J.},
  title   = {Recursive Robust Estimation and Control without Commitment},
  journal = {Journal of Economic Theory},
  year    = {2007},
  volume  = {136},
  number  = {1},
  pages   = {1--27},
  month   = {September},
  doi     = {10.1016/j.jet.2006.06.010}
}

@article{DaiSingleton2003,
  author  = {Dai, Qiang and Singleton, Kenneth J.},
  title   = {Term Structure Dynamics in Theory and Reality},
  journal = {The Review of Financial Studies},
  year    = {2003},
  volume  = {16},
  number  = {3},
  pages   = {631--678},
  doi     = {10.1093/rfs/hhg010}
}

@book{kallenberg2021foundations,
  title     = {Foundations of Modern Probability},
  author    = {Kallenberg, Olav},
  edition   = {3},
  year      = {2021},
  publisher = {Springer}
}

@inproceedings{kingma2015adam,
  title={Adam: A Method for Stochastic Optimization},
  author={Kingma, Diederik P. and Ba, Jimmy},
  booktitle={International Conference on Learning Representations},
  year={2015}
}

@book{schmudgen2017moment,
  author    = {Konrad Schm{\"u}dgen},
  title     = {The Moment Problem},
  series    = {Graduate Texts in Mathematics},
  volume    = {277},
  publisher = {Springer},
  address   = {Cham},
  year      = {2017},
  doi       = {10.1007/978-3-319-64546-9},
  isbn      = {978-3-319-64546-9}
}

@article{Andrews1991,
  author  = {Andrews, Donald W. K.},
  title   = {Heteroskedasticity and Autocorrelation Consistent Covariance Matrix Estimation},
  journal = {Econometrica},
  year    = {1991},
  volume  = {59},
  number  = {3},
  pages   = {817--858}
}

@article{White1982,
  author  = {White, Halbert},
  title   = {Maximum Likelihood Estimation of Misspecified Models},
  journal = {Econometrica},
  year    = {1982},
  volume  = {50},
  number  = {1},
  pages   = {1--25},
  doi     = {10.2307/1912526}
}

@article{TadicDoucet2005,
  author  = {Tadi{\'c}, Vladislav B. and Doucet, Arnaud},
  title   = {Exponential Forgetting and Geometric Ergodicity for Optimal Filtering in General State-Space Models},
  journal = {Stochastic Processes and their Applications},
  volume  = {115},
  number  = {8},
  pages   = {1408--1436},
  year    = {2005},
  doi     = {10.1016/j.spa.2005.03.005}
}

@incollection{bollerslev1994arch,
  author    = {Bollerslev, Tim and Engle, Robert F. and Nelson, Daniel B.},
  title     = {{ARCH Models}},
  editor    = {Engle, Robert F. and McFadden, Daniel L.},
  booktitle = {Handbook of Econometrics},
  volume    = {4},
  chapter   = {49},
  pages     = {2959--3038},
  publisher = {Elsevier},
  address   = {Amsterdam},
  year      = {1994},
  doi       = {10.1016/S1573-4412(05)80018-2}
}

@article{GustHerbstLopezSalidoSmith2017,
  author  = {Gust, Christopher and Herbst, Edward and L{\'o}pez-Salido, David and Smith, Matthew E.},
  title   = {The Empirical Implications of the Interest-Rate Lower Bound},
  journal = {American Economic Review},
  year    = {2017},
  volume  = {107},
  number  = {7},
  pages   = {1971--2006},
  doi     = {10.1257/aer.20121437}
}

@article{DanglHalling2012,
  author  = {Dangl, Thomas and Halling, Michael},
  title   = {Predictive Regressions with Time-Varying Coefficients},
  journal = {Journal of Financial Economics},
  year    = {2012},
  volume  = {106},
  number  = {1},
  pages   = {157--181},
  doi     = {10.1016/j.jfineco.2012.04.003}
}

@article{SewellChen2015,
  title={Latent Space Models for Dynamic Networks},
  author={Sewell, Daniel K. and Chen, Yuguo},
  journal={Journal of the American Statistical Association},
  volume={110},
  number={512},
  pages={1646--1657},
  year={2015},
  doi={10.1080/01621459.2014.988214}
}

@article{ahn2002qtsm,
  title={Quadratic Term Structure Models: Theory and Evidence},
  author={Ahn, Dong-Hyun and Dittmar, Robert F. and Gallant, A. Ronald},
  journal={Review of Financial Studies},
  volume={15},
  number={1},
  pages={243--288},
  year={2002},
  publisher={Oxford University Press},
  doi={10.1093/rfs/15.1.243}
}

@article{MalikPitt2011ParticleFiltersContinuous,
author  = {Malik, Sheheryar and Pitt, Michael K.},
title   = {Particle Filters for Continuous Likelihood Evaluation and Maximisation},
journal = {Journal of Econometrics},
year    = {2011},
volume  = {165},
number  = {2},
pages   = {190--209},
doi     = {10.1016/j.jeconom.2011.07.006}
}

@article{PoyiadjisDoucetSingh2011,
  title={Particle Approximations of the Score and Observed Information Matrix in State Space Models with Application to Parameter Estimation},
  author={Poyiadjis, George and Doucet, Arnaud and Singh, Sumeetpal S.},
  journal={Biometrika},
  volume={98},
  number={1},
  pages={65--80},
  year={2011},
  doi={10.1093/biomet/asq062}
}

@article{AndrieuDoucetHolenstein2010,
  title={Particle Markov Chain Monte Carlo Methods},
  author={Andrieu, Christophe and Doucet, Arnaud and Holenstein, Roman},
  journal={Journal of the Royal Statistical Society: Series B},
  volume={72},
  number={3},
  pages={269--342},
  year={2010}
}

@book{TsayChen2019NonlinearTimeSeries,
  author    = {Tsay, Ruey S. and Chen, Rong},
  title     = {Nonlinear Time Series Analysis},
  series    = {Wiley Series in Probability and Statistics},
  publisher = {John Wiley \& Sons},
  address   = {Hoboken, NJ},
  year      = {2019},
  isbn      = {9781119264057}
}

@article{HansenMayerSargent2010,
  author  = {Hansen, Lars Peter and Mayer, Ricardo and Sargent, Thomas J.},
  title   = {Robust Hidden Markov LQG Problems},
  journal = {Journal of Economic Dynamics and Control},
  year    = {2010},
  volume  = {34},
  number  = {10},
  pages   = {1951--1966},
  doi     = {10.1016/j.jedc.2010.05.004}
}

@article{LiPhillipsShiYu2025WeakIdentification,
  title   = {Weak Identification of Long Memory with Implications for Volatility Modeling},
  author  = {Li, Jia and Phillips, Peter C. B. and Shi, Shuping and Yu, Jun},
  journal = {The Review of Financial Studies},
  volume  = {38},
  number  = {10},
  pages   = {3117--3148},
  year    = {2025},
  doi     = {10.1093/rfs/hhaf022}
}

@article{JacksonPernoud2021,
  author  = {Jackson, Matthew O. and Pernoud, Agathe},
  title   = {Systemic Risk in Financial Networks: A Survey},
  journal = {Annual Review of Economics},
  year    = {2021},
  volume  = {13},
  pages   = {171--202}
}

@article{AcemogluOzdaglarTahbazSalehi2015,
  author  = {Acemoglu, Daron and Ozdaglar, Asuman and Tahbaz-Salehi, Alireza},
  title   = {Systemic Risk and Stability in Financial Networks},
  journal = {American Economic Review},
  year    = {2015},
  volume  = {105},
  number  = {2},
  pages   = {564--608}
}

@article{chen2014sieve,
  author  = {Chen, Xiaohong and Liao, Zhipeng},
  title   = {Sieve M Inference on Irregular Parameters},
  journal = {Journal of Econometrics},
  year    = {2014},
  volume  = {182},
  number  = {1},
  pages   = {70--86},
  doi     = {10.1016/j.jeconom.2014.04.009}
}

@article{carr2004time,
  title={Time-Changed L{\'e}vy Processes and Option Pricing},
  author={Carr, Peter and Wu, Liuren},
  journal={Journal of Financial Economics},
  volume={71},
  number={1},
  pages={113--141},
  year={2004},
  publisher={Elsevier}
}

@article{Reiter2009,
  author  = {Reiter, Michael},
  title   = {Solving Heterogeneous-Agent Models by Projection and Perturbation},
  journal = {Journal of Economic Dynamics and Control},
  year    = {2009},
  volume  = {33},
  number  = {3},
  pages   = {649--665},
  doi     = {10.1016/j.jedc.2008.08.010}
}

@book{DungTemlyakovUllrich2018HyperbolicCross,
  author    = {D{\~u}ng, Dinh and Temlyakov, Vladimir N. and Ullrich, Tino},
  title     = {Hyperbolic Cross Approximation},
  series    = {Advanced Courses in Mathematics -- CRM Barcelona},
  publisher = {Birkh{\"a}user},
  address   = {Cham},
  year      = {2018},
  doi       = {10.1007/978-3-319-92240-9},
  isbn      = {978-3-319-92239-3}
}

@article{BungartzGriebel2004SparseGrids,
  author  = {Bungartz, Hans-Joachim and Griebel, Michael},
  title   = {Sparse Grids},
  journal = {Acta Numerica},
  volume  = {13},
  pages   = {147--269},
  year    = {2004},
  doi     = {10.1017/S0962492904000182}
}

@article{Trefethen2017Cubature,
  author  = {Trefethen, Lloyd N.},
  title   = {Cubature, Approximation, and Isotropy in the Hypercube},
  journal = {SIAM Review},
  volume  = {59},
  number  = {3},
  pages   = {469--491},
  year    = {2017},
  doi     = {10.1137/16M1066312}
}

@article{OlleyPakes1996,
  title   = {The Dynamics of Productivity in the Telecommunications Equipment Industry},
  author  = {Olley, G. Steven and Pakes, Ariel},
  journal = {Econometrica},
  volume  = {64},
  number  = {6},
  pages   = {1263--1297},
  year    = {1996}
}

@article{DoraszelskiJaumandreu2013,
  title   = {R{\&}D and Productivity: Estimating Endogenous Productivity},
  author  = {Doraszelski, Ulrich and Jaumandreu, Jordi},
  journal = {The Review of Economic Studies},
  volume  = {80},
  number  = {4},
  pages   = {1338--1383},
  year    = {2013},
  doi     = {10.1093/restud/rdt011}
}

@article{HallYao2003,
  author  = {Hall, Peter and Yao, Qiwei},
  title   = {Inference in {ARCH} and {GARCH} Models with Heavy-Tailed Errors},
  journal = {Econometrica},
  year    = {2003},
  volume  = {71},
  number  = {1},
  pages   = {285--317},
  doi     = {10.1111/1468-0262.00396}
}

@article{FernandezVillaverdeRubioRamirez2007,
  author  = {Fern{\'a}ndez-Villaverde, Jes{\'u}s and Rubio-Ram{\'i}rez, Juan F.},
  title   = {Estimating Macroeconomic Models: A Likelihood Approach},
  journal = {The Review of Economic Studies},
  year    = {2007},
  volume  = {74},
  number  = {4},
  pages   = {1059--1087},
  doi     = {10.1111/j.1467-937X.2007.00437.x}
}

@article{KimShephardChib1998,
  author  = {Kim, Sangjoon and Shephard, Neil and Chib, Siddhartha},
  title   = {Stochastic Volatility: Likelihood Inference and Comparison with {ARCH} Models},
  journal = {The Review of Economic Studies},
  year    = {1998},
  volume  = {65},
  number  = {3},
  pages   = {361--393},
  doi     = {10.1111/1467-937X.00050}
}

@article{KaplanMollViolante2018,
  title   = {Monetary Policy According to HANK},
  author  = {Kaplan, Greg and Moll, Benjamin and Violante, Giovanni L.},
  journal = {American Economic Review},
  volume  = {108},
  number  = {3},
  pages   = {697--743},
  year    = {2018},
  doi     = {10.1257/aer.20160042}
}

@article{Sargent2026,
author = {Sargent, Thomas J. and Selvakumar, Yatheesan J. and Yang, Ziyue},
title = {Dynamic Mode Decompositions and Vector Autoregressions},
journal = {International Economic Review},
 year      = {2026},
doi = {https://doi.org/10.1111/iere.70068},
eprint = {https://onlinelibrary.wiley.com/doi/pdf/10.1111/iere.70068}
}

@book{DurbinKoopman2012,
  author    = {Durbin, James and Koopman, Siem Jan},
  title     = {Time Series Analysis by State Space Methods},
  edition   = {2},
  publisher = {Oxford University Press},
  address   = {Oxford},
  year      = {2012},
  doi       = {10.1093/acprof:oso/9780199641178.001.0001}
}

@article{AitSahaliaCachoDiazLaeven2015,
  author  = {Ait-Sahalia, Yacine and Cacho-Diaz, Julio and Laeven, Roger J. A.},
  title   = {Modeling Financial Contagion Using Mutually Exciting Jump Processes},
  journal = {Journal of Financial Economics},
  year    = {2015},
  volume  = {117},
  number  = {3},
  pages   = {585--606},
  doi     = {10.1016/j.jfineco.2015.03.002}
}

@inproceedings{KatharopoulosVyasPappasFleuret2020,
  author    = {Katharopoulos, Angelos and Vyas, Apoorv and Pappas, Nikolaos and Fleuret, Fran{\c{c}}ois},
  title     = {Transformers are {RNN}s: Fast Autoregressive Transformers with Linear Attention},
  booktitle = {Proceedings of the 37th International Conference on Machine Learning},
  year      = {2020},
  volume    = {119},
  series    = {Proceedings of Machine Learning Research},
  pages     = {5156--5165},
  publisher = {PMLR},
}

@article{Bates2019JF,
  author  = {Bates, David S.},
  title   = {How Crashes Develop: Intradaily Volatility and Crash Evolution},
  journal = {The Journal of Finance},
  year    = {2019},
  volume  = {74},
  number  = {1},
  pages   = {193--238},
  month   = feb,
  doi     = {10.1111/jofi.12732}
}

@book{cappe2005inference,
  title     = {Inference in Hidden Markov Models},
  author    = {Capp{\'e}, Olivier and Moulines, Eric and Ryd{\'e}n, Tobias},
  year      = {2005},
  publisher = {Springer},
  series    = {Springer Series in Statistics},
  address   = {New York}
}

@article{bates2006maximum,
  author  = {Bates, David S.},
  title   = {Maximum Likelihood Estimation of Latent Affine Processes},
  journal = {The Review of Financial Studies},
  volume  = {19},
  number  = {3},
  pages   = {909--965},
  year    = {2006},
  doi     = {10.1093/rfs/hhj022}
}

@article{chen2012estimation,
  author  = {Chen, Xiaohong and Pouzo, Demian},
  title   = {Estimation of Nonparametric Conditional Moment Models with Possibly Nonsmooth Generalized Residuals},
  journal = {Econometrica},
  volume  = {80},
  number  = {1},
  pages   = {277--321},
  year    = {2012},
  doi     = {10.3982/ECTA7888}
}

@article{barndorff2001non,
  author  = {Barndorff-Nielsen, Ole E. and Shephard, Neil},
  title   = {Non-Gaussian Ornstein--Uhlenbeck-Based Models and Some of Their Uses in Financial Economics},
  journal = {Journal of the Royal Statistical Society: Series B},
  year    = {2001},
  volume  = {63},
  number  = {2},
  pages   = {167--241},
  doi     = {10.1111/1467-9868.00282}
}

@article{chen2022factor,
  author  = {Chen, Rong and Yang, Dan and Zhang, Cun-Hui},
  title   = {Factor Models for High-Dimensional Tensor Time Series},
  journal = {Journal of the American Statistical Association},
  year    = {2022},
  volume  = {117},
  number  = {537},
  pages   = {94--116},
  doi     = {10.1080/01621459.2021.1912757}
}

@article{hamilton1989new,
  author  = {Hamilton, James D.},
  title   = {A New Approach to the Economic Analysis of Nonstationary Time Series and the Business Cycle},
  journal = {Econometrica},
  year    = {1989},
  volume  = {57},
  number  = {2},
  pages   = {357--384},
  doi     = {10.2307/1912559}
}

@article{Pouzo2022,
author = {Pouzo, Demian and Psaradakis, Zacharias and Sola, Martin},
title = {Maximum Likelihood Estimation in Markov Regime-Switching Models With Covariate-Dependent Transition Probabilities},
journal = {Econometrica},
volume = {90},
number = {4},
pages = {1681-1710},
doi = {https://doi.org/10.3982/ECTA17249},
eprint = {https://onlinelibrary.wiley.com/doi/pdf/10.3982/ECTA17249},
year = {2022}
}

@book{rudin1991functional,
  author    = {Walter Rudin},
  title     = {Functional Analysis},
  edition   = {2nd},
  year      = {1991},
  publisher = {McGraw-Hill},
  address   = {New York},
  isbn      = {0-07-100944-2}
}

@article{gland2004,
author = {Fran{\c{c}}ois Le Gland and Nadia Oudjane},
title = {{Stability and uniform approximation of nonlinear filters using the Hilbert metric and application to particle filters}},
volume = {14},
journal = {The Annals of Applied Probability},
number = {1},
publisher = {Institute of Mathematical Statistics},
pages = {144 -- 187},
year = {2004},
doi = {10.1214/aoap/1075828050},
}

@book{zygmund2002trigonometric,
  author    = {Zygmund, Antoni},
  title     = {Trigonometric Series},
  publisher = {Cambridge University Press},
  address   = {Cambridge},
  year      = {2002},
  series    = {Cambridge Mathematical Library},
  edition   = {3}
}

@book{timan1963theory,
  author    = {Timan, A. F.},
  title     = {Theory of Approximation of Functions of a Real Variable},
  publisher = {Pergamon Press},
  address   = {Oxford},
  year      = {1963},
  series    = {International Series of Monographs in Pure and Applied Mathematics},
  volume    = {34},
  note      = {Translated by J. Berry; English translation edited by J. Cossar}
}

@book{jackson1930theory,
  title={The theory of approximation},
  author={Jackson, Dunham},
  volume={11},
  year={1930},
  publisher={American Mathematical Soc.}
}

@book{duoandikoetxea2024fourier,
  title={Fourier analysis},
  author={Duoandikoetxea, Javier},
  volume={29},
  year={2024},
  publisher={American Mathematical Society}
}

@article{duffie2000transform,
  title={Transform analysis and asset pricing for affine jump-diffusions},
  author={Duffie, Darrell and Pan, Jun and Singleton, Kenneth},
  journal={Econometrica},
  volume={68},
  number={6},
  pages={1343--1376},
  year={2000},
  publisher={Wiley Online Library}
}

@book{billingsley1995probability,
  title={Probability and Measure},
  author={Billingsley, P.},
  isbn={9780471007104},
  lccn={gb95051456},
  series={Wiley Series in Probability and Statistics},
  year={1995},
  publisher={Wiley}
}

@book{vanderVaart1998,
  author    = {van der Vaart, A. W.},
  title     = {Asymptotic Statistics},
  year      = {1998},
  publisher = {Cambridge University Press},
  address   = {Cambridge, UK},
  isbn      = {978-0521496032},
  note      = {see Chapters 5, 8, 12 for likelihood asymptotics and CLT},
}

@book{hall1980martingale,
  author    = {Hall, Peter and Heyde, C. C.},
  title     = {Martingale Limit Theory and Its Application},
  year      = {1980},
  publisher = {Academic Press},
  address   = {New York, NY},
  isbn      = {978-0123865508},
  series    = {Probability and Mathematical Statistics},
  note      = {Classic reference for martingale central limit theorems}
}

@article{bartlett1953a,
  author  = {Bartlett, M. S.},
  title   = {Approximate Confidence Intervals: I},
  journal = {Biometrika},
  year    = {1953},
  volume  = {40},
  pages   = {12--19},
  doi     = {10.1093/biomet/40.12},
}

@article{bartlett1953b,
  author  = {Bartlett, M. S.},
  title   = {Approximate Confidence Intervals: II. More than One Unknown Parameter},
  journal = {Biometrika},
  year    = {1953},
  volume  = {40},
  pages   = {306--317},
  doi     = {10.1093/biomet/40.3-4.306},
}

@book{meyn2009markov,
  title     = {Markov Chains and Stochastic Stability},
  author    = {Meyn, S. P. and Tweedie, R. L.},
  year      = {2009},
  publisher = {Cambridge University Press},
  address   = {Cambridge, UK},
  isbn      = {978-0521849783},
  note      = {Second edition. A comprehensive reference for ergodic Markov chains, geometric ergodicity, Poisson equations and stability conditions},
}

@article{Hawkes_1971, title={Spectra of some self-exciting and mutually exciting point processes}, volume={58}, DOI={10.2307/2334319}, number={1}, journal={Biometrika}, author={Hawkes, Alan G.}, year={1971}, month={Apr}, pages={83}}
	}

\end{document}